\let\mathscr\mathbscr
\newcolumntype{x}[1]{>{\centering\arraybackslash}p{#1}}
\newcommand\blfootnote[1]{%
  \begingroup
  \renewcommand\thefootnote{}\footnote{#1}%
  \addtocounter{footnote}{-1}%
  \endgroup
}
\newcommand{\InputAlphabet}{\mathsf{X}}
\newcommand{\OutputAlphabet}{\mathsf{Y}}
\def\ulineInputAlphabet{\underline{\InputAlphabet}}
\def\ulineOutputAlphabet{\underline{\OutputAlphabet}}
\def\ulineoutput{\underline{y}}
\def\ulineinput{\underline{x}}
\def\2IC{$2-$IC}
\def\InputRV{X}
\def\OutputRV{Y}
\def\ulineOutputRV{\underline{\OutputRV}}
\def\ulineInputRV{\underline{\InputRV}}
\def\ulineoutput{\underline{y}}
\def\ulineinput{\underline{x}}
\def\costfn{\kappa}
\def\ulinecostfn{\underline{\costfn}}
\def\TimeSharingRVSet{\mathsf{Q}}
\def\TimeSharingRV{Q}
\def\SemiPrivateRV{U}
\newtheorem{theorem}{Theorem}
\newtheorem{proposition}{Proposition}
\newtheorem{lemma}{Lemma}
\newtheorem{corollary}{Corollary}
\newtheorem{definition}{Definition}
\newtheorem{remark}{Remark}
\newcommand{\bdefi}{\begin{definition}}
\newcommand{\edefi}{\end{definition}}
\newcommand{\beq}{\begin{equation}}
\newcommand{\eeq}{\end{equation}}
\newcommand{\ber}{\begin{eqnarray}}
\newcommand{\eer}{\end{eqnarray}}
\newcommand{\bel}{\begin{lemma}}
\newcommand{\eel}{\end{lemma}}
\newcommand{\mcx}{{\mathsf X}}
\newcommand{\mcr}{{\mathsf R}}
\newcommand{\mcq}{{\mathsf Q}}
\newcommand{\e}{\epsilon}
\newcommand{\nn}{\nonumber}
\newcommand{\mcn}{{\mathsf N}}
\def\triangleq{:{=}~}
\def\eE{{\mathbb E}}
\begin{document}

\title{Lattices from  Linear Codes: 
Source and Channel Networks}
\author{Farhad Shirani$^\dag$  and S. Sandeep Pradhan$^\diamond$
\blfootnote{%This work was supported by NSF grant
      %CCF-1111061 and CCF-1422284. 
      This work was presented in part at IEEE
     International Symposium on Information Theory (ISIT), July
     2018. } \\
$^\dag$North Dakota State University, Fargo, ND $^\diamond$University of  Michigan, Ann Arbor, MI \\
$^\dag$f.shiranichaharsoogh@ndsu.edu, $^\diamond$pradhanv@umich.edu.}
\date{}

\maketitle \thispagestyle{empty} \pagestyle{plain}
\vspace{-0.7in}

\begin{abstract}
 In this paper, we consider the information-theoretic
 characterization of the set of achievable rates and distortions in a broad class of multiterminal communication scenarios with general continuous-valued sources and channels.  A framework is presented which involves fine discretization of the source and channel variables followed by communication over the resulting discretized network. In order to evaluate  fundamental performance limits, convergence results for  information measures are provided under the proposed discretization process. Using this framework, we consider point-to-point source coding
and channel coding with side-information, distributed source coding with distortion constraints, the function reconstruction problems (two-help-one), computation over multiple access channel, and the multiple-descriptions source coding problem. 
We construct lattice-like codes for general sources and channels, and derive inner-bounds to set of achievable rates and distortions in these communication scenarios. 
\end{abstract}

\section{Introduction}
Over the past several decades, information theory has provided a  framework for the study of the fundamental limits of communication --- such as achievable rates and distortions --- and design of source and channel coding strategies in a wide range of communication scenarios. The derivation of the optimal rate-distortion function in point-to-point (PtP) source coding and channel capacity in PtP channel coding for discrete sources and channels laid the groundwork for the development and implementation of practical capacity achieving codes such as low-density parity check (LDPC) codes \cite{gallager1962low}, turbo codes \cite{berrou1996near}, and polar codes \cite{arikan2009channel}, among others, in the following decades. Furthermore, these techniques for PtP data storage and transmission were extended to various multiterminal communication scenarios, where inner and outer bounds for the set of achievable rates and distortions for storage of discrete sources and transmission over discrete channels were derived \cite{ElGamalLec}. Although the derivation of tight performance bounds remains an open problem in many network communication scenarios, the optimal achievable region is known for some special cases of interest such as communication over discrete multiple access channel (MAC) \cite{liao1972multiple}, discrete deterministic and semi-deterministic broadcast channels \cite{gel1980capacity}, and multiple-descriptions source coding in the no-excess rate regime \cite{ahlswede1985rate}.   

Despite the significant progress in characterization of the fundamental limits of communication in network communication scenarios for discrete sources and channels, there is still a lack of a unifying framework for the study of data compression for general continuous sources and data transmission over general continuous channels. Many of the derivations in the discrete case rely on the concept of strong typicality which is based on the frequency of occurrence of symbols in sequences of discrete random variables \cite{csiszarbook}. The notion of strong typicality does not extend naturally to sequences of continuous variables.
Prior works address the issue in an ad-hoc fashion. As a result, the optimal performance in terms of achievable rates and distortions is usually known only for special cases when the underlying distributions of all variables are restricted to be Gaussian variables, e.g.
distributed storage of Gaussian sources \cite{wagner2008rate}, and communication over the Gaussian broadcast channel \cite{weingarten2006capacity}.
 For certain problems, such as PtP source coding, PtP channel coding, and communication over MAC,  performance limits have been derived using techniques using weak typicality \cite{Cover} instead of strong typicality. Weak typicality is based on the empirical entropy of sequences of random variables and is defined for both discrete and continuous variables.
However, weak typicality is not applicable in many multiterminal communication problems such as distributed source coding, and communication over broadcast channels, since for instance, the Markov lemma \cite{Markov} (a crucial step in the derivation
of achievable regions) is not valid for weakly typical
sequences. To address this, Wyner \cite{wyner1978rate} proposed a method for the study of PtP
 source coding with side-information, which can also be used for distributed source coding. Wyner's method involves fine
quantization of the source, the side-information, and the auxiliary
variables to create a finite-alphabet problem, and then
using the achievable results for the finite-alphabet problem to derive
performance limits for the original problem using convergence
properties of mutual information. This idea of `discretizing' the continuous communication system, and then applying discrete coding strategies and analytical techniques has also been recently used in the study of the compute-and-forward communication scenario \cite{pastore2021discretization}, as well as in investigating the correspondence between a set of useful inequalities in terms of entropies of discrete variables and their analogs in terms of differential entropies of continuous variables \cite{makkuva2018equivalence}. Another method to study continuous networks is to modify the notion of weak typicality, and use weak-* typicality instead. In \cite{mitran2015markov}, the Markov lemma has been shown to hold for weak-* typical sequences. The results were applied to source compression in the presence of side-information. The derivations in \cite{wyner1978rate,mitran2015markov} are based on unstructured random code ensembles.  An additional technique which has been considered for compression of linear quadratic
Gaussian  (LQG) sources and channels is to use subtractive dithered lattice
codes \cite{erez2004achieving,zamir2002nested}. The drawback of these lattice codes is that (a) they are
very specific to the LQG nature of the problem, and hence not amenable
to non-Gaussian and  nonlinear problems, and (b) they are based on
point-to-point communication perspective, and hence not general enough to be extended to the multiuser techniques such as joint quantization as seen
in multiple-description coding, and joint source-channel mapping as seen
in transmission of correlated sources over multiple-access channels. 
 
 In this paper we develop a unified framework for studying the performance
limits of communication for general continuous-valued sources and channels in
multi-terminal communication scenarios. The proposed method builds upon Wyner's  fine quantization technique. We derive the covering bounds, packing bounds, and prove Markov lemma when unstructured random code ensembles are used as well as for structured code
ensembles. These tools are used to derive the fundamental limits of communication in PtP communications with side-information, distributed source coding, lossy two-help-one problem, computation over multiple access channel (MAC), and multiple descriptions (MD) source coding. For computation over MAC with more than two users and MD source coding with more than two descriptions, we show that communication schemes using linear coding ensembles outperform those using unstructured random codes. This is inline with similar observations made in computation over MAC and MD compression of discrete sources. 

The rest of the paper is organized as follows: 
%Section \ref{sec:not} describes the notation used in the paper.
Section 
\ref{sec:3} develops a set of useful lemmas which form the framework for the analysis in the rest of the paper. Section \ref{sec:PtP} investigates point-to-point communication with side-information. Section \ref{sec:structured} considers using linear coding ensembles designed for discrete networks for communication over continuous networks. Section \ref{sec:DSC}
investigates distributed compression of continuous sources. Section \ref{sec:LTO} derives the fundamental limits of communication  in the lossy two-help-one problem. Section \ref{sec:IC} studies the computation over MAC problem. Section \ref{sec:MD} investigates the MD source coding problem. Section \ref{sec:conc} concludes the paper.  

\textit{Notation:} 
 We represent random variables by capital letters such as $X, U$ and their realizations by small letters such as $x, u$. Sets are denoted by sans-serif letters such as $\mathsf{X}, \mathsf{U}$. The set of natural numbers, and the real numbers are represented by $\mathbb{N}$, and $\mathbb{R}$ respectively. The Borel sigma-field is denoted by $\mathcal{B}$. Collections of sets are denoted by calligraphic letters such as $\mathcal{X},\mathcal{U}$. The random variable $\mathbbm{1}_{\mathcal{E}}$ is the indicator function of the event $\mathcal{E}$.
 The set of numbers $\{n,n+1,\cdots, m\}, n,m\in \mathbb{N}$ is represented by $[n,m]$. Furthermore, for the interval $[1,m]$, we sometimes use the shorthand notation $[m]$ for brevity. 
 For a given $n\in \mathbb{N}$, the $n$-length vector $(x_1,x_2,\hdots, x_n)$ is written as $x^n$. The function $h(\cdot)$ denotes the differential entropy. For the set $\mathsf{A}\subset \mathbb{R}^n$, we write $cl(\mathsf{A})$ to denote the convex closure.

\section{Preliminaries}
\label{sec:3}

\subsection{Source and Channel Models}
We consider continuous memoryless source and channel networks with real-valued inputs and outputs, and without feedback. Such channel networks (source networks) are completely characterized by their associated channel transition probability (source distribution) and input cost functions (output distortion functions). 

In the most general formulation, the transition probability function is defined as follows.
\begin{definition}[\textbf{Transition Probability}]
A transition probability is a function $P: \mathbb{R}
  \times \mathcal{B} \rightarrow \mathbb{R}$ such that: 
\begin{itemize}
\item For each $x \in \mathbb{R}$, $P(\cdot|x):\mathsf{A} \mapsto P(\mathsf{A}|x)$ is a
  probability measure on $(\mathbb{R},\mathcal{B})$.
\item For each $\mathsf{A} \in \mathcal{B}$, $P(\mathsf{A}|\cdot):x \mapsto P(\mathsf{A}|x)$ is a measurable function.
\end{itemize}
\end{definition}

\begin{remark}
\label{rem:assumptions}
We assume that for any variable $X$ the PDF exists and it approaches  infinity in at most a finite number of points, and that the set of points of discontinuity has (Lebesgue) measure zero. Furthermore, we assume that the variable does not have discrete points, i.e. $\nexists x\in \mathbb{R}, P(X=x)>0)$.
\end{remark}

\begin{definition}[\textbf{Memoryless Channel without Feedback}]
\label{def:ch}
A channel  is characterized by i) a transition probability $P_{Y|X}: \mathbb{R}
  \times \mathcal{B} \rightarrow \mathbb{R}$,  and ii) a continuous cost function $\kappa: \mathbb{R}
\rightarrow \mathbb{R}^+$, where $X$ and $Y$ are the channel input and output, respectively.   
\end{definition}
\begin{remark}
As noted in Definition \ref{def:ch}, we assume that the cost function  is continuous. This smoothness condition ensures that $\kappa(x)$ is bounded for $x\in \mathbb{R}$. 
\end{remark}

We assume that the channel is memoryless
and used without feedback, i.e. The joint probability measure on
$(\mathbb{R}^n,\mathcal{B}^n)$ is given by the unique product measure
\[
P(Y_i\in \mathsf{A}_i, i\in [n]|X^n=x^n)=\prod_{i=1}^n P_{Y|X}(\mathsf{A}_i|x_i),\qquad  \forall \mathsf{A}_1,\mathsf{A}_2,\cdots,\mathsf{A}_n\in \mathcal{B}.
\]
given $x^n$ is transmitted on the channel by using the channel $n$
times.  

\begin{definition}[\textbf{Joint Channel Probability Measure}]
For a channel $(P_{Y|X},\kappa)$, and given probability measure $P_X$ on $(\mathbb{R},\mathcal{B})$, the joint
probability measure $P_{XY}$ on  $(\mathbb{R}^2,\mathcal{B}^2)$  is  
the unique extension of the measure on product sets
\[
P_{XY}(\mathsf{A} \times \mathsf{B})= \int_\mathsf{A} P_{X}(dx) P_{Y|X}(\mathsf{B}|x) = \int_\mathsf{A} P_{X}(dx) \int_\mathsf{B} P_{Y|X}(dy|x),\qquad \mathsf{A},\mathsf{B}\in \mathcal{B}.
\]
\end{definition}
We also consider point-to-point and multiuser source coding scenarios. An information source is captured by its associated probability measure and distortion function as described below.
\begin{definition}[\textbf{Memoryless Source}]
\label{def:source}
A source is characterized by i) a probability measure $P_X:\mathcal{B}\to \mathbb{R}$, and ii) a jointly continuous distortion function $d: \mathbb{R} \times
\mathbb{R} \rightarrow \mathbb{R}^+$ .

\end{definition}

\begin{remark}
As noted in Definition \ref{def:source}, we assume that the distortion function is a jointly continuous. This smoothness condition ensures that $d(x,\hat{x})$ is bounded for all $x,\hat{x}\in \mathbb{R}$.
\end{remark}

\subsection{Source and Channel Discretization and Clipping}
\label{sec:discretization}

We will obtain achievable rate-distortion functions for source coding and
achievable rate-cost functions for channel coding by first
discretizing and clipping the associated  random variables. 
This  approach is described in the following.

\begin{definition}[\textbf{Discretization function}]
\label{Def:Desc}
Let $n \in \mathbb{N}$,
The discretization function $Q_{n}: \mathbb{R} \rightarrow
\mathbb{Z}_{n}$ is defined as 
\[
Q_{n}(s)= \arg \min_{a \in \mathbb{Z}_{n}}   |s-a|, s\in \mathbb{R},
\]
 where $\mathbb{Z}_{n}\triangleq\frac{1}{2^n} \mathbb{Z}$. 
\end{definition}
\begin{definition}[\textbf{Clipping Function}]
\label{Def:Clip}
For a given upper-limit $u\in \mathbb{R}$ and lower-limit $\ell\in \mathbb{R}$, the clipping
function $C_{\ell,u}: \mathbb{R} \rightarrow
\mathbb{R}$ is defined as
\[
C_{\ell,u}(s)=\max\{\min\{u,s\},-\ell\}, s\in \mathbb{R}.
\]
\end{definition}
\begin{definition}[\textbf{Discretization Cells}]
Given a pair of discretization and clipping functions $(Q_n,C_{\ell,u}), n\in \mathbb{N}, \ell,u>0$, the
associated  discrete alphabet 
$\mathbb{Z}_{l,n} \triangleq [-l,u] \bigcap \mathbb{Z}_{n} $,
and associated discretization cells  $\mathcal{A}_{\ell,u,n}(i)$ for
$i=0,1,2,\ldots,  (\lfloor u \ 2^{n} \rfloor - \lceil - \ell 2^n \rceil  - 1)$ are defined as
\begin{align}
\label{eq:disc_cell_1}
&\mathcal{A}_{\ell,u,n}(0)\triangleq \left( -\infty, \lceil -\ell 2^n \rceil \frac{1}{2^n} +
  \frac{1}{2^{n+1}}  \right],
\\&\mathcal{A}_{\ell,u,n}(i)\triangleq\mathcal{A}_{\ell,u,n}(i-1)+\frac{1}{2^n}, i=1,2,
  \ldots, (\lfloor u \ 2^{n} \rfloor - \lceil - \ell 2^n \rceil  - 2),
\\&
\mathcal{A}_{\ell,u,n}(\lfloor u \ 2^{n} \rfloor - \lceil - l 2^n \rceil  - 1)= \left( \lfloor u 2^n \rfloor \frac{1}{2^n} -
  \frac{1}{2^{n+1}},   \infty \right).
  \label{eq:disc_cell_3}
\end{align}
\end{definition}

% Let $\zeta_{l,u,n}(i)\in \mathcal{A}_{\ell,n}$ denote the         `discretizatin reconstruction' of the $i$th cell.
% The discretization/clipping function is depicted in Figure \ref{fig:quantizer}.
% \begin{figure}[!htb]
%     \begin{center}
%     \scalebox{3.3}{\includegraphics[width=4.5cm]{}}
%     \end{center}
%      \caption{\small \sl The discretization/clipping function.
%      }
% \label{fig:quantizer} 
% \end{figure}

\begin{remark}
To reduce clutter, we denote $\widehat{S}_{\ell,u,n}=Q_{n}(C_{\ell,u}(S))$ as $\widehat{S}$ when
the subscript is clear from the context. Moreover, we also denote 
$\widetilde{S}_{\ell,u}=C_{\ell,u}(S)$ by $\widetilde{S}$.
\end{remark}

% The following lemma is used in evaluating the distortion and cost of communication strategies in continuous networks in the subsequent sections.
% \begin{lemma}[\textbf{Convergence of Cost Functions and Distortion Functions}]
% \label{thm:quantize_dist_cost}
% Let $S$ and $T$ be  two random variables.
% For any measurable function $\kappa:\mathbb{R}\rightarrow \mathbb{R}^+$
% such that   $\mathbb{E}(\kappa(S))< \infty$, there exists two
% increasing (and approaching $\infty$) sequences of lengths 
% $l_{m},u_{m}$ such that 
% \[
% \lim_{m \rightarrow \infty} \mathbb{E} \kappa(S_{l_{m},u_m})  =
% \mathbb{E}\kappa(S).
% \]
% For any measurable  function $d: \mathbb{R}^2 \rightarrow \mathbb{R}^+$
% such that   $\mathbb{E}(d(S,T))< \infty$,  there exists four
% increasing (and approaching $\infty$) sequences of lengths 
% $l_{1n},u_{1n}$, and $l_{2m},u_{2m}$  such that 
% \[
% \lim_{n \rightarrow \infty} \lim_{m \rightarrow \infty}
% \mathbb{E} d(S_{l_{1n},u_{1n}},T_{l_{2m},u_{2m}}) =  \mathbb{E}d(S,T).
% \]
% \end{lemma}
% \begin{proof}
% Please refer to Appendix \ref{App:lem:6}
% \end{proof}

\subsection{Convergence of Distributions, and Information Measures} 
In this section, we introduce several results on convergence of information measures which are used in the subsequent sections. 
\bdefi[\textbf{Convergence of Probability Measures}]
\label{def:distribution_conv}
Consider a sequence of
probability measures $P_n, n\in \mathbb{N}$, defined on the probability space $(\Omega,\mathcal{F})$, 
\begin{itemize}
\item \textbf{Strong Convergence:} $P_n, n\in \mathbb{N}$ is said to
converge strongly to $P$ if 
\[
\lim_{n \rightarrow \infty} P_n(\mathsf{A})=P(\mathsf{A}), \mathsf{A} \in \mathcal{F}.
\] 
\item \textbf{Convergence in Total Variation:}   $P_n, n\in \mathbb{N}$ is said to
converge in total variation to $P$ if 
\[
\lim_{n\to\infty} V(P_n,P)=0,
\] 
where $V(P,Q)\triangleq sup_{\mathsf{A}} |P(\mathsf{A})-Q(\mathsf{A})|$ is the total variation between  $P$ and $Q$. 
% \item \textbf{Convergence in Distribution:}  $P_n, n\in \mathbb{N}$ is said to
% converge in distribution to $P$ if 
% \[
% \lim_{n\to\infty} P_n((-\infty, x])=P((-\infty, x]), x\in \mathbb{R}.
% \] 
\end{itemize}

\edefi

\begin{remark}
It can be noted convergence in total variation guarantees strong convergence, which in turn guarantees convergence in distribution.  
\end{remark}

\begin{lemma}
[\textbf{Lower semi-continuity of Mutual Information}\cite{Pinsker}]
\label{lem:conv_mutual_info_1}
Consider a sequence of pairs of random variables
$(S_n,T_n), n\in \mathbb{N}$, defined on
 $(\mathbb{R}^2, \sigma(\mathcal{B} \times \mathcal{B}))$. If
 $P_{S_n,T_n}$ converges strongly to $P_{S,T}$, then 
\[
I(S;T) \leq \liminf_{n \rightarrow \infty} I(S_n;T_n).
\]
\end{lemma}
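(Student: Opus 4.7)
The plan is to exploit the variational characterization of mutual information as a supremum of partition-based mutual informations. Specifically, for jointly distributed real-valued random variables $(S,T)$,
\[
I(S;T) = \sup_{\mathcal{P},\mathcal{Q}} \sum_{i,j} P_{S,T}(\mathsf{A}_i\times \mathsf{B}_j)\, \log \frac{P_{S,T}(\mathsf{A}_i\times \mathsf{B}_j)}{P_S(\mathsf{A}_i)\,P_T(\mathsf{B}_j)},
\]
where the supremum is over finite measurable partitions $\mathcal{P}=\{\mathsf{A}_i\}$ and $\mathcal{Q}=\{\mathsf{B}_j\}$ of $\mathbb{R}$ (the Dobrushin/Gelfand--Yaglom characterization). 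The idea is that each fixed partitioned mutual information is continuous in the joint distribution, while the actual mutual information is the (lower semi-continuous) supremum of these continuous functionals.

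First I would fix an arbitrary pair of finite partitions $(\mathcal{P},\mathcal{Q})$ and denote by $I_{\mathcal{P},\mathcal{Q}}(S;T)$ the corresponding summation. Since $P_{S_n,T_n}\to P_{S,T}$ strongly, we have $P_{S_n,T_n}(\mathsf{A}_i\times \mathsf{B}_j) \to P_{S,T}(\mathsf{A}_i\times \mathsf{B}_j)$ for every $i,j$, and taking unions over the partition yields the marginal convergences $P_{S_n}(\mathsf{A}_i)\to P_S(\mathsf{A}_i)$ and $P_{T_n}(\mathsf{B}_j)\to P_T(\mathsf{B}_j)$. Because the partitions are finite and the map $(p,q,r)\mapsto p\log\frac{p}{qr}$ is continuous on its domain (using the standard convention $0\log 0 = 0$ and handling cells with $P_{S,T}(\mathsf{A}_i\times \mathsf{B}_j)=0$ separately), we obtain
\[
\lim_{n\to\infty} I_{\mathcal{P},\mathcal{Q}}(S_n;T_n) = I_{\mathcal{P},\mathcal{Q}}(S;T).
\]
Combined with the inequality $I_{\mathcal{P},\mathcal{Q}}(S_n;T_n)\le I(S_n;T_n)$, this gives
\[
I_{\mathcal{P},\mathcal{Q}}(S;T) \le \liminf_{n\to\infty} I(S_n;T_n).
\]

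Finally, taking the supremum over all finite measurable partitions $(\mathcal{P},\mathcal{Q})$ on the left-hand side and invoking the variational identity yields the claim $I(S;T)\le \liminf_{n\to\infty} I(S_n;T_n)$.

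The main obstacle is the careful treatment of cells on which either the joint probability or one of the marginal probabilities vanishes, since the integrand $p\log(p/(qr))$ is not jointly continuous on the closed unit cube. This is handled by: (i) restricting attention to cells where $P_S(\mathsf{A}_i)P_T(\mathsf{B}_j)>0$, since cells with zero marginal in the limit contribute zero to $I_{\mathcal{P},\mathcal{Q}}(S;T)$ and eventually the approximating marginals are strictly positive by strong convergence; and (ii) observing that $x\log x\to 0$ as $x\to 0^+$ so that joint masses vanishing in the limit contribute negligibly along the sequence as well. With these conventions, continuity on each cell is straightforward, and the rest is the standard bookkeeping of partition-based mutual information together with the supremum characterization.
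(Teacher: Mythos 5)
The paper does not prove this lemma; it is cited to Pinsker \cite{Pinsker}, so there is no in-paper argument to compare against. Your proof via the Gelfand--Yaglom--Perez variational characterization $I(S;T)=\sup_{\mathcal{P},\mathcal{Q}} I_{\mathcal{P},\mathcal{Q}}(S;T)$ is correct, essentially self-contained, and is the classical route to this result. Fixing a finite product partition, setwise convergence gives $P_{S_n,T_n}(\mathsf{A}_i\times\mathsf{B}_j)\to P_{S,T}(\mathsf{A}_i\times\mathsf{B}_j)$ and hence convergence of the marginals as well; the finite sum then converges cell by cell, and passing from $I_{\mathcal{P},\mathcal{Q}}(S;T)\le\liminf_n I(S_n;T_n)$ to the supremum over all finite product partitions yields the claim, including the case $I(S;T)=+\infty$. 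One point deserves a cleaner statement: when $P_S(\mathsf{A}_i)P_T(\mathsf{B}_j)=0$, the approximating marginal masses are \emph{not} eventually bounded away from zero (they tend to zero), so your phrase ``eventually the approximating marginals are strictly positive'' does not by itself control that cell. What actually kills the $n$-th term is the pair of dominations $p_{ij}^{(n)}\le p_i^{(n)}$ and $p_{ij}^{(n)}\le q_j^{(n)}$ together with $x\log x\to0$, which force $p_{ij}^{(n)}\log\bigl(p_{ij}^{(n)}/(p_i^{(n)}q_j^{(n)})\bigr)\to 0$. With this small repair the write-up is fully rigorous. It is also worth noting that the argument genuinely needs setwise (strong) convergence: under mere weak convergence, $P_{S_n,T_n}(\mathsf{A}_i\times\mathsf{B}_j)$ need not converge for arbitrary Borel rectangles, and one would have to restrict the supremum to partitions whose cell boundaries are $P_{S,T}$-null; the paper's stronger hypothesis lets you avoid that complication.
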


Intuitively, if we take $n \rightarrow \infty$, then the discrete random variable
$Q_n(S)$ converges to the continuous random variable $S$ in
distribution, and hence by Lemma \ref{lem:conv_mutual_info_1}, for variables $S$ and $T$, the mutual information $ I(Q_{n}(S); Q_{n}(T))$ converges to $I(S;T)$. This is stated formally in the following lemma.
\begin{lemma}[\textbf{Convergence of Discretized Variables}\cite{gray2011entropy}]
\label{lem:conv_distribution}
For any two random variables $(S,T)$,  the sequence 
$(Q_{n_1}(S),Q_{n_2}(T))$ converges in distribution to $(S,T)$ as $n_1,n_2\to \infty$. Consequently, if $I(S;T)< \infty$, we have 
\[
\lim_{n_1,n_2 \rightarrow \infty} I(Q_{n_1}(S); Q_{n_2}(T))= I(S;T).
\]
\label{thm:quantize0} 
\end{lemma}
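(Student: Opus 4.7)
The plan is to prove the two assertions in turn. For convergence in distribution, I would exploit the uniform quantization error bound $|Q_n(s) - s| \le 2^{-(n+1)}$ for every $s \in \mathbb{R}$, which follows immediately from the definition of $Q_n$ as the nearest-element projection onto $\mathbb{Z}_n = 2^{-n}\mathbb{Z}$. Therefore $Q_{n_1}(S(\omega)) \to S(\omega)$ and $Q_{n_2}(T(\omega)) \to T(\omega)$ for every $\omega$ in the sample space, so the joint vector $(Q_{n_1}(S), Q_{n_2}(T))$ converges surely, and in particular almost surely, to $(S,T)$ as $n_1, n_2 \to \infty$. Almost sure convergence then yields joint convergence in distribution.

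For the mutual information equality, I would sandwich the sequence. The upper bound comes from the data processing inequality applied to the deterministic maps $Q_{n_i}$, giving
\[
I(Q_{n_1}(S); Q_{n_2}(T)) \le I(S;T),
\]
so $\limsup_{n_1,n_2 \to \infty} I(Q_{n_1}(S); Q_{n_2}(T)) \le I(S;T)$. The matching lower bound is the content of Lemma \ref{lem:conv_mutual_info_1}: using the joint convergence of $(Q_{n_1}(S), Q_{n_2}(T))$ established above, lower semi-continuity of mutual information gives $I(S;T) \le \liminf_{n_1,n_2 \to \infty} I(Q_{n_1}(S); Q_{n_2}(T))$. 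Combined with the hypothesis $I(S;T) < \infty$, the two inequalities force the limit to exist and equal $I(S;T)$.

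The main obstacle is reconciling the mode of convergence required by Lemma \ref{lem:conv_mutual_info_1}, which is stated under strong (setwise) convergence, with what is actually available here. The discretized joint law is supported on the dyadic grid $\mathbb{Z}_{n_1} \times \mathbb{Z}_{n_2}$, whereas $P_{S,T}$ is atomless under the standing assumptions of Remark \ref{rem:assumptions}, so setwise convergence fails in the strict sense (e.g.\ $P_{Q_n(S)}(\mathbb{Z}_n) = 1$ while $P_S(\mathbb{Z}_n) = 0$). A clean workaround, which is the route I would take, is to invoke the Gelfand--Yaglom--Dobrushin partition-refinement representation
\[
I(S;T) = \sup_{\mathcal{P},\mathcal{Q}}\, I\bigl(\phi_{\mathcal{P}}(S);\, \phi_{\mathcal{Q}}(T)\bigr),
\]
where $\mathcal{P},\mathcal{Q}$ range over finite measurable partitions of $\mathbb{R}$ and $\phi_{\mathcal{P}}$ is the cell-index map. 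The partitions induced by the discretization cells $\{\mathcal{A}_{\ell,u,n}(i)\}$ (together with the tails) form a refining sequence in $n$ whose generated $\sigma$-algebra coincides with $\mathcal{B}$, and monotonicity of mutual information under refinement then delivers $I(Q_{n_1}(S); Q_{n_2}(T)) \nearrow I(S;T)$ along any joint sequence with $n_1, n_2 \to \infty$. This supplies the required lower bound without appealing to setwise convergence and completes the argument.
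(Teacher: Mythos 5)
Your sure-convergence argument for the first assertion (uniform bound $|Q_n(s)-s|\le 2^{-(n+1)}$, hence joint almost-sure convergence, hence convergence in distribution) is correct, and the data-processing upper bound $I(Q_{n_1}(S);Q_{n_2}(T))\le I(S;T)$ is correct. You are also right to flag that strong (setwise) convergence fails here, so Lemma~\ref{lem:conv_mutual_info_1} as stated cannot be invoked directly. The paper itself supplies no proof of this lemma (it is attributed to Gray), so your pivot to a quantization/partition argument is in the right spirit.

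However, the specific workaround you propose has a gap: the partitions induced by the nearest-point quantizers $Q_n$ are \emph{not} nested, so neither the claimed monotone convergence $I(Q_{n_1}(S);Q_{n_2}(T))\nearrow I(S;T)$ nor the appeal to ``monotonicity of mutual information under refinement'' is available. Concretely, the cell of $Q_n$ containing $k/2^n$ is the interval $\bigl[(2k-1)/2^{n+1},\,(2k+1)/2^{n+1}\bigr)$, so the $Q_n$-cell boundaries sit at \emph{odd} multiples of $2^{-(n+1)}$, i.e.\ at \emph{even} multiples of $2^{-(n+2)}$; but the $Q_{n+1}$-cell boundaries sit at \emph{odd} multiples of $2^{-(n+2)}$. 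The two boundary sets are disjoint, so a $Q_{n+1}$-cell centered at an odd grid point $(2k+1)/2^{n+1}$ straddles two adjacent $Q_n$-cells, and $\sigma(Q_{n+1}(S))\not\supseteq\sigma(Q_n(S))$. The ``refining sequence'' claim therefore fails; the cells $\mathcal{A}_{\ell,u,n}(i)$ you cite are the clip-then-quantize cells of Definition~5, which is not what the lemma's $Q_n$ alone produces, and in any case they also do not nest in $n$ for the same reason.

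The conclusion of the lemma nonetheless holds, and there are two clean ways to close the gap. One is to use the more general form of the Gelfand--Yaglom--Dobrushin/Gray result that does \emph{not} require a nested family: it suffices that the quantizer $\sigma$-fields $\sigma(Q_n(S))$ asymptotically generate $\mathcal{B}$, in the sense that every Borel set of $\mathbb{R}$ is approximable in $P_S$-measure by finite unions of $Q_n$-cells for $n$ large, which is immediate since the mesh size $2^{-n}\to 0$; this is exactly the hypothesis of the quantization theorem in the cited reference, and it delivers $\liminf_{n_1,n_2} I(Q_{n_1}(S);Q_{n_2}(T))\ge I(S;T)$ without any monotonicity claim. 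The other is to upgrade the semicontinuity lemma to weak convergence: writing $I(S;T)=D(P_{ST}\,\|\,P_S\otimes P_T)$ and using that the Kullback--Leibler divergence is \emph{jointly} lower semicontinuous in its two arguments under weak convergence on Polish spaces, the joint weak convergence $(Q_{n_1}(S),Q_{n_2}(T))\Rightarrow(S,T)$ you already established (together with convergence of the marginals and hence of the product measures) gives $I(S;T)\le\liminf I(Q_{n_1}(S);Q_{n_2}(T))$ directly. Either of these replaces the flawed refinement step and, combined with your data-processing upper bound, completes the proof.
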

\begin{lemma}[\textbf{Convergence of Clipped Variables}]
\label{thm:quantize}
Let $\ell_1,\ell_2,u_1,u_2 >0$, then
 For any two random variables $(S,T)$ with $I(S;T) < \infty$, the
 we have 
\[
\lim_{\ell_1,\ell_2,u_1,u_2 \rightarrow \infty} V(P_{\widetilde{S}_{\ell_1,u_1},\widetilde{T}_{\ell_2,u_2}}, P_{ST})=0
\]
and hence
\[
\lim_{\ell_1,\ell_2,u_1,u_2 \rightarrow \infty} I(\widetilde{S}_{\ell_1,u_1};
\widetilde{T}_{\ell_2,u_2})= I(S;T).
\]
\end{lemma}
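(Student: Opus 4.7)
The plan is to obtain the total variation convergence first via an elementary coupling argument and then derive the mutual information convergence by sandwiching $I(\widetilde{S};\widetilde{T})$ between a lower bound from Lemma \ref{lem:conv_mutual_info_1} and an upper bound from the data processing inequality.

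For the total variation part, I would first observe that $\widetilde{S}_{\ell_1,u_1}$ and $S$ are defined on the same probability space and that $\widetilde{S}_{\ell_1,u_1}(\omega)=S(\omega)$ whenever $S(\omega)\in[-\ell_1,u_1]$. Because the PDF of $S$ exists (Remark \ref{rem:assumptions}), we have $P(|S|=\infty)=0$, and therefore
\[
P\bigl(\widetilde{S}_{\ell_1,u_1}\neq S\bigr)=P(S<-\ell_1)+P(S>u_1)\xrightarrow{\ell_1,u_1\to\infty}0,
\]
and analogously for $\widetilde{T}_{\ell_2,u_2}$. A union bound then gives $P\bigl((\widetilde{S},\widetilde{T})\neq(S,T)\bigr)\to 0$ as all four parameters grow. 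I would then invoke the standard coupling inequality: for any random variables $X,Y$ on a common space and any measurable $\mathsf{A}$, $|P(X\in\mathsf{A})-P(Y\in\mathsf{A})|\le P(X\neq Y)$, so taking the supremum over $\mathsf{A}$ yields $V(P_X,P_Y)\le P(X\neq Y)$. Applied to the joint laws on $(\mathbb{R}^2,\mathcal{B}^2)$, this delivers the first claim of the lemma.

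For the mutual information part, I would use the fact (noted in the remark following Definition \ref{def:distribution_conv}) that convergence in total variation implies strong convergence, so Lemma \ref{lem:conv_mutual_info_1} gives
\[
I(S;T)\le \liminf_{\ell_1,\ell_2,u_1,u_2\to\infty} I\bigl(\widetilde{S}_{\ell_1,u_1};\widetilde{T}_{\ell_2,u_2}\bigr).
\]
For the matching upper bound, note that $\widetilde{S}_{\ell_1,u_1}=C_{\ell_1,u_1}(S)$ and $\widetilde{T}_{\ell_2,u_2}=C_{\ell_2,u_2}(T)$ are deterministic functions of $S$ and $T$ respectively, so the Markov chain $\widetilde{S}_{\ell_1,u_1}-S-T-\widetilde{T}_{\ell_2,u_2}$ holds. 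Two applications of the data processing inequality give $I(\widetilde{S}_{\ell_1,u_1};\widetilde{T}_{\ell_2,u_2})\le I(S;T)$ for every choice of parameters, which is finite by hypothesis. Combining this uniform upper bound with the liminf bound establishes the limit.

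I do not expect any genuinely hard step here; the argument is essentially a coupling bound plus DPI plus the lower semi-continuity already supplied by Lemma \ref{lem:conv_mutual_info_1}. The only point requiring a little care is that the limit in the statement is a joint limit in all four variables $\ell_1,\ell_2,u_1,u_2$, but since the coupling bound $V\le P((\widetilde{S},\widetilde{T})\neq(S,T))$ decomposes additively across the two coordinates and each marginal tail probability vanishes, the joint limit poses no additional difficulty.
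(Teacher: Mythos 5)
Your proof is correct and takes essentially the same approach as the paper for the total variation part: both rely on the observation that $(\widetilde{S},\widetilde{T})$ and $(S,T)$ coincide outside the tail event, so the distance between their laws is controlled by the probability of that event. One difference worth noting: the paper writes the variational distance as an \emph{equality} $2[1-P((-\ell_1\le S\le u_1)\cap(-\ell_2\le T\le u_2))]$, which is both off by a factor of two relative to the paper's own Definition~\ref{def:distribution_conv} and not in general an equality (the coupling argument only gives the inequality you state); your inequality-only phrasing is the cleaner and safer route, and it suffices. More substantively, the paper's displayed proof stops after the variational-distance limit and does not actually argue the mutual information convergence at all; you supply the missing half by sandwiching $I(\widetilde{S};\widetilde{T})$ between the lower-semi-continuity bound of Lemma~\ref{lem:conv_mutual_info_1} and the data-processing upper bound $I(\widetilde{S};\widetilde{T})\le I(S;T)$ obtained from the Markov chain $\widetilde{S}-S-T-\widetilde{T}$. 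That completion is correct and is the argument the paper's ``and hence'' implicitly appeals to.
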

\begin{proof}
The variational distance between $P_{ST}$ and
$P_{S_{\ell_1,u_1}T_{\ell_2,u_2}}$ can be given by 
\[
V(P_{\widetilde{S}_{\ell_1,u_1},\widetilde{T}_{\ell_2,u_2}}, P_{ST})= 2[1-P\left[ (-\ell_1 \leq S \leq
  u_1) \cap (-\ell_2 \leq T \leq u_2) \right] ].
\]
Hence we get 
\[
\lim_{\ell_1,\ell_2,u_1,u_2 \rightarrow \infty}
V(P_{\widetilde{S}_{\ell_1,u_1},\widetilde{T}_{\ell_2,u_2}}, P_{ST})=0. 
\]
\end{proof}

\section{Framework for Continuous to Discrete Source and Channel Transformation}

This section introduces the components of the discretization framework which is considered in subsequent sections to study communication over continuous sources and channel networks. We prove convergence of mutual information of sums of discretized random variables to that of their continuous counterparts. Theorems \ref{th:7} and \ref{th:9} are the main results of this section. 

\subsection{Convergence of Cost/Distortion Functions and smoothing of random variables}
The following lemma is used in evaluating the distortion and cost of communication strategies in continuous networks in the subsequent sections.
\begin{lemma}[\textbf{Convergence of Cost Functions and Distortion Functions}]
\label{thm:quantize_dist_cost}
Let $S$ and $T$ be  two random variables.
For any measurable function $\kappa:\mathbb{R}\rightarrow \mathbb{R}^+$
such that   $\mathbb{E}(\kappa(S))< \infty$, there exists two
increasing (and approaching $\infty$) sequences of lengths 
$l_{m},u_{m}$ such that 
\[
\lim_{m \rightarrow \infty} \mathbb{E} \kappa(\widetilde{S}_{l_{m},u_m})  =
\mathbb{E}\kappa(S).
\]
For any measurable  function $d: \mathbb{R}^2 \rightarrow \mathbb{R}^+$
such that   $\mathbb{E}(d(S,T))< \infty$,  there exists four
increasing (and approaching $\infty$) sequences of lengths 
$l_{1n},u_{1n}$, and $l_{2m},u_{2m}$  such that 
\[
\lim_{n \rightarrow \infty} \lim_{m \rightarrow \infty}
\mathbb{E} d(\widetilde{S}_{l_{1n},u_{1n}},\widetilde{T}_{l_{2m},u_{2m}}) =  \mathbb{E}d(S,T).
\]
\end{lemma}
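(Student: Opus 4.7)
The plan is to decompose each expectation according to whether the clipping is active, apply dominated convergence on the interior piece, and construct the truncation sequences so that the boundary contributions vanish.

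For the first claim, from $\widetilde{S}_{l,u}=C_{l,u}(S)$ I would write
\[
\mathbb{E}\kappa(\widetilde{S}_{l,u})=\mathbb{E}\!\bigl[\kappa(S)\,\mathbbm{1}_{\{-l\le S\le u\}}\bigr]+\kappa(u)\,P(S>u)+\kappa(-l)\,P(S<-l).
\]
As $l,u\to\infty$ the indicator rises to $1$ almost surely, and since $\kappa(S)\mathbbm{1}_{\{-l\le S\le u\}}\le\kappa(S)\in L^1$ by hypothesis, dominated convergence makes the first term tend to $\mathbb{E}\kappa(S)$ along every diverging pair $(l,u)$. The two boundary terms may fail to vanish for arbitrary sequences when $\kappa$ is unbounded, so the statement's requirement of existence (rather than universality) of the sequences is essential. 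Since $\int_u^\infty\kappa(s)\,p_S(s)\,ds\to 0$ by integrability, and $p_S$ exists and is a.e.\ continuous by Remark~\ref{rem:assumptions}, I would pick $u_m$ inside intervals of the form $[m,2m]$ by an averaging/intermediate-value argument (using the continuity of $\kappa$ and of $p_S$) so that $\kappa(u_m)P(S>u_m)\to 0$; a symmetric choice produces $l_m$.

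For the joint claim I would iterate the construction in the order $\lim_n\lim_m$ stated in the lemma. Fixing $(l_1,u_1)$ and splitting on $\{-l_2\le T\le u_2\}$ yields
\[
\mathbb{E} d(\widetilde{S}_{l_1,u_1},\widetilde{T}_{l_2,u_2})=\mathbb{E}\!\bigl[d(\widetilde{S}_{l_1,u_1},T)\mathbbm{1}_{\{-l_2\le T\le u_2\}}\bigr]+\mathbb{E}\!\bigl[d(\widetilde{S}_{l_1,u_1},u_2)\mathbbm{1}_{T>u_2}\bigr]+\mathbb{E}\!\bigl[d(\widetilde{S}_{l_1,u_1},-l_2)\mathbbm{1}_{T<-l_2}\bigr].
\]
For each fixed $n$, I would apply the one-variable selection to the marginal $P_T$ with the measurable function $t\mapsto\mathbb{E}[d(\widetilde{S}_{l_{1n},u_{1n}},t)\mid T=t]$, whose $P_T$-expectation equals $\mathbb{E} d(\widetilde{S}_{l_{1n},u_{1n}},T)$; this produces $(l_{2m},u_{2m})$ driving the inner limit to $\mathbb{E} d(\widetilde{S}_{l_{1n},u_{1n}},T)$. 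The outer limit then follows from a second application of the one-variable argument to $s\mapsto\mathbb{E}[d(s,T)\mid S=s]$ against $P_S$, whose $P_S$-expectation equals $\mathbb{E} d(S,T)<\infty$ by Fubini.

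The main obstacle is the construction of truncation sequences in the unbounded regime; once the averaging argument pins down $u_m$ and $l_m$ so that the boundary contributions vanish, the remainder is a routine nested application of dominated convergence. The hypotheses of Remark~\ref{rem:assumptions} (existence of density, no atoms) together with the continuity of $\kappa$ and the joint continuity of $d$ are precisely what is needed to make this selection work.
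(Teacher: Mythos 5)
Your overall plan---split $\mathbb{E}\kappa(\widetilde{S}_{l,u})$ into an interior piece handled by dominated convergence plus two boundary terms, choose $(l_m,u_m)$ so the boundary contributions vanish, and reduce the two-variable statement to the one-variable one by conditioning on $T$ (equivalently, the Radon--Nikodym derivative $d_1(t)=\mathbb{E}[d(S,T)\mid T=t]$)---is the same scaffold the paper uses, and that part of the proposal is fine.

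The selection step, however, has a real gap. You propose to pick $u_m\in[m,2m]$ by an averaging / intermediate-value argument so that $\kappa(u_m)$ equals the $P_S$-weighted average of $\kappa$ over $[m,2m]$. That only gives $\kappa(u_m)\,P(S\in[m,2m])=\int_m^{2m}\kappa\,dP_S\to 0$. The boundary term you actually need to kill is $\kappa(u_m)\,P(S>u_m)\ge\kappa(u_m)\,P(S>2m)$, and a fixed dyadic window gives no control of $P(S>2m)$ relative to $P(S\in[m,2m])$; when the tail mass beyond $2m$ dominates the mass in $[m,2m]$, the product $\kappa(u_m)P(S>u_m)$ need not go to zero. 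To salvage this you would have to choose the right endpoint adaptively, e.g. take $M_m$ with $P(S>M_m)\le\tfrac12 P(S>m)$ and average over $[m,M_m]$, an extra step your sketch omits. A second, smaller issue: the intermediate-value step requires $\kappa$ continuous, whereas the lemma is stated for measurable $\kappa$ and the paper's argument uses no continuity.

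The paper sidesteps both problems with a liminf-based case split. Writing $\alpha_1=\liminf_{s\to\infty}\kappa(s)$: if $\alpha_1<\infty$, one finds $u^*$ deep in the tail with $\kappa(u^*)<\alpha_1+2\e$, and since $\kappa(s)\ge\alpha_1-\e$ on the whole tail and $P(S\ge u^*)\le\e$, one gets $\kappa(u^*)P(S\ge u^*)\le\int_{u^*}^\infty\kappa\,dP_S+3\e^2$; if $\alpha_1=\infty$, take $u^*$ at (or near) the infimum of $\kappa$ over $[U(\e),\infty)$, giving $\kappa(u^*)P(S\ge u^*)\le\int_{u^*}^\infty\kappa\,dP_S$ directly. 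Either way the boundary term is absorbed into the vanishing tail integral. Once the one-variable selection is repaired along these lines, your two-variable reduction and the remaining limits go through.
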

\begin{proof}
Please refer to Appendix \ref{App:lem:6}
\end{proof}

In general, a continuous random variable $U$ may not have a continuous probability density function (PDF). In such scenarios, a useful technique is to `smoothen' the variable using additive noise. That is to construct $\widetilde{U}= U+N_\epsilon$, where $N_{\epsilon}$ is  uniformly distributed over $[-\epsilon,\epsilon], \epsilon>0$, and $\epsilon$ is a (small) positive number. Clearly, $\widetilde{U}$ has a continuous PDF, and it converges to $U$ in distribution.
The following lemma is used in the subsequent sections.
\begin{lemma}[\textbf{Smoothing of Random Variables}]
\label{lem:5}
Consider a bounded continuous random variable $U$ defined on the probability space $([-M,M], \mathscr{B}[-M,M], P_U)$,  such that $h(U)<\infty$ and $M>0$, and let $N_{\epsilon}$ be uniformly distributed over $[-\epsilon,\epsilon], \epsilon>0$. Assume that $U$ and $N_{\epsilon}$ are independent. Then,
\label{lem:smoothing}
\begin{align*}
    \lim_{\epsilon\to 0}I(N_{\epsilon};U+N_{\epsilon})= 0.
\end{align*}
\end{lemma}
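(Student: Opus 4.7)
The plan is to reduce the mutual information to a difference of differential entropies and control each term. Let $V_\epsilon := U + N_\epsilon$. Conditioning on $N_\epsilon = n$, the variable $V_\epsilon$ is just $U$ shifted by the constant $n$, so $h(V_\epsilon \mid N_\epsilon) = h(U)$. This gives the clean identity
\[
I(N_\epsilon; V_\epsilon) = h(V_\epsilon) - h(V_\epsilon \mid N_\epsilon) = h(V_\epsilon) - h(U),
\]
reducing the claim to showing $h(V_\epsilon) \to h(U)$ as $\epsilon \to 0$.

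The lower bound $h(V_\epsilon) \geq h(U)$ is immediate from conditioning (or equivalently from Jensen's inequality applied to the concave map $t \mapsto -t \log t$, since the density of $V_\epsilon$ is a convex average of shifts of $f_U$). This also confirms $I(N_\epsilon; V_\epsilon) \geq 0$, consistent with non-negativity of mutual information.

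For the matching upper bound I would work directly with the density. By independence and convolution,
\[
f_{V_\epsilon}(y) = \frac{1}{2\epsilon} \int_{y-\epsilon}^{y+\epsilon} f_U(x)\, dx,
\]
which by the Lebesgue differentiation theorem converges to $f_U(y)$ at almost every $y$. For $\epsilon \leq 1$ the support of $V_\epsilon$ lies inside the fixed compact $K := [-M-1, M+1]$, and the universal bound $-t \log t \leq 1/e$ for $t \geq 0$ shows that $-f_{V_\epsilon} \log f_{V_\epsilon}$ is dominated above on $K$ by the integrable constant function $1/e$. The reverse Fatou lemma, combined with continuity of $t \mapsto -t \log t$ under the convention $0 \log 0 = 0$, then gives $\limsup_{\epsilon \to 0} h(V_\epsilon) \leq h(U)$, which together with the lower bound closes the argument.

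The step I expect to be the main obstacle is the handling of the finitely many singular points where $f_U$ may blow up to infinity (see Remark \ref{rem:assumptions}). The argument above is robust to this: almost-everywhere convergence from the Lebesgue differentiation theorem holds at every Lebesgue point of $f_U$, a set of full measure, and the reverse Fatou step requires only a pointwise upper bound on the integrand, which holds universally regardless of whether $f_U$ is bounded. The only care needed is that $h(U)$ is a finite number (as asserted in the hypothesis $h(U) < \infty$ together with $h(U) \geq -\log(2M) > -\infty$ from the bounded support), so that the identity $I = h(V_\epsilon) - h(U)$ is meaningful throughout.
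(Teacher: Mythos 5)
Your proof is correct and takes essentially the same route as the paper: both reduce to showing $h(U+N_\epsilon)\to h(U)$, obtain one inequality from $I(N_\epsilon;U+N_\epsilon)\ge 0$, and obtain the other from almost-everywhere pointwise convergence of the densities combined with a Fatou-type argument (you via reverse Fatou and the uniform bound $-t\log t\le 1/e$; the paper via ordinary Fatou applied to the shifted non-negative integrand $f\log f+1$). One minor slip in your parenthetical: bounded support gives $h(U)\le\log(2M)$, not $h(U)\ge-\log(2M)$; a bounded random variable can still have $h(U)=-\infty$, so finiteness of $h(U)$ really is a hypothesis rather than a consequence of boundedness.
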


\begin{proof}
Please refer to Appendix \ref{App:lem:5}.
\end{proof}
The following generalization of Lemma II.1 in \cite{shirani2018lattices} is used in the sequel in studying the Markov Lemma in distributed source coding under the proposed discretization process.
\begin{lemma}
\label{lem:mc_forced1}
For any quintuple  of random variables $\mathsf{A},\mathsf{B},\mathsf{C},\mathsf{D},E$ with joint distribution
that satisfies the Markov chain $(A,B) - C - (DE)$, consider a pair of random variables
$\widehat{A},\widehat{E}$ that are correlated with $(B,D)$ such that
$P_{BA}=P_{B\widehat{A}}$,  $P_{DE}=P_{D\widehat{E}}$, and $\widehat{A} - B - C - D -\widehat{E}$, then 
\[
I(A;C|B)+I(E;C|D) \geq \frac{1}{2 \ln 2} V^2(P_{CAE},P_{C\widehat{A}\widehat{E}}).
\]
\end{lemma}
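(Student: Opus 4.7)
The plan is to introduce a single intermediate joint distribution on $(A,B,C,D,E)$ that interpolates between the original law $P_{ABCDE}$ and the ``forced Markov'' law $P_{\widehat{A}BCD\widehat{E}}$, to identify the KL divergence of each one-step swap with one of the two conditional mutual informations on the left-hand side, and finally to combine the two estimates via Pinsker's inequality and the triangle inequality for total variation. This extends the one-sided construction of Lemma~II.1 in \cite{shirani2018lattices} to a symmetric two-sided statement.

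First, the hypothesis $(A,B)-C-(D,E)$ factors the original joint distribution as
\[
P_{ABCDE}(a,b,c,d,e) = P_{BCD}(b,c,d)\,P_{A|BC}(a|b,c)\,P_{E|CD}(e|c,d),
\]
where the implied marginal $P_{BCD}$ satisfies $B-C-D$. The Markov chain $\widehat{A}-B-C-D-\widehat{E}$, together with the matching conditions $P_{B\widehat{A}}=P_{BA}$ and $P_{D\widehat{E}}=P_{DE}$, gives
\[
P_{\widehat{A}BCD\widehat{E}}(a,b,c,d,e) = P_{BCD}(b,c,d)\,P_{A|B}(a|b)\,P_{E|D}(e|d),
\]
sharing the same marginal $P_{BCD}$. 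I would then introduce the intermediate distribution
\[
R(a,b,c,d,e) \triangleq P_{BCD}(b,c,d)\,P_{A|B}(a|b)\,P_{E|CD}(e|c,d),
\]
which replaces only the $A$-side conditional by its $B$-marginalized version and leaves the $E$-side intact.

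The central computation will be to verify the two KL identities
\[
D\bigl(P_{ABCDE}\,\|\,R\bigr)=I(A;C|B),\qquad D\bigl(R\,\|\,P_{\widehat{A}BCD\widehat{E}}\bigr)=I(E;C|D).
\]
The first is immediate: the log-ratio reduces to $\log\frac{P_{A|BC}(a|b,c)}{P_{A|B}(a|b)}$, whose expectation under $P_{ABCDE}$ is $I(A;C|B)$ by definition. For the second, one checks that the $(C,D,E)$-marginal of $R$ equals $P_{CDE}$, so that the expectation of $\log\frac{P_{E|CD}(e|c,d)}{P_{E|D}(e|d)}$ under $R$ collapses to $I(E;C|D)$.

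To finish, I would apply Pinsker's inequality to each divergence (yielding $V^2(P_{ABCDE},R)\leq\frac{\ln 2}{2}I(A;C|B)$ and similarly on the other side), combine through the triangle inequality
\[
V(P_{CAE},P_{C\widehat{A}\widehat{E}})\leq V(P_{ABCDE},P_{\widehat{A}BCD\widehat{E}})\leq V(P_{ABCDE},R)+V(R,P_{\widehat{A}BCD\widehat{E}}),
\]
using monotonicity of total variation under marginalization, and then the elementary bound $(x+y)^2\leq 2(x^2+y^2)$ to obtain
\[
V^2(P_{CAE},P_{C\widehat{A}\widehat{E}})\leq \ln 2\cdot\bigl(I(A;C|B)+I(E;C|D)\bigr),
\]
which implies the claimed inequality (in fact with the sharper constant $\tfrac{1}{\ln 2}$). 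The main obstacle is selecting the intermediate $R$ correctly: it must preserve the joint marginal $P_{BCD}$ shared by both endpoints and swap only one conditional at a time, so that each KL step isolates exactly one of the two conditional mutual informations. Once this choice is in place, the remaining work is two invocations of Pinsker and routine algebra.
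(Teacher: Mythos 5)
Your proof is correct, and it takes a genuinely different route from the paper's. The paper computes $I(A;C|B)+I(E;C|D)$ as a single expectation of the log-ratio $\log\frac{P_{A|BC}P_{E|DC}}{P_{A|B}P_{E|D}}$, recognizes this as $D\bigl(P_{ABCDE}\,\|\,P_{\widehat{A}BCD\widehat{E}}\bigr)$, then applies convexity of relative entropy (to marginalize to a $C$-conditional divergence), Pinsker on each conditional slice, and Jensen to re-aggregate. Your proof instead factors the ``long'' KL into two exact one-step KLs via the interpolating law $R = P_{BCD}P_{A|B}P_{E|CD}$, identifies each step as one of the two conditional mutual informations, and then combines via triangle inequality for total variation and the bound $(x+y)^2\leq 2(x^2+y^2)$. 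Both approaches are sound, and both start from the same factorizations of the two endpoint distributions. Your route has the minor advantages of yielding the sharper constant $\tfrac{1}{\ln 2}$ in place of $\tfrac{1}{2\ln 2}$, and of not requiring the reader to match up the $(b,c,d)$-averaged product-form divergence with the conditional KL (the most notationally delicate step in the paper's argument). The paper's route is slightly more compact because it never introduces an intermediate distribution and works entirely with the original pair. One remark worth keeping in mind for either version: since $D\bigl(P_{ABCDE}\,\|\,P_{\widehat{A}BCD\widehat{E}}\bigr)$ equals $I(A;C|B)+I(E;C|D)$ exactly (both joints share the marginal $P_{BCD}$), a single application of Pinsker followed by monotonicity of $V$ under the $(A,C,E)$-marginalization already yields $\tfrac{2}{\ln 2}V^2$, which dominates both your bound and the paper's; neither of you uses this shortcut, but it is worth noting the constant in the lemma is not tight.
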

\begin{proof}
Please see Appendix \ref{App:mc_forced1}
\end{proof}

\subsection{Discretization of Random Variables and their Sums}
\label{sec:structured}
We develop the framework that is used in the next sections to address structured
code ensembles. This requires evaluating mutual information terms which involve sums of random variables as discussed in the following. 
We consider jointly continuous random variables $X,Y,U,V$ with a joint PDF $f_{XY}f_{U|X}f_{V|Y}$. So that the variables satisfy the Markov chain $U \leftrightarrow X\leftrightarrow Y \leftrightarrow V$. We denote the joint probability measure as
$P_{XYUV}$.   Due to an analogy with distributed source coding, we refer to $(X,Y)$ as the source variables, and $(U,V)$ s the auxiliary variables.
%We assume that the joint PDF $f_{XYUV}$ has at most a finite number of discontinuities. 
%(i) $f_{UV}$ has support over the
%compact set $[0,M]^2$ for some finite $M$, and  (ii) for all $(u,v)
%\in [0,M]^2$, the joint PDF satisfies $l \leq f_{UV}(u,v) \leq L$, for
%some $L \geq l >0$. 
\subsubsection{Discretization of Auxiliary Random Variables}
\label{sec:th:7}
Fix $\ell,\ell', \epsilon>0$, and $n\in \mathbb{N}$. Define the clipped variables ${U}_{\ell},{V}_{\ell'}$
as follows:
\begin{align*}
   & \widetilde{U}_{\ell}=
    \begin{cases}
    U \qquad & \text{ if } U\in [-\ell, \ell],\\
    U' & \text{Otherwise.}
    \end{cases}, \qquad 
    & \widetilde{V}_{\ell'}'=
    \begin{cases}
    V \qquad & \text{ if } V\in [-\ell', \ell'],\\
    V' & \text{Otherwise.}
    \end{cases},
\end{align*}
where $U',V'$ are independent of each other and $U,V$ and generated according to $    f_{U'}(\cdot)\triangleq f_{U|U\in [-\ell,\ell]}(\cdot)$,
and $f_{V'}(\cdot)\triangleq f_{V|V\in [-\ell',\ell']}(\cdot)$, respectively.
%, and according to a uniform distribution over $[-\zeta,\zeta]\times [-\zeta',\zeta']$.
We take $\ell,\ell'$ sufficiently large.
%$P_{U|U\in [-\ell,\ell]}$ and $P_{V|V\in [-\ell,\ell]}$, respectively. 
Next, define the smoothed random variables $\widetilde{U}_{\ell,\epsilon}, \widetilde{V}_{\ell',\epsilon}$, where 
\begin{align*}
%    &P(U=\widetilde{U}'_{\ell,\epsilon})=P(V=\widetilde{V}_{\ell,\epsilon})= 1-\epsilon,
%    \\&f_{\widetilde{U}'_{\ell,\epsilon}|U}(\widetilde{U}'|u)= \frac{1-\epsilon}{2\ell}, \widetilde{U}'\in [-\ell,\ell]-\{u\}.
&{\widetilde{U}_{\ell,\epsilon}}\triangleq \widetilde{U}_\ell+{\widetilde{N}_{\ell,\epsilon}},\qquad  {\widetilde{V}_{\ell',\epsilon}}\triangleq \widetilde{V}_{\ell}+{\widetilde{N}'_{\ell',\epsilon}}
%-{\widetilde{N}_{\ell,\epsilon}}
,
\\
& 
f_{\widetilde{N}_{\ell,\epsilon}}(\tilde{n})= \frac{1}{2\epsilon}, \quad \tilde{n}\in (-\epsilon,\epsilon),
\qquad 
f_{\widetilde{N}'_{\ell',\epsilon}}(\tilde{n}')= \frac{1}{2\epsilon},\quad  \tilde{n}'\in (-\epsilon,\epsilon),
%& P(\widetilde{N}_{\ell,\epsilon}=0)= P(\widetilde{N}'_{\ell,\epsilon}=0)=1-\epsilon,
%\\& 
%f_{\widetilde{N}_{\ell,\epsilon}}(\tilde{n})= \frac{1-\epsilon}{2\ell}, \tilde{n}\in [-\ell,\ell]-\{0\}, 
%\\& f_{\widetilde{N}'_{\ell,\epsilon}}(\tilde{n}')= \frac{1-\epsilon}{2\ell}, \tilde{n}'\in [-\ell,\ell]-\{0\}, 
\end{align*}
and  the variables $\widetilde{N}_{\ell,\epsilon}$ and $\widetilde{N}'_{\ell',\epsilon}$ are mutually independent of each other and of $X,Y,V,U, U_{\ell},V_{\ell'}$. 
%Define 
% \begin{align*}
%     \widetilde{U}_{\ell,\epsilon,\delta}\triangleq  \begin{cases}
%     \widetilde{U}_{\ell,\epsilon}\qquad &\text{ if } A=1\\
%     M_{\ell} & \text{ if } A=0
%     \end{cases}, \qquad     \widetilde{V}_{\ell,\epsilon,\delta}\triangleq  \begin{cases}
%     \widetilde{V}_{\ell,\epsilon}\qquad &\text{ if } B=1\\
%     M'_{\ell'} & \text{ if } B=0
%     \end{cases},
% \end{align*}
%where $A$ and $B$ are Bernoulli variables with parameter $\delta\in [0,1]$ which are independent of each other and all other variables, and $M_{\ell}$ and $M'_{\ell'}$ are uniformly distributed on $[-\ell,\ell]$ and $[-\ell',\ell']$, respectively, and they are independent of each other and all other variables. 
Consider discretizing  ${\widetilde{U}_{\ell,\epsilon}}$ and ${\widetilde{V}_{\ell',\epsilon}}$ to $\widehat{U}_{\ell,\epsilon,n}= Q_{n}(\widetilde{U}_{\ell,\epsilon})$ and $\widehat{V}_{\ell,\epsilon,n}= Q_n(\widetilde{V}_{\ell',\epsilon})$, respectively.
%with a uniform
%discretization function with step size $\Delta_n \triangleq \frac{1}{2^n}$ for some integer $n$.% We denote
%\[
%\widetilde{U}_{n,\ell,\epsilon} \triangleq \Delta_n \left\lfloor \frac{\widetilde{U}_{\ell,\epsilon} }{\Delta_n} \right\rfloor +\frac{\Delta_n}{2},
%\mbox{ and } \widetilde{V}_{n,\ell,\epsilon}  \triangleq \Delta_n \left\lfloor \frac{\widetilde{U}'_{\ell,\epsilon} }{\Delta_n}
%\right\rfloor +\frac{\Delta_n}{2}.
%\]
Note that the by construction the Markov chain $\widehat{U}_{\ell,\epsilon,n} \leftrightarrow X \leftrightarrow Y \leftrightarrow  \widehat{V}_{\ell,\epsilon,n} $ holds. 
We have the following theorem.

\begin{theorem}
\label{th:7}
For any $\xi>0$ there exists $n,\ell,\ell',\epsilon>0$ such that 
\begin{align}
        \label{eq:5.1.1}&
        |I(X;  \widehat{U}_{n,\ell,\epsilon})
        - I(X;U)|\leq \xi\\
    \label{eq:5.1.2}&
    |I(Y; \widehat{V}_{n,\ell',\epsilon})-I(Y;V)|\leq \xi\\ 
    \label{eq:5.1.3}&
    |I( \widehat{U}_{n,\ell,\epsilon}+ \widehat{V}_{n,\ell',\epsilon}; \widehat{U}_{n,\ell,\epsilon} )- I(U+V;U)|\leq \xi
    \\ 
    \label{eq:5.1.4}&
    |I(\widehat{U}_{n,\ell,\epsilon}+ \widehat{V}_{n,\ell',\epsilon}; \widehat{V}_{n,\ell',\epsilon} )- I(U+V;V)|\leq \xi ,
\\&|I( \widehat{U}_{n,\ell,\epsilon};  \widehat{V}_{n,\ell',\epsilon})-  I(U;V)|\leq \xi,    \label{eq:5.1.5}
\end{align} 
Furthermore, given a pair of jointly continuous distortion functions $d_i:\mathbb{R}^2\to \mathbb{R}^+, i\in \{1,2\}$ and continuous reconstruction functions $g_i:\mathbb{R}^2\to \mathbb{R}$, we have:
\begin{align}
    &|\mathbb{E}(d_1(X,g_1(\widehat{U}_{n,\ell,\epsilon},  \widehat{V}_{n,\ell',\epsilon})))-  \mathbb{E}(d_1(X,g_1(U,V)))|\leq \xi
    \label{eq:5.1.6}
    \\&|\mathbb{E}(d_2(Y,g_2(\widehat{U}_{n,\ell,\epsilon},  \widehat{V}_{n,\ell',\epsilon})))-  \mathbb{E}(d_2(Y,g_2(U,V)))|\leq \xi
    \label{eq:5.1.7}
\end{align}
\end{theorem}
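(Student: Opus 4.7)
My plan is a three-stage approximation argument that mirrors the three operations in the construction of $\widehat U_{n,\ell,\epsilon}$ and $\widehat V_{n,\ell',\epsilon}$: clipping, additive smoothing, and uniform quantization. I will choose the parameters sequentially---$\ell,\ell'$ first, then $\epsilon$, and finally $n$---so that each stage contributes at most $\xi/3$ to any of the seven left-hand sides. In the clipping stage, the construction gives $P((\widetilde U_\ell,\widetilde V_{\ell'})\ne (U,V)) \le P(U\notin [-\ell,\ell]) + P(V\notin [-\ell',\ell'])$, which vanishes as $\ell,\ell'\to\infty$, so $(X,Y,\widetilde U_\ell,\widetilde V_{\ell'})\to (X,Y,U,V)$ in total variation. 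Combined with the data processing inequality applied to the Markov chain $X\leftrightarrow U\leftrightarrow \widetilde U_\ell$ (conditional on $U$, $\widetilde U_\ell$ is either $U$ itself or an independent draw from $f_{U\mid U\in[-\ell,\ell]}$) and with Lemma \ref{lem:conv_mutual_info_1}, this yields convergence of the mutual information terms in (\ref{eq:5.1.1})--(\ref{eq:5.1.5}) when the quantized variables are replaced by the clipped ones. In the smoothing stage I use the identity $I(X;\widetilde U_\ell) = I(X;\widetilde U_{\ell,\epsilon}) + I(X;\widetilde U_\ell\mid \widetilde U_{\ell,\epsilon})$ (valid since $\widetilde N_{\ell,\epsilon}$ is independent of $(X,\widetilde U_\ell)$), together with Lemma \ref{lem:smoothing} applied to $\widetilde U_\ell$, to bound the residual $I(X;\widetilde U_\ell\mid \widetilde U_{\ell,\epsilon})$ by a quantity that vanishes with $\epsilon$. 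The discretization stage then follows from Lemma \ref{thm:quantize0}, since $|Q_n(s)-s|\le 2^{-(n+1)}$ guarantees the almost-sure convergence $\widehat U_{n,\ell,\epsilon}\to \widetilde U_{\ell,\epsilon}$.

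The main obstacle lies in the sum terms (\ref{eq:5.1.3}) and (\ref{eq:5.1.4}), because $\widehat U+\widehat V$ is a discrete variable on $(1/2^n)\mathbb{Z}$ that is close to, but not equal to, $Q_n(\widetilde U_{\ell,\epsilon}+\widetilde V_{\ell',\epsilon})$. My plan is to exploit that $(a,b)\mapsto (a+b,a)$ is a measurable bijection of $\mathbb{R}^2$, so total-variation convergence of $(\widetilde U_\ell,\widetilde V_{\ell'})$ to $(U,V)$ propagates to $(\widetilde U_\ell+\widetilde V_{\ell'},\widetilde U_\ell)\to (U+V,U)$, and likewise after smoothing; hence the continuous quantity $I(\widetilde U_{\ell,\epsilon}+\widetilde V_{\ell',\epsilon};\widetilde U_{\ell,\epsilon})$ approximates the target $I(U+V;U)$. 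For the discretization step I observe that $|\widehat U+\widehat V - Q_n(\widetilde U_{\ell,\epsilon}+\widetilde V_{\ell',\epsilon})|\le 2^{-n}$, so the joint laws $P_{\widehat U+\widehat V,\widehat U}$ and $P_{Q_n(\widetilde U_{\ell,\epsilon}+\widetilde V_{\ell',\epsilon}),\widehat U}$ differ in total variation by a term that vanishes with $n$ (using the smoothness supplied by the additive noise $\widetilde N_{\ell,\epsilon}$). Applying Lemma \ref{thm:quantize0} to the pair $(\widetilde U_{\ell,\epsilon}+\widetilde V_{\ell',\epsilon},\widetilde U_{\ell,\epsilon})$ and combining with the continuity of mutual information on discrete laws under total-variation perturbation then yields the required convergence. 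Equation (\ref{eq:5.1.4}) follows symmetrically.

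For the distortion bounds (\ref{eq:5.1.6})--(\ref{eq:5.1.7}), I let $\mathsf{E}_{\ell,\ell'} = \{U\in [-\ell,\ell], V\in [-\ell',\ell']\}$ and note that on $\mathsf{E}_{\ell,\ell'}$ one has $(\widehat U,\widehat V)\to (U,V)$ almost surely as $\epsilon\to 0,\,n\to\infty$, so by continuity of $g_i$ and $d_i$, $d_i(X,g_i(\widehat U,\widehat V))\to d_i(X,g_i(U,V))$ in probability on $\mathsf{E}_{\ell,\ell'}$. I split the expectation over $\mathsf{E}_{\ell,\ell'}$ and its complement: on the complement, choosing $\ell,\ell'$ large makes the contribution small for both expectations via Lemma \ref{thm:quantize_dist_cost}; on $\mathsf{E}_{\ell,\ell'}$, the function $g_i$ is uniformly continuous on the compact box $[-\ell,\ell]\times [-\ell',\ell']$, hence $g_i(\widehat U,\widehat V)$ is bounded and the bounded convergence theorem applies. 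Choosing the parameters in the prescribed order and tracking errors through all three stages completes the proof of all seven inequalities simultaneously.
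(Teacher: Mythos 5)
Your three-stage decomposition (clip, smooth, discretize) follows the same architecture as the paper's proof, and your treatment of \eqref{eq:5.1.1}, \eqref{eq:5.1.2}, \eqref{eq:5.1.5} and the distortion bounds \eqref{eq:5.1.6}--\eqref{eq:5.1.7} is essentially sound. The substantive gaps are all concentrated on the sum terms \eqref{eq:5.1.3}--\eqref{eq:5.1.4}, which are the genuinely hard part and the reason the paper's Appendix~\ref{App:th:7} is as long as it is.

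In the clipping stage you invoke the data processing inequality, but the pair $(\widetilde U_\ell+\widetilde V_{\ell'},\widetilde U_\ell)$ is not related to $(U+V,U)$ through a Markov chain in the direction needed to conclude $I(\widetilde U_\ell+\widetilde V_{\ell'};\widetilde U_\ell)\le I(U+V;U)$. Replacing $U$ by an independent copy outside $[-\ell,\ell]$ degrades $I(X;\widetilde U_\ell)$ monotonically, but it can \emph{increase} the dependence between $\widetilde U_\ell$ and $\widetilde U_\ell+\widetilde V_{\ell'}$; you only get a one-sided bound from lower semicontinuity. This is exactly why the paper's Step~1 introduces the indicator variables $A_{U,\ell},B_{V,\ell'}$, conditions on all four cells, and controls the residual conditional mutual informations through variance-based upper bounds on the differential entropies of $U', V', U'+V'$, etc. Your smoothing-stage identity $I(X;\widetilde U_\ell)=I(X;\widetilde U_{\ell,\epsilon})+I(X;\widetilde U_\ell\mid\widetilde U_{\ell,\epsilon})$ is correct, but the residual $I(X;\widetilde U_\ell\mid\widetilde U_{\ell,\epsilon})$ is not bounded by $I(N_\epsilon;\widetilde U_{\ell,\epsilon})$; the two quantities are not comparable (any attempt to bound the former via conditional DPI through $\widetilde U_\ell$ or $N_\epsilon$ runs into $h(N_\epsilon\mid\widetilde U_\ell,\widetilde U_{\ell,\epsilon})=-\infty$). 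The paper instead handles the non-sum terms by DPI plus lower semicontinuity directly, and handles the sum terms by conditioning on $(\widetilde N_{\ell,\epsilon},\widetilde N'_{\ell,\epsilon})$ and expanding via the chain rule so that the only loss is $I(\widetilde N_{\ell,\epsilon};\widetilde U_{\ell,\epsilon})$, where Lemma~\ref{lem:smoothing} is actually applicable.

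The discretization stage is the most serious gap. You assert that the total-variation proximity of $P_{\widehat U+\widehat V,\widehat U}$ and $P_{Q_n(\widetilde U_{\ell,\epsilon}+\widetilde V_{\ell',\epsilon}),\widehat U}$ implies proximity of their mutual informations by ``continuity of mutual information on discrete laws under total-variation perturbation.'' That continuity only holds with a modulus that depends on alphabet size, and here the alphabet has cardinality $\Theta(\ell\,2^n)$, which diverges exactly as $n\to\infty$. A TV bound that vanishes like $\delta_n\to 0$ gives an MI perturbation of order $\delta_n\log(\ell 2^n)$, which need not go to zero. The paper circumvents this by writing $C=D+E$ with $E=Q_n(\operatorname{mod}_Q\widetilde U_{\ell,\epsilon}+\operatorname{mod}_Q\widetilde V_{\ell,\epsilon})\in\{-1/N,0,1/N\}$ (a fixed three-symbol alphabet, via the lattice distributive property), reducing the problem to showing $I(C;E)\to 0$ and $I(\widehat U,C;E)\to 0$. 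These are then controlled through Polyanskiy's $T$-information inequality $I(C;E)\le(\log 3-1)T(E;C)+h_b(T(E;C))$, which is robust precisely because $|E|=3$ is fixed, together with Pinsker's inequality, explicit estimates $P(E=0)\to 3/4$, $P(E=\pm 1/N)\to 1/8$, the Makkuva--Wu bound for $I(D;E)\to 0$, and uniform continuity of the smoothed density. Without some device that confines the growing-alphabet dependence to a fixed-alphabet factor, the TV-to-MI step cannot be closed in the form you propose.
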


\begin{proof}
Please see Appendix \ref{App:th:7}
\end{proof}

% \textcolor{red}{I REMOVED THE THEOREM BELOW BECAUSE OF MARKOV CHAIN (COMMENTED OUT)}
% We have the next theorem saying something similar. For any quadruple
% of random variables $(U,V,X,Y)$, we have the following result.
% \begin{theorem}
% \label{thm:cont_addition1}
% For any $\xi>0$ there exists $n,\ell,\ell',\epsilon>0$ such that 

% \textcolor{red}{needs to be fixed.} 
% There exists a sequence $\xi_{n,\ell,\ell'}>0$ such that 
% \[
% \lim_{n,m,l \rightarrow \infty} I(U_n+V_n,X_m;Y_l) \geq
% I(U+V,X;Y)-\xi_{n,m,l},
% \]
% and $\lim_{n,m,l \rightarrow \infty} \xi_{n,m,l}=0$. 
% \end{theorem}

% \noindent \textbf{Proof:} Fix $\epsilon>0$. 
% We see that  there exists $m_0$ and $l_0$ such that for all $m>m_0$ and $l>l_0$,
% have $I(U+V;X_m;Y_l) \geq I(U+V,X;Y)-\e$. Choose an %$m>m_0$ and an
% $l>l_0$. Note that $X_m$ and $Y_l$ are finite-valued random
% variables. Applying Theorem \ref{th:7}
% on $(U,V)$ for different realizations of $(X_m,Y_l)$,  we see that there
% exists a sequence $\xi_{n,m,l}>0$ such that 
% \[
% \lim_{n \rightarrow \infty} I(U_n+V_n,X_m;Y_l) \geq I(U+V,X_m;Y_l)
% -\xi_{n,m,l},
% \]
% and $\lim_{n} \xi_{n,m,l}=0$. 
% From this we get the desired result.

\subsubsection{Discretization of the Source Variables}
\label{sec:th:9}
In the following, we describe the procedure for discretizing the source variables while ensuring that the long Markov chain holds. Let $\ell,\ell'>0$, and 
 $Z$ and $W$ be two random variables
that are independent of the source $(X,Y)$ such that  $Z \in
[-\ell,\ell]$ with probability one, 
$W \in [-\ell',\ell']$ with probability one, 
and the distribution $P_{Z}P_{W}$ is  
given by 
\[
P_{Z}P_{W}(A \times B)= \frac{P_{X}(\mathsf{A} \cap [-\ell,\ell] )P_{Y}(\mathsf{B} \cap
  [-\ell',\ell'])}{P_{X}([-\ell,\ell])P_{Y}( [-\ell',\ell])}
\]
for all events $\mathsf{A}$ and $\mathsf{B}$. in Borel sigma algebra. 
Define the clipped source variables as:
\begin{align}
\widetilde{X}_{\ell}=\left\{
\begin{array}{cc}
X & \mbox{ if } X \in [-\ell,\ell]  \\
Z & \mbox{ otherwise}
\end{array} \right.
\end{align}
and
\begin{align}
\label{eq:Y_clip}
\widetilde{Y}_{\ell'}=\left\{
\begin{array}{cc}
Y & \mbox{ if } Y \in [-\ell',\ell]  \\
W & \mbox{ otherwise}
\end{array} \right.
\end{align}
Furthermore, let $n\in \mathbb{N}$, and define the quantized and clipped source variables $\widehat{X}_{n,\ell}\triangleq Q_{n}(\widetilde{X}_{\ell})$, and $\widehat{Y}_{n,\ell'}\triangleq Q_{n}(\widetilde{Y}_{\ell'})$. 

\begin{theorem}
\label{th:9}
Given a quadruple of random variables $(X,Y,U,V)$, where i) $(X,Y)$ are jointly continuous with joint PDF $f_{X,Y}$, and ii) $U,V$ are discrete random variables defined on finite sets $\mathcal{U}$ and $\mathcal{V}$, respectively, and  iii) the long Markov chain $U - X - Y - V$ holds. Then,  
For any $\xi>0$ there exists $n,\ell,\ell'>0$ and variables $\overline{U}_{n,\ell}$ and $\overline{V}_{n,\ell'}$ defined on $\mathsf{U}\times \mathsf{V}$
such that the long  Markov chain $\overline{U}_{n,\ell} - \widehat{X}_{n,\ell} - \widehat{Y}_{n,\ell'} -\overline{V}_{n,\ell'} $ holds, and the following conditions are satisfied  
\begin{align} 
        \label{eq:9.1.1}&
       |I(\widehat{X}_{n,\ell}; \overline{U}_{n,\ell})
        - I(X;U)|\leq \xi\\
    \label{eq:9.1.2}&
    |I(\widehat{Y}_{n,\ell'}; \overline{V}_{n,\ell'})-I(Y;V)|\leq \xi\\ 
    \label{eq:9.1.3}&
    |I(\overline{U}_{n,\ell}+ \overline{V}_{n,\ell'}; \overline{U}_{n,\ell} )- I(U+V;U)|\leq \xi
    \\ 
    \label{eq:9.1.4}&
    |I(\overline{U}_{n,\ell}+ \overline{V}_{n,\ell'}; \overline{V}_{n,\ell'} )- I(U+V;V)|\leq \xi ,
\\&|I( \overline{U}_{n,\ell};  \overline{V}_{n,\ell'})-  I(U;V)|\leq \xi,   
\label{eq:9.1.5}
\end{align} 
Furthermore, given a pair of jointly continuous distortion functions $d_i:\mathbb{R}^2\to \mathbb{R}^+, i\in \{1,2\}$ and continuous reconstruction functions $g_i:\mathbb{R}^2\to \mathbb{R}$, we have:
\begin{align}
    &|\mathbb{E}(d_1(\widehat{X}_{n,\ell},g_1(\overline{U}_{n,\ell},  \overline{V}_{n,\ell'})))-  \mathbb{E}(d_1(X,g_1(U,V)))|\leq \xi
    \label{eq:9.1.6}
    \\&|\mathbb{E}(d_2(\widehat{Y}_{n,\ell},g_2(\overline{U}_{n,\ell},  \overline{V}_{n,\ell'})))-  \mathbb{E}(d_2(Y,g_2(U,V)))|\leq \xi
    \label{eq:9.1.7}
\end{align}
\end{theorem}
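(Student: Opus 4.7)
\smallskip
\noindent\emph{Proof plan.} The strategy is to construct $\overline{U}_{n,\ell}$ and $\overline{V}_{n,\ell'}$ as auxiliaries that depend on the sources only through the discretized versions $\widehat{X}_{n,\ell}$ and $\widehat{Y}_{n,\ell'}$, and then to show that the joint law of $(\widehat{X}_{n,\ell},\widehat{Y}_{n,\ell'},\overline{U}_{n,\ell},\overline{V}_{n,\ell'})$ is close to $P_{X,Y,U,V}$ once $n,\ell,\ell'$ are large. Concretely, I would fix
\[
P_{\overline{U}_{n,\ell}|\widehat{X}_{n,\ell}}(u|\hat{x}) = P_{U|\widehat{X}_{n,\ell}}(u|\hat{x}), \qquad P_{\overline{V}_{n,\ell'}|\widehat{Y}_{n,\ell'}}(v|\hat{y}) = P_{V|\widehat{Y}_{n,\ell'}}(v|\hat{y}),
\]
and impose that $\overline{U}_{n,\ell}$ and $\overline{V}_{n,\ell'}$ are conditionally independent of one another and of the remaining variables given $(\widehat{X}_{n,\ell},\widehat{Y}_{n,\ell'})$. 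The long Markov chain $\overline{U}_{n,\ell} - \widehat{X}_{n,\ell} - \widehat{Y}_{n,\ell'} - \overline{V}_{n,\ell'}$ then holds by construction, and because $(\overline{U}_{n,\ell},\widehat{X}_{n,\ell}) \stackrel{d}{=} (U,\widehat{X}_{n,\ell})$ the bounds (\ref{eq:9.1.1}) and (\ref{eq:9.1.2}) reduce to Lemma \ref{thm:quantize0} applied to $(X,U)$ and $(Y,V)$, respectively.

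The technical core is the convergence $P_{\overline{U}_{n,\ell}\overline{V}_{n,\ell'}} \to P_{UV}$ on the fixed finite alphabet of $(U,V)$. By construction,
\[
P_{\overline{U}\overline{V}}(u,v) \;=\; \mathbb{E}\bigl[\, P_{U|\widehat{X}_{n,\ell}}(u|\widehat{X}_{n,\ell})\, P_{V|\widehat{Y}_{n,\ell'}}(v|\widehat{Y}_{n,\ell'}) \,\bigr],
\]
while $P_{UV}(u,v) = \mathbb{E}\bigl[P_{U|X}(u|X)\, P_{V|Y}(v|Y)\bigr]$ thanks to the long Markov chain $U-X-Y-V$. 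I would let $\ell$ and $\ell'$ tend to infinity together with $n$ so that the dyadic partitions underlying $\widehat{X}_{n,\ell}$ refine and exhaust $\mathbb{R}$; then $\sigma(\widehat{X}_{n,\ell}) \uparrow \sigma(X)$ (up to a null set), and L\'evy's upward convergence theorem gives $P_{U|\widehat{X}_{n,\ell}}(u|\widehat{X}_{n,\ell}) \to P_{U|X}(u|X)$ almost surely, and analogously for $V$. Since all factors lie in $[0,1]$, bounded convergence yields pointwise convergence of $P_{\overline{U}\overline{V}}$ to $P_{UV}$, which on a finite alphabet is equivalent to convergence in total variation.

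Once the joint pmf of $(\overline{U},\overline{V})$ converges, (\ref{eq:9.1.3})--(\ref{eq:9.1.5}) follow from the continuity of mutual information as a function of a joint pmf on a fixed finite alphabet, together with the fact that $\overline{U}+\overline{V}$ is a deterministic function of $(\overline{U},\overline{V})$ that converges in law to $U+V$. For the distortion bounds (\ref{eq:9.1.6}) and (\ref{eq:9.1.7}), joint convergence of $(\widehat{X}_{n,\ell},\overline{U}_{n,\ell},\overline{V}_{n,\ell'})$ to $(X,U,V)$ in distribution, combined with continuity of $d_i$ and $g_i$, the boundedness of $g_i(\overline{U},\overline{V})$ on the finite alphabet, and the integrability of $d_i(X,g_i(U,V))$, yield the required convergence of expectations; the contribution of the clipping tail is controlled exactly as in the proof of Lemma \ref{thm:quantize_dist_cost}.

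The main obstacle is the joint convergence of $(\overline{U},\overline{V})$ to $(U,V)$, since the construction of the new auxiliaries preserves only the conditional marginals $P_{U|\widehat{X}}$ and $P_{V|\widehat{Y}}$ and not the full joint $P_{UV|XY}$. It is precisely the factorization $P_{UV|XY} = P_{U|X}P_{V|Y}$ forced by the long Markov hypothesis that allows the martingale convergence step to separate into a product of conditional expectations and to recover $P_{UV}$ in the limit. Without this hypothesis the argument would break, which is consistent with how the corresponding long Markov chain is preserved in the analysis underlying Theorem \ref{th:7}.
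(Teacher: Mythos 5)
Your proposal takes a genuinely different route from the paper's proof, and while the high-level structure is plausible, the key convergence argument contains a gap.

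The paper (Appendix~\ref{App:th:9}) works in two stages: Step~1 constructs auxiliaries $\overline{U}_\ell,\overline{V}_{\ell'}$ that factor through the \emph{continuous} clipped variables $\widetilde{X}_\ell,\widetilde{Y}_{\ell'}$ with the exact original conditional law ($P_{\overline{U}_\ell|\widetilde{X}_\ell}=P_{U|X}$ on $[-\ell,\ell]$), and proves strong convergence of the resulting joint law as $\ell,\ell'\to\infty$; Step~2 introduces the discretization and, rather than re-examining conditional distributions directly, invokes Lemma~\ref{lem:mc_forced1} to bound the variational distance between the ``forced-Markov'' joint law and the original one by the conditional mutual informations $I(\widetilde{X}_\ell\widetilde{Y}_{\ell'}\overline{V}_{\ell'};\overline{U}_\ell|\widehat{X}_{n,\ell})$ and $I(\widetilde{X}_\ell\widetilde{Y}_{\ell'}\overline{U}_\ell;\overline{V}_{\ell'}|\widehat{Y}_{n,\ell'})$, both of which vanish by Lemma~\ref{thm:quantize0}. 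No pointwise convergence of conditional probabilities is ever needed. You instead condition $\overline{U}_{n,\ell}$ on $\widehat{X}_{n,\ell}$ directly and try to recover $P_{UV}$ in the limit via L\'evy's upward theorem. The reduction of \eqref{eq:9.1.1}--\eqref{eq:9.1.2} to Lemma~\ref{thm:quantize0} under your construction is correct, and you correctly identify the role of the long Markov chain in splitting the joint into a product of conditional expectations.

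The gap is in the statement ``$\sigma(\widehat{X}_{n,\ell})\uparrow\sigma(X)$,'' which is needed for L\'evy's upward theorem. It fails for two reasons. First, the quantizer $Q_n$ of Definition~\ref{Def:Desc} uses centered cells, whose boundaries lie at odd multiples of $2^{-(n+1)}$; an odd multiple of $2^{-(n+1)}$ equals $(4k+2)\cdot 2^{-(n+2)}$, an \emph{even} multiple of $2^{-(n+2)}$, so it is not a level-$(n+1)$ boundary. The partitions are therefore not nested and $\{\sigma(\widehat{X}_{n,\ell})\}_n$ is not a filtration, so $P_{U|\widehat{X}_{n,\ell}}(u|\widehat{X}_{n,\ell})$ is not a martingale. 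Second, $\widehat{X}_{n,\ell}=Q_n(\widetilde{X}_\ell)$ depends on the auxiliary replacement variable $Z$ (used in the clipping of Section~\ref{sec:th:9} whenever $X\notin[-\ell,\ell]$), so $\sigma(\widehat{X}_{n,\ell})$ is not even a sub-$\sigma$-algebra of $\sigma(X)$. The needed convergence of the joint pmf of $(\overline{U},\overline{V})$ is in fact true, but your justification does not establish it; you would need to either switch to a nested dyadic discretization of the form $[k2^{-n},(k+1)2^{-n})$ and first condition on the clipping indicator, or appeal to a Lebesgue-differentiation-type argument for the conditional densities, or simply adopt the paper's Lemma~\ref{lem:mc_forced1} route, which sidesteps almost-sure convergence of conditional probabilities entirely.
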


\begin{proof}
Please refer to Appendix \ref{App:th:9}.
\end{proof}

\begin{corollary}
\label{th:10}
Given a triple of random variables  $(Y,U,V)$, For any $\xi>0$ there exists $\epsilon,n,n', \ell,\ell',\ell''>0$ such that the following conditions are satisfied  
\begin{align} 
&
    |I(\widehat{U}_{n,\ell,\epsilon}+ \widehat{V}_{n,\ell',\epsilon},\widehat{Y}_{n',\ell''}; \widehat{U}_{n,\ell,\epsilon} )- I(U+V,{Y};U)|\leq \xi,
\end{align} 
where $\widehat{Y}_{n',\ell''}= Q_{n'}(\widetilde{Y}_{\ell''})$, and $\widetilde{Y}_{\ell''}$ is defined as in \eqref{eq:Y_clip}. 
\end{corollary}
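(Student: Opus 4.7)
The plan is to repeat the three-stage (clip, smooth, discretize) approximation of Theorem \ref{th:7} with the side-information variable $Y$ carried as an additional component. Concretely, I would introduce a clipped surrogate $\widetilde{Y}_{\ell''}$ exactly as in \eqref{eq:Y_clip}, independently of the noises $\widetilde{N}_{\ell,\epsilon}$ and $\widetilde{N}'_{\ell',\epsilon}$ already used to smooth $U$ and $V$, and then quantize it via $Q_{n'}$ to obtain $\widehat{Y}_{n',\ell''}$. The target mutual information is then interpreted as $I(A; \widehat{U}_{n,\ell,\epsilon})$ with $A \triangleq (\widehat{U}_{n,\ell,\epsilon}+\widehat{V}_{n,\ell',\epsilon}, \widehat{Y}_{n',\ell''})$ viewed as a single $\mathbb{R}^2$-valued discretized random variable, after which the same estimates that delivered \eqref{eq:5.1.3} in Theorem \ref{th:7} can be re-run, now tracking the extra coordinate.

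Step by step, I would first invoke Lemma \ref{thm:quantize} for the pair of vector-valued random variables $((U+V, Y), U)$ to conclude that the clipped triple $(\widetilde{U}_\ell+\widetilde{V}_{\ell'}, \widetilde{Y}_{\ell''}, \widetilde{U}_\ell)$ converges in total variation, and hence in mutual information, to $(U+V, Y, U)$ as $\ell, \ell', \ell'' \to \infty$. Second, I would apply Lemma \ref{lem:5} to $\widetilde{U}_\ell$ and $\widetilde{V}_{\ell'}$ so that inserting the uniform noises $\widetilde{N}_{\ell,\epsilon}$ and $\widetilde{N}'_{\ell',\epsilon}$ changes the joint mutual information by at most a quantity vanishing with $\epsilon$; independence of these noises from $Y$ ensures that the extra $\widetilde{Y}_{\ell''}$ component is unaffected by the smoothing. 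Third, I would use Lemma \ref{lem:conv_distribution} applied coordinate-wise to the pair $((\widetilde{U}_{\ell,\epsilon}+\widetilde{V}_{\ell',\epsilon}, \widetilde{Y}_{\ell''}), \widetilde{U}_{\ell,\epsilon})$ to show that $I(\widehat{U}+\widehat{V}, \widehat{Y}; \widehat{U})$ converges to its smoothed continuous analog as $n, n' \to \infty$. A diagonal choice of the six parameters $\ell, \ell', \ell'', \epsilon, n, n'$ then produces the desired uniform $\xi$-bound.

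The main obstacle, just as in Theorem \ref{th:7}, is that $\widehat{U}_{n,\ell,\epsilon}+\widehat{V}_{n,\ell',\epsilon}$ is the sum of two independently quantized variables rather than the quantization of the sum; this is handled by noting that the map $(u,v,y) \mapsto (Q_n(u)+Q_n(v), Q_{n'}(y), Q_n(u))$ is a fixed measurable function, so total variation convergence on the pre-quantized random variables transfers to strong convergence of the joint discrete law. Lemma \ref{lem:conv_mutual_info_1} then gives the liminf direction of mutual information convergence, while the matching upper bound follows from the data-processing inequality, since the discretized triple is a deterministic function of the continuous triple augmented by the independent smoothing noise. The only ingredient truly new to the corollary beyond Theorem \ref{th:7} is carrying the $\widehat{Y}_{n',\ell''}$ coordinate throughout, which is routine because the clipping and quantization of $Y$ are carried out independently of the machinery used to discretize $U$ and $V$.
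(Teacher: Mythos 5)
Your high-level plan — clip, smooth, discretize, carrying $Y$ as an extra coordinate — is a reasonable attempt, but it does not match the paper's route and has a gap at exactly the point you flag as the ``main obstacle.'' The paper does \emph{not} redo the carry-term analysis with an extra coordinate. Instead it first replaces $Y$ by $\widehat{Y}_{n',\ell''}$ (which is sound by DPI plus lower semicontinuity, since $\widehat{Y}$ is a deterministic function of $Y$), then uses the chain rule $I(U+V,\widehat{Y};U)=I(U;\widehat{Y})+I(U+V;U\mid\widehat{Y})$, and exploits the fact that $\widehat{Y}$ has a \emph{finite} alphabet so that the conditional term decomposes into finitely many applications of Theorem~\ref{th:9} (one per value $\widehat{y}$). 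This conditioning trick avoids touching the carry-term machinery in Theorem~\ref{th:7}'s proof at all.

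Your proposal, by contrast, claims to resolve the $\widehat{U}_{n,\ell,\epsilon}+\widehat{V}_{n,\ell',\epsilon}\neq Q_n(\widetilde{U}_{\ell,\epsilon}+\widetilde{V}_{\ell',\epsilon})$ discrepancy by (i) viewing $(u,v,y)\mapsto (Q_n(u)+Q_n(v),Q_{n'}(y),Q_n(u))$ as a fixed measurable map and appealing to TV convergence, and (ii) getting the matching upper bound from data processing. Neither of these closes the gap. Point (i) only shows that, for \emph{fixed} $(n,n')$, the law of the quantized triple converges as $\ell,\ell',\ell'',\epsilon\to\infty$ — but the limit is then the law of $(Q_n(U)+Q_n(V),Q_{n'}(Y),Q_n(U))$, not of $(U+V,Y,U)$; the passage $n,n'\to\infty$ is still the hard part and is untouched. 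Point (ii) is simply false here: $\widehat{U}_{n,\ell,\epsilon}+\widehat{V}_{n,\ell',\epsilon}$ is a function of the \emph{pair} $(\widetilde{U}_{\ell,\epsilon},\widetilde{V}_{\ell',\epsilon})$, not of their sum $\widetilde{U}_{\ell,\epsilon}+\widetilde{V}_{\ell',\epsilon}$, so the data-processing inequality does not give $I(\widehat{U}+\widehat{V},\widehat{Y};\widehat{U})\le I(\widetilde{U}+\widetilde{V},\widetilde{Y};\widetilde{U})$. (The only DPI bound available this way is $I(\widehat{U}+\widehat{V},\widehat{Y};\widehat{U})\le I(\widetilde{U},\widetilde{V},\widetilde{Y};\widetilde{U})$, which is vacuous since the right side is infinite for a continuous $\widetilde{U}$.) To make your direct-extension route work you would have to redo the carry-term Pinsker argument (the $C$, $D$, $E$ analysis of Appendix~\ref{App:th:7}) with the extra $\widehat{Y}$ coordinate appended to $C$ and $D$ — showing, e.g., $H(\widehat{U},C,\widehat{Y})-H(\widehat{U},D,\widehat{Y})\to 0$ — which you do not do. The paper's conditioning argument is both simpler and already complete, so I'd recommend rewriting along those lines.
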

\begin{proof}
First we apply clipping  and discretization
of the random variable $Y$. Then by the data processing inequality and the lower semi-continuity of mutual information, we have \[
|I(U+V,\widehat{Y}_{n',\ell''}; U)- I(U+V,{Y};U)|\leq \xi,
\]
for all sufficiently large $n'$ and $\ell''$.
Next we note that 
\[
I(U+V,\widehat{Y}_{n',\ell''}; U)=I(U+V;V|\widehat{Y}_{n',\ell''})+I(U;\widehat{Y}_{n',\ell''}).
\]
Regarding the first term, since $\widehat{Y}_{n',\ell''}$ is has a finite alphabet, we can 
 apply Theorem \ref{th:9} on each $I(U+V;V|\widehat{Y}_{n',\ell''}=\widehat{y})$ for each value of $\widehat{y}$ and show convergence. Regarding the second term,  $I(\widehat{U}_{n,\ell,\epsilon};\widehat{Y}_{n',\ell''})$ converges to $I(U;\widehat{Y}_{n',\ell''})$ due to lower semi-continuity of mutual information and data processing inequality.  
\end{proof}

\section{Point-to-point Communication with side-information}
\label{sec:PtP}
As a first step, we derive  the  the optimal rate-distortion function for source coding and optimal
capacity-cost functions for channel coding by first
discretizing the associated  random variables, and using transmission
systems designed for discrete sources and channels.  Wyner proved achievability in the source coding with side-information for general probability spaces under two general boundedness and smoothness conditions given in  \cite[p.64]{wyner1978rate}. While the achievability results in the PtP scenario have been shown in prior works, the method developed here provides the foundation for a unified framework which is used in the subsequent sections to address various problems in multiterminal communication.
\begin{figure}[!htb]
    \begin{center}
    \scalebox{2.4}{\includegraphics[width=5.5cm]{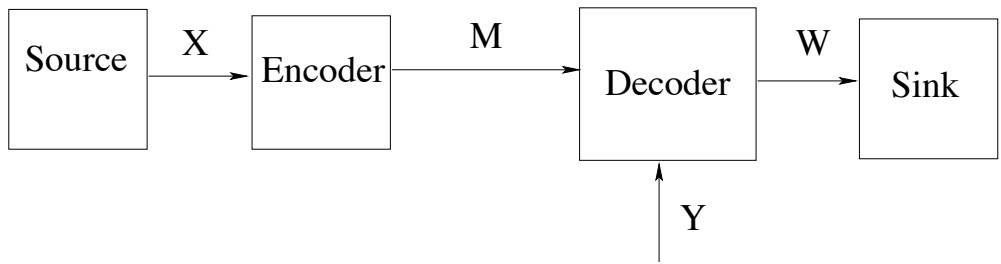}}
    \end{center}
     \caption{Source coding with side
     information.}
     \vspace{-0.4in}
\end{figure}
%For a natural number $n\in \mathbb{N}$, consider the discrete
%set $\frac{1}{2^n} \mathbb{Z}$. 
%Define the quantization function $Q_{n}: \mathbb{R} \rightarrow
%\mathbb{Z}_{n}$ as $Q_{n}(s)= \arg \min_{a \in \frac{1}{2^n}\mathbb{Z}}   (s-a)^2$, for any $s \in \mathbb{R}$, 
%For any two real numbers $l>0$ and $u>0$, define the clipping
%function $C_{l,u}: \mathbb{R} \rightarrow
%\mathbb{R}$, 
%$C_{l,u}(s)=\max\{\min\{u,s\},-l\}$.
%Associated with this quantization, and the clipping function, define
%the   discrete set 
%$\mathbb{Z}_{l,n} \triangleq [-l,u] \bigcap 2^{-n} 
%  \mathbb{Z}$,
%and the associated quantization cells  $A_{l,n}(i)$ for
%$i=0,1,2,\ldots,  (\lfloor u \ 2^{n} \rfloor - \lceil - l 2^n \rceil  - 1)$.
%Let $\zeta_{l,u,n}(i)$ denote the quantization reconstruction of the $i$th cell.
%To reduce clutter, we denote 
%Fix $n\in \mathbb{N}$ and real numbers $\ell,u>0$. Let $Q_{n}(C_{\ell,u}(S))$, where $Q_n$ and $C_{\ell,u}$ are the discretization and clipping functions defined in Section \ref{sec:discretization}.
%We write $S_{\ell,u,n}$ instead of $Q_{n}(C_{\ell,u}(S))$  when 
%the subscript is clear from the context. Moreover, we also denote 
%$S_{l,u}=C_{l,u}(S)$.
%For the channel coding and source coding problems that will be studied
%In this section, we consider cost and distortion functions as follows. 
\subsection{Source Coding With Side-Information at Decoder}

Consider a memoryless 
source $X$ with side-information $Y$ given by $(P_{XY},d)$ comprising of a probability measure
$P_{XY}$ on $\mathbb{R}^2$, with reconstruction alphabet $\mathbb{R}$. The (single-letter)
distortion function $d: \mathbb{R}^3  \rightarrow \mathbb{R}^+$ is assumed to be jointly continuous function, where $d(x,y,\hat{x})$ depends on the input value $x$, the state value $y$, and the reconstruction value $\hat{x}$. 
The single-letter distortion function  induces a
distortion function on $n$-length sequences as follows with a slight
abuse of notation: for all $x^n,y^n \in \mathbb{R}^n\times \mathbb{R}^n$ and $\hat{x}^n \in \mathbb{R}^n$, 
\[
d_n(x^n,y^n,\hat{x}^n)=\frac{1}{n} \sum_{i=1}^n d(x_i,y_i,\hat{x}_i).
\]
In this case, we say that the distortion function on $n$-length
sequence is additive. Observe that the distortion function on
$n$-length sequences is normalized with $n$. 

\bdefi[\textbf{Transmission System}]
An $(n,\Theta)$ transmission system consists of an encoder mapping and a decodes mapping:
\[e:\mathbb{R}^n \rightarrow \{1,2,\ldots,\Theta \}, \mbox{ and }
f:\{1,2,\ldots,\Theta\} \times \mathbb{R}^n \rightarrow \mathbb
{R}^n.\]
A rate distortion pair $(R,D)$ is said to be achievable if
there exists a sequence of $(n,\Theta_n)$ transmission systems  such that 
\[
\limsup_{n \rightarrow \infty} \frac{\log \Theta_n}{n} \leq R,
\qquad  \limsup_{n \rightarrow \infty} \mathbb{E}d_n(X^n,Y^n,f(e(X^n),Y^n)) \leq
D.
\]
Let the operational 
rate-distortion function $R_{op}(D)$ (the Wyner-Ziv rate-distortion function \cite{wyner1973theorem}) denote the infimum of rates $R$ such that $(R,D)$ is
achievable.
\edefi

\begin{remark}
One can equivalently define an achievable pair $(R,D)$ as a pair for which there exists a sequence of $(n,\Theta_n)$ transmission systems with rate $r_n=\frac{\log{\Theta}_n}{n}$ and average distortion $\overline{d}_n=\mathbb{E}d_n(X^n,Y^n,f(e(X^n),Y^n))$ such that $\liminf_{n\to \infty} |R-r_n|^++ |D-\overline{d}_n|^+= 0$.
\end{remark}

 We prove the following theorem for a general probability measure 
$P_{XY}$ and continuous distortion function $d$. 
\begin{theorem}[\textbf{The Wyner-Ziv Rate-Distortion Function}]
For a given source $(P_{XY},d)$,  we have 
$R_{op}(D) \leq \mathsf{R}_{WZ}(D)$, where 
\[
\mathsf{R}_{WZ}(D) \triangleq  \min_{\{P_{U|X},g(\cdot,\cdot)\}} \left[ I(U;X)-I(U;Y) \right],
\]
and the minimization is carried out over all transition probability 
$P_{U|X}$, and continuous functions $g: \mathbb{R}^2 \rightarrow \mathbb{R}$  such that 
$Y \rightarrow X \rightarrow U$,  and $\eE[d(X,Y,g(U,Y))] \leq D$.
 \label{thm:wynerziv_c}
\end{theorem}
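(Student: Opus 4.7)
\medskip

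\noindent\textbf{Proof proposal.} The plan is to reduce the continuous Wyner--Ziv problem to the well-known discrete Wyner--Ziv problem via the clipping-and-quantization framework developed in Section~\ref{sec:3}, and then invoke convergence of mutual information and distortion to transfer the discrete achievable rate back to the continuous setting.

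First I would fix $\eta>0$ and select an optimizing (or nearly-optimizing) pair $(P_{U|X},g)$ for $\mathsf{R}_{WZ}(D)$, so that $I(U;X)-I(U;Y)\le \mathsf{R}_{WZ}(D)+\eta$, $Y-X-U$, and $\mathbb{E}[d(X,Y,g(U,Y))]\le D$. Next, using the construction of Section~\ref{sec:th:7}, I would clip, smooth, and quantize $U$ to obtain a finite-alphabet auxiliary $\widehat{U}=\widehat{U}_{n,\ell,\epsilon}$ such that $\widehat{U}-X-Y$ still holds and, by Theorem~\ref{th:7} together with Lemma~\ref{thm:quantize_dist_cost}, the quantities $I(\widehat{U};X)$, $I(\widehat{U};Y)$, and $\mathbb{E}[d(X,Y,g(\widehat{U},Y))]$ can be made within $\eta$ of their continuous counterparts for sufficiently large parameters. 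Then I would apply the source-side discretization of Theorem~\ref{th:9} (specialized to a single auxiliary by taking the second auxiliary trivial) to produce clipped and quantized source variables $\widehat{X}_{n,\ell}$, $\widehat{Y}_{n,\ell'}$ and a discrete auxiliary $\overline{U}$ satisfying the Markov chain $\overline{U}-\widehat{X}_{n,\ell}-\widehat{Y}_{n,\ell'}$, with $I(\overline{U};\widehat{X}_{n,\ell})$, $I(\overline{U};\widehat{Y}_{n,\ell'})$, and $\mathbb{E}[d(\widehat{X}_{n,\ell},\widehat{Y}_{n,\ell'},g(\overline{U},\widehat{Y}_{n,\ell'}))]$ all within $\eta$ of the continuous targets.

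With a finite-alphabet source-side-information pair $(\widehat{X}_{n,\ell},\widehat{Y}_{n,\ell'})$ and finite-alphabet auxiliary $\overline{U}$ in hand, I would invoke the standard discrete Wyner--Ziv achievability \cite{wyner1973theorem}: there is a sequence of $(n,\Theta_n)$ block codes operating on the discretized source at rate $I(\overline{U};\widehat{X}_{n,\ell})-I(\overline{U};\widehat{Y}_{n,\ell'})+\eta$ whose decoder recovers $\overline{U}^n$ from the bin index and $\widehat{Y}_{n,\ell'}^n$ with vanishing error probability, and whose symbol-by-symbol reconstruction $g(\overline{U}_i,\widehat{Y}_{n,\ell',i})$ achieves expected distortion within $\eta$ of $\mathbb{E}[d(\widehat{X}_{n,\ell},\widehat{Y}_{n,\ell'},g(\overline{U},\widehat{Y}_{n,\ell'}))]$. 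The induced scheme on the original continuous source has the encoder quantize its input via $Q_n\circ C_{\ell,u}$ before applying the discrete encoder, and the decoder quantize its side information similarly, run the discrete decoder, and output $g(\overline{U}_i,Y_i)$ (using the true $Y_i$ in the reconstruction function, which is continuous). Adding up the $O(\eta)$ slacks shows that any rate above $\mathsf{R}_{WZ}(D)$ with distortion above $D$ is achievable, giving $R_{op}(D)\le \mathsf{R}_{WZ}(D)$.

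The main technical obstacle is ensuring that after the successive clipping, smoothing, and quantization steps, the long Markov chain $\overline{U}-\widehat{X}_{n,\ell}-\widehat{Y}_{n,\ell'}$ required by the discrete Wyner--Ziv scheme holds exactly and simultaneously all of the mutual-information and expected-distortion approximations in Theorems~\ref{th:7} and \ref{th:9} hold; this is precisely what those theorems are designed to provide, and the proposal is essentially to chain them together. A secondary subtlety is passing from the reconstruction $g(\overline{U},\widehat{Y}_{n,\ell'})$ used in the analysis to $g(\overline{U},Y)$ used in the physical system, which follows from the joint continuity of $g$ and $d$ together with Lemma~\ref{thm:quantize_dist_cost} applied to the composite distortion $(x,y,u)\mapsto d(x,y,g(u,y))$.
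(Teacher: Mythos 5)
Your high-level strategy --- clip, discretize, invoke the discrete Wyner--Ziv theorem, and pass to the limit via convergence of information measures --- is indeed the strategy the paper uses in Appendix~\ref{app:wynerziv_c}. However, the specific route you propose has two genuine gaps relative to what the argument actually needs.

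First, the reduction to Theorems~\ref{th:7} and~\ref{th:9} does not directly supply the quantity you need. Those theorems are engineered for the \emph{sum} setting $U+V$ of the structured-coding sections, and with the second auxiliary taken trivial they give convergence of $I(X;\widehat{U})$ and $I(Y;\widehat{V})$ but not of the cross term $I(Y;\widehat{U})$ (equivalently $I(\widehat{Y};\overline{U})$), which is exactly the term that enters the Wyner--Ziv rate $I(U;X)-I(U;Y)$. The paper does not chain those theorems; instead it argues directly from lower semi-continuity of mutual information, data processing, and --- crucially --- Lemma~\ref{lem:mc_forced1}, which is what forces the Markov chain after quantizing $X$ while keeping the joint law of $(\widehat{X},\widehat{U})$ fixed and controlling the variational distance $V(P_{\widetilde{X}\widetilde{Y}\widetilde{U}},P_{\widetilde{X}\widetilde{Y}\widetilde{U}'})$. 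If you want to cite Theorems~\ref{th:7}/\ref{th:9} you would still need a separate argument for $I(Y;\widehat{U})\to I(Y;U)$ along the same lines, at which point you are essentially reproducing Steps~1--4 of the paper's proof anyway.

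Second, the operational half of your argument (``the induced scheme on the original continuous source has the encoder quantize its input ... and output $g(\overline{U}_i,Y_i)$'') skips a step that the paper treats carefully and that is not merely cosmetic. After clipping, the symbols with $X_i\notin[-\ell_X,u_X]$ are replaced by a fictitious sample from the truncated law, and the decoder has no way to know which indices were replaced. The paper has the encoder separately communicate (almost losslessly) the indicator sequence $V_i=\mathbbm{1}_{\{X_i\in[-\ell_X,u_X]\}}$ at an extra cost of about $h_b(P_X([-\ell_X,u_X]))+\e$ bits/sample, and the decoder substitutes a fixed constant $c$ with $\eE d'(X,Y,c)<\infty$ on those indices; the distortion analysis then splits on the events $\mathsf{A}$ (the indicator sequence is typical) and $\mathsf{B}_i$ ($X_i$ lies in the clipping range), using the five explicit conditions listed before the final chain of inequalities. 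Without this, the expected distortion at the out-of-range indices is not controlled: $g(\overline{U}_i,Y_i)$ lies in a compact set determined by the quantization, while $d(X_i,Y_i,\cdot)$ can be large when $X_i$ is in the tail, and your appeal to ``joint continuity of $g$ and $d$ together with Lemma~\ref{thm:quantize_dist_cost}'' does not by itself yield the required uniform bound. These are fixable omissions, but as stated the proposal does not close the achievability argument.
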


\noindent \textit{Proof Outline.} A detailed proof is provided in Appendix \ref{app:wynerziv_c}. We provide an outline of the proof in the following. 
 Let us consider the source $(X,Y)$  and an auxiliary variable $U$ with joint
distribution $P_{XYU}$ such that $Y \leftrightarrow X \leftrightarrow U$. The proof follows in three main steps: i) clipping $X,Y,U$ to produce bounded variables $\widetilde{X},\widetilde{Y},\widetilde{U}$,  ii) discretizing $\widetilde{X},\widetilde{Y},\widetilde{U}$ to produce discrete variables $\widehat{X},\widehat{Y},\widehat{U}$, and iii) source coding with side-information using the Wyner-Ziv scheme for discrete sources and channels \cite{wyner1973theorem} to construct and transmit the quantization $\widehat{U}$ of the source $\widehat{X}$ with side-information $\widehat{Y}$.
To elaborate, in the clipping step, the source and side-information $X,Y$ are clipped to produce bounded random variables $(\widetilde{X},\widetilde{Y})$ which take values in $[\ell_X,u_X]$, where $\ell_X,u_X>0$, such that
$P(\widetilde{X}=X)\to 1$ for asymptotically large $\ell_X,u_X$. Then, we produce the clipped variables $\widetilde{U},\widetilde{Y}$ which takes values from $[\ell_U,u_U]\times [\ell_Y,u_Y]$, where $\ell_U,\ell_Y,u_U,u_Y>0$, such that i) $P(\widetilde{U}=U,\widetilde{Y}=Y)\to 1$ for asymptotically large  $\ell_U,\ell_Y,u_U,u_Y$, and ii) the Markov chain $\widetilde{Y}\leftrightarrow \widetilde{X} \leftrightarrow \widetilde{U}$ is satisfied. 
Note that the clipping limits $\ell_X,\ell_Y,\ell_U, u_X,u_Y,u_U$ are taken to be distinct to avoid complications due to their interrelations when taking their limits to infinity. Using the lower semi-continuity of the mutual information, the data processing inequality, along with Lemma \ref{lem:mc_forced1} we show that the rate expressions in the theorem statement do not change if the original variables $X,Y,U$ are replaced by the clipped variables $(\widetilde{X},\widetilde{Y},\widetilde{U})$ for asymptotically large $\ell_X,\ell_Y,\ell_U, u_X,u_Y,u_U$.  
Next, we discretize $\widetilde{X}$ to produce $\widehat{X}$ using the discretization function introduced in Section \ref{sec:discretization}, with discretization step-size ${2^{-n_X}}$, where $n_X\in \mathbb{N}$. Note that the CDF of $F_{\widehat{X}}(\cdot)\to F_X(\cdot)$ for asymptotically large $n_X$. Next, we produce two conditionally independent discretized auxiliary variables $\widehat{Y}$ and $\widehat{U}$,  conditioned on $\widehat{X}$, with discretization step-sizes ${2^{-n_Y}}$ and ${2^{-n_U}}$, respectively, where $n_Y,n_U\in \mathbb{N}$, such that  $F_{\widehat{X},\widehat{Y}}(\cdot,\cdot)\to F_{X,Y}(\cdot,\cdot)$ and $F_{\widehat{X},\widehat{U}}(\cdot,\cdot)\to F_{X,U}(\cdot,\cdot)$  
 for asymptotically large $n_X,n_Y,n_U$. We use the lower semi-continuity of mutual information, the data processing inequality, along with Lemma \ref{lem:mc_forced1} to show that the rate expressions in the theorem statement do not change if the clipped variables $\widetilde{X},\widetilde{Y},\widetilde{U}$ are replaced by the discretized variables $\widehat{X},\widehat{Y},\widehat{U}$ for asymptotically large $n_X,n_Y,n_U$. The discretization steps ${2^{-n_X}},{2^{-n_Y}},{2^{-n_U}}$ are chosen to be distinct to avoid complications due to their interrelations when taking limits.  
  Furthermore, we show that there exists a reconstruction of the original source $X$ using $\widehat{Y}$ and $\widehat{U}$ such that the resulting expected distortion approaches that of the reconstruction of $X$ using $g(Y,U)$ for asymptotically large  clipping limits $\ell_X,\ell_Y,\ell_U,u_X,u_Y,u_U$ and discretization parameters $n_X,n_Y,n_U$. 
  Based on the above derivations, we propose the following source coding scheme.
Given a source sequence $X^n$ and side-information sequence $Y^n$, the encoder and decoder produce  $\widehat{X}^n$ and $\widehat{Y}$, respectively. 
The sequence $\widehat{X}^n$, whose elements are finite discrete variables, is quantized using the Wyner-Ziv source coding scheme, and $\widehat{U}^n$ is sent to the decoder. Finally, the decoder produces a reconstruction of $X^n$ using $\widehat{Y}^n$ and $\widehat{U}^n$. It is shown that the resulting rates and distortions due to the proposed scheme converge to the ones in the theorem statement as the clipping limits and discretization parameters are taken to be asymptotically large.

\subsection{Channel Coding with Side-Information at Transmitter}

Consider a channel with state $(P_{Y|XS},P_S,\kappa)$ comprising of a transition probability 
$P_{Y|XS}: \mathbb{R}^2 \times \mathcal{B} \rightarrow \mathbb{R}$, a
state random variable $S$ with a probability measure $P_S$, and a cost
function $\kappa:\mathbb{R}^2 \rightarrow \mathbb{R}^+$, $\kappa(x,s)$ depends on the channel input  value $x$ and the channel state value $s$ and is assumed to be jointly continuous. Furthermore,
we assume that the channel state is observable at the
encoder noncausally. 
 This single-letter cost
function induces a cost function on $n$-length sequences as follows
with a slight abuse of notation: for all $x^n \in \mathbb{R}^n$, 
\[
\kappa_n(s^n,x^n)=\frac{1}{n} \sum_{i=1}^n \kappa(s_i,x_i).
\]
In this case we say that the cost function is additive. 
For any transition probability  $P_{X|S}:\mathbb{R} \times \mathcal{B}
\rightarrow \mathbb{R}^+$, we define the joint
probability measure $P_{XSY}$ on the measurable
  space $(\mathbb{R}^3,\mathcal{B}^3)$ as 
the unique extension of the measure on product sets
\[
P_{XSY}(A \times B \times C)= \int_A P_{S}(ds) \int_B P_{X|S}(dx|s)
\int_C P_{Y|XS}(dy|x,s)
\]
We assume that the state is IID, and the channel is stationary, memoryless
and used without feedback.  The time-line of the random variables
associated with this problem for a transmission system of blocklength
$n$ is given by
\[
M,S_1,S_2,\ldots,S_n,X_1,Y_1,X_2,Y_2,\ldots,X_n,Y_n,\widehat{M},
\]
where $(M,\widehat{M})$ are the random message to be transmitted over the channel and its reconstruction, respectively.
The assumptions regarding the relation between these random variables
are the following: for all $i \in \{1,2,\ldots,n\}$, 
(a) $P_{S_i|S^{i-1},M}=P_{S_i}=P_{S}$, (b) $X_i$ is a function of $M$ and
$S^n$,  and (c) $P_{Y_i|M,S^n,X^i,Y^{i-1}}=P_{Y_i|X_i,S_i}=P_{Y|XS}$.

\begin{definition}[\textbf{Transmission System}]
An $(n,\Theta)$ transmission system for communication over a given channel with side information
consists of an encoder mapping and a decoder mapping
\[
e: \{1,2,\ldots,\Theta\} \times \mathbb{R}^n \rightarrow \mathbb{R}^n, \ \ \ 
f: \mathbb{R}^n \rightarrow \{1,2,\ldots,\Theta\}.
\]
The rate-cost pair $(R,\tau)$ is said to be achievable if there exists a sequence of $(n,\Theta_n)$ transmission systems such that:
\begin{align*}
&R\leq \liminf_{n\to \infty}\frac{ \log \Theta_n}{n} ,\qquad  \limsup_{n\to \infty}\frac{1}{\Theta_n} \sum_{i=1}^{\Theta_n} \eE \kappa_n(\left(e(i,S^n)),S^n \right)\leq \tau, 
\\&\lim_{n\to \infty} \frac{1}{\Theta}\sum_{i=1}^{\Theta} 
 \int_{\mathbb{R}^n} P_S^n(ds^n) 
  P_{Y|XS}^n(f(Y^n) \neq i|e(i,s^n), s^n)=0.
\end{align*}
One can define the operational capacity-cost  function (Gelfand-Pinsker capacity-cost function \cite{Gelfand})
with side-information $C_{op}(\tau)$ as the supremum of rates $R$ such that $(R,\tau)$ is achievable.
\end{definition}

\begin{figure}[!t]
    \begin{center}
    \scalebox{2.1}{\includegraphics[width=6cm]{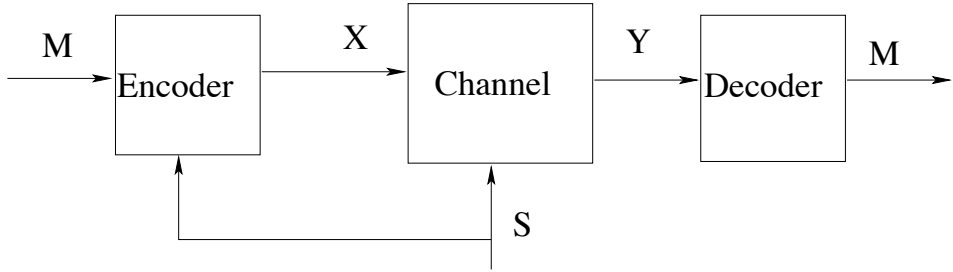}}
    \end{center}
     \caption{A schematic of
    channel coding with side-information.}
\end{figure}

\begin{theorem}[\textbf{The Gelfand-Pinsker Capacity-Cost Function}]
For a given channel with state  $(P_{Y|XS},P_S,\kappa)$, 
we have 
$C_{op}(\tau) \geq \mathsf{R}_{GF}(\tau)$, where 
\[
\mathsf{R}_{GF}(\tau) \triangleq  \sup_{\{P_{U|S},g(\cdot,\cdot)\}} \left[ I(U;X)-I(U;Y) \right],
\]
and the maximization is carried out over all transition probability 
$P_{U|S}$, and continuous functions $g: \mathbb{R}^2 \rightarrow \mathbb{R}$  such that $X=g(U,S)$,  
$U \rightarrow (X,S) \rightarrow Y$,  and $\eE[\kappa(g(U,S),S] \leq \tau$.
 \label{thm:Gelfand}
\end{theorem}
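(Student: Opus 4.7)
The plan is to mirror the three-stage strategy used for the Wyner--Ziv theorem in Theorem \ref{thm:wynerziv_c}: clip, discretize, then invoke the discrete Gelfand--Pinsker coding theorem on the resulting finite-alphabet problem, stitched together by the convergence machinery of Section \ref{sec:3} and the theorems of Section \ref{sec:th:7}--\ref{sec:th:9}. Fix an arbitrary pair $(P_{U|S},g)$ satisfying the constraints $X=g(U,S)$, $U \leftrightarrow (X,S) \leftrightarrow Y$ and $\mathbb{E}\,\kappa(g(U,S),S)\leq \tau$, and assume $I(U;X), I(U;Y)<\infty$ (otherwise the bound is vacuous). The goal is to construct, for every $\delta>0$, a sequence of $(n,\Theta_n)$ transmission systems achieving rate at least $I(U;X)-I(U;Y)-\delta$ under average cost at most $\tau+\delta$.

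In the clipping stage, I would introduce bounded surrogates $\widetilde{S}_{\ell_S}$ and $\widetilde{U}_{\ell_U}$ by the coupling construction of Section \ref{sec:th:9} (replacing $S$, $U$ by independent copies drawn from the conditional-on-the-bounded-set distribution when they fall outside $[-\ell_S,\ell_S]$ or $[-\ell_U,\ell_U]$), so that the conditional law $P_{\widetilde{U}|\widetilde{S}}$ matches the restriction of $P_{U|S}$ and the pair converges in total variation to $(U,S)$ as $\ell_S,\ell_U\to\infty$ (Lemma \ref{thm:quantize}). Set $\widetilde{X}=g(\widetilde{U},\widetilde{S})$ (clipped if needed to keep $\kappa$ bounded) and let $\widetilde{Y}$ be generated through the channel $P_{Y|XS}$. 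By Lemma \ref{lem:conv_mutual_info_1}, Lemma \ref{lem:conv_distribution}, and the data-processing inequality, each of $I(\widetilde{U};\widetilde{X})$ and $I(\widetilde{U};\widetilde{Y})$ converges to its unclipped counterpart, and Lemma \ref{thm:quantize_dist_cost} yields $\mathbb{E}\,\kappa(\widetilde{X},\widetilde{S})\to \mathbb{E}\,\kappa(X,S)$.

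For the discretization stage, apply $Q_n$ separately to $\widetilde{S}$, $\widetilde{U}$, and $\widetilde{Y}$ (with distinct step sizes $2^{-n_S},2^{-n_U},2^{-n_Y}$, exactly as in the Wyner--Ziv outline, so that their limits may be taken independently). If $P_{U|S}$ lacks a suitable density, precede the quantization by the uniform smoothing of Lemma \ref{lem:5}; the subsequent invocation of Theorem \ref{th:9} (or the $X=g(U,S)$ analogue obtained by composing $g$ with the quantizers and arguing as in Corollary \ref{th:10}) delivers finite-alphabet variables $(\widehat{S},\widehat{U},\widehat{X},\widehat{Y})$ satisfying a Markov chain $\widehat{U}\leftrightarrow (\widehat{X},\widehat{S})\leftrightarrow \widehat{Y}$ to within any tolerance $\xi$, together with convergence of $I(\widehat{U};\widehat{X})$, $I(\widehat{U};\widehat{Y})$, and $\mathbb{E}\,\kappa(\widehat{X},\widehat{S})$ to the clipped values (Lemma \ref{lem:mc_forced1} is used as in the Wyner--Ziv proof to control the residual Markov-chain violation). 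On the induced discrete channel with discrete state $\widehat{S}$, the classical Gelfand--Pinsker random-binning scheme produces a code of blocklength $n$ and rate arbitrarily close to $I(\widehat{U};\widehat{X})-I(\widehat{U};\widehat{Y})$ with cost arbitrarily close to $\mathbb{E}\,\kappa(\widehat{X},\widehat{S})$; letting the actual decoder first quantize the continuous observation $Y^n$ via $Q_{n_Y}$ preserves reliability by the data-processing inequality. Combining the three convergence steps via the triangle inequality yields the claimed rate-cost pair.

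The principal obstacle is that the target rate is a \emph{difference} of mutual informations, so one-sided lower semi-continuity (Lemma \ref{lem:conv_mutual_info_1}) is insufficient; I need genuine two-sided convergence of each term, which is why the clipping/discretization parameters for $S$, $U$ and $Y$ must be taken to infinity along separate scales and Lemma \ref{lem:mc_forced1} invoked to control the perturbation of the Markov structure forced by the clipping coupling. A second subtlety is preserving the deterministic relation $X=g(U,S)$ after quantization: because $g$ is only assumed continuous, one must discretize carefully (quantizing $U,S$ and redefining $\widehat{X}$ via $g$ on the discretized arguments, possibly followed by an additional clip) so that $\mathbb{E}\,\kappa(\widehat{X},\widehat{S})$ remains close to $\mathbb{E}\,\kappa(X,S)$ while the discrete Gelfand--Pinsker scheme remains applicable.
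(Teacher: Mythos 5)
Your overall three-stage outline (clip, discretize, invoke the discrete Gelfand--Pinsker theorem, with two-sided convergence controlled via Lemma~\ref{lem:mc_forced1}) matches the paper at a high level, and you correctly flag the two genuine subtleties: the rate is a difference of mutual informations, and the deterministic relation $X=g(U,S)$ must be preserved. However, there is a real gap in how you handle the second point, and it is precisely the place where the paper's construction does something you do not anticipate.

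You propose to ``quantize $U,S$ and redefine $\widehat{X}$ via $g$ on the discretized arguments'' and then run the discrete code with channel input $\widehat{X}=g(\widehat{U},\widehat{S})$. The paper never discretizes $X$. Instead, after the discrete encoder produces $\widehat{U}'^m$, a \emph{third} encoding stage draws a continuous auxiliary $\widetilde{U}'_i$ from the inverse test channel $P_{\widetilde{U}\,|\,S=s_i,\widehat{U}'=j}(\cdot)=P_{\widetilde{U}|S}(\cdot\,|s_i)/Q(j,s_i)$ and sets $X_i=g(S_i,\widetilde{U}'_i)$ using the \emph{actual continuous} state. This inverse-test-channel step is not cosmetic: it guarantees that the conditional law of the quantized output $\widehat{Y}$ given $(\widehat{S},\widehat{U}')$ is \emph{exactly} the discrete transition probability $\widetilde{P}(k|s,j)$ for which the discrete Gelfand--Pinsker code was designed, so the discrete decoder ``sees the promised channel'' with no residual mismatch to control. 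It also makes the cost verification transparent, since $\kappa$ is evaluated at continuous arguments $(S_i,\widetilde{U}'_i)$ that converge in distribution to $(S,U)$.

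If, as you propose, the encoder transmits the discrete value $g(\widehat{U},\widehat{S})$ while the real channel still acts with the continuous state $S_i$, the induced discrete channel $P(\widehat{Y}|\widehat{S},\widehat{U})$ is a genuinely different object from the discretized test channel, and the joint law $(\widehat{S},\widehat{U},\widehat{Y})$ is no longer a straightforward discretization of $(S,U,Y)$. You would then need an additional continuity argument (in $g$ and in the channel kernel jointly) to show that the rate $I(\widehat{U};\widehat{Y})-I(\widehat{U};\widehat{S})$ and the cost $\mathbb{E}\kappa(g(\widehat{U},\widehat{S}),\widehat{S})$ still converge to the right limits; nothing in the proposal supplies that argument, and ``preserves reliability by the data-processing inequality'' does not cover it, since a code designed for one discrete channel gives no guarantee on a different one. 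Two secondary remarks: the paper's clipping of $U$ maps out-of-range values to a fixed constant $c$ (later the symbol $*$), coupled to the event $S\in[-\ell_S,u_S]$; this guarantees $\widehat{S}=*\Rightarrow\widehat{U}'=*$ almost surely, which is used in the cost bookkeeping and would be lost under the independent-redraw coupling of Section~\ref{sec:th:9} that you invoke. And the paper's proof does not rely on Theorems~\ref{th:7}--\ref{th:9} (those are tailored to sums of auxiliaries for structured coding); it uses the basic convergence lemmas together with Lemma~\ref{lem:mc_forced1} at the $S$-discretization step only.
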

\begin{proof}
A detailed proof is provided in Appendix \ref{App:Gelfand}. We provide an outline in the following.
The proof builds upon the discretization ideas used in the proof of Theorem \ref{thm:wynerziv_c}. That is, it follows in three main steps: i) clipping $Y,S,U$ to produce bounded variables $\widetilde{Y},\widetilde{S},\widetilde{U}$,  ii) discretizing $\widetilde{Y},\widetilde{S},\widetilde{U}$ to produce discrete variables $\widehat{Y},\widehat{S},\widehat{U}$ while ensuring that the Markov chain $\widehat{U} \rightarrow \widehat{X},\widehat{S} \rightarrow \widehat{Y}$ holds, and iii) channel coding with side-information using the Gelfand-Pinsker scheme for discrete sources and channels \cite{Gelfand} to construct the discrete variables $\widehat{U},\widehat{S}$ and transmit $X=g(\widehat{U},\widehat{S})$ and decode the message based on the received discrete variable $\widehat{Y}$. Lastly, convergence results from Section \ref{sec:3} are used to prove that as the discretization grid-size approaches 0 and the clipping limits approach infinity, the resulting rates and distortions converge to the ones given in the theorem statement.  
\end{proof} 
\section{Superposition Coding in Multiterminal Communication}
\begin{figure}[!t]
    \begin{center}
    {\includegraphics[width=4in]{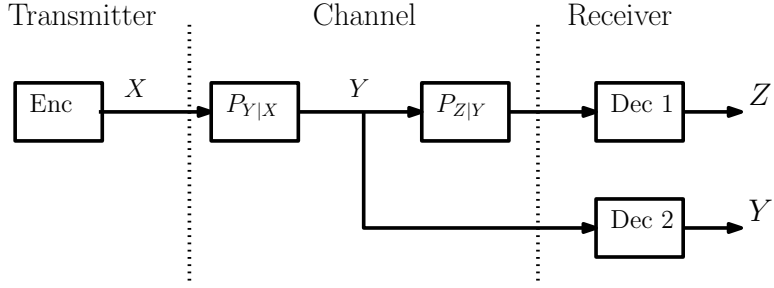}}
    \end{center}
     \caption{A schematic of
    the DBC setup.}
    \label{fig:DBC}
\end{figure}
Superposition coding is one of the key techniques in achieving the best-known inner-bounds to the achievable regions in various network communication scenarios such as the broadcast channel \cite{marton1979coding}, multiple-access channel \cite{Ahlswede}, and interference channel \cite{han1981new} among others. In this section, we demonstrate the implementation of the superposition coding technique using the proposed discretization framework for continuous sources and channels by studying communication over the degraded broadcast channel (DBC). The setup 
is shown in Figure \ref{fig:DBC} and is formally define below.

\bdefi[\textbf{Degraded Broadcast Channel}]
Given a DBC 
$(P_{Y|X}P_{Z|Y},\kappa)$, a transmission system with parameters $(n,\Theta_1,\Theta_2)$ for
reliable communication consists of an encoder mapping and decoder mappings
for 
$e:[\Theta_1]\times [\Theta_2]\rightarrow \mathbb{R}^n_i, i=1,2$, and $f_i: \mathbb{R}^n \rightarrow [\Theta_i], i=1,2
$.
A triple of rates and cost $(R_1,R_2,\tau)$ is
said to be achievable if $\forall \e>0$, and all sufficiently large
$n$, there exists a transmission system with parameters
$(n,\Theta_1,\Theta_2)$ such that for $i=1,2$, 
\[
\frac{1}{n} \log \Theta_i \geq R_i-\e, \ \ 
\frac{1}{\Theta_1\Theta_2}\sum_{j_1=1}^{\Theta_1}\sum_{j_2=1}^{\Theta_2} \kappa(e(j_1,j_2)) \leq \tau+\e,
\]
\[
\sum_{j_1=1}^{\Theta_1} \sum_{j_2=1}^{\Theta_2} \frac{1}{\Theta_1
  \Theta_2} P_{Y,Z|X}^n \left[ f_1(Y^n) \neq j_1 \text{ or } f_2(Z^n)\neq j_2 | X^n=e(j_1,j_2)) \right] \leq \e.
\]
for $i=1,2$.
Let the capacity region $\mathsf{C}(\tau)$ denotes the set of all rate pairs
$(R_1,R_2)$ such that $(R_1,R_2,\tau)$ is achievable. 
\edefi

\begin{definition}
Let $\mathcal{P}(\tau)$ denote the collection of  distributions 
$P_{UX}$  defined on $\mcq \times \mathbb{R}^2$
such that
$\mathbb{E}(\kappa(X)) \leq \tau$. 
For a $P_{QUX} \in \mathcal{P}$, let $\alpha_{SP}(P_{UX})$ denote the set
of rate pairs $(R_1,R_2) \in [0,\infty)^2$ that 
satisfy 
\begin{align*}
R_1 &\leq I(X;Y|U) \\
R_2 &\leq I(U;Z) 
\end{align*}
where the mutual information terms are evaluated  with
$P_{QX}P_{Y|X}P_{Z|Y}$. Let the superposition coding rate region be defined as 
\[
\mathsf{R}_{SP}(\tau)= \mbox{cl} \left( \bigcup_{P_{UX} \in
    \mathcal{P}} \alpha_{SP}(P_{UX}) \right).
\]
\end{definition}

\begin{theorem}
\label{thm:DBC}
The capacity  region $\mathsf{C}(\tau)$ contains the superposition coding rate region
 $\mathsf{R}_{SP}(\tau)$, i.e., $\mathsf{R}_{SP}(\tau) \subseteq \mathsf{C}(\tau)$.
\end{theorem}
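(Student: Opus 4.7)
The plan is to mimic the clipping-and-discretization approach developed for the Wyner--Ziv and Gelfand--Pinsker theorems. The idea is to replace the continuous quadruple $(U,X,Y,Z)$ by a finely discretized quadruple $(\widehat U,\widehat X,\widehat Y,\widehat Z)$ for which the induced discrete broadcast channel $P_{\widehat Y\widehat Z|\widehat X}$ is a finite-alphabet DBC (since the output quantizers are deterministic post-processors applied after the degraded channel $P_{Y|X}P_{Z|Y}$, the degraded structure is preserved), then invoke the classical achievability of the superposition-coding inner bound for discrete broadcast channels, and finally use the convergence lemmas of Section~\ref{sec:3} to transfer achievability back to the original continuous DBC.

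Fix $P_{UX}\in\mathcal{P}(\tau)$ and $(R_1,R_2)$ strictly inside $\alpha_{SP}(P_{UX})$. First I would clip $U$ and $X$ to bounded intervals of half-widths $\ell_U$ and $\ell_X$, producing $\widetilde U,\widetilde X$; as in the proof sketches of Theorems~\ref{thm:wynerziv_c} and~\ref{thm:Gelfand}, the out-of-range mass is replaced by an independent draw from the conditional law restricted to the interval, so that the Markov chain $\widetilde U-\widetilde X-Y-Z$ is preserved. By Lemma~\ref{thm:quantize}, the joint law of $(\widetilde U,\widetilde X,Y,Z)$ converges in total variation to $P_{UXYZ}$ as $\ell_U,\ell_X\to\infty$. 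Next I would discretize $\widetilde U,\widetilde X$ via $Q_n$ and clip-and-discretize $Y,Z$ to $\widehat Y,\widehat Z$, obtaining a fully finite-alphabet quadruple $(\widehat U,\widehat X,\widehat Y,\widehat Z)$ for which the Markov chain $\widehat U-\widehat X-\widehat Y-\widehat Z$ holds automatically because each discretization is a deterministic function applied at the correct position in the graph. The rate expressions converge to their continuous analogues: $I(\widehat U;\widehat Z)\to I(U;Z)$ and, via the decomposition
\[
I(\widehat X;\widehat Y|\widehat U)=I(\widehat U,\widehat X;\widehat Y)-I(\widehat U;\widehat Y),
\]
combined with Lemmas~\ref{lem:conv_mutual_info_1} and~\ref{lem:conv_distribution}, also $I(\widehat X;\widehat Y|\widehat U)\to I(X;Y|U)$. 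The cost $\mathbb{E}\kappa(\widehat X)$ converges to $\mathbb{E}\kappa(X)$ by Lemma~\ref{thm:quantize_dist_cost}.

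Applying the standard superposition-coding achievability theorem to the induced discrete DBC $P_{\widehat Y\widehat Z|\widehat X}$ with auxiliary $\widehat U$ yields, for every $\epsilon>0$ and sufficiently large blocklength, a code of rates at least $I(\widehat X;\widehat Y|\widehat U)-\epsilon$ and $I(\widehat U;\widehat Z)-\epsilon$, with average per-letter input cost converging to $\mathbb{E}\kappa(\widehat X)$ and vanishing probability of error when each receiver is given $\widehat Y^n$ or $\widehat Z^n$. Since in the continuous DBC each receiver actually observes $Y^n$ (respectively $Z^n$) and can apply the clip-and-quantize map componentwise to recover $\widehat Y^n$ (respectively $\widehat Z^n$) as a preprocessing step, the same codebook and decoders work over the continuous channel. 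Choosing $\ell_U,\ell_X,\ell_Y,\ell_Z$ and the discretization resolutions large enough so that every approximation is within $\xi$ produces the target pair $(R_1,R_2)$ as achievable with cost at most $\tau+O(\xi)$. Taking $\xi\downarrow 0$ and the convex closure over $P_{UX}\in\mathcal{P}(\tau)$ gives $\mathsf R_{SP}(\tau)\subseteq\mathsf C(\tau)$.

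The main obstacle I anticipate is the bookkeeping with multiple clipping windows and quantization resolutions: $\ell_U,\ell_X,\ell_Y,\ell_Z$ and $n_U,n_X,n_Y,n_Z$ should be chosen distinct and their limits taken in a nested iterated order (as done in the outline of Theorem~\ref{thm:wynerziv_c}), to prevent the replacement variables introduced in the clipping step at one level from disturbing the Markov structure needed at a coarser level. A secondary technical point is that Lemma~\ref{lem:conv_mutual_info_1} is stated only for unconditional mutual information, so the convergence of the conditional term $I(\widehat X;\widehat Y|\widehat U)$ must be obtained via the chain-rule decomposition above rather than applied directly; this is the same workaround implicitly used in the Wyner--Ziv and Gelfand--Pinsker proofs and carries over without modification.
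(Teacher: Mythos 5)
Your overall architecture (clip, discretize, run a discrete coding theorem, transfer back via convergence) matches the paper, but your pivotal claim---that ``since the output quantizers are deterministic post-processors applied after the degraded channel $P_{Y|X}P_{Z|Y}$, the degraded structure is preserved''---is false, and the paper's proof explicitly flags this: it states in Step 3 that ``the induced discrete broadcast channel $P_{\widehat{Y},\widehat{Z}|\widehat{X}}$ may not be degraded.'' Quantizing the \emph{weaker} output $Z$ does preserve the chain $X-Y-\widehat{Z}$, but quantizing the \emph{middle} variable $Y$ destroys it: $X-Y-Z$ does not imply $X-\widehat{Y}-Z$, because conditionally on the coarse cell $\widehat{Y}=k$ the location of $Y$ within the cell still depends on $X$, and $Z$ depends on that location. (The induced channel also absorbs the stochastic map from $\widehat{X}$ to the actual continuous input, compounding the issue.) A concrete consequence is that you cannot invoke the degraded-BC superposition theorem, whose rate pair $\bigl(I(\widehat{X};\widehat{Y}|\widehat{U}),\, I(\widehat{U};\widehat{Z})\bigr)$ is achievable only because, under degradedness, the ``strong'' receiver's sum constraint $R_1+R_2\le I(\widehat{U},\widehat{X};\widehat{Y})$ is implied by the individual constraints. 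For the non-degraded discrete channel that constraint can be active, and your proof never checks it.

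The paper closes this gap differently: it applies the superposition inner bound valid for a \emph{general} finite-alphabet BC, which carries the explicit sum constraint $R_1+R_2\le\min\{I(\widehat{U};\widehat{Z})+I(\widehat{X};\widehat{Y}|\widehat{U}),\, I(\widehat{U},\widehat{X};\widehat{Y})\}$, and then argues the $\min$ resolves to the first term once the discretization is fine enough. The argument uses the degradedness of the \emph{original continuous} channel ($I(U;Z)\le I(U;Y)$) together with the convergence bounds, choosing the approximation error $\epsilon$ smaller than $\tfrac{1}{6}\bigl(I(U;Y)-I(U;Z)\bigr)$ so that $I(\widehat{U};\widehat{Z})<I(\widehat{U};\widehat{Y})$ and hence $I(\widehat{U};\widehat{Z})+I(\widehat{X};\widehat{Y}|\widehat{U})\le I(\widehat{U},\widehat{X};\widehat{Y})$. (The boundary case $I(U;Y)=I(U;Z)$ is absorbed by achieving rate pairs strictly inside and taking closure.) To repair your proof you would need to replace the appeal to the DBC theorem with the general-BC superposition bound and supply this extra ``$\min$-resolution'' argument; without it the sum-rate constraint on the discretized channel is unaccounted for.
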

\begin{proof}
Please refer to Appendix \ref{App:thm:DBC}. 
\end{proof}

\section{Distributed Source Coding}
\label{sec:DSC}
Next we consider distributed source coding problem consisting of two
correlated and memoryless 
continuous-valued sources $X$ and $Y$, characterized by a probability measure $P_{XY}$
which needs to be compressed distributively into bits to be sent to a
joint decoder.   The joint decoder wishes to reconstruct the sources with respect to 
two separable distortion measures $d_x:\mathbb{R} \rightarrow
\mathbb{R}^+$ and $d_y: \mathbb{R} \rightarrow \mathbb{R}^+$. This is
a well-studied problem \cite{berger1978multiterminal} and we skip the formal
definition for conciseness. Let $\mathcal{R}\mathcal{D}$ denote the
set of all achievable rate and distortion tuples
$(R_1,R_2,D_1,D_2)$. We denote $\mcr(D_1,D_2)$ as the set of all rates
$(R_1,R_2)$ such that $(R_1,R_2,D_1,D_2)$ is achievable. 

\begin{figure}[!htb]
    \begin{center}
    \scalebox{2.4}{\includegraphics[width=5.5cm]{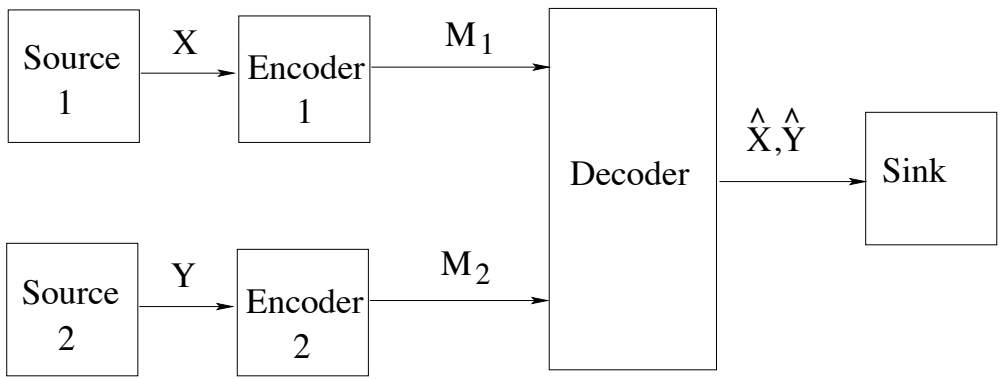}}
    \end{center}
     \caption{ General distributed source
     coding problem.}
\end{figure}

\begin{definition}
Let $\mathcal{P}(D_1,D_2)$ denote the collection of pairs of transition
probabilities $P_{U|X}$ and $P_{V|Y}$, and pairs of continuous
functions $g_i: \mathbb{R}^2 \rightarrow \mathbb{R}$ for $i=1,2$, such
that   $\eE d_x(X,\widehat{X}) \leq D_1 \ $, $\eE d_y(Y,\widehat{Y}) \leq D_2 \ $,
where the expectations are evaluated with the joint measure
$P_{XY}P_{U|X}P_{V|Y}$, i.e., $U - X -Y -V$ form a Markov chain.
For a $(P_{U|X},P_{V|Y},g_1,g_2) \in \mathcal{P}(D_1,D_2)$, let $\alpha(P_{U|X},P_{V|Y},g_1,g_2)$ denote the set
of rate pairs $(R_1,R_2) \in [0,\infty)^2$ that 
satisfy 
\begin{align}
R_1 \geq I(X;U|V), & \ \ 
R_2 \geq I(Y;V|U),  \ \ 
R_1+R_2 \geq I(XY;UV). \nn 
\end{align}
Let the information rate region be defined as \[
\mathsf{R}_{QB}(D_1,D_2) = \mbox{cl} \left( \bigcup_{(P_{U|X},P_{V|Y},g_1,g_2) \in
    \mathcal{P}(D_1,D_2)} \alpha( P_{U|X},P_{V|Y},g_1,g_2) \right)
\]
\end{definition}

\begin{theorem} For a given source $(P_{XY},d_1,d_2)$, we have $\mathsf{R}_{QB}(D_1,D_2) \subseteq \mcr(D_1,D_2)$.
\label{thm:bergertung_cont}
\end{theorem}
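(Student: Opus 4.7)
The plan is to reduce the continuous distributed source coding problem to a discrete one via the framework developed in Section \ref{sec:3}, in particular Theorem \ref{th:9}, and then invoke the classical Berger-Tung inner bound for finite-alphabet sources. Fix a quadruple $(P_{U|X},P_{V|Y},g_1,g_2) \in \mathcal{P}(D_1,D_2)$ achieving a target rate-distortion tuple $(R_1,R_2,D_1,D_2)$, and fix an arbitrary $\xi>0$. The proof constructs, for each sufficiently small $\xi$, a coding scheme whose rates and average distortions are within $O(\xi)$ of $(R_1,R_2,D_1,D_2)$, and then takes $\xi\downarrow 0$ to conclude that $(R_1,R_2,D_1,D_2) \in \mcr(D_1,D_2)$.

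First I would discretize. Because $U$ and $V$ are not necessarily discrete, I would first pass to a finite-alphabet surrogate: quantize $U$ and $V$ finely enough that the resulting discrete auxiliary variables $U^\circ,V^\circ$ (with $U^\circ \leftrightarrow X \leftrightarrow Y \leftrightarrow V^\circ$ preserved by construction) satisfy the mutual-information and expected-distortion conditions up to $\xi/2$; this relies on Lemmas \ref{lem:conv_mutual_info_1}, \ref{thm:quantize}, and \ref{thm:quantize_dist_cost}. Next, apply Theorem \ref{th:9} to the quadruple $(X,Y,U^\circ,V^\circ)$ to produce quantized and clipped source variables $\widehat{X}_{n,\ell},\widehat{Y}_{n,\ell'}$ together with discrete auxiliaries $\overline{U}_{n,\ell},\overline{V}_{n,\ell'}$ satisfying the long Markov chain $\overline{U}_{n,\ell} - \widehat{X}_{n,\ell} - \widehat{Y}_{n,\ell'} -\overline{V}_{n,\ell'}$, and such that each of $I(\widehat{X};\overline{U}|\overline{V})$, $I(\widehat{Y};\overline{V}|\overline{U})$, $I(\widehat{X}\widehat{Y};\overline{U}\overline{V})$, $\mathbb{E}\,d_x(\widehat{X},g_1(\overline{U},\overline{V}))$, and $\mathbb{E}\,d_y(\widehat{Y},g_2(\overline{U},\overline{V}))$ is within $\xi/2$ of its continuous counterpart (the conditional mutual informations can be expanded via the chain rule into the terms handled explicitly in the statement of Theorem \ref{th:9}).

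Having reduced everything to a finite-alphabet distributed source coding problem over $(\widehat{X}_{n,\ell},\widehat{Y}_{n,\ell'})$ with auxiliaries $(\overline{U}_{n,\ell},\overline{V}_{n,\ell'})$, I would then invoke the classical Berger-Tung inner bound: for blocklength $N$, generate codebooks of $\overline{U}$- and $\overline{V}$-sequences at binning rates $I(\widehat{X};\overline{U})+\xi$ and $I(\widehat{Y};\overline{V})+\xi$ respectively, bin them down to rates $I(\widehat{X};\overline{U}|\overline{V})+2\xi$ and $I(\widehat{Y};\overline{V}|\overline{U})+2\xi$ subject to the sum-rate constraint $I(\widehat{X}\widehat{Y};\overline{U}\overline{V})+3\xi$, and let each encoder pick a codeword jointly typical with its observed source sequence. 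The decoder searches for the unique pair of codewords in the received bins that is jointly typical with each other, and then applies $g_1,g_2$ symbol-by-symbol. Covering and packing lemmas for strong typicality give vanishing error probability, and the Markov lemma provides the joint typicality needed for the decoded auxiliaries and the sources — crucially, this step is valid because $(\widehat{X},\widehat{Y},\overline{U},\overline{V})$ are now finite-alphabet.

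The main obstacle is ensuring that the discretization in step two genuinely preserves the long Markov chain $\overline{U} - \widehat{X} - \widehat{Y} -\overline{V}$ while simultaneously making all five mutual information quantities converge; this is exactly where naive, independent quantization of $U$ and $V$ fails, and where Lemma \ref{lem:mc_forced1} does the work behind Theorem \ref{th:9} by quantitatively controlling the distance between the joint distribution of the independently quantized variables and any Markov-compatible surrogate. Once the discretized system is in hand, the remaining arguments (Berger-Tung achievability for finite alphabets, continuity of $g_1,g_2$ to transfer the distortion bound from $\widehat{X},\widehat{Y}$ to $X,Y$, and taking $\xi \downarrow 0$ together with the closure operation in the definition of $\mathsf{R}_{QB}$) are routine. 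Since the achieved rate-distortion tuple can be brought arbitrarily close to every point of $\alpha(P_{U|X},P_{V|Y},g_1,g_2)$, and this holds for every element of $\mathcal{P}(D_1,D_2)$, the closed convex hull $\mathsf{R}_{QB}(D_1,D_2)$ lies in $\mcr(D_1,D_2)$, completing the proof.
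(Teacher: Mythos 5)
Your overall approach mirrors the paper's: discretize the auxiliary variables (via the clipping/smoothing/quantization machinery of Theorem \ref{th:7} and the supporting lemmas), then discretize the sources while enforcing the long Markov chain via Theorem \ref{th:9}, apply the classical discrete Berger--Tung scheme, and pass to the limit. The reformulation of the rate constraints in terms of $I(X;U)$, $I(Y;V)$, $I(U;V)$ via the Markov chain is also what the paper does implicitly when it bounds $R_1\ge I(\widehat X;\overline U)-I(\overline U;\overline V)$ and so on, so that part is fine.

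However, there is a genuine gap in the final distortion-transfer step, which you dismiss as ``routine'' and attribute to ``continuity of $g_1,g_2$.'' Continuity only helps on the event $X_i\in[-\ell,\ell]$, where $\widehat X_i$ is within $2^{-n}$ of $X_i$. On the complementary event the clipping in Theorem \ref{th:9} \emph{replaces} $X_i$ with an independent surrogate $Z_i$, so $\widehat X_i$ bears no relation to the true $X_i$, and the decoder's output $g_1(\overline U_i,\overline V_i)$ (which lies in a bounded range) can incur a large, and a priori uncontrolled, distortion against the unbounded true source. Since the single-letter distortion is only assumed jointly continuous (pointwise finite, not uniformly bounded), the tail contribution $\mathbb{E}\bigl[d_1(X_i,g_1(\overline U_i,\overline V_i));\,X_i\notin[-\ell,\ell]\bigr]$ does not vanish by continuity alone. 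The paper closes this gap explicitly: each encoder almost-losslessly transmits the per-sample clipping indicator $T_i=\mathbbm{1}\{X_i\in[-\ell,\ell]\}$ at rate $h_b(P(X\notin[-\ell,\ell]))+\xi$ (which vanishes as $\ell\to\infty$), the decoder falls back to a fixed constant $c$ with $\mathbb{E}d_1(X,c)<\infty$ on clipped positions (and whenever the typicality check fails), and the resulting extra distortion is controlled via the smallness of the clipping probability. Without this mechanism, or an equivalent argument bounding the tail contribution, the distortion claim for the original continuous source $(X,Y)$ does not follow from the distortion bound for the discretized pair $(\widehat X,\widehat Y)$.
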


\begin{proof}
Please see Appendix \ref{App:bergertung_cont}.
\end{proof}

\section{Lossy Two-Help-One Problem}
\label{sec:LTO}

Next we consider a coding theorem for continuous sources for the
two-help-one problem. Consider a triple of memoryless continuous-valued sources
$(X,Y,Z)$ characterized by a probability measure $P_{XYZ}$. Let
$d:\mathbb{R}^2 \rightarrow \mathbb{R}^+$ be a jointly continuous
distortion function. The sources $X$ and $Y$ act as helpers for the
third source $Z$. The sources need to be compressed distributively
with rates $R_1$, $R_2$ and $R_3$, respectively, into bits to be sent
to a joint decoder. For simplicity we let $R_3=0$. The joint decoder
wishes to reconstruct the source $Z$ with respect to distortion
function $d$. 

\subsection{Problem Formulation and Main Result}
\bdefi
An $(n,\Theta_1,\Theta_2)$ transmission system consists of mappings 
$e_i: \mathbb{R}^n \rightarrow \{1,2,\ldots,\Theta_i\}$, for $i=1,2$,
and $f:\{1,2,\ldots,\Theta_1\} \times \{1,2,\ldots,\Theta_2\}
\rightarrow \mathbb{R}^n$. A triple $(R_1,R_2,D)$ is said to be
achievable if there exists a sequence of $(n,\Theta_{1n},\Theta_{2n})$
transmission systems  such that for $i=1,2$,
\[
\lim_{n \rightarrow \infty} \frac{\log \Theta_i}{n} \leq R_i, \ \ \ 
\lim_{n \rightarrow \infty} \mathbb{E}d_n(Z^n,f(e_1(X^n),e_2(Y^n)))\leq
D.
\]
Let $\mcr(D)$ denote the set of rates $(R_1,R_2)$ such that
$(R_1,R_2,D)$ is achievable. 
\edefi

We provide a coding theorem for  the continuous sources. 
\begin{definition}
Let $\mathcal{P}(D)$ denote the collection of
transition probabilities
$P_{QU_1V_1 U V \widehat{Z}|XY}$ 
such that  (i) $(UU_1) - (XQ) - (YQ) - (VV_1)$ form a Markov chain, 
(ii) $Q$ is independent of $(X,Y)$, (iii) $\widehat{Z}=g(U_1,V_1,U+V)$ for some function
$g$, and (iv) $\eE d(Z,\widehat{Z}) \leq D_1$,
where the expectations are evaluated with distribution  $P_{XYZ}P_{QU_1V_1UV\widehat{Z}|XY}$.
For a $P_{QU_1V_1UV\widehat{Z}|XY} \in \mathcal{P}(D)$, let $\alpha_{F}(P_{QU_1V_1UV\widehat{Z}|XY})$ denote the set
of rate pairs $(R_1,R_2) \in [0,\infty)^2$ that 
satisfy 
\begin{align}
&R_1\geq I(X;UU_1|QV_1)+I(U+V;V|QU_1V_1)-I(U;V|QU_1V_1), \nn  \\
&R_2 \geq I(Y;VV_1|QU_1)+I(U+V;U|QU_1V_1)-I(U;V|QU_1V_1) \nn \\
&R_1+R_2 \geq I(XY;UVU_1V_1|Q)+I(U+V;V|QU_1V_1), \nn \\
& \hspace{0.5in} +I(U+V;U|QU_1V_1)-I(U;V|QU_1V_1)  \nn
\end{align}
where the mutual information terms are evaluated  with
$P_{XY}P_{QU_1V_1UV\widehat{Z}|XY}$. Let the information rate region be defined as 
\[
\mathsf{R}_{F}(D) = \mbox{cl} \left(
  \bigcup_{P_{QU_1V_1UV\widehat{Z}|XY} \in
    \mathcal{P}(D)} \alpha_{F}(P_{QU_1V_1UV\widehat{Z}|XY}) \right)
\]
\end{definition}

\begin{theorem} For a given source
  $(P_{XYZ},d)$
we have $\mathsf{R}_{F}(D) \subseteq \mcr(D)$.
\label{thm:dscfinal_cont}
\end{theorem}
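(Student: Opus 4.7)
The plan is to extend the discrete-alphabet two-help-one coding scheme (which uses Berger-Tung style unstructured binning for the ``coarse'' auxiliaries $U_1,V_1$ and linear/lattice binning for the ``fine'' auxiliaries $U,V$ to compute $U+V$) to the continuous setting by combining the clipping-then-discretization machinery of Section~\ref{sec:3} with Theorem~\ref{th:9} and Corollary~\ref{th:10}. Fix any joint distribution $P_{QU_1V_1UV\widehat{Z}|XY}\in \mathcal{P}(D)$ and a target tuple $(R_1,R_2)\in \alpha_F(P_{QU_1V_1UV\widehat{Z}|XY})$. The task reduces to producing, for every $\xi>0$, a \emph{discrete} source coding problem with sources $\widehat{X}_{n,\ell},\widehat{Y}_{n,\ell'}$ and discrete auxiliaries $(\overline{Q},\overline{U},\overline{V},\overline{U}_1,\overline{V}_1)$ whose achievable region contains a point within $\xi$ of $(R_1,R_2)$, and whose decoder reconstruction of $Z$ has expected distortion within $\xi$ of $D$.

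First, I would discretize $Q$ (a finite-alphabet time-sharing variable, which can be assumed w.l.o.g.\ via a preliminary conditioning step), and condition all subsequent constructions on each value $\overline{Q}=q$. Given $q$, I apply Theorem~\ref{th:9} to the conditional source $(X,Y)\mid Q{=}q$ together with the conditional auxiliaries $(U,U_1,V,V_1)\mid Q{=}q$. This produces clipped and quantized $\widehat{X}_{n,\ell},\widehat{Y}_{n,\ell'}$ and discrete auxiliaries $\overline{U},\overline{V},\overline{U}_1,\overline{V}_1$ (defined on finite sets) such that the long Markov chain $(\overline{U},\overline{U}_1) - \widehat{X}_{n,\ell} - \widehat{Y}_{n,\ell'} - (\overline{V},\overline{V}_1)$ is preserved, the unstructured mutual information terms $I(X;UU_1|QV_1)$, $I(Y;VV_1|QU_1)$, $I(XY;UVU_1V_1|Q)$ and $I(U;V|QU_1V_1)$ all converge (within $\xi$) to their discretized counterparts, and the reconstruction expectation $\mathbb{E}\,d(Z,g(U_1,V_1,U+V))$ is tracked using \eqref{eq:9.1.6}--\eqref{eq:9.1.7} applied with the composite distortion $d(z,g(u_1,v_1,u+v))$ (after an additional clipping/discretization of $Z$, cf.\ Corollary~\ref{th:10}). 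The sum-related terms $I(U+V;U|QU_1V_1)$ and $I(U+V;V|QU_1V_1)$ are handled by \eqref{eq:9.1.3}--\eqref{eq:9.1.4}, which are exactly designed to guarantee that the sum $\overline{U}+\overline{V}$ on the discrete side approximates $U+V$ on the continuous side. This is the step where careful choice of the quantization grid and the common clipping parameters (used for both $U$ and $V$, so that the sum is well-defined in a common finite group) is crucial.

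Next I invoke the known achievability for the discrete two-help-one problem, in which each encoder uses a superposition code: an outer Berger-Tung layer that bins $\overline{U}_1$ (respectively $\overline{V}_1$) at rates close to $I(\widehat{X};\overline{U}_1|\overline{Q},\overline{V}_1)$ (respectively $I(\widehat{Y};\overline{V}_1|\overline{Q},\overline{U}_1)$), and an inner nested-linear-code layer for $\overline{U}$ and $\overline{V}$ (drawn from a common linear ensemble on a finite abelian group containing the supports) at rates close to $I(\overline{U}+\overline{V};\overline{U}|\overline{Q}\overline{U}_1\overline{V}_1)$ and $I(\overline{U}+\overline{V};\overline{V}|\overline{Q}\overline{U}_1\overline{V}_1)$ respectively, with a $-I(\overline{U};\overline{V}|\overline{Q}\overline{U}_1\overline{V}_1)$ saving from the Slepian-Wolf/Körner-Marton style joint decoding of the sum. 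The decoder recovers $(\overline{U}_1,\overline{V}_1,\overline{U}+\overline{V})$ and applies $g$ to form $\widehat{Z}^n$. The covering, packing, and Markov-lemma analyses needed here for the discrete case have been developed in Section~\ref{sec:3} (in particular Lemma~\ref{lem:mc_forced1} supplies the Markov lemma for the discretized variables); the rate expressions match $\alpha_F$ on the discretized side, up to the $\xi$-slack from the convergence bounds.

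The main obstacle is the third step: ensuring that the two quantization operations applied to $U$ and $V$ are compatible with one another so that the quantized sum $\overline{U}+\overline{V}$ genuinely behaves (in joint distribution and in conditional mutual information, conditioned on the coarse layer $(\overline{Q},\overline{U}_1,\overline{V}_1)$) like the continuous sum $U+V$. This is non-trivial because $\overline{U}$ and $\overline{V}$ are produced by conditionally independent discretizations on opposite sides of the Markov chain, so the joint law $P_{\overline{U},\overline{V}}$ need only approximate $P_{U,V}$ in total variation; the smoothing via $\widetilde{N}_{\ell,\epsilon},\widetilde{N}'_{\ell',\epsilon}$ in Section~\ref{sec:th:7} and the careful matching of the $\frac{1}{2^n}$-grids play the decisive role. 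Once the conditional versions of \eqref{eq:9.1.3}--\eqref{eq:9.1.5} are in hand (obtained by applying Theorem~\ref{th:9} on each realization of the finite $(\overline{Q},\overline{U}_1,\overline{V}_1)$, as in the proof of Corollary~\ref{th:10}), the discrete achievability transfers directly to the continuous setting, and letting $\xi\downarrow 0$ with $\ell,\ell',n\to\infty$ shows that every point of $\alpha_F(P_{QU_1V_1UV\widehat{Z}|XY})$ is achievable, whence $\mathsf{R}_F(D)\subseteq \mcr(D)$ by convexity and closedness.
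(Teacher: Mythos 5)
Your proposal is correct and follows essentially the same route as the paper: discretize the source and auxiliary variables via the clipping/smoothing/quantization machinery of Theorems~\ref{th:7} and~\ref{th:9} (and Corollary~\ref{th:10} for the conditional sum terms), invoke the discrete two-help-one achievability with a Berger--Tung outer layer for $(U_1,V_1)$ and a nested-linear-code inner layer for $(U,V)$, and pass to the limit using lower semi-continuity, the forced-Markov bound (Lemma~\ref{lem:mc_forced1}), and the Portmanteau-style distortion convergence. The paper emphasizes the same key point you flag as the ``main obstacle'': that field addition of $\overline{U}_{n,\ell}$ and $\overline{V}_{n,\ell}$ over a sufficiently large prime field coincides with real addition on the event $U+V\in[-\ell,\ell]$, whose probability tends to $1$, which is exactly what makes the inner linear layer legitimate; this matches your remark about common clipping ranges and compatible $2^{-n}$-grids, and is what the smoothing variables $\widetilde{N}_{\ell,\epsilon},\widetilde{N}'_{\ell',\epsilon}$ are there to enable.
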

\textit{Proof Outline:} We propose a coding scheme involving two layers to prove the theorem. 
The first is the Berger-Tung unstructured coding layer. The second
is the structured coding layer that uses nested linear codes. First we discretize the sources and the auxilliary variables,
and apply the technique developed in source coding with side-information (Theorem \ref{thm:wynerziv_c}) to come up with a 
discrete version of the problem at hand. The Berger-Tung unstructured coding rates are derived as in Theorem \ref{thm:bergertung_cont}. The structured coding is accomplished using nested linear codes. 
The rates associated with this layer can be understood as follows: for ease of explanation,
assume that the unstructured coding auxiliary variables $U_1,V_1$ and the time-sharing variable $Q$ are
trivial. Then, for the discrete communication system, the rates
$R_1 \geq I(\widehat{X}_{n,\ell};\overline{U}_{n,\ell})+H(\overline{U}_{n,\ell}+\overline{V}_{n,\ell})-H(\overline{U}_{n,\ell}) = I(\widehat{X}_{n,\ell};\overline{U}_{n,\ell})-I(\overline{U}_{n,\ell};\overline{V}_{n,\ell})+I(\overline{V}_{n,\ell};\overline{U}_{n,\ell}+\overline{V}_{n,\ell})$,
and similarly
$R_2 \geq I(\widehat{Y}_{n,\ell'};\overline{V}_{n,\ell})-I(\overline{U}_{n,\ell};\overline{V}_{n,\ell})+I(\overline{U}_{n,\ell};\overline{U}_{n,\ell}+\overline{V}_{n,\ell})$ can be achieved using nested linear codes (over arbitrarily large prime fields) 
along with joint-typical encoding and decoding. 
Achievability follows by Theorems \ref{th:7}, \ref{th:9}, and \ref{thm:bergertung_cont} and by noting that $P(U+V\in [-\ell,\ell])\to 1$ as $\ell\to \infty$, so that the field addition for the discrete variables $\overline{U}_{n,\ell}$ and $\overline{V}_{n,\ell}$ is equal to real addition with probability approaching one as $\ell\to \infty$.

\subsection{Gaussian Lossy Two-Help-One Example}
 Consider a pair of zero-mean jointly Gaussian unit variance correlated sources
$X$ and $Y$ with correlation coefficient $\rho>0$. Let $Z=X-cY$ for
some $c$, and let $d(z,\hat{z})=(z-\hat{z})^2$. Let us evaluate a
subset of the  inner bound to the achievable rate-distortion region using a
specific test channel. Let us denote $\sigma_Z^2=1+c^2-2 \rho c$, and
let $D$ denote the target distortion. Let us choose $\mcq=\phi$, and $U_1=V_1=0$. 
Moreover consider
\[
U=X+Q_1, \ \ \mbox{ and } \ \ V=cY+Q_2,
\]
where $Q_1$ and $Q_2$ are independent zero-mean Gaussian random
variables that are independent of the pair $(X,Y)$. 
We take their  variances to be $q_1$ and $\frac{D
  \sigma_Z^2}{\sigma_Z^2-D}-q_1$.  
With this choice we see that $U+V=Z+Q_1+Q_2$, and we take 
$\widehat{Z}=\mathbb{E}(Z|U+V)=\frac{\sigma_Z^2-D}{\sigma_Z^2} (U+V)$,
which results in $\mathbb{E}d(Z,\widehat{Z})=D$. Now let us see the
achievable rates. 
\[
R_1 \geq \frac{1}{2} \log \frac{\sigma_Z^4}{q_1(\sigma_Z^2-D)}, \ \ \
\mbox{ and } 
R_2 \geq \frac{1}{2} \log \frac{\sigma_Z^4}{D \sigma_Z^2
  -q_1(\sigma_Z^2
-D)}.
\]
Eliminating $q_1$ we see that the rate distortion tuple $(R_1,R_2,D)$
satisfying the following equations is achievable:
\[
2^{-2R_1}+2^{-2R_2} \leq \left( \frac{\sigma_Z^2}{D} \right)^{-1}.
\]

Next let us see what rate and distortion are achievable with
unstructured codes, i.e., using just the first layer of the scheme. 
Let the region $\mathsf{RD}_{\text{in}}$ be defined as follows.
\begin{equation*}
\mathsf{RD}_{\text{in}} = \bigcup_{(q_1,q_2) \in \mathbb{R}^2_{+}}
\left\{ (R_1,R_2,D) \colon R_1 \geq \frac{1}{2} \log
\frac{(1+q_1)(1+q_2) - \rho^2}{q_1(1+q_2)},
\, \, R_2 \geq
\frac{1}{2} \log \frac{(1+q_1)(1+q_2)-\rho^2}{q_2(1+q_1)} \right.
\end{equation*}
\begin{equation} \label{eq:BTrateregion}
\left. R_1+R_2 \geq \frac{1}{2} \log
\frac{(1+q_1)(1+q_2)-\rho^2}{q_1 q_2}, \, \, D \geq \frac{q_1 \alpha
+ q_2 c^2 \alpha + q_1 q_2 \sigma_Z^2}{(1+q_1)(1+q_2)-\rho^2}
\right\}.
\end{equation}
where $\alpha \triangleq 1-\rho^2$. Then the rate distortion tuples $(R_1,R_2,D)$ which
belong to $\mathsf{RD}^{*}_{\text{in}}$ are achievable where $^*$
denotes convex closure. This follows by choosing $U_1=X+Q_{11}$ and 
$V_1=Y+Q_{12}$, where $Q_{11}$ and $Q_{12}$ are zero-mean independent
Gaussian random variables with variances $q_1$ and $q_2$,
respectively, and independent of the pair $(X,Y)$.

 For a given distortion $D$, the minimum sum rate $R_{\text{sum}}
 \triangleq R_1+R_2$ that lies in the region
 $\mathsf{RD}_{\text{in}}^{*}$ of Theorem \ref{thm:bergertung_cont}
 can be evaluated by direct optimization as in \cite{krithivasan2009lattices}.
% \label{eq:BTratecommon} R_{\text{sum}} \geq \frac{1}{2} \log
% \frac{4c(\alpha c-\rho D)}{D^2} \quad D \leq \min \left\{ \frac{2
% \alpha c}{\rho+c}, \frac{2\alpha c^2}{1+\rho c} \right\} \eeq \beq
% \label{eq:BTratecless1} R_{\text{sum}} \geq \frac{1}{2} \log \left(
% \frac{(1 - \rho c)^2}{D-\alpha c^2} \right) \quad \sigma_Z^2 > D >
% \frac{2 \alpha c^2}{1+\rho c}, \, c \leq 1 \eeq \beq
% \label{eq:BTratecmore1} R_{\text{sum}} \geq \frac{1}{2} \log \left(
% \frac{(c - \rho)^2}{D-\alpha} \right) \quad \sigma_Z^2 > D > \frac{2
% \alpha c}{\rho+ c}, \,\, c > 1 \eeq \beq R_{\text{sum}} = 0 \quad D
% \geq \sigma_Z^2 \eeq
%  This follows by directly optimizing the
% sum-rate over $q_1$ and $q_2$ and is omitted for conciseness.
 Fig.
\ref{fig:rhocrange} is a contour plot that illustrates the resulting rate-distortions in
detail. We observe that the lattice based scheme performs better than the
Berger-Tung based scheme for small distortions provided $\rho$ is
sufficiently high and $c$ lies in a certain interval. The contour labeled $R$ encloses that region in which the
pair $(\rho,c)$ should lie for the lattice binning scheme to achieve
a sum rate that is at least $R$ units less than the sum rate of the
Berger-Tung scheme for some distortion $D$. Observe that we get
improvements only for $c>0$.

\begin{figure}[htp]
\centering
\includegraphics[width = 0.6\textwidth]{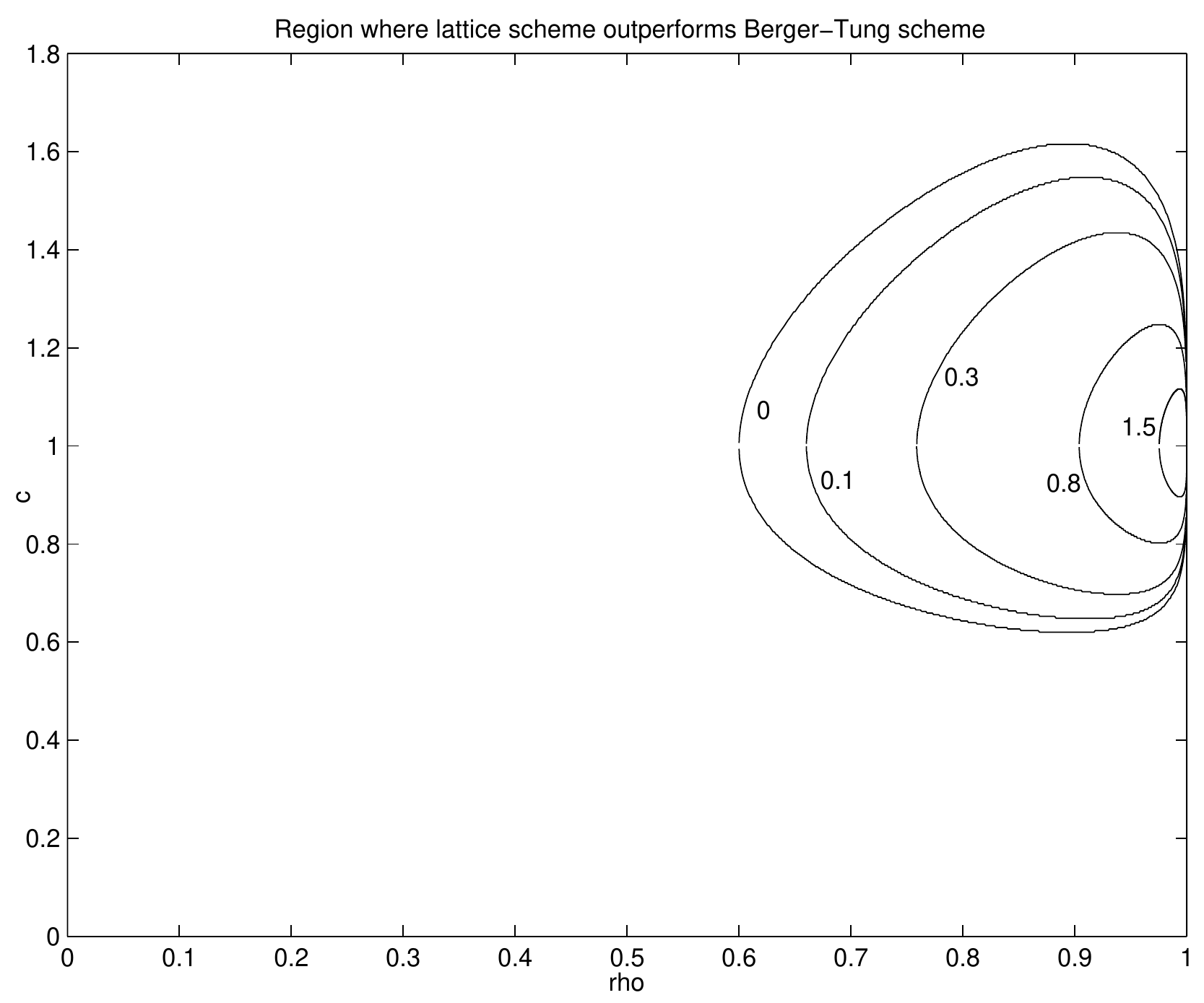}
\caption{Range of $(\rho,c)$ where the lattice scheme performs
better than the Berger Tung scheme for $D \rightarrow 0.$} \label {fig:rhocrange}
\end{figure}

\section{Computation over MAC}
\label{sec:IC}

In this section we consider a coding problem about two-transmitter multiple-access channel. We consider a simple 
formulation which is purely a channel coding problem and captures the
essence of the key concepts. In the standard formulation, the receiver wishes to recover both the messages
reliably. Now consider a variation of this problem, where the decoder
is interested in recovering only a single-letter bivariate function
$g(\cdot,\cdot)$ of the channel inputs sent
by the transmitters reliably.  
Originally, this problem was formulated in \cite{200710TIT_NazGas}, and has been studied extensively  \cite{nazer2008case,
nazer2011compute,nazer2012ergodic,sen_kim}. 
We demonstrate that structured codes can
better facilitate the interaction between the two transmitters to
ensure  that the decoder receivers the desired information while
transmitting information at a larger rate that can be sustained by
unstructured codes. 
Formally, a discrete memoryless stationary two-transmitter multiple-access
channel, used without feedback,  is given by a tuple $(P_{Y|X_1,X_2},g,\kappa_1,\kappa_2)$,   
consisting of
the transition probability $P_{Y|X_1,X_2}: \mathbb{R} \times \mathbb{R} \times \mathcal{B} \rightarrow \mathbb{R}$,
a bivariate function $g:\mathbb{R} \times \mathbb{R}
\rightarrow \mathbb{R}$, and two cost functions $\kappa_1$ and
$\kappa_2$. From now on, we will be exclusively focused on the real addition function, i.e. $g(x,y)=x+y$, for all  $x,y \in \mathbb{R}$, and drop $g$ from the tuple $(P_{Y|X_1,X_2},g,\kappa_1,\kappa_2)$. 

\subsection{Problem Formulation and Main Result}

\bdefi
Given a multiple-access channel
$(P_{Y|X_1,X_2},\kappa_1,\kappa_2)$, a transmission system with parameters $(n,\Theta_1,\Theta_2)$ for
reliable communication consists of a
pair of encoder mappings and decoder mappings
for 
$e_i: \{1,2,\ldots,\Theta_i\} \rightarrow \mathbb{R}^n_i, i=1,2$, and $f: \mathbb{R}^n \rightarrow \mathbb{R}^n.
$
A quadruple  of rates and costs $(R_1,R_2,\tau_1,\tau_2)$ is
said to be achievable if $\forall \e>0$, and all sufficiently large
$n$, there exists a transmission system with parameters
$(n,\Theta_1,\Theta_2)$ such that for $i=1,2$, 
\[
\frac{1}{n} \log \Theta_i \geq R_i-\e, \ \ 
\frac{1}{\Theta_i}\sum_{j=1}^{\Theta_i} \kappa_i(e_i(j)) \leq \tau_i+\e,
\]
\[
\sum_{j=1}^{\Theta_1} \sum_{k=1}^{\Theta_2} \frac{1}{\Theta_1
  \Theta_2} P_{Y_1,Y_2|X_1,X_2}^n \left[ f(Y^n) \neq X_1^n+X_2^n | X^n_1=e_1(j),X_2^n=e_2(k)) \right] \leq \e.
\]
for $i=1,2$.
Let the optimal capacity region $\mathsf{C}(\tau_1,\tau_2)$ denote the set of all rate pairs
$(R_1,R_2)$ such that $(R_1,R_2,\tau_1,\tau_2)$ is achievable. 
\edefi

In the following we provide an achievable rate region that is based on
structured codes.

\begin{definition}
Let $\mathcal{P}(\tau_1,\tau_2)$ denote the collection of  distributions 
$P_{QU_1U_2X_1X_2}$  defined on $\mcq \times \mathbb{R}^4$
such that  (i) $(U_1X_1) - Q - (U_2X_2)$ form a Markov
chain, with $\mathsf{Q}$ being a finite set, and (ii)
$\mathbb{E}(\kappa_i(X_i)) \leq \tau_i$. 
For a $P_{QU_1U_2X_1X_2} \in \mathcal{P}$, let $\alpha_F(P_{QU_1U_2X_1X_2})$ denote the set
of rate pairs $(R_1,R_2) \in [0,\infty)^2$ that 
satisfy 
\begin{align*}
R_1 &\leq I(U_1;Y|U_2Q)+I(X_1+X_2;Y|U_1U_2Q)-I(X_1+X_2;X_2|U_1U_2Q) \\
R_2 &\leq I(U_2;Y|U_1Q)+ I(X_1 +X_2;Y|U_1U_2Q)-I(X_1+X_2;X_1|U_1U_2Q) \\
R_1+R_2 &\leq I(U_1U_2;Y|Q)+2I(X_1\!+\!X_2;Y|U_1U_2Q)-I(X_1\!+\!X_2;X_1|U_1U_2Q)  -I(X_1\!+\!X_2;X_2|U_1U_2Q) 
\end{align*}
where the mutual information terms are evaluated  with
$P_{QU_1U_2X_1X_2}P_{Y|X_1X_2}$. Let the information rate region be defined as 
\[
\mathsf{R}_F(\tau_1,\tau_2)= \mbox{cl} \left( \bigcup_{P_{QU_1U_2X_1X_2} \in
    \mathcal{P}} \alpha_F(P_{QU_1U_2X_1X_2}) \right).
\]
\end{definition}

\begin{theorem}
\label{thm:ic_inner_2_cont}
The operational capacity cost  region $\mathsf{C}(\tau_1,\tau_2)$ contains the information
capacity region $\mathsf{R}_F(\tau_1,\tau_2)$, i.e., $\mathsf{R}_{F}(\tau_1,\tau_2) \subseteq \mathsf{C}(\tau_1,\tau_2)$.
\end{theorem}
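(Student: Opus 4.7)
The plan is to establish the theorem by lifting the known discrete-alphabet achievability for computation over MAC — which combines Ahlswede-style MAC superposition with nested linear coset coding in the spirit of \cite{200710TIT_NazGas} — to the continuous setting via the discretization framework of Section \ref{sec:3}. Fix $P_{QU_1U_2X_1X_2} \in \mathcal{P}(\tau_1,\tau_2)$ and a rate pair in the interior of $\alpha_F(P_{QU_1U_2X_1X_2})$. Because $Q$ already has a finite alphabet, I will condition on each realization $Q=q$ and reinstate time-sharing at the end, so $Q$ may be treated as degenerate in what follows.

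The first step is to produce finite-alphabet surrogates $\widehat{X}_1,\widehat{X}_2,\widehat{U}_1,\widehat{U}_2,\widehat{Y}$ by clipping each continuous variable to a bounded interval $[-\ell,\ell]$, smoothing it by an independent uniform $\epsilon$-noise where needed, and applying the $2^{-n}$ dyadic quantizer of Definition \ref{Def:Desc}, with the discretization of the two sides of the Markov chain $(U_1,X_1)-(U_2,X_2)$ carried out independently so that conditional independence is preserved exactly. By Theorem \ref{th:7} applied to the pair $((U_1,X_1),(U_2,X_2))$, and by Corollary \ref{th:10} for the terms that involve $\widehat{Y}$ through the channel, each mutual information appearing in $\alpha_F$ — in particular the sum-layer term $I(\widehat{X}_1+\widehat{X}_2;\widehat{Y}\mid \widehat{U}_1\widehat{U}_2)$ and the interference terms $I(\widehat{X}_1+\widehat{X}_2;\widehat{X}_j\mid \widehat{U}_1\widehat{U}_2)$ — can be brought within $\xi$ of its continuous counterpart by choosing $\ell,n$ large and $\epsilon$ small. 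The cost expectations $\mathbb{E}\kappa_i(\widehat{X}_i)$ converge to $\mathbb{E}\kappa_i(X_i)$ by Lemma \ref{thm:quantize_dist_cost}, so the cost constraints are preserved up to $\xi$. Moreover, since the discretized alphabet lives in a bounded subset of $\tfrac{1}{2^n}\mathbb{Z}$, for all primes $p$ above a threshold the natural embedding of that alphabet into $\mathbb{Z}_p$ satisfies $\widehat{X}_1+\widehat{X}_2 \pmod{p} = \widehat{X}_1+\widehat{X}_2$ (real) with probability one.

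Given the discretized MAC, I would use a two-layer scheme. Each message $M_i$ is split as $(M_{i1},M_{i2})$. The outer, unstructured layer draws an i.i.d.\ codebook from $P_{\widehat{U}_i}$ for each user, and $M_{i1}$ indexes a codeword in that codebook. Conditioned on the selected $\widehat{U}_i^n$, user $i$ draws its channel input $\widehat{X}_i^n$ from a common nested linear coset code over $\mathbb{Z}_p$, chosen so that the pairwise sum $\widehat{X}_1^n+\widehat{X}_2^n$ is itself a codeword of a linear code while its conditional empirical distribution matches $P_{\widehat{X}_i|\widehat{U}_i}$; the coset index carries $M_{i2}$. The decoder first recovers $(\widehat{U}_1^n,\widehat{U}_2^n)$ from $\widehat{Y}^n$ by joint-typicality decoding, which succeeds whenever the outer-layer rates obey the usual MAC bounds $R_{i1}\le I(\widehat{U}_i;\widehat{Y}\mid \widehat{U}_j)$ and $R_{11}+R_{21}\le I(\widehat{U}_1\widehat{U}_2;\widehat{Y})$. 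It then decodes $\widehat{X}_1^n+\widehat{X}_2^n$ from $\widehat{Y}^n$ given $(\widehat{U}_1^n,\widehat{U}_2^n)$ over the quotient code; a coset-counting analysis of the type sketched in Theorem \ref{thm:dscfinal_cont} yields the inner-layer bounds $R_{i2}\le I(\widehat{X}_1+\widehat{X}_2;\widehat{Y}\mid \widehat{U}_1\widehat{U}_2) - I(\widehat{X}_1+\widehat{X}_2;\widehat{X}_j\mid \widehat{U}_1\widehat{U}_2)$ for $j\ne i$, plus the matching sum-rate bound. Adding the two layers yields exactly the region $\alpha_F$ evaluated on the discretized distribution.

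The hard part will be to coordinate the three asymptotic parameters — the clipping limit $\ell$, the quantization resolution $n$, and the field size $p$ — so that simultaneously (i) every mutual information in the region is within $\xi$ of its continuous counterpart via Theorem \ref{th:7}, Corollary \ref{th:10}, and Lemma \ref{lem:mc_forced1}; (ii) the Markov chain $(\widehat{U}_1,\widehat{X}_1)\leftrightarrow Q\leftrightarrow (\widehat{U}_2,\widehat{X}_2)$ still holds exactly; and (iii) the real sum of the transmitted symbols coincides with the $\mathbb{Z}_p$-sum that the linear decoder recovers. Taking first $\ell$ and $n$ large enough to control the information quantities, then $p$ large enough to ensure (iii), then letting $\xi\to 0$, reinstating time-sharing over $Q$, and finally taking the convex closure of the union over $\mathcal{P}(\tau_1,\tau_2)$, delivers $\mathsf{R}_F(\tau_1,\tau_2)\subseteq \mathsf{C}(\tau_1,\tau_2)$.
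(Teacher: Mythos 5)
The paper states Theorem~\ref{thm:ic_inner_2_cont} but provides no proof --- it moves directly to the Gaussian MAC example --- so there is no internal proof to compare against. Your reconstruction follows the same template used throughout the paper (discretize, run a discrete scheme, argue convergence), the two-layer decomposition matches the algebraic form of $\alpha_F$ exactly (the identity $-I(X_1{+}X_2;X_j\mid U_1U_2Q)=H(X_i\mid U_1U_2Q)-H(X_1{+}X_2\mid U_1U_2Q)$ converts the nested-coset covering/packing bounds into the stated region), and the $p$-vs-$2^{-n}$ coordination so that $\mathbb{Z}_p$-addition agrees with real addition is the right idea. As a high-level plan it is sound.

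There is, however, one genuine gap. Theorem~\ref{th:7}, Theorem~\ref{th:9}, and Corollary~\ref{th:10} are all proved for the distributed-source Markov chain $U-X-Y-V$, where the discretized auxiliaries $\widehat{U},\widehat{V}$ are produced by processing continuous sources; the channel output in those lemmas enters only through clipping/discretizing an already-given random variable. In the MAC problem the joint law is $P_{QU_1U_2X_1X_2}P_{Y\mid X_1X_2}$, and because the decoder must recover $X_1^n+X_2^n$ exactly, your scheme \emph{must} transmit the discrete codewords $\widehat{X}_i^n$ directly --- you cannot use the reverse-test-channel trick of the DBC and Gelfand--Pinsker proofs (Appendices~\ref{App:thm:DBC}, \ref{App:Gelfand}), where a continuous $\widetilde{X}$ is synthesized and the discrete code sees the cell-averaged channel. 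Consequently the channel your discrete scheme actually sees is $P_{Y\mid X_1=\widehat{x}_1,X_2=\widehat{x}_2}$ evaluated at discrete lattice points, which is \emph{not} a marginal of any clipped/discretized version of the original continuous joint, and so lower semicontinuity of mutual information plus data processing --- the only convergence tools supplied --- do not directly give $I(\widehat{X}_1{+}\widehat{X}_2;\widehat{Y}\mid\widehat{U}_1\widehat{U}_2Q)\to I(X_1{+}X_2;Y\mid U_1U_2Q)$. Closing this requires an additional regularity assumption on the kernel $x\mapsto P_{Y\mid X_1X_2}(\cdot\mid x_1,x_2)$ (e.g.\ continuity in total variation) and a short argument that the point-sampled discrete channel and the cell-averaged discrete channel merge as $2^{-n}\to 0$; alternatively you could reformulate so the inner layer's codeword alphabet is embedded in a dithered lattice and the target is the lattice sum, but either way the step should be made explicit rather than attributed to Theorem~\ref{th:7} and Corollary~\ref{th:10} as stated.
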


% \begin{proof}
% \textcolor{red}{We present an outline of a proof.}
% \end{proof}

\subsection{Gaussian MAC Example}
Consider the MAC given by $Y=X_1+X_2+Z$, where $Z$ is zero-mean Gaussian with variance $N$. We have power constraints on $X_1$ and $X_2$: $\kappa_1(x_1)=x_1^2$ and $\kappa_2(x_2)=x_2^2$, for all
$x_1,x_2 \in \mathbb{R}$. 
Let $\tau_i=P_i$ for $i=1,2$.
The rates achievable using unstructured code ensembles is given by the standard MAC capacity region given by 
\[
\left\{(R_1,R_2): R_1 \leq \frac{1}{2} \log \left(1+\frac{P_1}{N} \right), R_2 \leq \frac{1}{2} \log \left( 1+\frac{P_2}{N}\right),
R_1+R_2 \leq \frac{1}{2} \log \left(1+ \frac{P_1+P_2}{N} \right) \right\}.
\]
This is achieved using independent Gaussian inputs $X_1$ and $X_2$ of variances $P_1$ and $P_2$, respectively. 
Using the same distribution, one can achieve the following rates while employing structured code ensembles.
\[
\left\{(R_1,R_2): R_1 \leq \frac{1}{2} \log \left(\frac{P_1(P_1+P_2+N)}{(P_1+P_2)N} \right), R_2 \leq \frac{1}{2} \log \left(\frac{P_2(P_1+P_2+N)}{(P_1+P_2)N}\right)
 \right\}.
\]
Comparing the sum-rate we see that the structured coding scheme performs better than the unstructured coding scheme when 
\[
\left(1+\frac{P_1}{P_2} \right) \left(1+\frac{P_2}{P_1}\right)
\leq 1+\frac{P_1}{N}+\frac{P_2}{N}.
\]
For the case when $P_1=P_2=P$ is boils down to the condition that $\frac{P}{N} \geq 1.5$. 
\section{Interference Channels}

\subsection{Problem Formulation and Main Result}

A $3$-user interference channel consists of
three inputs $X_1,X_2$ and $X_3$, and three outputs $Y_1,Y_2$ and $Y_3$, and they are related via a 
transition probability 
$P_{Y_1Y_2Y_3|X_1X_2X_3}: \mathbb{R}^3 \times \
\sigma(\mathcal{B}^3) \rightarrow \mathbb{R}$
associated with these variables and a 
triple of cost functions: $\kappa_i:\mathbb{R}\rightarrow \mathbb{R}$
for $i \in [3]$. We assume that the channel is memoryless, stationary and used without feedback. The channel is 
characterized by the pair 
$(P_{\underline{Y}|\underline{X}},\underline{\kappa})$.

We now consider a special class of $3$-user interference channel called 3-to-1 IC that enables us to prove strict
sub-optimality of coding techniques based on unstructured codes. A
3-to-1 IC is an interference channel wherein two of the users enjoy interference
free point-to-point links. Formally, an 
interefence channel 
$(P_{\underline{Y}|\underline{X}},\underline{\kappa})$
is a 3-to-1 IC if (i)
$P_{Y_{2}|\ulineInputAlphabet}(y_{2}|\ulineinput)
\triangleq
\sum_{(y_{1},y_{3}) \in \OutputAlphabet_{1} \times
  \OutputAlphabet_{3}}W_{\ulineOutputRV|\ulineInputRV}(\ulineoutput|\ulineinput)$
is independent of $(x_{1},x_{3}) \in \InputAlphabet_{1} \times
\InputAlphabet_{3}$, and (ii)
$P_{Y_{3}|\ulineInputAlphabet}(y_{3}|\ulineinput)
\triangleq 
\sum_{(y_{1},y_{2}) \in \OutputAlphabet_{1} \times
  \OutputAlphabet_{2}}P_{\ulineOutputRV|\ulineInputRV}(\ulineoutput|\ulineinput)$
is independent of $(x_{1},x_{2}) \in \InputAlphabet_{1} \times
\InputAlphabet_{2}$ for every collection of input and output symbols
$(\ulineinput,\ulineoutput) \in \ulineInputAlphabet \times
\ulineOutputAlphabet$. For a 3-to-1 IC, the channel transition
probabilities factorize
as \[P_{\ulineOutputRV|\ulineInputRV}(\ulineoutput|\ulineinput)
=P_{\OutputRV_{1}|\ulineInputRV}(y_{1}|\ulineinput)
P_{\OutputRV_{2}|\InputRV_{2}}(y_{2}|x_{2})W_{\OutputRV_{3}|\InputRV_{3}}(y_{3}|x_{3}),\]
for some transition probabilities
$P_{\OutputRV_{1}|\ulineInputRV}$, $P_{\OutputRV_{2}|\InputRV_{2}}$
and $P_{\OutputRV_{3}|\InputRV_{3}}$. We also note that
$(X_{1},X_{3})-X_{2}-Y_{2}$ and $(X_{1},X_{2})-X_{3}-Y_{3}$ form Markov
chains for any distribution
$P_{X_{1}}P_{X_{2}}P_{X_{3}}P_{\ulineOutputRV|\ulineInputRV}$.
One can define the operational capacity-cost region
$\mathsf{C}(\underline{\tau})$ for this channel in a
straightforward way.

\bdefi
Given a 3-to-1 IC
$(P_{\underline{Y}|\underline{X}},\underline{\kappa})$, a transmission system with parameters $(n,\underline{\Theta})$ for
reliable communication consists of a
pair of encoder mappings and decoder mappings
for 
$e_i: [\Theta_i] \rightarrow \mathbb{R}^n_i, i \in [3]$, and $f_i: \mathbb{R}^n \rightarrow [\Theta_i] .
$
A tuple  of rates and costs $(\underline{R},\underline{\tau})$ is
said to be achievable if $\forall \e>0$, and all sufficiently large
$n$, there exists a transmission system with parameters
$(n,\underline{\Theta})$ such that for $i\in[3]$, 
\[
\frac{1}{n} \log \Theta_i \geq R_i-\e, \ \ 
\frac{1}{\Theta_i}\sum_{j=1}^{\Theta_i} \kappa_i(e_i(j)) \leq \tau_i+\e,
\]
\[
\sum_{j=1}^{\Theta_1} \sum_{k=1}^{\Theta_2} \sum_{l=1}^{\Theta_3} \frac{1}{\Theta_1
  \Theta_2 \Theta_3} P_{Y_1,Y_2,Y_3|X_1,X_2,X_3}^n \left[ \cup_{i=1}^3 f_i(Y_i^n) \neq X_i^n | X^n_1=e_1(j),X_2^n=e_2(k),X_3^n=e_3(l)) \right] \leq \e.
\]
for $i=1,2$.
Let the optimal capacity region $\mathsf{C}(\underline{\tau})$ denote the set of all rate triple
$\underline{R}$ such that $(\underline{R},\underline{\tau})$ is achievable. 
\edefi

Any  interference channel wherein only one of the users is subjected to
  interference is a $3-$to$-1$ IC by a suitable permutation of the
  user indices.
In the following section, we provide an inner bound to the capacity region by using structured code ensembles.

\begin{definition}
 \label{Eqn:TestChannelsCodingOver3To1ICUsingNestedCosetCodes}
Given a $3-$IC $(W_{\ulineOutputRV|\ulineInputRV},\ulinecostfn)$,
let $\mathcal{P}_{f}(\underline{\tau)})$
denote the collection of
 distributions 
$P_{\TimeSharingRV\SemiPrivateRV_{2}\SemiPrivateRV_{3}\ulineInputRV}$ where 
$P_{\TimeSharingRV\SemiPrivateRV_{2}\SemiPrivateRV_{3}\ulineInputRV} \in \mathcal{P}_{u}(\underline{\tau})$
defined over $\TimeSharingRVSet \times
\mathbb{R}^4$, such
that $(U_1,X_1)$, $(U_2,X_2)$ and $X_3$ are conditionally mutually independent given $Q$, where $\mathcal{Q}$ is a finite set.  
For 
$(P_{\TimeSharingRV\SemiPrivateRV_{2}\SemiPrivateRV_{3}\ulineInputRV}) \in
\mathcal{P}_{f}(\underline{\tau})$, let
$\alpha_F(P_{\TimeSharingRV\SemiPrivateRV_{2}\SemiPrivateRV_{3}
\ulineInputRV})$ be defined as the set of rate triples $(R_{1},R_{2},R_{3})
\in [0,\infty)^{3}$ that satisfy
\begin{eqnarray}
R_1 &\leq&  I(X_1;Y_1|U_{2} + U_{3},\TimeSharingRV ),\nonumber\\
R_1 &\leq&  
 I(X_1, U_{2} + U_{3};Y_1|\TimeSharingRV )-I(U_2+U_3;U_j|Q) :j=2,3 \nonumber\\R_j &\leq&
I(U_{j},X_j;Y_j|\TimeSharingRV ):j=2,3,   \nonumber\\
R_1+R_j &\leq& I(X_j;Y_j|\TimeSharingRV, U_{j}) +I(X_1,U_{2}+
U_{3};Y_1|\TimeSharingRV )
-I(U_2+U_3;U_{\bar{j}}|Q)
:j=2,3,\nonumber
\end{eqnarray}
where the mutual information terms are evaluated with
$P_{\TimeSharingRV\SemiPrivateRV_{2}\SemiPrivateRV_{3}\ulineInputRV}W_{\underline{Y}|\underline{X}}$, and 
$\bar{j}=5-j$.  Let
the information rate-cost region be defined as \begin{equation}
 \label{Eqn:AchievableRateRegion3to1ICUsingNestedCosetCodes}
\mathsf{R}_{F}(\underline{\tau}) \triangleq  cl \left(
\underset{\substack{P_{\TimeSharingRV\SemiPrivateRV_{2}\SemiPrivateRV_{3}
\ulineInputRV} \in \mathcal{P}_{f}(\underline{\tau})}
}{\bigcup}\alpha_{F}(P_{\TimeSharingRV\SemiPrivateRV_{2}\SemiPrivateRV_{3}
\ulineInputRV} \right).\nonumber
\end{equation}
\end{definition}
\begin{theorem}
\label{Thm:AchievableRateRegionFor3To1ICUsingCosetCodes}
For a given $3-$IC,  $(P_{\ulineOutputRV|\ulineInputRV},\underline{\kappa})$, the operational capacity-cost  region $\mathsf{C}(\underline{\tau})$
contains the information rate-cost region,  
$\mathsf{R}_{F}(\underline{\tau})$, 
i.e., $\mathsf{R}_{F}(\underline{\tau}) \subseteq \mathsf{C}(\underline{\tau})$.
\end{theorem}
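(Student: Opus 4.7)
\textit{Proof Proposal.} The plan is to mirror the discretization pipeline used for the two-help-one problem (Theorem \ref{thm:dscfinal_cont}) and the computation-over-MAC result (Theorem \ref{thm:ic_inner_2_cont}), and to invoke the known achievability for the discrete 3-to-1 IC based on nested coset codes. Fix $\xi>0$ and a test distribution $P_{Q U_2 U_3 X_1 X_2 X_3}\in\mathcal{P}_f(\underline{\tau})$ under which the required conditional independence given $Q$ and the cost constraints hold. Because $Q$ is already finite, we condition on each value $Q=q$ and apply the discretization procedure per-letter, then average over $q$.

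First, I would clip each of $X_1,X_2,X_3,U_2,U_3$ to intervals $[-\ell,\ell]$, smooth using Lemma \ref{lem:5} where necessary to guarantee continuous densities, and then quantize via the map $Q_n$ of Definition \ref{Def:Desc} to produce finite-alphabet variables $\widehat{X}_1,\widehat{X}_2,\widehat{X}_3,\widehat{U}_2,\widehat{U}_3$, and similarly discretize the outputs $Y_1,Y_2,Y_3$ via suitable clipping/quantization. Applying Theorem \ref{th:7} (together with Corollary \ref{th:10} to handle the conditioning on a discrete output in the MAC-style sum decoding at receiver~1), the mutual information quantities
\begin{align*}
&I(X_1;Y_1|U_2{+}U_3,Q),\ I(X_1,U_2{+}U_3;Y_1|Q),\ I(U_2{+}U_3;U_j|Q),\\
&I(U_j,X_j;Y_j|Q),\ I(X_j;Y_j|Q,U_j),\quad j=2,3,
\end{align*}
are within $\xi$ of their discretized counterparts for all sufficiently fine $n$ and sufficiently large $\ell$. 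Lemma \ref{thm:quantize_dist_cost} gives $\mathbb{E}\kappa_i(\widehat{X}_i)\le\tau_i+\xi$, preserving the cost constraints, while Lemma \ref{lem:conv_mutual_info_1} and the data processing inequality propagate convergence through the channel.

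Next I would invoke the discrete 3-to-1 IC achievability theorem using nested coset codes over a sufficiently large prime field $\mathbb{F}_p$ (as in the compute-and-forward line of work): sender 2 and sender 3 use cosets of a common linear code so that receiver 1 can decode $\overline{U}_2+\overline{U}_3$ (interpreted as an $\mathbb{F}_p$-sum) jointly with $\widehat{X}_1$, while receivers 2 and 3 decode their own $(\overline{U}_j,\widehat{X}_j)$ using standard random coding. Embedding $\widehat{U}_2,\widehat{U}_3$ (which lie in the finite set $\frac{1}{2^n}\mathbb{Z}\cap[-\ell,\ell]$) into $\mathbb{F}_p$ for $p$ large enough that no wrap-around occurs, the field sum coincides with the real sum with probability one, so the rate terms of the discrete scheme match $I(\widehat{U}_2+\widehat{U}_3;\cdot)$ rather than $H$-terms. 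Standard packing/joint-typicality arguments for the discretized problem then yield the inequalities in the theorem statement with $\widehat{\cdot}$ in place of the continuous variables.

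The main obstacle, as in Theorem \ref{thm:dscfinal_cont}, is the interplay between the structured sum-decoding at receiver 1 and the discretization. I expect two subtleties: (i) the distribution after clipping/smoothing/quantizing must be coerced so that the Markov/independence structure $(X_1)-(Q)-(U_2,X_2)-(Q)-(X_3)$ is preserved (which I would handle by producing conditionally independent discretizations given $Q$ per Theorem \ref{th:9}); and (ii) the convergence of $I(\widehat{U}_2+\widehat{U}_3;\widehat{U}_j|Q)$ and $I(\widehat{X}_1,\widehat{U}_2+\widehat{U}_3;\widehat{Y}_1|Q)$ to their continuous analogs, which is exactly what Theorem \ref{th:7} (eqs.~\eqref{eq:5.1.3}--\eqref{eq:5.1.5}) and Corollary \ref{th:10} are designed to supply. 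Combining all the above and taking $\xi\to 0$ along a sequence of $(n,\ell,\epsilon,p)$ with $n,\ell,p\to\infty$ and $\epsilon\to 0$, every rate tuple in $\alpha_F(P_{Q U_2 U_3 \underline{X}})$ is achievable; taking the union over $\mathcal{P}_f(\underline{\tau})$ and closing yields $\mathsf{R}_F(\underline{\tau})\subseteq \mathsf{C}(\underline{\tau})$.
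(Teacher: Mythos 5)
The paper itself contains no proof of Theorem~\ref{Thm:AchievableRateRegionFor3To1ICUsingCosetCodes} — only the theorem statement followed by the Gaussian example — so there is nothing to compare your proposal against directly. Judged against the methodology the paper uses for its proved theorems (Appendices~\ref{app:wynerziv_c}, \ref{App:Gelfand}, \ref{App:thm:DBC}, \ref{App:bergertung_cont}), your pipeline of clip/smooth/quantize, invoke the discrete structured-coding achievability, embed the bounded discretized $\widehat{U}_2,\widehat{U}_3$ into a large prime field so the field sum equals the real sum, and pass to the limit via Theorem~\ref{th:7}, Corollary~\ref{th:10}, and Lemma~\ref{thm:quantize_dist_cost}, is exactly the right strategy and is consistent with the paper's framework.

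There is, however, a substantive gap: you describe only the convergence of single-letter mutual information quantities and the cost functionals, but you never construct the continuous-alphabet transmission system from the discrete one. In every channel-side proof the paper actually gives, this lifting step is nontrivial and occupies a dedicated stage — see Step~4 of Appendix~\ref{App:Gelfand} and Step~4 of Appendix~\ref{App:thm:DBC}, where a reverse test channel $P_{\widetilde{X}|\widehat{X}}$ is applied to each quantized codeword sample to produce a legitimate real-valued channel input, and where the continuous-output decoder is shown to see precisely the discrete transition probability that the discrete code was designed for. For the 3-to-1 IC this step matters for all three transmitters (each $\widehat{X}_i^n$ must be dequantized to a continuous $X_i^n$ whose marginal and conditional structure match the reverse test channel) and for all three receivers (each continuous $Y_i^n$ must be clipped/quantized so that the induced discrete channel agrees with the one fed to the coset-code decoder, including the sum-decoding at receiver 1). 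Without this, you have shown that the information rate region is the limit of discretized information rate regions, not that those rates are operationally achievable over the continuous channel. A second, smaller issue: your Markov notation $(X_1)-(Q)-(U_2,X_2)-(Q)-(X_3)$ doesn't correctly express the conditional independence in $\mathcal{P}_f(\underline{\tau})$ (namely, $(U_2,X_2)$, $(U_3,X_3)$, and $X_1$ mutually independent given $Q$, noting the apparent $U_1\!\to\!U_2$, $U_2\!\to\!U_3$, $X_3\!\to\!X_1$ typo in the paper's definition) and leaves it ambiguous whether the discretization procedure preserves all three pairwise independences and the deterministic map from $(Q,U_j)$ to $X_j$ simultaneously; the analogue of Lemma~\ref{lem:mc_forced1} needs to be invoked explicitly to coerce the discretized test distribution back onto the required conditional independence structure, as is done in the proofs of Theorems~\ref{thm:wynerziv_c} and \ref{th:9}.
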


% \begin{proof}
% \textcolor{red}{Need an outline.}
% \end{proof}
\subsection{Gaussian Example}

Consider the linear quadratic Gaussian example where the transition probability of the interference channel is characterized in a canonical form 
with a symmetry about user 2 and user 3 as: $Y_1=X_1+aX_2+aX_3+Z_1$,
$Y_2=X_2+Z_2$, and $Y_3=X_3+Z_3$, where 
$Z_i$ is zero-mean Gaussian noise with variance given by $N_i$, and $a$ is a free parameter. For ease of comparison we choose $N-1=N_2=N_3=1$. The cost constraints are given by $\kappa_i(x)=x^2$ for $i\in[3]$. We choose
cost levels at $\tau_i=P$ for $i \in[3]$. We will also focus on the operating point where users $2$ and $3$ wish to attain their respective PtP capacities, i.e., 
$R_2=R_3=\frac{1}{2}\log (1+P)$. 
Now let us find the capacity of the first user. We choose $Q$ to be the trivial random variable, and let 
$U_i=X_i$, for $i=2,3$.  
Since users $2$ and $3$ wish to attain their respective capacities, we have to choose $X_i$ to be 
zero-mean Gaussian with variance $P$ for 
$i=2,3$. Now the bounds on $R_1$ are given by
\begin{align}
R_1 &\leq \frac{1}{2}\log (1+P)  +
\min\left\{0,  \frac{1}{2}\log \left[\frac{1}{2}+ \frac{aP}{P+1} \right]
-\frac{1}{2}\log (1+P)\right\}.
\end{align}
Now let us find the condition under which even user 1 can attain the respective capacity as if the channel is interference-free. This requires 
\[
\frac{P+1}{a^2} \leq \frac{2P}{1+2P}.
\]
Solution exists for $a>1.714$. 
For the unstructured coding scheme, the first receiver has to decode $X_2$ and $X_3$ before decoding $X_1$. Using the standard results on the multiple-access channels, we see that 
following conditions need to be satisfied.
\begin{align}
\frac{1}{2} \log (P+1) &\leq \frac{1}{2} \log (a^2P+1) \\
 \log (P+1) &\leq \frac{1}{2} \log \min\{ (2a^2P+1), (P+a^2P+1)\} \\
\frac{3}{2} \log (P+1) &\leq \frac{1}{2} \log (a^2P+P+1). 
\end{align}
Solutions exists for $a>1$. The range of $a$ and $P$ for which solutions exists are shown in Figure \ref{fig:IC-Gaussian} for the two cases. 
One can clearly see that structured coding offers larger range of $(a,P)$ such that all users attain their respective interference-free capacities.
\begin{figure}[htp]
\centering
\includegraphics[width = 0.45\textwidth]{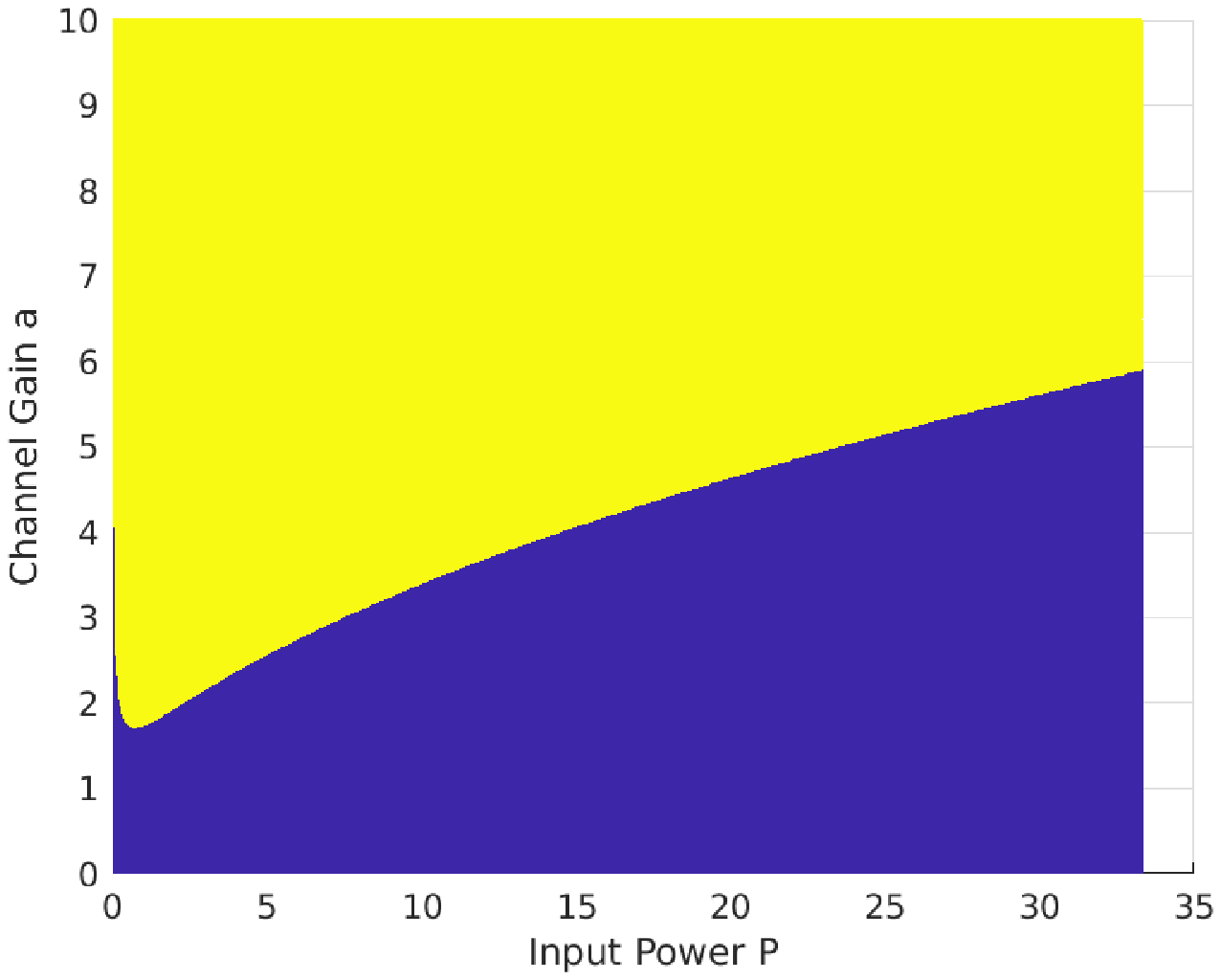}
\includegraphics[width = 0.45\textwidth]{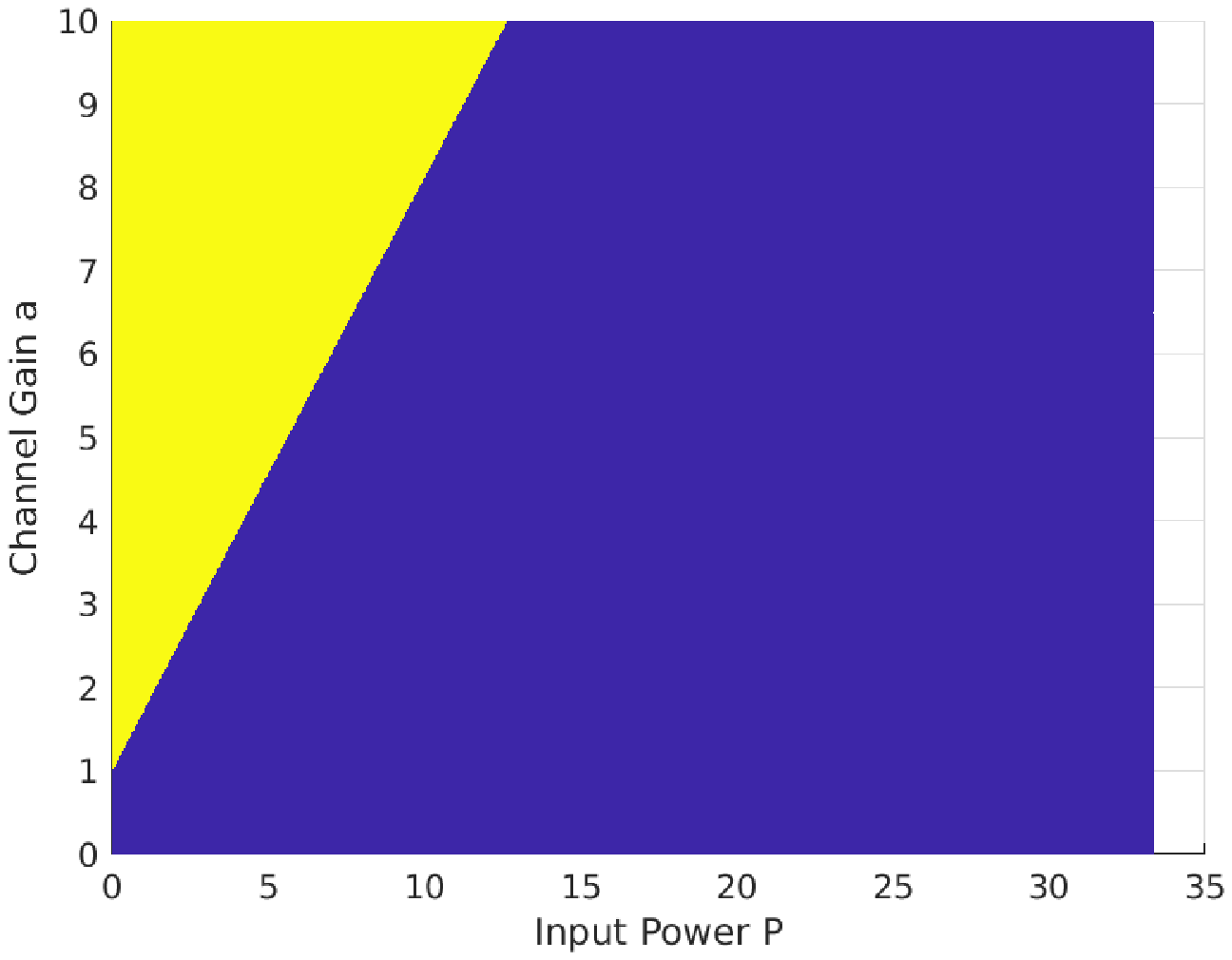}
\centering
\hspace{2in} (a) \hspace{2in} (b) 
\caption{Range of $(a,P)$ where all users can attain their respective interference-free capacities is shown in Maize.
(a) Structure Coding Scheme 
(b) Unstructured Coding Scheme.} 
\label {fig:IC-Gaussian}
\end{figure}

\section{Multiple Descriptions Source Coding}
\label{sec:MD}
In this section, we consider the multiple descriptions source coding problem, where given a source $X$, the encoder wishes to construct a set of $\ell\geq 2$ descriptions of the source, such that given each subset of descriptions, the source can be reconstructed with a desired distortion. The scenario has been studied extensively in the discrete case \cite{zhang1987new, el1980multiple,akyol2012combinatorial,pradhan2004n, shirani2018achievable}. We derive an achievable rate-distortion for general continuous sources, and demonstrate that structured codes achieve a larger rate-distortion region compared to unstructured codes in an example with Gaussian sources and test-channels. 
\subsection{Problem Formulation and Main Result}

\bdefi
Let $\ell\geq 2$, $\mathsf{L}=[\ell]$,  $\mathcal{L}=2^\mathsf{L}-\Phi$, $n \in \mathbb{N}$, and $\Theta_i\in \mathbb{N}, i\in \mathsf{L}$. A coding system with parameters $(n,\Theta_i:i \in \mathsf{L})$ for 
multiple description  coding of a given source 
$(P_X,d_\mcn: \mcn \in \mathcal{L})$, 
consists of 
$\ell$ encoder mappings and $2^\ell-1$ decoder mappings: 
\[
e_i: \mcx^n \rightarrow [\Theta_i], \ \ \ \  
f_{\mcn}: \prod_{i \in \mcn}[\Theta_i] \rightarrow \mcx^n,
\]
where $i\in \mathsf{L}$, and $\mcn \in \mathcal{L}$.
A of rate-distortion tuple $(R_i:i \in \mathsf{L},D_\mcn:\mcn \in \mathcal{L}) \in 
(\mathbb{R}^+)^{\ell+2^{\ell}-1}$ is
said to be achievable if for all $\e>0$,  and for all sufficiently large
$n$, there exists a coding system with parameters 
$(n,\Theta_i:i\in \mathsf{L})$ such that

\[
\frac{1}{n} \log \Theta_i \leq R_i+\e, \ \ \mathbb{E}[d_\mcn(X^n,f_{\mcn}((e_i(X^n))_{i \in \mcn}))]  \leq D_\mcn+\e, \qquad i\in \mathsf{L},  \mcn \in \mathcal{L}.
\]
The operational  rate-distortion region 
 $\mathsf{R}_{op}(D_\mcn:\mcn \in \mathcal{L})$ is given by the set of all achievable rate-distortion tuples 
$(R_i:i \in \mathsf{L},D_\mcn:\mcn \in \mathcal{L})$.
\edefi

We use the discretization techniques developed in prior sections, along with the Sperner Set Coding (SSC) strategy in \cite{shirani2018achievable} to derive an achievable rate-distortion region for multiple descriptions coding with continuous sources. 
We explain the random variables decoded at each decoder for the SSC strategy for the three-descriptions problem, i.e. $\ell=3$. The complete explanation of the scheme is provided in \cite{shirani2018achievable}. Let $\mathsf{S}_{L}$ be the Sperner set for
$\ell=3$. It is known that $\mathbf{S}_\mathsf{L}$ has 17 elements. Let $U_{\mathcal{M}},\mathcal{M}\in \mathbf{S}_{\mathsf{L}}$ be a vector of random variables with whose joint distribution with source $X$ is given by $P_{X,U_{\mathcal{M}}, \mathcal{M}\in \mathbf{S}_{\mathsf{L}}}$.
The SSC strategy generates 17 independent random codebooks $\mathsf{C}_{\mathcal{M}},\mathcal{M}\in \mathbf{S}_{\mathsf{L}}$, with blocklength $n$, where $\mathsf{C}_{\mathcal{M}}$ is generated based on the single letter distribution $P_{U_{\mathcal{M}}}$ for $\mathcal{M}\in \mathbf{S}_{\mathsf{L}}$. Each codebook is binned independently $\ell$ times, once per description. Given a source sequence $X^n$, the encoder finds codewords $U^{n}_{\mathcal{M}}\in \mathsf{C}_{\mathcal{M}}, \mathcal{M}\in \mathbf{S}_{\mathsf{L}}$ which are jointly typical with each other and the source sequence and sends the corresponding bin numbers on each description. 
Each decoder decodes a subset of the codewords. To elaborate, decoder 
$\mathsf{N}\in 2^{\mathsf{L}}-\Phi$ recovers $U_{\mathcal{M}}$ based on the received bin numbers if $\mathsf{N}\subseteq \mathcal{M}$.  
Let  ${\mathbf{M}}_\mathsf{N}$ be the set of indices of random variables whose corresponding codewords are decoded at decoder $\mathsf{N}$, and let $\widetilde{{\mathbf{M}}}_\mathsf{N}$ be the indices of those which are decodable if we have access to strict subsets of the descriptions received by $\mathsf{N}$.  For instance, the random variables decoded at decoders $\{1\}$ and $\{2,3\}$ are given below, where we have removed the outer most curly brackets for ease of exposition: 
\begin{align*}
 &\text{decoder $\{1\}$: } U_{\{1\},\{2\},\{3\}}, U_{\{1\},\{2\}}, U_{\{1\},\{3\}}, U_{\{1\},\{2,3\}}, U_{\{1\}}\\
&\text{decoder $\{2,3\}$: } U_{\{1\},\{2\},\{3\}}, U_{\{1,2\},\{1,3\},\{2,3\}}, U_{\{1\},\{2\}}, U_{\{1\},\{3\}}, U_{\{2\},\{3\}},\\&\qquad\quad\qquad U_{\{1\},\{2,3\}}, U_{\{2\},\{1,3\}} U_{\{3\},\{1,2\}}, U_{\{1,2\},\{2,3\}}, U_{\{1,3\},\{2,3\}}, U_{\{2\}}, U_{\{3\}}, U_{\{2,3\}}
\end{align*}
So, as an example ${\mathbf{M}}_{\{1\}}=\Bigg\{{\Big\{\{1\},\{2\},\{3\}}\Big\}, \Big\{\{1\},\{2\}\Big\}, \Big\{\{1\},\{3\}\Big\}, \Big\{\{1\},\{2,3\}\Big\}, \Big\{\{1\}\Big\}\Bigg\}$ which are all the codebooks decoded at decoder $\{1\}$.
 Also, $\widetilde{\mathbf{M}}_{\{2,3\}}=\Bigg\{\Big\{\{1\},\{2\},\{3\}\Big\}$, $ \Big\{\{1\},\{2\}\Big\}$, $\Big\{\{1\},\{3\}\Big\}$, $\Big\{\{2\},\{3\}\Big\}$, $\Big\{\{2\},\{1,3\}\Big\}$, $\Big\{\{3\},\{1,2\}\Big\}$, $\Big\{\{2\}\Big\}$, $\Big\{\{3\}\Big\}\Bigg\}$,  and these are all of the indices of random variables whose corresponding codewords are decoded at decoders $\{2\}$ and $\{3\}$.
%  The achievability of the SSC region, given below can be shown using random coding over unstructured code ensembles using the techniques of superposition coding and binning. 
 The following theorem provides an achievable region for the multiple descriptions problem using the discretization process developed in the previous sections along with the SSC strategy with unstructured random codes developed in \cite{shirani2018achievable} for discrete sources and test-channels.

\begin{definition}
Given a source $(P_X,d_\mcn:\mcn \in \mathcal{L})$,  let $\mathcal{P}(D_\mcn:\mcn \in \mathcal{L})$ denote the collection of pairs $(P,g_\mathcal{L})$ of (a)   joint distribution $P$ on random variables $X$ and $U_{\mathcal{M}},\mathcal{M}\in \mathbf{S}_{\mathsf{L}}$ with $X$-marginal distribution $P_X$ and (b) a set of reconstruction functions $g_{\mathcal{L}}\triangleq (g_\mathsf{N}:\mathsf{U}_{\{\mathsf{N}\}}\to\mathsf{X}, \mathsf{N}\in\mathcal{L})$ such that 
$\mathbb{E}d_{\mathsf{N}}(X,g_{\mathsf{N}}(U_{\{\mathsf{N}\}})) \leq D_\mcn$,
$\forall \mcn \in \mathcal{L}$, 
where the expectations are evaluated with the distribution $P$.  For a $(P,g_{\mathcal{L}}) 
\in 
\mathcal{P}(D_\mcn:\mcn \in 
\mathcal{L})$,
define  $\alpha_{SS}(P, g_\mathcal{L}) $  as the set of rate tuples
$(R_i:i \in \mathsf{L})$ satisfying the following constraints for some non-negative real numbers $(\rho_{\mathcal{M},i},r_{\mathcal{M}})_{i\in \widetilde{\mathcal{M}},\mathcal{M}\in \mathbf{S}_\mathsf{L} }$ :  
\begin{align}
\label{cov11} \overline{I}(U_{{\mathbf{M}}})+ I(U_{{\mathbf{M}}};X)    &\leq   \sum_{\mathcal{M}\in      \mathbf{M}}r_{\mathcal{M}}   ,\forall \ \ \mathbf{M}\subset \mathbf{S}_\mathsf{L},
\\\label{pack11} \sum_{\mathcal{M}\in \mathbf{M}_\mathsf{N}\backslash({\mathbf{L}\cup \widetilde{\mathbf{M}}_\mathsf{N}})} \!\!\!\! \!\!\!\!(r_\mathcal{M}-\sum_{i\in \widetilde{\mathcal{M}}}\rho_{\mathcal{M},i})
&\leq \overline{I}(U_{\mathbf{M}_\mathsf{N}\backslash({\mathbf{L}\cup \widetilde{\mathbf{M}}})})+I(U_{_{\mathbf{M}_\mathsf{N}\backslash({\mathbf{L}\cup \widetilde{\mathbf{M}}})}}; U_{\mathbf{L}\cup \widetilde{\mathbf{M}}})  , \forall \mathbf{L}\subset \mathbf{M}_\mathsf{N}, \forall \mathsf{N}\in \mathcal{L},
\\R_i &=\sum_{\mathcal{M}} \rho_{\mathcal{M},i},  \label{eq:RDSSC}
\end{align}
 where we have defined $\overline{I}(Z^k)\triangleq \sum_{j=1}^k I(Z_k;Z^{k-1})$ for a random vector $Z^k$, 
 $\mathbf{M}_\mathsf{N}$ is the set of all codebooks decoded at decoder $\mathsf{N}$, that is $\mathbf{M}_\mathsf{N} \triangleq  \{\mathcal{M}\in \mathbf{S}_\mathsf{L}| \exists \mathsf{N}'\subset \mathsf{N},  \mathsf{N}'\in \mathcal{M}\}$, and $\widetilde{\mathbf{M}}_\mathsf{N}$ denotes the set of all codebooks decoded at decoders $\mathsf{N}_p\subsetneq \mathsf{N}$ which receive subsets of descriptions received by $\mathsf{N}$, that is  
$\widetilde{\mathbf{M}}_\mathsf{N} \triangleq  \bigcup_{\mathsf{N}_p \subsetneq \mathsf{N}}{\mathbf{M}}_{\mathsf{N}_p}$. The mutual information terms are evaluated with the distribution $P$. 
 Define the Sperner Set Coding rate-distortion region as 
\begin{align*}
 \mathsf{R}_{SS}(D_\mcn:\mcn \in \mathcal{L})\triangleq cl \left( \underset{{(P,g_\mathcal{L}) \in \mathcal{P}(D_\mcn:\mcn \in \mathcal{L})}}\bigcup \ \  \alpha_{SS}(P, g_{\mathcal{L}}) \right).
\end{align*}
\end{definition}

\begin{theorem} \label{thm:SSC}
\label{thm:RDMD}
Given a source $(P_X,d_\mcn:\mcn \in \mathcal{L})$, the operational rate-distortion region contains the information rate-distortion region, i.e., 
$\mathsf{R}_{SS}(D_\mcn:\mcn \in \mathcal{L}) \subseteq \mathsf{R}_{op}(D_\mcn:\mcn \in \mathcal{L})$.
\end{theorem}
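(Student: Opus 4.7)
The plan is to reduce the continuous multiple-descriptions problem to a discretized finite-alphabet problem, invoke the SSC achievability result of \cite{shirani2018achievable} on the discretized problem, and then pass to the limit using the convergence tools of Section \ref{sec:3}. Fix any admissible $(P, g_\mathcal{L}) \in \mathcal{P}(D_\mcn: \mcn \in \mathcal{L})$ and a tolerance $\xi > 0$. First I would clip and discretize the source $X$ to obtain $\widehat{X}_{n,\ell} = Q_n(\widetilde{X}_\ell)$ as in Section \ref{sec:th:9}. Then, for each of the $|\mathbf{S}_\mathsf{L}|$ auxiliary variables $U_\mathcal{M}$, I would perform the clip-smooth-discretize procedure of Section \ref{sec:th:7}, producing finite-alphabet variables $\widehat{U}_{\mathcal{M},n,\ell,\epsilon}$, constructed jointly so that the resulting joint measure of $(\widehat{X}_{n,\ell}, \widehat{U}_{\mathcal{M},n,\ell,\epsilon}: \mathcal{M} \in \mathbf{S}_\mathsf{L})$ converges in total variation to $P$ as $\epsilon \to 0$ and then $\ell, n \to \infty$, by combining Lemma \ref{thm:quantize} (clipping), Lemma \ref{lem:5} (smoothing), and Lemma \ref{thm:quantize0} (quantization).

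Next I would apply the discrete SSC achievability theorem of \cite{shirani2018achievable} to the fully discretized source $(\widehat{X}_{n,\ell}, d_\mcn)$ with auxiliary codebook variables $\widehat{U}_{\mathcal{M},n,\ell,\epsilon}$ and with discretized reconstruction functions $\widehat{g}_\mathsf{N}$ defined by quantizing the outputs of $g_\mathsf{N}$ restricted to the discrete auxiliary alphabet. This yields, for each fixed $(n,\ell,\epsilon)$, a sequence of $(n,\Theta_i)$ codes whose rate tuple lies in the SSC rate region evaluated under the discretized distribution $\widehat{P}$, and whose decoder-$\mathsf{N}$ expected distortion equals $\mathbb{E}[d_\mathsf{N}(\widehat{X}_{n,\ell}, \widehat{g}_\mathsf{N}(\widehat{U}_{\{\mathsf{N}\},n,\ell,\epsilon}))]$.

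The final step is convergence of all relevant functionals. For each fixed subset $\mathbf{M} \subset \mathbf{S}_\mathsf{L}$, the chain quantity $\overline{I}(U_\mathbf{M}) + I(U_\mathbf{M}; X)$ and the packing-side quantity $\overline{I}(U_{\mathbf{M}_\mathsf{N} \setminus (\mathbf{L} \cup \widetilde{\mathbf{M}})}) + I(U_{\mathbf{M}_\mathsf{N} \setminus (\mathbf{L} \cup \widetilde{\mathbf{M}})}; U_{\mathbf{L} \cup \widetilde{\mathbf{M}}})$ are finite sums of mutual information terms, each of which converges to its continuous counterpart by Lemma \ref{lem:conv_distribution} applied to the appropriate vector-valued marginals extracted from the joint convergence in total variation; the distortion terms converge by Lemma \ref{thm:quantize_dist_cost} together with the joint continuity of each $d_\mathsf{N}$ and the continuity of $g_\mathsf{N}$. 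Hence, for all sufficiently small $\epsilon$ and sufficiently large $\ell, n$, every constraint in \eqref{cov11}--\eqref{eq:RDSSC} evaluated under $\widehat{P}$ is within $\xi$ of the corresponding constraint under $P$, and every distortion is at most $D_\mcn + \xi$. Taking $\xi \downarrow 0$ and using the closure in the definition of $\mathsf{R}_{SS}$ completes the argument.

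The main obstacle I anticipate is the joint construction in Step~1: the SSC scheme operates on $2^{2^\ell}-\text{related}$ correlated auxiliary variables that do not satisfy a single long Markov chain (unlike the Berger--Tung or two-help-one settings), so the two-variable quantization-with-preserved-Markov structure from Theorem \ref{th:9} does not apply verbatim. Overcoming this requires constructing a joint quantization kernel that preserves \emph{all} pairwise and higher-order correlations needed to make every mutual information term in \eqref{cov11}--\eqref{pack11} converge simultaneously; the cleanest way is to discretize the full joint $(X, U_\mathcal{M}: \mathcal{M}\in\mathbf{S}_\mathsf{L})$ jointly, using smoothing by an independent uniform vector on each coordinate, and to invoke a multi-variable extension of Lemma \ref{lem:conv_distribution} on the joint measure, after which all the required convergences follow as above.
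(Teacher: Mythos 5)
Your proposal follows the paper's own proof outline (which is itself only a sketch): discretize the source and the $|\mathbf{S}_\mathsf{L}|$ auxiliary variables, invoke the discrete SSC theorem of~\cite{shirani2018achievable} on the resulting finite-alphabet problem, and pass to the limit via the convergence tools of Section~\ref{sec:3}. Your observation that Theorems~\ref{th:7} and~\ref{th:9} do not apply verbatim is accurate, but the reason is not really that a long Markov chain fails to hold --- in MD a single encoder observes $X$ and jointly generates every $U_\mathcal{M}$, so there is no distributed Markov constraint analogous to $U - X - Y - V$ that needs preserving; the actual issue is simply that those theorems are stated for two or four variables and the SSC region requires simultaneous convergence of information functionals of the full tuple $(X, U_\mathcal{M}: \mathcal{M}\in\mathbf{S}_\mathsf{L})$. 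Your proposed fix (jointly discretize the tuple with per-coordinate smoothing and invoke a multi-variable analogue of Lemma~\ref{lem:conv_distribution}) is correct and in fact simpler than the distributed case, since only the single conditioning $\{U_\mathcal{M}\} - \widehat{X}_{n,\ell} - X$ must be forced, as in the Wyner--Ziv proof with the other side trivial. The one step both you and the paper's outline gloss over is the distortion bridge: the discrete SSC code controls $\mathbb{E}\bigl[d_\mathsf{N}(\widehat{X}_{n,\ell}, \widehat{g}_\mathsf{N}(\widehat{U}))\bigr]$, but the continuous system is evaluated against the true $X$, so to conclude $\mathbb{E}\bigl[d_\mathsf{N}(X, \widehat{g}_\mathsf{N}(\widehat{U}))\bigr] \leq D_\mathsf{N}+\xi$ one needs the clipping-indicator transmission and uniform-continuity estimate of Step~5 in Appendix~\ref{app:wynerziv_c}, not Lemma~\ref{thm:quantize_dist_cost} alone.
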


\textit{Proof Outline.} Given the random variables $X$ and $U_{\mathcal{M}},\mathcal{M}\in \mathbf{S}_{\mathsf{L}}$ described in the theorem statement,   the transmission system first discretizes the source using techniques developed in the prior sections and then uses the  
discrete SSC strategy introduced in Theorem 3 \cite{shirani2018achievable} to achieve the rate-distortion vector in \eqref{cov11}, \eqref{pack11}, and \eqref{eq:RDSSC}. The mutual-information terms in 
\eqref{cov11}, \eqref{pack11} for the discretized variables converge to that of the continuous variables as the clipping limits are increased asymptotically and the quantization step approaches zero by similar arguments as in the prior sections. 

 The following theorem provides an achievable region for the multiple descriptions problem using the discretization process developed in the previous sections along with the SSC strategy with both unstructured and structured random codes developed in \cite{shirani2018achievable} for discrete sources and test-channels. 
  
\begin{definition}
 Given a source 
 $(P_X,d_\mcn:\mcn \in \mathcal{L})$, let 
 $\mathcal{P}(D_\mcn:\mcn \in \mathcal{L})$ denote the collection of pairs $(P,g_{\mathcal{L}})$ of (a) 
 joint distribution $P$
 on random variables 
 $X$, $U_{\mathcal{M}},\mathcal{M}\in \mathbf{S}_\mathsf{L}$, $V_{ \mathcal{A}_{in}}, V_{ \mathcal{A}_{out}}, V_{ \mathcal{A}_{sum}}$, 
 with $X$-marginal distribution $P_X$, 
 where  $\mathcal{A}_j\in {{\mathbf{S}}}_\mathsf{L},j \in \{in,out,sum\}$, are three distinct families, and 
all the auxiliary random variables take values in  $\mathbb{R}$, 
 and (b) a set of reconstruction functions $g_{\mathcal{L}}=\{g_\mathsf{N}:\mathsf{U}_{\{\mathsf{N}\}}\to\mathsf{X}, \mathsf{N}\in\mathcal{L}\}$, 
 such that 
 $V_{\mathcal{A}_{sum}}=V_{\mathcal{A}_{in}}+V_{\mathcal{A}_{out}}$, and 
 $\mathbb{E}d_\mcn (X,g_{\mcn}(U_{\{\mcn\}})) \leq D_{\mcn}$ $\forall \mcn \in \mathcal{L}$, where the expectations are evaluated with $P$.  For a $(P,g_{\mathcal{L}})
 \in \mathcal{P}(D_{\mcn}:\mcn \in \mathcal{L})$, 
 define $\alpha_F(P,g_{\mathcal{L}})$ as the set of rate tuple $(R_i,i\in \mathsf{L})$ satisfying the following constraints 
  for some non-negative real numbers  $(\rho_{\mathcal{M},i},r_{o,\mathcal{M}})_{i\in \widetilde{\mathcal{M}},\mathcal{M}\in {\mathbf{S}_\mathsf{L}} }$, $r'_{ \mathcal{A}_{in}}, \rho_{ \mathcal{A}_{in},i},  i\in \widetilde{\mathcal{A}}_{in}$, $r'_{ \mathcal{A}_{out}}, \rho_{ \mathcal{A}_{out},i},  i\in \widetilde{\mathcal{A}}_{out}$: 
  
\noindent i) Covering Constraints:
for all $ \mathbf{M} \subset \mathbf{S}_\mathsf{L}, \mathbf{E}\subset \mathbf{A}$ and $\alpha,\beta\in \mathbb{F}_p^+$,
\begin{align}
 \label{sec1cov1}
 \sum_{\mathcal{M} \in \mathbf{M}}r_{\mathcal{M}}+\sum_{\mathcal{E} \in \mathbf{E}}{r'_{\mathcal{E}}}\geq  & \overline{I}(U_{{\mathbf{M}}},V_{\mathbf{E}})+ I(U_{{\mathbf{M}}}V_{\mathbf{E}};X)  , \quad  \\
\label{sec1cov2}
 \sum_{\mathcal{M} \in \mathbf{M}}r_{\mathcal{M}}+r'_{\mathcal{A}_{out}}\geq &\overline{I}(U_{{\mathbf{M}}},V_{\mathcal{A}_{out}})+ I(U_{{\mathbf{M}}}W_{\mathcal{A}_{out},\alpha,\beta};X) -I(W_{\mathcal{A}_{sum},\alpha,\beta}; V_{\mathcal{A}_{in}}|U_{\mathbf{M}})
 \nn\\& 
 +I(V_{\mathcal{A}_{in}};V_{\mathcal{A}_{out}}|U_{\mathbf{M}}).
 \end{align}
 ii) Packing constraints: 
 for all $ \overline{\mathbf{L}}\subset {\overline{\mathbf{M}}}_\mathsf{N},
 \mathcal{A}_{sum}\notin {\mathbf{M}}_\mathsf{N}$,
 \begin{align}
 \label{sec1pack1}
 \hspace{-0.3in} \sum_{\mathcal{M}\in {{\mathbf{M}}}_\mathsf{N}\backslash\widetilde{\mathbf{M}}_\mathsf{N}\cup {\mathbf{L}}}\!\!\!\!\!\!\! r_{\mathcal{M}} -\!\! \sum_{j\in \widetilde{\mathcal{M}}} \rho_{\mathcal{M},j}&
 +  \sum_{\substack{\mathcal{E}\in {\mathbf{M}}_\mathsf{N}\backslash\widetilde{\mathbf{M}}_\mathsf{N}\cup \overline{\mathbf{L}}\\\bigcap\{\mathcal{A}_i|i\in\{in,out\}\} }}\!\!\!\!\!\!\!\!\!\!r'_{\mathcal{E}}-\sum_{j\in \widetilde{\mathcal{E}}}\rho_{o,\mathcal{E},j}, 
 \leq \nn
\overline{I}([UVW]_{\overline{{\mathbf{M}}}_\mathsf{N}\backslash\widetilde{\mathbf{M}}_\mathsf{N}\cup {\mathbf{L}}})
\\& \qquad\qquad \qquad \qquad \qquad \qquad+
I({[UVW]_{\overline{{\mathbf{M}}}_\mathsf{N}\backslash\widetilde{\mathbf{M}}_\mathsf{N}\cup {\mathbf{L}}}};{[UVW]_{\widehat{\mathbf{M}}_\mathsf{N}\cup \overline{\mathbf{L}}}}),
 \end{align}
and  for all $\overline{\mathbf{L}}\subset {\overline{\mathbf{M}}}_\mathsf{N},  \mathcal{A}_{sum}\in {\mathbf{M}}_\mathsf{N}, \mathcal{A}_{in}\notin {\mathbf{M}}_\mathsf{N} ,\mathcal{A}_{out}\notin {\mathbf{M}}_\mathsf{N}$
 \begin{align}
  \label{sec1pack2}
\hspace{-0.3in} \sum_{\mathcal{M}\in {{\mathbf{M}}}_\mathsf{N}\backslash\widetilde{\mathbf{M}}_\mathsf{N}\cup {\mathbf{L}}}\!\!\!\!\!\!\! (r_{\mathcal{M}} -\!\! \sum_{j\in \widetilde{\mathcal{M}}}& \rho_{\mathcal{M},j})
+ r'_{{\mathcal{A}}_{out}}-\sum_{j\in \widetilde{\mathcal{A}}_{sum}}\rho_{o,\mathcal{A}_{sum},j}\leq \nn\overline{I}(U_{\mathbf{M}_\mathsf{N}},V_{\mathcal{A}_{out}}) + I(U_{\mathbf{M}_\mathsf{N}}W_{\mathcal{A}_{sum},1,1}; [UVW]_{\widehat{\mathbf{M}}_\mathsf{N}\cup \overline{\mathbf{L}}})
\\&\qquad \qquad \qquad- I(W_{\mathcal{A}_{sum},1,1};V_{in}|U_{\mathbf{M}_\mathsf{N}})
+ I(V_{in};V_{out}|U_{\mathbf{M}_\mathsf{N}})
, 
\end{align}
where (a) $R_i =\sum_{\mathcal{M}} \rho_{\mathcal{M},i}$, (b) $\mathbf{A} \triangleq \{\mathcal{A}_{in},\mathcal{A}_{out}\}$, (c) $\overline{{\mathbf{M}}}_\mathsf{N}\triangleq({\mathbf{M}}_\mathsf{N}, \{ \mathcal{A}_j, j\in \{in,out\}|\mathcal{A}_j\in \mathbf{M}_\mathsf{N} \},\{(\mathcal{A}_{sum}, 1,1)| \mathcal{A}_{sum}\in \mathbf{M}_\mathsf{N}\})$, (d) $\widehat{\mathbf{M}}_\mathsf{N}\triangleq \bigcup_{\mathsf{N}'\subsetneq \mathsf{N}}{\overline{\mathbf{M}}}_{\mathsf{N}'}$, (e) $r'_{\mathcal{A}_{in}}\leq r'_{\mathcal{A}_{out}}
$, and (f) $W_{\mathcal{A}_3, \alpha, \beta}\triangleq\alpha V_{ \mathcal{A}_{in}}+\beta V_{ \mathcal{A}_{out}}$\footnote{The collection $\{\mathcal{A}_3, \alpha, \beta\}$ is used as the subscript for $W$ since the random variable is defined using $\alpha$ and $\beta$.}.
The mutual information terms are evaluated with the distribution $P$. 
 Define  information rate-distortion region as 
\begin{align*}
 \mathsf{R}_F(D_\mcn:\mcn \in \mathcal{L})\triangleq cl \left( \underset{{(P,g_\mathcal{L}) \in \mathcal{P}(D_\mcn:\mcn \in \mathcal{L})}}\bigcup \ \  \alpha_F(P, g_{\mathcal{L}}) \right).
 \end{align*}
\end{definition}

\begin{theorem}
\label{thm:linoneadd}
For a given source $(P_X,d_\mcn:\mcn \in \mathcal{L})$,
the operational rate-distortion region contains the information rate-distortion region, i.e., 
$\mathsf{R}_{F}(D_\mcn:\mcn \in \mathcal{L}) \subseteq \mathsf{R}(D_\mcn:\mcn \in \mathcal{L})$.
\end{theorem}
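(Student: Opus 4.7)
The plan is to reduce Theorem~\ref{thm:linoneadd} to the discrete multiple-descriptions coding theorem from \cite{shirani2018achievable} via the same discretize--then--code paradigm used earlier for Theorems~\ref{thm:dscfinal_cont} and \ref{thm:SSC}. Fix a distribution $P$ on $(X,U_{\mathcal{M}}, V_{\mathcal{A}_{in}}, V_{\mathcal{A}_{out}}, V_{\mathcal{A}_{sum}})$ with $V_{\mathcal{A}_{sum}} = V_{\mathcal{A}_{in}} + V_{\mathcal{A}_{out}}$ and reconstruction functions $g_{\mathcal{L}}$ that achieve distortions at most $D_{\mcn}$. The first step is to clip and discretize. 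I would apply Theorem~\ref{th:9} to $X$ and the finite-alphabet portion (if any) of $U_{\mathcal{M}}$, and Theorem~\ref{th:7} to $V_{\mathcal{A}_{in}}$ and $V_{\mathcal{A}_{out}}$ so that their discretized versions $\widehat{V}_{\mathcal{A}_{in},n,\ell,\epsilon}$ and $\widehat{V}_{\mathcal{A}_{out},n,\ell,\epsilon}$ still satisfy $I(\widehat V_{\mathcal{A}_{in}}+\widehat V_{\mathcal{A}_{out}};\widehat V_{\mathcal{A}_{in}})\to I(V_{\mathcal{A}_{in}}+V_{\mathcal{A}_{out}};V_{\mathcal{A}_{in}})$, and analogously for the mixed mutual information terms $I(W_{\mathcal{A}_{sum},\alpha,\beta};\cdot)$ appearing in \eqref{sec1cov2} and \eqref{sec1pack2}. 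Because the $U_{\mathcal{M}}$'s form the unstructured layer, I would discretize each of them jointly with $X$ using the construction of Theorem~\ref{th:9} adapted to vector auxiliary variables, ensuring the Markov relationships and joint distribution are preserved up to arbitrarily small total-variation distance.

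Once all variables are discretized on a sufficiently fine grid with sufficiently wide clipping window, I would invoke the discrete Sperner-Set Coding theorem of \cite{shirani2018achievable} that handles a mixture of unstructured codebooks (indexed by $\mathcal{M}\in \mathbf{S}_\mathsf{L}$) together with a nested-linear-code pair for $V_{\mathcal{A}_{in}},V_{\mathcal{A}_{out}}$ over a prime field $\mathbb{F}_p$. The nested-linear layer produces discrete codewords whose coordinate-wise sum lies in the same alphabet and equals $V_{\mathcal{A}_{sum}}$, and the associated covering and packing rates are exactly the ones appearing in \eqref{sec1cov1}--\eqref{sec1pack2}, evaluated at the discretized distribution. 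Encoding follows the SSC procedure from \cite{shirani2018achievable}: joint-typical encoding across all unstructured codebooks and the structured pair, bin each codebook with the rate splits $(\rho_{\mathcal{M},i}, \rho_{o,\mathcal{E},i})$, and transmit the bin indices on the corresponding descriptions. Each decoder $\mathsf{N}$ decodes the subset $\mathbf{M}_\mathsf{N}$ of codewords (including the sum codeword $W_{\mathcal{A}_{sum},1,1}$ when $\mathcal{A}_{sum}\in \mathbf{M}_\mathsf{N}$) and applies the reconstruction $g_\mathsf{N}$ on the discretized domain.

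For the rate and distortion analysis I would use Theorem~\ref{th:7}, its Corollary~\ref{th:10}, and Theorem~\ref{th:9} to argue that all mutual-information terms in the covering and packing bounds \eqref{sec1cov1}--\eqref{sec1pack2} evaluated under the discretized joint distribution converge to their continuous counterparts as the clipping limits go to infinity, the smoothing $\epsilon\to 0$, and the discretization parameter $n\to \infty$. Jointly continuous distortion functions $d_{\mcn}$ and reconstructions $g_\mathsf{N}$ (extended continuously from the discrete alphabet to $\mathbb{R}$) allow the expected distortion convergence from Lemma~\ref{thm:quantize_dist_cost} and Theorems~\ref{th:7}--\ref{th:9} to carry over, so the discrete achievable distortions approach $D_\mcn$.

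The main obstacle is handling the structured-coding constraints simultaneously with the unstructured ones: the discretization of $V_{\mathcal{A}_{in}}$ and $V_{\mathcal{A}_{out}}$ must be such that (i) the discrete field addition $\widehat V_{\mathcal{A}_{in}}\oplus\widehat V_{\mathcal{A}_{out}}$ coincides with real addition $\widehat V_{\mathcal{A}_{in}}+\widehat V_{\mathcal{A}_{out}}$ with probability approaching one (otherwise the structured rates break), and (ii) the mixed mutual-information expressions $I(W_{\mathcal{A}_{sum},\alpha,\beta}; V_{\mathcal{A}_{in}}|U_{\mathbf{M}})$ and similar terms converge to their continuous counterparts jointly with the $U_{\mathcal{M}}$ layer. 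Issue (i) is addressed by choosing the clipping range wide enough that $P(V_{\mathcal{A}_{in}}+V_{\mathcal{A}_{out}}\in[-\ell,\ell])\to 1$ (as in the two-help-one proof), and by taking the prime $p$ large enough relative to $2^n\ell$ so that modular addition coincides with real addition; issue (ii) is resolved by the convergence~\eqref{eq:5.1.3}--\eqref{eq:5.1.5} of Theorem~\ref{th:7} together with lower semi-continuity of mutual information, applied uniformly over the finitely many terms that appear in the packing/covering conditions for $\ell$ descriptions.
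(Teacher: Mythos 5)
Your proposal follows essentially the same route as the paper's proof outline: clip and discretize via Theorems~\ref{th:7} and \ref{th:9}, invoke the discrete SSC scheme with nested linear codes from \cite{shirani2018achievable}, handle the field-vs-real addition issue by widening the clipping window and enlarging $p$, and then use lower semi-continuity of mutual information to pass to the limit. The one step the paper calls out explicitly that you leave implicit is the algebraic rewriting of the discrete theorem's entropy constraints into mutual-information form via the identity $H(\alpha U+\beta V\mid X)-H(U\mid X)=I(\alpha U+\beta V;V\mid X)-I(U;V\mid X)$; this matters because the raw entropy terms in the discrete theorem do not converge under refinement of the discretization grid, whereas the mutual-information terms in \eqref{sec1cov1}--\eqref{sec1pack2} do, so the rewriting is what makes your convergence argument go through, and it should be stated rather than assumed.
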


\textit{Proof Outline.} Given the random variables $X$ and $U_{\mathcal{M}},\mathcal{M}\in \mathbf{S}_{\mathsf{L}}, V_{\mathcal{A}_{in}}, V_{\mathcal{A}_{out}}, W_{\mathcal{A}_{sum}}$ described in the theorem statement,   the transmission system first discretizes the source using techniques developed in the prior sections and then uses the  
discrete SSC strategy introduced in  \cite[Theorem 5]{shirani2018achievable}. Next, we rewrite the entropy terms in   \cite[Theorem 5]{shirani2018achievable}  in terms of mutual information terms using the fact that for any triple $U,V,X$ the following holds
\begin{align*} 
&H(\alpha U+\beta V|X)-H(U|X)
=H(\alpha U+\beta V|X)-H(U,V|X)+H(V|X,U)
\\
&=H(\alpha U+\beta V|X)-H(\alpha U+\beta V,V|X)+H(V|X,U)
=-H(V|X,\alpha U+\beta V)+H(V|X,U)\\
&=I(\alpha U+\beta V;V|X)-I(U;V|X).
\end{align*} 
The resulting mutual-information terms in the covering bounds given in \eqref{sec1cov1},\eqref{sec1cov2} and packing bounds given in \eqref{sec1pack1},\eqref{sec1pack2} for the discretized variables converge to that of the continuous variables as the clipping limits are increased asymptotically and the quantization step approaches zero by similar arguments as in the prior sections.

\subsection{Improvements for Using Nested Lattice Quantizers}
We proceed to show through an example that using nested lattice codes gives gains in terms of achievable rate-distortion. 
\\\textit{Example 1:}
 The set-up is shown in Figure \ref{fig_Gau}. Here $X$ and $Z$ are independent zero-mean, unit-variance, Gaussian sources. The distortion function for the individual decoders is mean squared error. Decoder 1 and 2 want to reconstruct $X$ and $Z$, respectively, with mean squared error less than or equal to $P$, and Decoder 3 wants to reconstruct $Y=X+Z$ with distortion $2P$. Each of the joint decoders wish to reconstruct $X$ and $Z$ with distortion mean square error less than or equal to $P$ 
%The distortion constraint in the joint decoders is as follows:

% \begin{align*}
% &E(([X, Z]-[\widehat{X}_{12} ,\widehat{Z}_{12}])^t([X ,Z]-[\widehat{X}_{12}, \widehat{Z}_{12}]))\leq  \left[ \begin{array}{cc}
% P & 1  \\
% 1 & P \end{array} \right]\\
% &E(([X, Y]-[\widehat{X}_{13} ,\widehat{Y}_{13}])^t([X ,Y]-[\widehat{X}_{13}, \widehat{Y}_{13}]))\leq  \left[ \begin{array}{cc}
% P & P  \\
% P & 2P \end{array} \right]\\
% &E(([X, Y]-[\widehat{X}_{23} ,\widehat{Y}_{23}])^t([X ,Y]-[\widehat{X}_{23}, \widehat{Y}_{23}]))\leq  \left[ \begin{array}{cc}
% P & P  \\
% P & 2P \end{array} \right]\\
% \end{align*}
% where $\widehat{A}_{ij},A\in \{X,Z,Y\},i,j\in\{1,2,3\}$ is the reconstruction of $A$ at decoder $ij$.

\begin{proposition}
\label{th:MD:Ex1}
The rate triple $(R_1,R_2,R_3)=(\frac{1}{2}\log(\frac{1}{P}),\frac{1}{2}\log(\frac{1}{P}),\frac{1}{2}\log(\frac{2}{P}))$ is achievable using the SSC strategy with structured codes, i.e. $(R_1,R_2,R_3)\in \mathsf{R}_F$. 
\end{proposition}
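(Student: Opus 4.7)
My plan is to exhibit a test-channel distribution $P$ together with a family of reconstruction functions $g_\mathcal{L}$ such that the rate triple in the statement lies in $\alpha_F(P,g_\mathcal{L})\subseteq\mathsf{R}_F$. All unstructured auxiliary variables $\{U_\mathcal{M}\}_{\mathcal{M}\in\mathbf{S}_\mathsf{L}}$ will be taken to be trivial constants, and the three Sperner families indexing the structured layer will be chosen as $\mathcal{A}_{in}=\{\{1\}\}$, $\mathcal{A}_{out}=\{\{2\}\}$, $\mathcal{A}_{sum}=\{\{3\}\}$, so that the SSC decoding rules make $V_{\mathcal{A}_{in}}$, $V_{\mathcal{A}_{out}}$, and $V_{\mathcal{A}_{sum}}$ available at every decoder that receives description $1$, $2$, and $3$, respectively. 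The test channel I propose is
\[
V_{\mathcal{A}_{in}} = X + N_1, \qquad V_{\mathcal{A}_{out}} = Z + N_2, \qquad V_{\mathcal{A}_{sum}} = V_{\mathcal{A}_{in}} + V_{\mathcal{A}_{out}},
\]
with $N_1,N_2$ i.i.d.\ $\mathcal{N}(0,P/(1-P))$ independent of $(X,Z)$; this makes $V_{\mathcal{A}_{in}}$ and $V_{\mathcal{A}_{out}}$ independent $\mathcal{N}(0,1/(1-P))$ and $V_{\mathcal{A}_{sum}}\sim\mathcal{N}(0,2/(1-P))$.

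For the reconstruction rules I will take the MMSE estimators at each decoder. A direct Gaussian calculation gives MSE $=P$ for $X$ from $V_{\mathcal{A}_{in}}$, MSE $=P$ for $Z$ from $V_{\mathcal{A}_{out}}$, and MSE $=2P$ for $Y=X+Z$ from $V_{\mathcal{A}_{sum}}$, meeting the constraints at the single-description decoders $\{1\}$, $\{2\}$, and $\{3\}$. At each joint decoder, whenever description $1$ or $2$ is missing, the corresponding $V$-variable is recovered exactly by the linear subtraction $V_{\mathcal{A}_{sum}} - V_{\mathcal{A}_{in}}$ or $V_{\mathcal{A}_{sum}} - V_{\mathcal{A}_{out}}$, so both sources can be estimated from the full pair $(V_{\mathcal{A}_{in}},V_{\mathcal{A}_{out}})$ with MSE $=P$ each, meeting all of the constraints in $\mathcal{P}(D_\mathsf{N}:\mathsf{N}\in\mathcal{L})$.

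To place $(R_1,R_2,R_3)$ inside $\alpha_F(P,g_\mathcal{L})$, I assign $\rho_{\mathcal{A}_{in},1}=r'_{\mathcal{A}_{in}}=\tfrac{1}{2}\log(1/P)$, $\rho_{\mathcal{A}_{out},2}=r'_{\mathcal{A}_{out}}=\tfrac{1}{2}\log(1/P)$, $\rho_{\mathcal{A}_{sum},3}=\tfrac{1}{2}\log(2/P)$, and every remaining rate variable to zero, so that $R_i=\sum_\mathcal{M}\rho_{\mathcal{M},i}$ reproduces the claimed triple. The covering inequalities \eqref{sec1cov1}--\eqref{sec1cov2} then collapse to the Gaussian identities $I(X;V_{\mathcal{A}_{in}})=I(Z;V_{\mathcal{A}_{out}})=\tfrac{1}{2}\log(1/P)$ and are met with equality. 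The main work is the verification of the packing bounds \eqref{sec1pack1}--\eqref{sec1pack2} at the seven decoders; the nontrivial cases are $\mathsf{N}=\{1,3\}$ and $\mathsf{N}=\{2,3\}$, where the structured-coding rate saving is realized. At $\mathsf{N}=\{1,3\}$ the decoder must recover $V_{\mathcal{A}_{sum}}$ given $V_{\mathcal{A}_{in}}$, and the relevant term in \eqref{sec1pack2} reduces, after the closed-form Gaussian evaluation, to $h(V_{\mathcal{A}_{sum}}\mid V_{\mathcal{A}_{in}})$ normalized by the shared lattice cell volume $P/(1-P)$, which equals $\tfrac{1}{2}\log\bigl((2/(1-P))/(P/(1-P))\bigr)=\tfrac{1}{2}\log(2/P)$, exactly matching $\rho_{\mathcal{A}_{sum},3}$. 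I expect this last bookkeeping step to be the main obstacle; the remaining packing inequalities at $\{1\}$, $\{2\}$, $\{3\}$, $\{1,2\}$, and $\{1,2,3\}$ are strictly looser and follow from the same family of Gaussian mutual-information identities.
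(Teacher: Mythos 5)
Your construction is the same as the paper's: set all unstructured auxiliaries $U_{\mathcal{M}}$ to constants, take $\mathcal{A}_{in}=\{1\}$, $\mathcal{A}_{out}=\{2\}$, $\mathcal{A}_{sum}=\{3\}$, and use a jointly Gaussian test channel with $W_{\{3,1,1\}}=V_{\{1\}}+V_{\{2\}}$. The one substantive difference is the noise variance, and yours is the correct one. The paper writes $N_P,N'_P\sim\mathcal{N}(0,P)$, which gives $\mathrm{MMSE}(X\mid V_{\{1\}})=P/(1+P)<P$ but $I(X;V_{\{1\}})=\tfrac{1}{2}\log\tfrac{1+P}{P}$; feeding this into the covering bound \eqref{sec1cov1} and the packing bound at $\mathsf{N}=\{1\}$ forces $R_1\geq\tfrac{1}{2}\log\tfrac{1+P}{P}>\tfrac{1}{2}\log\tfrac{1}{P}$, so the paper's stated test channel does not actually place the claimed triple in $\alpha_F$. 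Your choice $N_1,N_2\sim\mathcal{N}(0,P/(1-P))$ is exactly the forward Gaussian test channel for the rate-distortion point $(\tfrac{1}{2}\log\tfrac{1}{P},P)$ (and is the parametrization implicit in Example~2's covariance matrix), so you have caught what looks like a typo in the paper's one-line proof.

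One caveat on your packing argument: the step at $\mathsf{N}=\{1,3\}$ that invokes ``$h(V_{\mathcal{A}_{sum}}\mid V_{\mathcal{A}_{in}})$ normalized by the shared lattice cell volume $P/(1-P)$'' is a lattice-coding intuition that has no direct counterpart in the paper's discretization-based SSC framework; it happens to produce the right number but cannot be read off the stated inequalities. The clean route is through \eqref{sec1pack2} at decoder $\mathsf{N}=\{3\}$ with all $U$ trivial and $\overline{\mathbf{L}}=\emptyset$: the right-hand side reduces to $-I(W_{\mathcal{A}_{sum},1,1};V_{\mathcal{A}_{in}})=-\tfrac{1}{2}\log 2$ (since $V_{\mathcal{A}_{in}}\perp V_{\mathcal{A}_{out}}$ and both have variance $1/(1-P)$), giving $\rho_{\mathcal{A}_{sum},3}\geq r'_{\mathcal{A}_{out}}+\tfrac{1}{2}\log 2=\tfrac{1}{2}\log\tfrac{2}{P}$, which is met with equality by your assignment. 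The remaining decoders are then routine. So your proof is correct in substance; just replace the lattice-volume heuristic with the direct evaluation of \eqref{sec1pack2}.
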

\begin{figure}[!t]
\centering
\includegraphics[width=3.1in,height=3.1in]{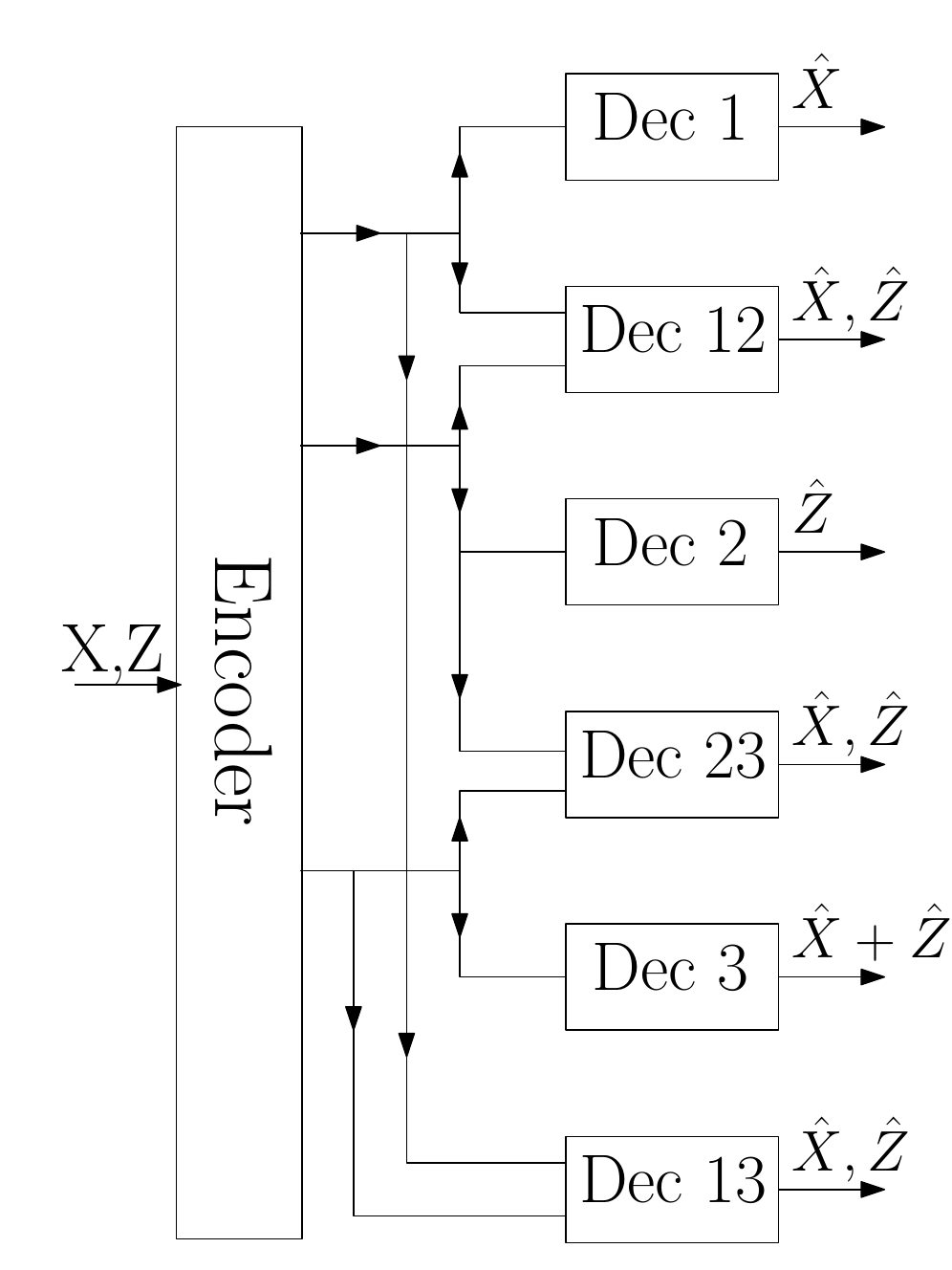}
% where an .eps filename suffix will be assumed under latex, 
% and a .pdf suffix will be assumed for pdflatex; or what has been declared
% via \DeclareGraphicsExtensions.
\caption{Example for Vector Gaussian Source}
\label{fig_Gau}
\end{figure}
\begin{proof}
The proof follows by taking $\mathcal{A}_{in}=\{1\}$, $\mathcal{A}_{out}=\{2\}$, and $\mathcal{A}_{sum}=\{3\}$ with
$V_{\{1\}}=X+N_P$, $V_{\{2\}}=Z+N'_P$ and $W_{\{3,1,1\}}=V_{\{1\}}+V_{\{2\}}$ in Theorem \ref{thm:linoneadd}, where $N_P,N'_{P}$ are independent Gaussian variables with zero mean and variance $P$, and all other variables are taken to be trivial. 
\end{proof}

\begin{proposition} Let $P<1/2$, then the rater triple $(R_1,R_2,R_3)=(\frac{1}{2}\log(\frac{1}{P}),\frac{1}{2}\log(\frac{1}{P}),\frac{1}{2}\log(\frac{2}{P}))$ is not achievable with SSC strategy using unstructured codes, i.e. $(R_1,R_2,R_3)\notin \mathsf{R}_{SS}$.
\end{proposition}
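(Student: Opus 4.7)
The plan is to show that the specified rate triple fails a necessary condition of $\mathsf{R}_{SS}$ when $P<1/2$. Since $(X,Z)$ is jointly Gaussian and all distortion functions are mean-squared error, I would first invoke the standard entropy-power/maximum-entropy reduction to restrict attention to jointly Gaussian auxiliary random variables $\{U_{\mathcal{M}}\}$ appearing in the SSC test channel, so that the impossibility claim reduces to ruling out every Gaussian parametrization satisfying the rate-distortion constraints.

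Next, I would exploit the tightness of the single-description rates. Because $R_1=R_2=\frac{1}{2}\log(1/P)$ coincide with the Gaussian PtP rate-distortion functions at distortion $P$, the Gaussian PtP converse forces the aggregate codebook vectors $W_i\triangleq(U_{\mathcal{M}}:\mathcal{M}\in\mathbf{M}_{\{i\}})$ at descriptions $i=1,2$ to realize MMSE backward test channels, with $h(X\mid W_1)=\frac{1}{2}\log 2\pi e P$ and $h(Z\mid W_2)=\frac{1}{2}\log 2\pi e P$. Writing $W_1=(1-P)X+N_1$ with $N_1\sim\mathcal{N}(0,P(1-P))$ independent of $(X,Z)$ captures this at the Gaussian saddle point. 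Similarly, since $R_3=\frac{1}{2}\log(2/P)$ and decoder $\{3\}$ must reconstruct $Y=X+Z$ at MSE $\le 2P$, the aggregate $W_3$ is constrained by $I(Y;W_3)\ge\frac{1}{2}\log(1/P)$ and $I(X,Z;W_3)\le\frac{1}{2}\log(2/P)$.

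The critical step is analyzing decoder $\{1,3\}$, which must reconstruct $Z$ at MSE $\le P$ using $(W_1,W_3)$. Since $X\perp Z$, $W_1$ is marginally uninformative about $Z$, so $Z$ must be recovered through a joint cancellation between $W_3$ and $W_1$. Parametrizing the Gaussian scalar $W_3=\alpha X+\beta Z+N_3$ with $N_3$ independent of $(X,Z)$, the rate constraint fixes $\mathrm{Var}(N_3)=P(\alpha^{2}+\beta^{2})/(2-P)$, and the MSE constraint at decoder $\{3\}$ forces $(\alpha+\beta)^{2}/(\alpha^{2}+\beta^{2})\ge 4(1-P)/(2-P)$, whose maximum $2$ is achieved by $\alpha=\beta$. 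Setting $\alpha=\beta=1$ and carrying out the Gaussian MMSE calculation for $Z$ given $(W_1,W_3)$ yields the value $\frac{P(4-P)}{2+3P-P^{2}}$. The feasibility inequality $\tfrac{P(4-P)}{2+3P-P^{2}}\le P$ reduces to $P^{2}-4P+2\le 0$, i.e., $P\ge 2-\sqrt{2}$; since $1/2<2-\sqrt{2}$, the decoder $\{1,3\}$ distortion on $Z$ strictly exceeds $P$ for every $P<1/2$, contradicting the constraint.

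The main obstacle in making this rigorous is ruling out configurations that use shared codebooks such as $U_{\{\{1\},\{3\}\}}$ or $U_{\{\{1\},\{2\},\{3\}\}}$, or vector-valued $W_3$ assembled from several $U_{\mathcal{M}}$, to couple $W_1$ and $W_3$ beyond the single scalar template just analyzed; in principle such sharing could introduce correlations that lower the MMSE at decoder $\{1,3\}$. I would handle this by summing the SSC covering bounds \eqref{cov11} and packing bounds \eqref{pack11} over the subfamilies of codebooks decoded at both $\{1\}$ and $\{3\}$: each shared codebook consumes bin rate at description $i$ from the already-tight budget $R_i$ and simultaneously increases $I(X;W_1)$ above $\frac{1}{2}\log(1/P)$ or $I(X,Z;W_3)$ above $\frac{1}{2}\log(2/P)$, violating either the distortion tightness of $R_1$ or the rate constraint of $R_3$. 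Once this rate-accounting argument reduces every admissible SSC configuration to an effective scalar Gaussian template, the MMSE inequality above closes the proof.
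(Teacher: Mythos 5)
Your calculation that a \emph{scalar} Gaussian $W_3=\alpha X+\beta Z+N_3$ with $N_3\perp(X,Z,W_1)$ and $\alpha=\beta$ yields the decoder-$\{1,3\}$ MMSE $\tfrac{P(4-P)}{2+3P-P^2}$ is correct, and the ensuing inequality $P\ge 2-\sqrt 2$ is correctly derived. However, the reduction to that scalar template — the step you yourself flag as ``the main obstacle'' — does not go through, and the rate-accounting argument you sketch to justify it is in fact contradicted by the structure of the scheme.

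Concretely, the paper's proof (Appendix~H, following Example~3 of \cite{shirani2018achievable}) eliminates all but \emph{five} codebooks $C_{\{1\},\{3\}},C_{\{2\},\{3\}},C_{\{1\}},C_{\{2\}},C_{\{3\}}$ with auxiliaries $U_{1,3},U_{2,3},U_1,U_2,U_3$. Decoder $\{3\}$ therefore observes the three-dimensional vector $(U_{1,3},U_{2,3},U_3)$, not a scalar $W_3$. More importantly, $U_{1,3}$ and $U_{2,3}$ are simultaneously part of $W_1$ and $W_2$, so the components of $W_3$ are \emph{deterministically shared} with $W_1$ and $W_2$; and the paper's parametrization $U_3=\alpha_1 X+\alpha_2 Z+\alpha_3 Q_{1,3}+\alpha_4 Q_{2,3}+\alpha_5 Q_1+\alpha_6 Q_2+\alpha_7 Q_3$ deliberately lets $U_3$ depend on the quantization noises $Q_{1,3},Q_1$ living inside $W_1$. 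Both mechanisms make $W_3$ correlated with $W_1$ given $(X,Z)$, which is exactly what your independence assumption $N_3\perp W_1$ forbids, and exactly what can lower the MMSE of $Z$ at decoder $\{1,3\}$. Your claim that a shared codebook ``consumes bin rate from the already-tight budget $R_1$ and so violates distortion tightness'' is not correct: the retained codebook $U_{1,3}$ is precisely a shared codebook, it costs bin rate on both descriptions, and it is rate-feasible at $R_1=R_2=\tfrac12\log(1/P)$. You also do not justify taking $\alpha=\beta$: Cauchy--Schwarz shows $\alpha=\beta$ maximizes the decoder-$\{3\}$ slack, but the quantity being minimized is the decoder-$\{1,3\}$ MMSE, and nothing in your argument shows the minimizer over all feasible $(\alpha,\beta)$ lies at $\alpha=\beta$.

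The net effect is that you have shown a \emph{restricted subclass} of test channels fails for $P<2-\sqrt 2$, which does not imply that every admissible SSC configuration fails for $P<1/2$. (Indeed, the larger threshold $2-\sqrt2>1/2$ is a symptom that you have over-restricted the search space.) The paper instead keeps all nine parameters $\alpha_1,\dots,\alpha_7,\theta_1,\theta_2$ in play, writes the covering/packing lower bound on $R_3$ in that generality, and verifies the gap numerically (Figure~\ref{Fig:MD:Ex1}); it does not close the argument with a two-parameter scalar template. To repair your proof you would need either (i) a genuine optimality argument showing that the full nine-parameter optimum coincides with your scalar independent template, which is not obviously true and is certainly unproven here, or (ii) to carry the correlated, vector-valued $W_3$ through the decoder-$\{1,3\}$ MMSE computation as the paper does.
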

\begin{proof}
The achievable rates are shown in Figure \ref{Fig:MD:Ex1}. The details of the proof are provided in Appendix \ref{App:th:MD:Ex1}.
\end{proof}
 \begin{figure}[t]
\centering
\includegraphics[width =0.9 \textwidth]{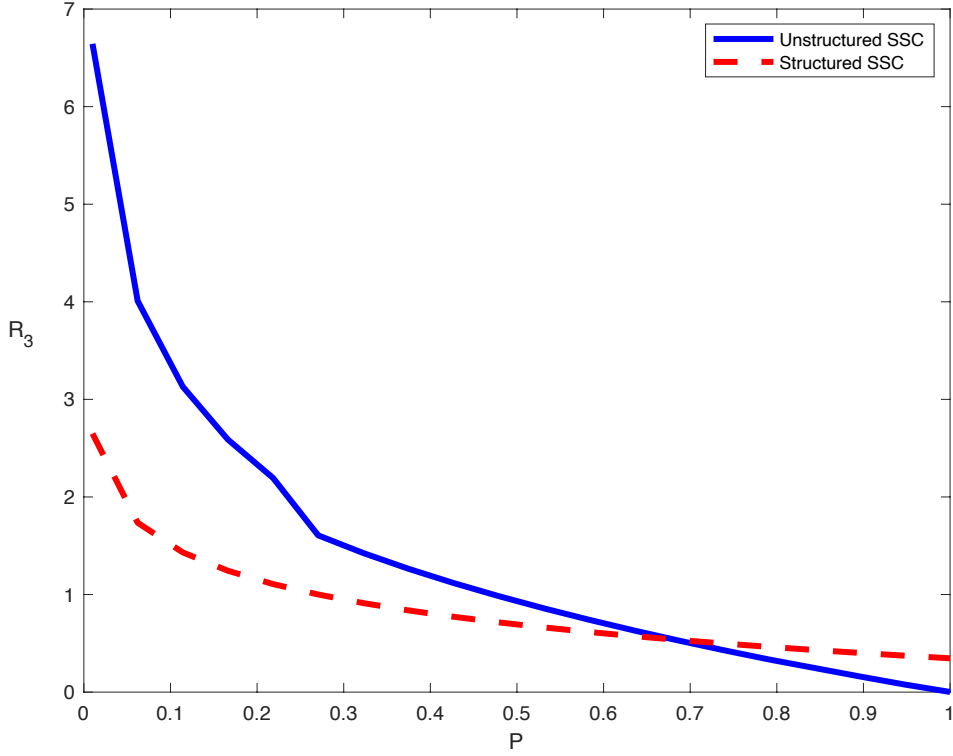}
\caption{Achievable rates $R_3$ in Example 1 using the SSC strategy with unstructured codes (blue curve) and structured codes (red curve). } 
\label {Fig:MD:Ex1}
\end{figure}
 Note that in the above example, as a result of the independence between the two source components $X$ and $Z$, the extra covering bound in \eqref{sec1cov2} is redundant. However, this is not always the case, to illustrate this point we investigate the following example.
\\\textit{Example 2:}
   Assume the source $X$ is a scalar zero-mean unit-variance Gaussian. Consider the random variables $U$ and $V$ which are jointly Gaussian with $X$ and have the following covariance matrix $(\frac{1}{2}<P<\frac{2}{3})$:
\[ \mbox{Cov}([X, U, V])= \left[ \begin{array}{ccc}
1 & 1-P & 1-P \\
1-P & 1-P & 0\\
1-P & 0 & 1-P \end{array} \right].\]

 We intend to transmit $U$ on the first description (i.e. $V_{\{1\}}=U$), $V$ on the second description (i.e. $V_{\{2\}}=V$) and $U+V$ on the third description (i.e. $W_{\{3,1,1\}}=U+V$). In this case the covering bound \eqref{sec1cov2} is not redundant. To see this note that the covering bound is non-redundant if $I(U+V;V|X)-I(U;V|X)<0$. Simplifying the inequality, the bound is non-redundant if $Var(V|X,U)<Var(V|X,U+V)$. Also,
 \begin{align*}
& Var(V|X,U)=\frac{P(1-P)}{2},Var(V|X,U+V)=\frac{1-P}{2}\\
 &\Rightarrow Var(V|X,
 U)<Var(V|X,U+V)\iff P<\frac{2}{3}
 \end{align*}
 
 Which shows the bound is non-redundant in this setting. We calculate the achievable rates using Fourier–Motzkin elimination gives:
 \vspace{-.03in}
 \begin{align*}
 R_1=&R_2=\max\big{\{}\frac{I(UV;X)}{2},I(U;X)-I(\alpha U+\beta V;V|X)+I(U;V|X)\big{\}}
 \\&\hspace{-0.3in}R_3=R_1-H(U)+H(U+V)
 \end{align*}
 We have:
 \begin{align*}
 &I(UV;X)=\frac{1}{2}\log(\frac{1}{2P-1}), \hspace{0.1in} I(U;X)=\frac{1}{2}\log\frac{1}{P},
 \\& I(U;V|X)-I(\alpha U+\beta V;V|X)=\frac{1}{2}\log (\frac{\alpha^2P}{\alpha^2+\beta^2-(\alpha+\beta)^2(1-P)})
 \end{align*}
 Hence $R_1=R_2=\max(\frac{1}{2}\log(\frac{P^2}{P+(1-P)^2}), \frac{1}{4}\log(\frac{1}{2P-1})), R_3=R_1+\frac{1}{2}$. And the distortions are $D_1=D_2=P, D_3=2P, D_{12}=D_{13}=D_{23}=2P-1$.
 
\section{Conclusion}
\label{sec:conc} 
A new framework for handling continuous source and channel networks was introduced. The framework
involves fine discretization of the source and channel variables followed by communication over the resulting discretized network. Convergence results for  information measures under the proposed discretization process were provided, and these results were used to derive the fundamental limits of communication in point-to-point source coding
and channel coding with side-information, distributed source coding with distortion constraints, the function reconstruction problems (two-help-one), computation over MAC, and the multiple-descriptions source coding problem.

\begin{appendices}

\section{Proof of Lemmas in Section \ref{sec:3}}
\subsection{Proof of Lemma \ref{lem:5}}
\label{App:lem:5}
%First, let us consider bounded $U$ defined on $([-M,M], \mathscr{B}[-M,M], P_U)$, where $M>0$.
Let $U_{\epsilon}\triangleq U+N_{\epsilon}$. 
\begin{align*}
  &I({N}_{\epsilon};{U}_{\epsilon})=
  h({U}_{\epsilon}) -h(U)=\int f_{U}(u)\log{f_{U}(u)}du- \int f_{{U}_{\epsilon}}(u)\log{f_{{U}_{\epsilon}}(u)}du.
\end{align*}

The integral $\int f_{{U}_{\epsilon}}(u)\log{f_{U_{\epsilon}}(u)}du$ converges to $\int f_{U}(u))\log{f_{U}(u)}du$ as $\epsilon \to 0$ by 
Fatou's Lemma \cite{royden1988real}. To see this, take $g_\epsilon(u)\triangleq  f_{U_\epsilon}(u)\log{f_{U_\epsilon}(u)}+1, u\in  [-M-\epsilon,M+\epsilon]$ and $g(x)\triangleq f_{U}(u)\log{f_{U}(u)}+1, u\in  [-M,M]$,
and note that $g_{\epsilon}(u)$ is  non-negative for $u \in [-M-\epsilon,M+\epsilon]$  since $f_{U_\epsilon}(u)\log{f_{U_\epsilon}(u)}\geq -\frac{1}{e}\log{e}> -1, u\in \mathbb{R}$. So, by Fatou's lemma $\int_{u\in\mathbb{R}} g(u)du\leq \liminf_{\epsilon\to 0} \int_{u\in \mathbb{R}} g_{\epsilon}(u)du$ since by construction $g_{\epsilon}(u)$ converges to $g(u)$ in a pointwise manner almost everywhere as $\epsilon\to 0$. The last statement follows by noting that $f_{U_\epsilon}(\cdot)$ converges to $f_{U}(\cdot)$ in  a pointwise manner almost everywhere as $\epsilon\to 0$ using the assumptions made in Remark \ref{rem:assumptions}.
This implies that $\int f_{U}(u)\log{f_{U}(u)}du\leq $ $ \liminf_{\epsilon\to 0}\int f_{{U}_{\epsilon}}(u)\log{f_{U_{\epsilon}}(u)}du$.
Also, note that $\int f_U(u)\log{f_{U}(u)}du\geq $ $
\int$ $ f_{U_\epsilon}(u)$ $\log{f_{U_\epsilon}(u)}du$ $, \epsilon>0$ since $I(N_{\epsilon};U_{\epsilon})\geq 0$. So,  $\int f_{U}(u)\log{f_{U}(u)}du\geq $ $ \limsup_{\epsilon\to 0}$ $ \int f_{U_\epsilon}(u)\log{f_{U_{\epsilon}}(u)}du$. 
Consequently, $I(N_{\epsilon};U_{\epsilon})\to 0$ as $\epsilon\to 0$.

\subsection{Proof of Lemma \ref{thm:quantize_dist_cost}}
\label{App:lem:6}
Let us fix $\e>0$ and let 
$\alpha_1=\liminf_{s \rightarrow \infty} \kappa(s)$ and 
$\alpha_2=\liminf_{s \rightarrow \infty} \kappa(-s)$.
Since $\mathbb{E}(\kappa(S))< \infty$,  we have  
\begin{equation}
\int_{[-\ell,u]^c} \kappa(s) dP_S(s) \leq \e, \ \  \mbox{ and } \ \ 
\int_{[-\ell,u]^c}  dP_S(s) \leq \e 
\label{eq:dist_converge1}
\end{equation}
for all $u > U(\e)$ and $\ell > L(\e)$ for some $U(\e)$ and $L(\e)$.

First consider the case when $\alpha_1<\infty$ and $\alpha_2<\infty$. In this case,
there exists $U_{1}(\e)$ such that for all $u > U_{1}(\e)$, we have 
\[
\alpha_1 + 2 \e > \alpha_1 + \e  \geq \left( \inf_{\tilde{u}>u} \kappa(\tilde{u})
\right)  \geq \alpha_1 - \e.
\]
This implies that for all $u > U_{1}(\e)$ there exists $u^*>u$ such
that $\alpha_1 + 2\e > \kappa(u^*)$. Hence, by choosing $u >
\max\{U_1(\e),U(\e)\}$, we obtain a $u^*>u$ such that
\begin{align}
\kappa(u^*) P(S \geq u^*)  &\leq  (\alpha_1+2 \e)P(S \geq u^*) \nn 
=(\alpha_1-\e+3\e) \int_{u^*}^{\infty} dP_S(s) \nn 
\leq \int_{u^*}^{\infty} \kappa(s) dP_S(s) + 3\e^2. \nn 
\end{align}
Similarly, there exists $\ell^*$ such that 
\[
\kappa(-\ell^*) P(S \leq -\ell^*) \leq  \int_{-\infty}^{-\ell^*} \kappa(s) dP_S(s) + 3\e^2.
\]
Now using the above results, consider 
\begin{align}
\mathbb{E}\kappa(\widetilde{S}_{\ell^*,u^*}) &= \int_{-\ell^*+}^{u^*-} \kappa(s) dP_S(s)
+ \kappa(u^*)P(S \geq u^*)+ 
\kappa(-\ell^*) P(S \leq \ell^*) \nn
\leq  6 \e^2 + \mathbb{E} \kappa(S). \nn
\end{align}
Moreover,
\begin{align}
\mathbb{E}\kappa(\widetilde{S}_{\ell^*,u^*}) &\geq \int_{-\ell^*+}^{u^*-} \kappa(s) dP_S(s) \nn 
\geq \int \kappa(s) dP_S(s) -2 \e = -2\e+ \mathbb{E} \kappa(S). \nn
\end{align}
Next consider the case when $\alpha_1=\infty$. 
Define 
\[
u^*\triangleq \arg \min_{\{u \geq U(\e)\}} \kappa(u).
\]
%Then, 
%$\lim_{s \rightarrow \infty} \kappa(s)=\infty$, and hence for every
%$s_1$, there exists $s_2>s_1$ such that for all $s>s_2$ we have 
%$\kappa(s)\geq \kappa(s_2)$.
%This implies that
%there exists $u^*>U(\e)$ such that $\kappa(u^*)\leq \kappa(u)$ for all
%$u \geq u^*$. 
We have 
\[
\kappa(u^*) P(S\geq u^*) \leq \int_{u^*}^{\infty} \kappa(s) dP_S(s).
\]
Using a similar argument for $\alpha_2$ we get the desired result
(second statement) for the cost function. 

The third statement follows by similar arguments. Note that there
exists a $T$-measurable function $d_1$ such that 
$
\mathbb{E}d(S,T)= \mathbb{E}d_1(T)
$.
This follows because $\int \int_{\mathsf{A}} d(s,t) d P_{ST}(s,t)$ is a finite
measure on $\mathsf{A}$ for every $\mathsf{A} \in \mathcal{B}$, and is absolutely
continuous with respect to $P_T$, and $d_1$ is the corresponding
Radon-Nikodym derivative.  It follows by similar arguments as in the 
proof of the second statement for the cost function that
there exists a sequence of lengths $\ell_{2m},u_{2m}$ such that 
\[
\lim_{m \rightarrow \infty}
\mathbb{E}d(S,\widetilde{T}_{\ell_{2m},u_{2m}})=\mathbb{E}d(S,T).
\]
Similarly, it follows that 
there exists a sequence of lengths $\ell_{1n},u_{1n}$ such that 
\[
\lim_{n \rightarrow \infty}
\mathbb{E}d(\widetilde{S}_{\ell_{1n},u_{1n}},\widetilde{T}_{\ell_{2},u_{2}})=\mathbb{E}d(S,\widetilde{T}_{\ell_{2},u_{2}}),
\]
for every $\ell_{2},u_2$. This completes the proof.
\qed
\subsection{Proof of Lemma \ref{lem:mc_forced1}}
\label{App:mc_forced1}
\begin{align*}
    &I(A;C|B)+I(E;C|D) \\&=\sum_{a,b,c,d,e} P_C(c)P_{B|C}(b|c) P_{D|C}(d|c) P_{A|BC}(a|b,c) P_{E|DC}(e|d,c) \log{\frac{P_{A|BC}(a|b,c)P_{E|DC}(e|d,c)}{P_{A|B}(a|b)P_{E|D}(e|d)}}
    \\&= \sum_{b,c,d} P_{C}(c)P_{B|C}(b|c) P_{D|C}(d|c) D(P_{A|BC}(\cdot|b,c)P_{E|DC}(\cdot|d,c)|| {P_{A|B}(\cdot|b)P_{E|D}(\cdot|d)})
    \\&\stackrel{(a)}\geq \sum_{c} P_C(c) D(P_{AE|C}(\cdot|c)||P_{\widehat{A},\widehat{E}|C}(\cdot|c))
    \\& \stackrel{(b)}{\geq} \frac{1}{\ln{2}}\sum_{c} P_C(c) V^2(P_{A,E|C}(\cdot|c), P_{\widehat{A},\widehat{E}|C}(\cdot|c))
    \\& \stackrel{(c)}{\geq} \frac{1}{2\ln{2}}(\sum_{c} P_C(c) V(P_{A,E|C}(\cdot|c), P_{\widehat{A},\widehat{E}|C}(\cdot|c)))^2
    = \frac{1}{2\ln{2}} V^2(P_{A,E,C}, P_{\widehat{A},\widehat{E},C}),
\end{align*}
where (a) follows from the convexity of relative entropy, (b) follows from 
Pinsker's inequality and (c) follows from Jensen's inequality. 

\section{Proof of Theorem \ref{thm:wynerziv_c}}
\label{app:wynerziv_c}
Let us consider a joint probability measure
$P_{XYU}$ and $g(\cdot,\cdot)$ that satisfies the conditions given in the theorem. 
Define the induced distortion function $d':\mathbb{R}^3
\rightarrow \mathbb{R}^+$ as 
\[
d'(x,y,u)\triangleq d(x,y,g(y,u)).
\]
{Note that $d'(\cdot,\cdot,\cdot)$ is continuous everywhere.}
%We quantize $X$ with parameters $l_1,u_1$ and $n_1$,
%quantize $U$ with parameters $l_2,u_2,n_2$ and quantize $Y$ with
%parameters $l_3,u_3,n_3$. However, the quantized
%triple  $(X_{l_1,u_1,n_1},Y_{l_3,u_3,n_3},U_{l_2,u_2,n_2})$ may
%not satisfy Makov chain
%$Y_{l_3,u_3,n_3} - X_{l_1,u_1,n_1} - U_{l_2,u_2,n_2}$. To address this
%we consider the following approach.

%We are  given the source $(X,Y)$  and an auxiliary variable $U$ with joint
%distribution $P_{XYU}$ such that $Y - X - U$. We consider a series of
%single-letter transformations of these random variables  as follows.

\noindent \textbf{Step 1 (Clipping $X$):}
First, we wish to clip $X$ to produce $\widetilde{X}$, so that the support of $\widetilde{X}$ is restricted to $[-\ell_X,u_X]$. To avoid producing discrete components in $\widetilde{X}$, and preserve the Markov chain $\widetilde{Y}\leftrightarrow \widetilde{X} \leftrightarrow \widetilde{U}$, we follow the clipping procedure below.  

Let $Z$ and $W$ be two random variables
that are independent of the source $(X,Y)$ such that  $Z$ has support 
$[-\ell_X,u_X]$, and the distribution $P_{ZW}$ is  
given by 
\[
P_{ZW}(\mathsf{A} \times \mathsf{B})\triangleq \frac{P_{XY}( \mathsf{A} \cap [-\ell_X,u_X]  \times \mathsf{B})}{P(X \in [-\ell_X,u_X])}
\]
for all events $\mathsf{A}$ and $\mathsf{B}$ in the Borel sigma-field. 
We define 
\[
(\widetilde{X},\overline{Y})\triangleq\Bigg\{
\begin{array}{cc}
(X,Y) & \mbox{ if } X\in [-\ell_X,u_X] \\
(Z,W) & \mbox{ otherwise}
\end{array},
\]
where $\overline{Y}$ is an auxiliary variable which is defined so that  $P_{\widetilde{X},\overline{Y}}(\mathsf{A}\times \mathsf{B})=P_{X,Y}(\mathsf{A}\times \mathsf{B}| X\in [-\ell_X,u_X])$.
The information that whether 
$X \in [-\ell_X,u_X]$ or not will  be communicated to the decoder from the encoder. 
Similarly, let $\overline{U}$ be a random variable that is correlated with
$(X,Y,Z,W)$ such that the distribution of $(\widetilde{X},\overline{U})$ 
is given by 
\[
P_{\widetilde{X},\overline{U}}(\mathsf{A} \times B)=\frac{P_{X,U}(\mathsf{A} \cap [-\ell_X,u_X] \times \mathsf{B})}{P(X \in
  [-\ell_X,u_X])}
\]
for all event $\mathsf{A}$ and $\mathsf{B}$, and $(X,Y,\overline{Y}) - \widetilde{X} - \overline{U}$. 
One can now check that the triple $(\widetilde{X},\overline{Y},\overline{U})$
has the distribution 
\[
P_{\widetilde{X},\overline{Y},\overline{U}}(\mathsf{A} \times \mathsf{B} \times \mathsf{C})=
\frac{P_{XYU}(\mathsf{A} \cap [-\ell_X,u_X] \times \mathsf{B} \times \mathsf{C})}{P(X \in
  [-\ell_X,u_X])}.
\]
for all $\mathsf{A},\mathsf{B}$, and $\mathsf{C}$, and $\overline{Y} - \widetilde{X} - \overline{U}$.
% Moreover, we have 
% \[
% P_{\widetilde{X},\overline{Y}}(A \times B)=\frac{P_{XY}(A \cap[-l_1,u_1] \times B)}{P(X \in
%   [-l_1,u_1])}, 
% \]
% \[
%  P_{\overline{U}}(C) = \frac{P_{X,U}([-\ell_X,u_X]
%   \times  C)}{P(X \in [-\ell_X,u_X])} = P(U \in C|X \in [-\ell_X,u_X]),
% \]
%for all $A,B,$ and $C$.
It should be noted that
$(\widetilde{X},\overline{Y},\overline{U})$ depends on $\ell_X,u_X$, however, this
is not made explicit to keep the notation simple. 
Next, we show that $I(\widetilde{X};\overline{U}) \approx I(X;U)$, 
$I(\overline{Y};\overline{U}) \approx I(Y;U)$, and
$\mathbb{E}d(\widetilde{X},\overline{Y},\overline{U}) \approx
\mathbb{E}d(X,Y,U)$ for sufficiently large $\ell_X,u_X$. 

Fix an $\e>0$. Consider 
\begin{align}
I(\widetilde{X};\overline{U}) &= \int_{-\ell_X}^{u_X}  \int \frac{d
  P_{X,U}(\tilde{x},\overline{u})}{P_X([-\ell_X,u_X])} \log \frac{ d P_{X,U}}{
  d (P_{\overline{U}} P_X)} (\tilde{x},\overline{u}) \nn \\
&\overset{(a)}{=} \int_{-\ell_X}^{u_X}  \int \frac{d
  P_{X,U}(\tilde{x},\overline{u})}{P_X([-l_X,u_X])} \log \frac{ d P_{X,U}}{
  d (P_{U} P_X)} (\tilde{x},\overline{u}) 
- D(P_{\tilde{U}} \| P_U)  \nn \\ 
&\leq \frac{1}{P_X([-l_X,u_X])} \int_{-\ell_X}^{u_X} \int d
  P_{X,U}(\tilde{x},\overline{u}) 
    \log \frac{ d P_{X,U}}{
  d (P_{U} P_X)} (\tilde{x},\overline{u}) \nn \\ 
&\rightarrow  I(X;U), \nn  
\end{align}
as $\ell_X,u_X \rightarrow \infty$, where in
(a) we note that $P_{\overline{U}} \ll P_{U}$. Also, note that $P_{\widetilde{X},\overline{U}}$ converges strongly to $P_{X,U}$. So,
by the lower semi-continuity of mutual information, it follows that $I(\widetilde{X};\overline{U})=I(X;U)$.
Next, we observe that 
\begin{align}
P_{\overline{Y},\overline{U}}(\mathsf{A} \times \mathsf{B}) &=
P_{\tilde{X}\overline{Y}\overline{U}}([-\ell_X,u_X] \times \mathsf{B} \times \mathsf{C})  \nn = \frac{P_{X,Y,U}([-\ell_X,u_X] \times \mathsf{B} \times \mathsf{C})}{P_X([-\ell_1,u_X])} 
\rightarrow P_{Y,U}(\mathsf{A} \times \mathsf{B}) \nn
\end{align}
as $\ell_X,u_X \rightarrow \infty$, and hence by the lower semi-continuity of mutual information, we have 
\[
\lim_{\ell_X,u_X \rightarrow \infty} I(\overline{Y};\overline{U}) \geq I(Y;U).
\]
Moreover, 
\[
\mathbb{E}d(\widetilde{X},\overline{Y},g(\overline{Y},\overline{U}))=\int_{\tilde{x}=-\ell_X}^{u_X} \int \int
d(\tilde{x},\overline{y},g(\overline{y},\overline{u})) \frac{d
  P_{X,Y,U}}{P_X([-\ell_X,u_X])} (\tilde{x},\overline{y},\overline{u})
\rightarrow \mathbb{E}d(X,Y,g(Y,U))
\]
as $\ell_X,u_X \rightarrow \infty$.

\noindent \textbf{Step 2 (Clipping $\overline{Y}$ and $\overline{U}$):} In this step, we clip $\overline{Y}$ and $\overline{U}$ with parameters $\ell_Y,u_Y$
and $\ell_U,u_U$, respectively, to produce $\widetilde{Y}$ and $\widetilde{U}$. The Markov chain  $\widetilde{U}\leftrightarrow\widetilde{X} \leftrightarrow
\widetilde{Y}$ holds. Using arguments similar to the proof of Lemma
\ref{thm:quantize_dist_cost}, we see that there exists sequence
$\ell_{Y}(n),\ell_{U}(m)$ and $u_{Y}(n),u_{U}(m)$ such that 
\begin{align}
\lim_{n,m \rightarrow \infty}
\mathbb{E}d(\widetilde{X},\widetilde{Y},g(\widetilde{Y},\widetilde{U}))
&= \lim_{n,m \rightarrow \infty}
\mathbb{E}d'(\widetilde{X},\widetilde{Y},\widetilde{U})
\nn \\
&= \mathbb{E}{d'}(\widetilde{X},\overline{Y},\overline{U})=\mathbb{E}d(\widetilde{X},\overline{Y},g(\overline{Y},\overline{U})). \nn
\end{align}
Moreover, 
\[
\lim_{m \rightarrow \infty} I(\widetilde{X};\widetilde{U})=I(\widetilde{X};\overline{U}), \ \ 
\lim_{n,m \rightarrow \infty} I(\widetilde{Y};\widetilde{U})=I(\overline{Y};\overline{U}).
\]

\noindent \textbf{Step 3 (Discretizing  X):} 
Next, we discretize $\widetilde{X}$ into
$\widehat{X}$ with step size $2^{-n_X}$ and enforce the Markov chain. 
% Before we proceed further, let us note that all random variables with $\sim$ on top depend on $l_1$ and $u_1$, 
% and this dependence is not made explicit.
Using 
\[
I(\widetilde{X},\widetilde{Y};\widetilde{U}|\widehat{X})=I(\widetilde{X},\widetilde{Y};
\widetilde{U})
-I(\widehat{X};\widetilde{U}), 
\]
and  Theorem \ref{thm:quantize0}, we have
\beq
\label{eq:step3_inter}
   \lim_{n_X \rightarrow \infty}  I(\widetilde{X},\widetilde{Y};\widetilde{U}|\widehat{X})
=I(\widetilde{Y};\widetilde{U}|\widetilde{X})=0.
\eeq

Define $\widetilde{U}'$ as a random variable having the same alphabet as
$\widetilde{U}$, and  that is jointly correlated with
$(\widehat{X},\widetilde{X},\widetilde{Y})$ according to the probability
distribution that satisfies the Markov chain 
$\widetilde{Y} 
\leftrightarrow\widetilde{X} \leftrightarrow \widehat{X} \leftrightarrow \widetilde{U}'$, and 
the pair $(\widehat{X},\widetilde{U}')$ has the same distribution as 
the pair $(\widehat{X},\widetilde{U})$. 
Using (\ref{eq:step3_inter}), and Lemma \ref{lem:mc_forced1} with the triple
$C = (\widetilde{X},\widetilde{Y})$, $B=\widehat{X}$, $A=\widetilde{U}$, and
$\widehat{A}=\widetilde{U}'$, and $(D,E,\widehat{E})$ taken to be trivial, we have\footnote{Note that $\widetilde{U}'$ depends on $n_X$.}
\begin{align}
\label{eq:var:App:A}
   \lim_{n_X \rightarrow \infty}   V \left( P_{\widetilde{X},\widetilde{Y},\widetilde{U}}
  ,  P_{\widetilde{X},\widetilde{Y},\widetilde{U}'} \right)=0,
\end{align}
and hence, we have 
\[
\lim_{n_X  \rightarrow \infty}
I(\widehat{X};\widetilde{U}')
\stackrel{(a)}{=}
\lim_{n_X  \rightarrow \infty}
I(\widehat{X};\widetilde{U})
\stackrel{(b)}{=}
I(\widetilde{X};\widetilde{U}), 
\]
\[ 
\lim_{n_X  \rightarrow \infty}
I(\widetilde{Y};\widetilde{U}')\stackrel{(c)}{\geq}
I(\widetilde{Y};\widetilde{U}),
\]
where in (a) we have used the fact that by construction $P_{\widehat{X},\widetilde{U}'}=P_{\widehat{X},\widetilde{U}}$, in (b) we have used Theorem \ref{thm:quantize0}, and (c) follows from lower semi-continuity of mutual information and \eqref{eq:var:App:A}.
%where in the first equality follows from the argument: 
% we used the fact that $P_{\widehat{X},\widetilde{U}}= P_{\widehat{X},\widetilde{U}'}$ along with the Markov chain $\widetilde{Y} \leftrightarrow\widetilde{X} \leftrightarrow \widehat{X} \leftrightarrow \widetilde{U}'$. That is,
%\[
%I(\widetilde{X};\widetilde{U}') \stackrel{(a)}{\leq} I(\widehat{X};\widetilde{U}') \stackrel{(b)}{=} I(\widehat{X},\widetilde{U}) \stackrel{(c)}{\leq}
%I(\widetilde{X};\widetilde{U}),
%\]
%where in (a) we have used the Markov chain $\widetilde{X} \leftrightarrow \widehat{X} \leftrightarrow \widetilde{U}'$, in (b) we have used the fact that $P_{\widehat{X},\widetilde{U}'}= P_{\widehat{X},\widetilde{U}}$ by construction, and in (c) we have used the Markov chain $\widehat{X}\leftrightarrow \widetilde{X} \leftrightarrow \widetilde{U}$ by construction.

Since convergence in variational distance implies convergence in
distribution, using the Portmanteau theorem \cite{achim2014probability,billingsley2013convergence}, we have 
\begin{align}
\lim_{n_X  \rightarrow \infty}
\mathbb{E}d(\widehat{X},\widetilde{Y},g(\widetilde{Y},\widetilde{U}'))
&= \lim_{n_X  \rightarrow \infty} \mathbb{E}
{d}'(\widehat{X},\widetilde{Y},\widetilde{U}') 
=\mathbb{E}\tilde{d}(\widetilde{X},\widetilde{Y},\widetilde{U})
=\mathbb{E}d(\widetilde{X},\widetilde{Y},g(\widetilde{Y},\widetilde{U})),
\end{align}
where we have used 

\begin{align*}
      \lim_{n_X \rightarrow \infty}   V \left( P_{\widehat{X},\widetilde{Y},\widetilde{U}'}
  ,  P_{\widetilde{X},\widetilde{Y},\widetilde{U}} \right)\leq 
    \lim_{n_X \rightarrow \infty}   V \left( P_{\widehat{X},\widetilde{Y},\widetilde{U}'}
  ,  P_{\widetilde{X},\widetilde{Y},\widetilde{U}'} \right)
  + 
  V \left( P_{\widetilde{X},\widetilde{Y},\widetilde{U}'}
  ,  P_{\widetilde{X},\widetilde{Y},\widetilde{U}} \right)
  =0,
\end{align*}
where the first term, $V \left( P_{\widehat{X},\widetilde{Y},\widetilde{U}'}
  ,  P_{\widetilde{X},\widetilde{Y},\widetilde{U}'} \right)$, goes to $0$ as $n_X\to \infty$ since  convergence in total variation is implied by weak convergence (Portmonteau theorem \cite{billingsley2013convergence}), and the latter is shown below:
\begin{align*}
  \big|F_{\widehat{X},\widetilde{Y},\widetilde{U}'}(a,b,c)- & F_{\widetilde{X},\widetilde{Y},\widetilde{U}'}(a,b,c)\big|
    = 
    \big|P( \widehat{X}\leq a, \widetilde{Y}\leq b,\widetilde{U}'\leq c)
    -     P( \widetilde{X}\leq a, \widetilde{Y}\leq b,\widetilde{U}'\leq c)\big|
    \\& =  \big|   P( \widehat{X}\leq a-\delta, \widetilde{Y}\leq b,\widetilde{U}'\leq c)+  P( a-\delta < \widehat{X}\leq a, \widetilde{Y}\leq b,\widetilde{U}'\leq c)
 \\&  \hspace{0.1in}  - P( \widetilde{X}\leq a-\delta, \widetilde{Y}\leq b,\widetilde{U}'\leq c)
 -  P( a-r< \widetilde{X}\leq a, \widetilde{Y}\leq b,\widetilde{U}' \leq c)\big|
 \\& \leq
 P( a-\delta \leq \widetilde{X}\leq a-\delta+2^{-n_X},\widetilde{Y}\leq b,\widetilde{U}'\leq c)
\\&  \leq P( \widetilde{X}\in [a-\delta,a-\delta+2^{-n_X}])\to 0, \text{ as } n_X\to \infty, 
\end{align*}
where $\delta\triangleq a-Q_{n_{X}}(a)-\frac{1}{2^{n^X}}$, where $Q_{n_X}(\cdot)$ is defined in Definition \ref{Def:Desc}.

% and $Q(a)= \arg \min_{s \in 2^{-{n_X}}\mathbb{Z}}   |a-s|$.

\noindent \textbf{Step 4 (Discritizing $\widetilde{Y}$ and $\widetilde{U}'$) :} Next, we discretize $\widetilde{Y}$ and
$\widetilde{U}'$ into $\widehat{Y}$ and
$\widehat{U}$ with step sizes $2^{-n_Y}$ and $2^{-n_U}$, respectively. We can use the  above arguments
 to show
convergence of mutual information and expected distortions as $n_Y$ and $n_U$ become large. 
It is worth noting here that
$g(\widehat{Y},\widehat{U})$ is
bounded and finite-valued, and so no need to discretize it. 

%Let us summarize the approach used till now.
%We are given three infinite-alphabet random variables $(X,Y,U)$ with joint distribution
%$P_{XYU}$ and satisfying the Markov chain $Y - X - U$.
% We create three finite alphabet random variables
% $(\tilde{X}_{n_1},\tilde{Y}_{l_3,u_3,n_3},\tilde{U}_{l_2,u_2,n_2})$
% %satisfying the  Markov chain
% $\tilde{Y}_{l_3,u_3,n_3} - \tilde{X}_{n_1} -
% %\tilde{U}_{l_2,u_2,n_2}$. 
% The pair $(\tilde{X}_{n_1},\tilde{Y}_{l_3,u_3,n_3})$ is obtained by a
% %random transformation of the pair $(X,Y)$. In particular, 
% %we note that, $(\tilde{X},\tilde{Y})=(X,Y)$ if $X \in [-l_1,u_1]$, and 
% $(\tilde{X},\tilde{Y})=(Z,W)$ otherwise.
% %Further,  $\tilde{X}_{n_1}$ is the quantized version of $\tilde{X}$, and $\tilde{Y}_{l_3,u_3,n_3}$ is the clipped and 
% %quantized version of $\tilde{Y}$. 
% By choosing nine parameters $l_1,l_2,l_3$,
% $u_1,u_2,u_3$ and $n_1,n_2,n_3$, 
% we can make $I(\tilde{X}_{n_1};\widehat{U}_{l_2,u_2,n_2})$,
% $I(\tilde{Y}_{l_3,u_3,n_3};\widehat{U}_{l_2,u_2,n_2})$ and
% $\mathbb{E}d(\tilde{X}_{n_1},\tilde{Y}_{l_3,u_3,n_3},g(\tilde{Y}_{l_3,u_3,n_3},\widehat{U}_{l_2,u_2,n_2}))$
% approach $I(X;U)$, $I(Y;U)$ and $\mathbb{E}d(X,Y,g(Y,U))$ arbitrarily closely.

In other words,  for every $\e>0$, there exist infinitely many
$\ell_X,u_X,\ell_Y,u_Y,\ell_U,u_U$ such that for all sufficiently large
$n_X,n_Y,n_U$, we have 
\begin{equation}
I(\widehat{X};\widehat{U})-I(\widehat{Y};\widehat{U})
\leq I(X;U)-I(Y;U)+2\e,
\label{eq:side_info_1}
\end{equation}
and
\begin{equation}
\mathbb{E}d(\widehat{X},\widehat{Y},g(\widehat{Y},\widehat{U}))
\leq \mathbb{E}d(X,Y,g(Y,U))+\e.
\label{eq:side_info_2}
\end{equation}

\noindent \textbf{Step 5 (Wyner-Ziv Source Coding Scheme):} 
Now we can use  Wyner-Ziv theorem  \cite{Ziv} to show the existence of a transmission system.  
In particular, we can show that the rate-distortion pair
given  by 
\[
(I(\widehat{X};\widehat{U})-I(\widehat{Y};\widehat{U}),
\mathbb{E}d(\widehat{X},\widehat{Y},g(\widehat{Y},\widehat{U}))
\]
 is achievable for the finite-alphabet source $(\widehat{X},\widehat{Y},d')$. 
In other words, for all $\e>0$, and for all sufficiently large blocklength $m$, there exists a transmission system
$\mbox{TS}_d$ with parameter $(m,\Theta)$ for compressing the finite-alphabet source
such that 
\begin{equation}
\frac{1}{m} \log \Theta \leq
I(\widehat{X};\widehat{U})-I(\widehat{Y};\widehat{U})+\e,
\label{eq:side_info_3}
\end{equation}
and
\begin{equation}
\frac{1}{m}\sum_{i=1}^m \mathbb{E}d'(\widehat{X}_{i},\widehat{Y}_{i},\widehat{U}'_i)
\leq
\mathbb{E}d'(\widehat{X},\widehat{Y},\widehat{U})
+\e.
\label{eq:side_info_4}
\end{equation}
%where ${\widehat{X}'}_i=g(\widehat{U}_i,\widehat{Y}_i), i\in [m]$ and ${\widehat{U}}^{'m} \triangleq f(e(\widehat{X}^m), \widehat{Y}^m)$, 
%$e(\cdot)$ is the encoder of the Wyner-Ziv coding scheme,  $f(\cdot,\cdot)$ reconstructs the codeword $\widehat{U}^{'m}$,  and $g(\cdot,\cdot)$ is a single-letter reconstruction function producing $\widehat{X}^{'m}$ using $\widehat{U}^{'m}$ and $\widehat{Y}^m$.
 Let 
$e(\cdot)$ be the encoder of the Wyner-Ziv coding scheme,  and $f(\cdot)$ be the decoder.

For the source $(X,Y,d)$ we obtain an $(m,\Theta')$  transmission system 
$\mbox{TS}_c$ as follows. We assume that the encoder and decoder share
common randomness.  From $(X,Y)$ we create
$(\widehat{X},\widehat{Y})$ and use $\mbox{TS}_d$.
Let $V=\mathbbm{1}_{\{X \in [-\ell_X,u_X]\}}$. The encoder of
$\mbox{TS}_c$ sends information to the decoder in two parts.
The first part is $e(\widehat{X}^m)$ and the second part is a  compressed (almost
lossless) version of $V^m$.  If the sequence $V^m$ is typical, the
encoder sends the index of the sequence in the typical set, otherwise,
it sends the index $0$. This extra piece of information requires
$h_b(P_X([-\ell_X,u_X])+\e$ bits per sample. The decoder is constructed
as follows. Let $\check{U}^m$ denote the reconstruction vector.
If the decoder receives index $0$ in the second part, then
it uses a constant $c$ as a reconstruction, i.e.,
$\check{U}_i=c$ for all $i$. Otherwise, it can
reconstruct $V^m$ reliably. If $V_i=1$, then the reconstruction is
$\check{U}_i=\widehat{U}'_i$, otherwise it is a  constant $c$, such that $\mathbb{E}d'(X,Y,c)<\infty$,
so that $\check{U}_i=c$. 

Fix $\e>0$. Assume that the ten parameters
$\ell_X,\ell_Y,\ell_U,u_X,u_Y,u_U,n_X,n_Y,n_U$ and $m$  are such that
\begin{itemize}
\item (\ref{eq:side_info_1},\ref{eq:side_info_2},\ref{eq:side_info_3},\ref{eq:side_info_4}) are
satisfied.  
\item $h_b(P_X([-\ell_X,u_X])) \leq \e$.
\item $P(\mathsf{A}^c) \mathbb{E}d'(X_i,Y_i,c|\mathsf{A}^c) \leq \e$, for all $1 \leq i
  \leq m$, where $\mathsf{A}$ denote the event
  that $V^m$ is typical. 
\item $P_X([-\ell_X,u_X]^c) \mathbb{E}d'(X,Y,c|X \not \in [-\ell_X,u_X]) \leq \e$
\item 
$|d'(x_1,y_1,b)-d'(x_2,y_2,b)| \leq \e$, 
for all (a)  $x_1,x_2 \in [-\ell_X,u_X]$, (b) $y_1,y_2 \in [-\ell_Y,u_Y]$,
(c)  $b \in [-\ell_U,u_U]$, (d) $|x_1-x_2| \leq \frac{1}{2^{n_X}}$, and 
(e) $|y_1-y_2| \leq \frac{1}{2^{n_Y}}$.
\end{itemize}
Note that the third and the fourth bullet follows because of the
assumption made on the distortion function that 
$\mathbb{E}d'(X,Y,c)<\infty$. In particular for the third bullet, note
that 
\[
P(A^c) \mathbb{E}d'(X_i,Y_i,c|\mathsf{A}^c) =\int_{\mathsf{A}^c} dP_{X^nY_i}(x^n,y_i)
d'(x_i,y_i,c) \leq \e
\]
as $P(\mathsf{A}^c)$ can be made arbitrarily small.
Note that the last bullet follows from the 
uniform continuity of the distortion function 
over $[-\ell_X,u_X] \times [-\ell_Y,u_Y] \times
[-\ell_U,u_U]$ for a fixed $\ell_X,\ell_Y,\ell_U$ and $u_X,u_Y,u_U$.

We see that 
\[
\frac{1}{m} \log \Theta' \leq I(X;U)-I(Y;U)+3 \e.
\]
Next, we find the expected distortion. Let $\mathsf{B}_i$ denote the event $X_i
\in [-\ell_X,u_X]$. Let $P_{X_i,Y_i|\mathsf{A},\mathsf{B}_i}$ denote the
probability distribution of $(X_i,Y_i)$ given the event $\mathsf{A}$ and $\mathsf{B}_i$. Consider for any $i \in
\{1,2,\ldots,m\}$, 
\begin{align}
\mathbb{E}d'(X_i,Y_i,\check{U}_i) &\leq  P(\mathsf{A}^c) \mathbb{E}d'(X_i,Y_i,c|\mathsf{A}^c)+
P(\mathsf{B}_i^c) \mathbb{E}d(X_i,Y_i,c|\mathsf{B}_i^c)  
+ P(\mathsf{A} \cap \mathsf{B}_i)
\mathbb{E}d(X_i,Y_i,\check{U}_i|\mathsf{A},\mathsf{B}_i) \nn \\
&\overset{(a)}{\leq} 2\e+
P(\mathsf{A} \cap \mathsf{B}_i) \mathbb{E}d'(X_i,Y_i,\widehat{U}'_i|\mathsf{A} \cap \mathsf{B}_i) \nn \\
&\leq  2\e+ P(\mathsf{B}_i) \mathbb{E}d'(X_i,Y_i,\widehat{U}'_i|\mathsf{B}_i) \nn \\ 
&\overset{(b)}{=} 2\e+ P(\mathsf{B}_i) \sum_{i,j,b} P_i(b|\zeta(i),\zeta(j)) \int_{\mathsf{A}(i) \times
  \mathsf{A}(j) } d'(x,y,b) \frac{dP_{X_iY_i}(x,y)}{P(\mathsf{B}_i)}  \nn \\
&\overset{(c)}{\leq} 2\e+P(\mathsf{B}_i) \left[
  \mathbb{E}d'(\widehat{X}_{i},\widehat{Y}_{i},\widehat{U}'_i) +
  \e \right] \nn 
\end{align}
where we have following arguments:
(a) follows from third and fourth bullets from the previous
page. In  (b) we have denoted the conditional probability of
$\widehat{U}'_i$ given $\widehat{X}_{i},\widehat{Y}_{i}$ as
$P_i$. 
(c) follows from the fifth bullet from the previous page.
Finally, we have 
\begin{align}
\frac{1}{m} \sum_{i=1}^m \mathbb{E}d'(X_i,Y_i,\widehat{U}'_i)
&\leq 2\e+\left[
  \mathbb{E}d'(\widehat{X},\widehat{Y},\widehat{U}) +
  2\e \right]  
  \leq 5\e+ \mathbb{E}d'(X,Y,U). \nn 
\end{align}
This completes the proof.\qed
\section{Proof of Theorem \ref{thm:Gelfand}}
\label{App:Gelfand}
 Consider a joint probability measure
$P_{XSU}$ and $g(\cdot,\cdot)$ that satisfy the conditions given in
the theorem statement. 

\noindent \textbf{Step 1 (Clipping U):} Define the induced cost function $\tilde{\kappa}:\mathbb{R}^2
\rightarrow \mathbb{R}^+$ as 
\[
\tilde{\kappa}(u,s)\triangleq\kappa(g(s,u),s).
\]
Note that $\tilde{\kappa}$ is continuous everywhere.
Since $X$ is a function of $(U,S)$, let us average it out and look
at just the triple $(S,U,Y)$ with the induced joint measure
$P_SP_{U|S}P_{Y|US}$ given by 
\[
P_{SUY}(\mathsf{A} \times \mathsf{B} \times \mathsf{C}) = \int_\mathsf{A} P_S(ds) \int_\mathsf{B} P_{U|S}(du|s)
\int_\mathsf{C} P_{Y|US}(dy|u,s)
\]
where $P_{Y|US}:  \mathcal{B}\times \mathbb{R}^2  \rightarrow
\mathbb{R}^+$ is a transition probability and is given by 
$P_{Y|US}(\mathsf{A}|u,s)=P_{Y|XS}(\mathsf{A}|g(u,s),s)$ for all $\mathsf{A} \in \mathcal{B}$. 

To address the difficulties that arise due
to tail probabilities, we first clip the input of the channel. The original distribution is $P_{SUY}=P_SP_{U|S}P_{Y|US}$.  
Let $\mathsf{E}_S=(-\infty,-\ell_S] \cup
[u_S,\infty)$ and $\mathsf{E}_U=(-\infty,-\ell_U] \cup
[u_U,\infty)$.
Let $\widetilde{U}=C_{\ell_U,u_U}(U)$ if $S \in
[-\ell_S,u_S]$, and $\widetilde{U}=c$ otherwise, where $c$ is a constant such that
$\mathbb{E}\tilde{\kappa}(S,c) < \infty$. 
 In other words,
$S$ and $\widetilde{U}$ have the  following joint
probability distribution:
\begin{align}
& P(S \in \mathsf{A}, \widetilde{U} \in \mathsf{B}) = \int_\mathsf{A} P_S(ds) \int_\mathsf{B}
P_{\widetilde{U}|S}(du|s) \nn \\
&= \int_{\mathsf{A} \cap \mathsf{E}_S^c} P_S(ds) \left[ \int_{\mathsf{B} \cap  \mathsf{E}_U^c}
  P_{U|S}(du|s)+ \mathbbm{1}_{\{-\ell_U \in \mathsf{B}\}} P_{U|S}((-\infty,-\ell_U]|s) +
\mathbbm{1}_{\{u_U \in \mathsf{B}\}} P_{U|S}([u_U,\infty)|s) \right]  \nn \\
& + \int_{\mathsf{A} \cap \mathsf{E}_S} P_S(ds) \mathbbm{1}_{\{c \in \mathsf{B}\}}, \nn
\end{align}
 for all borel sets $\mathsf{A}$ and $\mathsf{B}$.
 
Note that $\widetilde{U}$ depends on $\ell_S,u_S,\ell_U,u_U$, but this is not
made explicit for ease of notation.
We use
$\widetilde{U}$ as the input to the channel $P_{Y|SU}$. Let ${Y}'$
denote the corresponding channel output. In other words,
for any triple of borel sets $\mathsf{A}$, $\mathsf{B}$ and $\mathsf{C}$, we have 
\[
P(S \in \mathsf{A}, \widetilde{U} \in \mathsf{B}, {Y}' \in \mathsf{C})=\int_\mathsf{A} P_S(ds) \int_\mathsf{B} P_{\widetilde{U}|S}(du|s)
\int_\mathsf{C} P_{Y|US}(dy|u,s).
\]
Let $P_{S \widetilde{U} {Y}'}$ denote the joint probability distribution
  of $S,\tilde{U}$ and ${Y}'$. 
Next we show convergence of mutual information terms and the cost term
as $\ell_S,\ell_U$ and $u_S,u_U$ become large.   For any triple of borel sets $\mathsf{A}$, $\mathsf{B}$
and $\mathsf{C}$, we have 
\begin{align*}
&|P_{SUY}(\mathsf{A},\mathsf{B},\mathsf{C})-P_{S \widetilde{U} Y'}(\mathsf{A},\mathsf{B},\mathsf{C})| \nn \\&
\leq  |P_{SUY}((\mathsf{A} \cap \mathsf{E}_S),\mathsf{B},\mathsf{C})- P_{S\tilde{U}Y'}((\mathsf{A}
\cap \mathsf{E}_S),\mathsf{B},\mathsf{C})|    +
 |P_{SUY}((\mathsf{A} \cap \mathsf{E}_S^c),\mathsf{B},\mathsf{C})- P_{S\widetilde{U}Y'}((\mathsf{A}
\cap \mathsf{E}_S^c),\mathsf{B},\mathsf{C})| \nn
\\&\stackrel{(a)}{\leq} |P_{SUY}((\mathsf{A} \cap \mathsf{E}_S),\mathsf{B},\mathsf{C})- P_{S\widetilde{U}{Y}'}((\mathsf{A}
\cap \mathsf{E}_S),\mathsf{B},\mathsf{C})|    +
\\& \hspace{0.5in} |P_{SUY}((\mathsf{A} \cap \mathsf{E}_S^c),(\mathsf{B} \cap \mathsf{E}_U),\mathsf{C})- P_{S\widetilde{U}{Y}'}((\mathsf{A}
\cap \mathsf{E}_S^c),(\mathsf{B} \cap \mathsf{E}_U),\mathsf{C})| \nn
\\
&\leq P_{SUY} ((\mathsf{A} \cap \mathsf{E}_S) ,\mathsf{B},\mathsf{C})+
\\&\hspace{0.5in}
P_{S\widetilde{U}{Y}'}((\mathsf{A} \cap \mathsf{E}_S),\mathsf{B},\mathsf{C}) + P_{SUY}((\mathsf{A} \cap \mathsf{E}_S^c),(\mathsf{B}
\cap \mathsf{E}_U),\mathsf{C}) + P_{S\widetilde{U}Y'}((\mathsf{A}
\cap \mathsf{E}_S^c),(\mathsf{B} \cap \mathsf{E}_U),\mathsf{C})
\nn \\
&\leq 2P_S(\mathsf{E}_S) +2 P_U(\mathsf{E}_U) \rightarrow 0 \mbox{ as } \ell_S,u_S,\ell_U,u_U
\rightarrow \infty, \nn
\end{align*}
where in (a) we have used the fact that if $S,U\in \mathsf{E}^c_S\times  \mathsf{E}_U^c$, then $S=\widetilde{S}$ and $U=\widetilde{U}$ to conclude $P_{SUY}((\mathsf{A} \cap \mathsf{E}_S^c),(\mathsf{B} \cap \mathsf{E}^c_U),C)=P_{S\widetilde{U}Y'}((\mathsf{A}
\cap \mathsf{E}_S^c),(\mathsf{B} \cap \mathsf{E}^c_U),\mathsf{C})$.  
Using the lower semi-continuity of mutual information (
Lemma \ref{lem:conv_mutual_info_1}), we see that for any $\e>0$, we
have  
\beq
  I(\widetilde{U};{Y}') \geq I(U;Y)-\e.
\label{eq:ccsi_cont_cond_1}
\eeq
for sufficiently large $\ell_S,u_S,\ell_U,u_U$.
For the mutual information between $\widetilde{U}$ and the state
information $S$, we have for any $\e>0$,
\begin{align}
\nn I(\widetilde{U};S) &= I(\widetilde{U};S, \mathbbm{1}_{\{S \in \mathsf{E}_S^c\}})
=I(\widetilde{U};\mathbbm{1}_{\{S \in \mathsf{E}_S^c\}})+P(S \in \mathsf{E}_S)I(\widetilde{U};S|S
\in \mathsf{E}_S)+
\\& 
\nn P(S \in \mathsf{E}_S^c)I(\widetilde{U};S|S \in \mathsf{E}_S^c) \nn
\\&
\nn \overset{(a)}{\leq} H(P_S(\mathsf{E}_S))+ 0 + P(S \in \mathsf{E}_S^c) I(U;S|S \in \mathsf{E}_S^c) \nn 
\\&
\nn 
\leq H(P_S(\mathsf{E}_S)) +  P(S \in \mathsf{E}_S^c)I(U;S|S \in \mathsf{E}_S^c)+ P(S\in \mathsf{E}_S)I(U;S|S\in \mathsf{E}_S)
\\&
\nn 
= H(P_S(\mathsf{E}_S)) + I(U;S|\mathbbm{1}_{\{S \in \mathsf{E}_S^c\}})
\\&\leq
\epsilon+I(U;S),
\label{eq:ccsi_cont_cond_2}
\end{align}
for all sufficiently large $\ell_S,u_S$,
where (a) follows by noting that $\widetilde{U}$ is constant for $S\in \mathsf{E}_S$.
Consider the cost function:
\begin{align}
\mathbb{E}\tilde{\kappa}(S,\widetilde{U}) &=\int_{\mathsf{E}_S^c} P_S(s) \int
\tilde{\kappa}(s,u) P_{\widetilde{U}|S}(du|s) + \int_{\mathsf{E}_S} P_S(s)
\tilde{\kappa}(s,c) \nn \\
&\overset{(a)}{\leq} \int_{\mathsf{E}_S^c} P_S(s) \int \tilde{\kappa}(s,u)
P_{U|S}(du|s) + \e + \int_{\mathsf{E}_S} P_S(s)
\tilde{\kappa}(s,c) \nn \\ 
&\overset{(b)}{\leq} \int_{\mathsf{E}_S^c} P_S(s) \int \tilde{\kappa}(s,u)
P_{U|S}(du|s) + \e + \e \nn \\
&\leq \mathbb{E}\tilde{\kappa}(S,U)+2\e, 
\label{eq:ccsi_cont_cond_3}
\end{align}
where (a) follows for all sufficiently large $\ell_U,u_U$, 
by using the argument considered in
Theorem  \ref{thm:quantize_dist_cost},  
and (b) follows for all sufficiently large $\ell_S,u_S$ using the fact
that $\mathbb{E}\tilde{\kappa}(S,c)<\infty$. 

In summary we modified $P_{U|S}$ to get $P_{\widetilde{U}|S}$, and in the
process obtained $P_{S\widetilde{U}{Y}'}$ from $P_{SUY}$ without
changing the distribution of the state $P_S$ and the transition
probability of the channel $P_{Y|SU}$. 

\noindent \textbf{Step 2 (Discretizing U and Y):}
We discretize $\widetilde{U}$ with parameters $n_U$ to get $\widehat{U}$, and clip and discretize ${Y}'$ with
parameters $l_Y,u_Y,n_Y$. When $\widetilde{U}=c$ is observed, we let
$\widehat{U}=*$, which is a special symbol. 
Using this we create a discrete state
information and a discrete channel. Consider the induced distribution
on the triple  $(S,\widehat{U}, \widehat{Y})$
given as follows for all $\mathsf{A} \in \mathscr{B}$, and all $k$, and $j \neq
*$, 
\begin{align}
P(S \in \mathsf{A}, \widehat{U}=j, \widehat{Y}=k) 
&= \int_{\mathsf{A} \cap \mathsf{E}_S^c} P_S(ds) \int_{\mathsf{A}(j) \backslash \{c\}} P_{\widetilde{U}|S}(du|s) \int_{\mathsf{A}(k)}
P_{Y|US}(dy|u,s) \nn \\
&= \int_{\mathsf{A} \cap \mathsf{E}_S^c} P_S(ds) Q(j,s)   \int_{\mathsf{A}(j) \backslash \{c\}} \frac{P_{\widetilde{U}|S}(du|s)}{Q(j,s)} \int_{\mathsf{A}(k)}
P_{Y|US}(dy|u,s), \nn 
\end{align}
where $\mathsf{A}(\cdot)$ is as in Equations \eqref{eq:disc_cell_1}-\eqref{eq:disc_cell_3} and $Q(j,s)$ is the conditional probability of  $\widehat{U}$
given $S$, and is given more precisely as 
\[
Q(j,s)=\int_{\mathsf{A}(j) \backslash \{c\} } P_{\widetilde{U}|S}(du|s),
\]
and $\frac{P_{\widetilde{U}|S}(\cdot|s)}{Q(j,s)}$ denotes the conditional
distribution of $\widetilde{U}$ given $(S,\widehat{U})$.
Moreover since for all $\mathsf{A} \in \mathcal{B}$, and all $j \neq *$, we have 
\[
 P(S \in \mathsf{A}, \widehat{U}=j)=  \int_{\mathsf{A} \cap \mathsf{E}_S^c} P_S(ds) Q(j,s),
\]
one can now look at
\begin{align}
\widetilde{P}(k|s,j) \triangleq 
\int_{\mathsf{A}(j) \backslash \{c\}} \frac{P_{\widetilde{U}|S}(du|s)}{Q(j,s)}  \int_{\mathsf{A}(k)} P_{Y|US}(dy|u,s)
\label{eq:tilde_transition_probability}
\end{align}
as a transition probability (channel) with input $\widehat{U}$ and
$S$ and  output $\widehat{Y}$.
% \begin{figure}[!htb]
%     \begin{center}
%     \scalebox{1.8}{\includegraphics[width=4.5cm]{figures/CCSI_distr}}
%     \end{center}
%      \caption{ Channel with State.}
% \label{fig:ccsi_distr}
% \end{figure}
For the case $j=*$, we have
\begin{align}
P(S \in \mathsf{A}, \widehat{U}=j, \widehat{Y}=k) 
&= \int_{\mathsf{A}} P_S(ds) P_{\widetilde{U}|S}(\{c\}|s) \int_{\mathsf{A}(k)}
P_{Y|US}(dy|c,s) \nn \\
&= \int_{\mathsf{A}} P_S(ds) Q(j,s)   \int_{\mathsf{A}(k)}
P_{Y|US}(c,s,dy), \nn 
\end{align}
where 
\[
Q(j,s)=P_{\widetilde{U}|S}(\{c\}|s).
\]
Note that $Q(*,s)=1$ if $S \in \mathsf{E}_S$. Moreover, 
note that conditioned on  $\widehat{U}=*$, and $S=s$, we have $\widetilde{U}=c$
with probability $1$, and
\begin{align}
\widetilde{P}(k|s,*) \triangleq   \int_{\mathsf{A}(k)} P_{Y|US}(dy|c,s). 
\label{eq:tilde_transition_probability1}
\end{align}

When $S \in \mathsf{E}_S^c$, we see that $\widehat{U}$ is a quantized
version of $\widetilde{U}$ except that the number $c$ is mapped to a separate
symbol $*$. When $S \in \mathsf{E}_S$, then $\widetilde{U}=c$ and
$\widehat{U}=*$ with probability $1$. 
Hence, we have  
\beq
I(\widehat{U};S) \leq I(\widetilde{U};S)
\label{eq:ccsi_cont_cond_4}
\eeq
and  for any $\e>0$,
\begin{equation}
I(\widehat{U};\widehat{Y}) \geq  I(\widetilde{U};\widetilde{Y})-\e, \ \
\label{eq:ccsi_cont_cond_5}
\end{equation}
for infinitely many  $l_Y,u_Y$ (and approaching $\infty$) and all sufficiently large $n_U,n_Y$ using Theorems
\ref{thm:quantize0} and \ref{thm:quantize}. 

We further define  $\tilde{\kappa}(s,*) \triangleq \tilde{\kappa}(s,c)$ for
all $s$. The expected cost satisfies: 
\begin{align}
\mathbb{E} \tilde{\kappa}(S,\widehat{U}) &= \int_{\mathsf{E}_S^c} P_S(ds)
\sum_{j} Q(j,s) \tilde{\kappa}(s,j) + \int_{\mathsf{E}_S} P_S(ds)
\tilde{\kappa}(s,c) \nn \\
&\overset{(a)}{\leq}  \int_{\mathsf{E}_S^c} P_S(ds)
\int \tilde{\kappa}(s,u) P_{\widetilde{U}|S}(s,u) + \e+\e, 
\label{eq:ccsi_cont_cond_5a}
\end{align}
for any $\e>0$ and
 all sufficiently large $\ell_S,u_S,n_U$, where (a) follows by noting
(i) that $P_{S\widehat{U}}$ converges in distribution to $P_{S
  \widetilde{U}}$, and $\tilde{\kappa}$ is bounded and continuous
over $\mathsf{E}_S^c \times \mathsf{E}_U^c$, and (ii) $\mathbb{E}\tilde{\kappa}(s,c) < \infty$.

\noindent \textbf{Step 3 (Discretizing S):}
Now we discretize the state $S$ with parameters $n_S$ to get $\widehat{S}$. 
If $S \in \mathsf{E}_S$, then let $\widehat{S}=*$, otherwise,
$\widehat{S}=Q_{n_S}(C_{\ell_S,u_S}(S))$. 
%We follow the approach we took in source coding in the encoding part. 
Consider $\widehat{S}$ and $\widehat{U}$. Clearly
these two random variables are bounded, and this implies that 
$\tilde{\kappa}(\widehat{S}$,$\widehat{U})$ is bounded as well.
Note that the joint distribution of the tuple
$(S,\widetilde{U},\widehat{S},\widehat{U})$ can be characterized
using $P_{S,\widetilde{U}}$ and the functions
$\widehat{S}=Q_{n_S}(C_{\ell_S,u_S}(S))$, if $S \in \mathsf{E}_S^c$ and
$\widehat{S}=*$ otherwise,
 and $\widehat{U}=Q_{n_U}(\widetilde{U})$ if $\widetilde{U} \neq c$, and 
$\widehat{U}=*$ otherwise. 
In general, the triple
$(S,\widehat{S},\widehat{U})$ does  not satisfy the Markov
chain $S - \widehat{S} - \widehat{U}$, which is needed for the encoding operation described in the sequel. 
Toward addressing this, we consider the following triple of
random variables $(S,\widehat{S}, \widehat{U}')$, where 
we have $S - \widehat{S} - \widehat{U}'$, and
$(\widehat{S},\widehat{U}')$
has the same distribution as $(\widehat{S},\widehat{U})$.
Recall that conditioned on the event $\widehat{S}=*$, the random
variables $S$ and $\widehat{U}$ are independent (in fact
$\widehat{U}=*$ with probability $1$). Hence on this
event there will be no difference between $\widehat{U}$ and
$\widehat{U}'$. In other words, we have 
\[
P(S \in \mathsf{E}_S, \widehat{S}=*,\widehat{U}'=*)=P(S \in \mathsf{E}_S).
\]

Since $I(\widehat{S};\widehat{U})=I(\widehat{S};\widehat{U}')$, using the data processing inequality,
we have 
\begin{equation}
I(\widehat{S};\widehat{U}')= I(\widehat{S};\widehat{U}) \leq  I(S;\widehat{U}). \ \
\label{eq:ccs_cont_cond_6}
\end{equation}
For the cost function we define \[\tilde{\kappa}(*,*) \triangleq \frac{\int_{\mathsf{E}_S} P_S(ds)
\tilde{\kappa}(s,c)}{\int_{\mathsf{E}_S} P_S(ds)}.\]
Now the cost function satisfies: 
\begin{align}
\mathbb{E}\tilde{\kappa}(\widehat{S};\widehat{U}') &=\sum_{i \neq *} \sum_{j} P(\widehat{S}=i,\widehat{U}=j) \tilde{\kappa}(i,j) 
+ P(\widehat{S}=*, \widehat{U}=*)\tilde{\kappa}(*,*) \nn
\\&
=\sum_{i \neq *} \sum_{j} P(\widehat{S}=i,\widehat{U}=j) \tilde{\kappa}(i,j) 
+ \int_{\mathsf{E}_S} P_S(ds) \tilde{\kappa}(s,c) \nn \\  
&\overset{(a)}{\leq} \int_{\mathsf{E}_S} P_{S}(ds) \sum_j Q(j,s) \tilde{\kappa}(s,j) 
+\e+  \int_{\mathsf{E}_S} P_S(ds) \tilde{\kappa}(s,c),  \nn \\
&= \mathbb{E} \tilde{\kappa}(S,\widehat{U})+ \e,
\end{align}
where in (a) we have used the uniform continuity of $\tilde{\kappa}(\cdot,\cdot)$ over $\mathsf{E}_S$  and the fact that $\widehat{S}$ converges to $S$ in distribution using Lemma 
\ref{lem:conv_distribution}.

At this point we have the triple $(S,$ $\widehat{S}$,
$\widehat{U}')$ along with the Markov chain $S -
\widehat{S} - \widehat{U}'$. Let us denote the
conditional distribution of $\widehat{U}'$ given $S$ as 
$\widehat{Q}(j,s)$ for all $j$ and $s$. 
Consider the following
channel with input
$(S,\widehat{U}')$ and output
$\widehat{Y}'$ and transition probability $\widetilde{P}$ given
in \eqref{eq:tilde_transition_probability} and \eqref{eq:tilde_transition_probability1}. In other words, the joint
distribution of the triple
$(S,\widehat{U}',\widehat{Y}')$ is given by
for all $\mathsf{A} \in \mathcal{B}$ and all $j \neq *$ and $k$, 
\begin{align*}
&P(S \in \mathsf{A}, \widehat{U}'=j, \widehat{Y}'=k) 
=
\\&
=\int_{\mathsf{A} \cap \mathsf{E}_S^c} P_S(ds) P( \widehat{U}'=j, \widehat{Y}'=k|s)
\\& 
=\int_{\mathsf{A} \cap \mathsf{E}_S^c} P_S(ds) \widehat{Q}(j,s)  \widetilde{P}( \widehat{Y}'=k|s,\widehat{U}'=j)
\\&
=\int_{\mathsf{A} \cap \mathsf{E}_S^c} P_S(ds) \widehat{Q}(j,s) \int_{\mathsf{A}(j) \backslash \{c\}}
{P_{\widetilde{U}|S}(du|s, U\in \mathsf{A}(j))}   {P}_{\widehat{Y}'|U,S}(k|u,s)
\\&
=\int_{\mathsf{A} \cap \mathsf{E}_S^c} P_S(ds) \widehat{Q}(j,s) \int_{\mathsf{A}(j) \backslash \{c\}}
\frac{P_{\widetilde{U}|S}(du|s)}{Q(j,s)}  \int_{\mathsf{A}(k)} P_{Y|US}(dy|u,s),
\end{align*}
and 
\[
P(S \in \mathsf{A}, \widehat{U}'=*, \widehat{Y}'=k) 
=  \int_\mathsf{A} P_S(ds) \widehat{Q}(*,s) \int_{\mathsf{A}(k)} P_{Y|US}(dy|c,s),
\]
In summary, we took $P_{S \widehat{U}, \widehat{Y}}$
and modified it to obtain $P_{S \widehat{U}'
  \widehat{Y}'}$. 
Now we make the observation that 
\[
I(S;\widehat{U}|\widehat{S}) =
I(S;\widehat{U})-I(\widehat{S};\widehat{U}) < \e
\]
for all sufficiently large $\ell_S,u_S,n_S$ due to lower semi-continuity of mutual information and data processing inequality. Using Lemma
\ref{lem:mc_forced1}, and taking $(S,\widehat{S},\widehat{U})$ as the variables $(C,B,A)$, respectively, and taking $D,E$ to be trivial in the lemma, 
we note that 
\[
D(P_{S \widehat{U}} \| P_{S \widehat{U}'}) \leq \e, 
\]
and consequently, 
\[
D(P_{\widehat{U} \widehat{Y}} \|
P_{\widehat{U}',\widehat{Y}'})  \leq D(P_{S \widehat{U} \widehat{Y}} \|
P_{S \widehat{U}',\widehat{Y}'}) =
D(P_{S \widehat{U}}|| 
P_{S \widehat{U}'})+
D(P_{\widehat{Y}|S \widehat{U} } \|
P_{\widehat{Y}'|S \widehat{U}'})\stackrel{(a)}{=}
D(P_{S
  \widehat{U}} \| P_{S \widehat{U}'}) \leq \e, 
\]
where in (a) we have used the fact that the channel from $(\widehat{U},S)$ to $\widehat{Y}$ is the same as the one from $(\widehat{U}',S)$ to $\widehat{Y}'$, and the last inequality holds 
for all sufficiently large $\ell_S,u_S,n_S$. 
Hence, using lower semi-continuity of mututal information, we get for any $\e>0$, 
\beq
I(\widehat{U}';\widehat{Y}') \geq
I(\widehat{U};\widehat{Y}) -\e,
\label{eq:ccsi_cont_cond_7}
\eeq
for all sufficiently large $\ell_S,u_S,n_S$. 

\noindent \textbf{Step 4 (Discrete Coding Scheme):} 
We give a description of the coding scheme that is
based on the above observations and the direct coding theorem of
channel coding with side information in the discrete case.  
We can use  the Gelfand-Pinsker Theorem \cite{Gelfand} to show the existence of a transmission system.
Recall that we now have a discrete distribution that relates three 
random variables $\widehat{S}$, $\widehat{U}'$ and
$\widehat{Y}'$. Moreover we have 
$P(\widehat{U}'=*|\widehat{S}=*)=1$.
We use the Gelfand-Pinsker Theorem \cite{Gelfand} and show that the rate-cost pair
given  by 
\[
(I(\widehat{U}';\widehat{Y}')-I(\widehat{S};\widehat{U}'),\mathbb{E}\tilde{\kappa}(\widehat{S},\widehat{U}'))
\]
 is achievable for the synthesized discrete channel. 
In other words, for every $\e>0$, for all sufficiently large $m$, there exists a transmission system (say
$\mbox{TS}_d$) with parameter $(m,\Theta)$ for transmission of information such that 
\beq
\frac{1}{m} \log \Theta \geq
I(\widehat{U}';\widehat{Y}')-I(\widehat{S};\widehat{U}')-\e,
\label{eq:ccsi_cont_cond_8}
\eeq
and 
\begin{equation}
\frac{1}{m}\sum_{i=1}^m \mathbb{E}\tilde{\kappa}(\widehat{S}_{i},\widehat{U}_{i})
\leq \mathbb{E}\kappa(\widehat{S},\widehat{U}')+\e,
\label{eq:ccsi_cont_cond_9}
\end{equation}
where $\widehat{U}^{'m}\triangleq e_1(M,\widehat{S}^m)$ denote the input vector of the discrete
channel, $M$ is the random message, and $e_1$ is the first part of the
encoder of $\mbox{TS}_d$\footnote{Note that $\widehat{S}^m$ is an IID extension of $\widehat{S}$ , whereas $\widehat{U}^{'m}$ is not an IID extension of $\widehat{U}'$ as it is a codeword and has memory. }. It should be noted that we
only need the first part of the encoder of $\mbox{TS}_d$ which produces
the $U$-codeword. The second part which produces the channel input $X$
is not needed here.  It is worth mentioning here that with arbitrarily
high probability the transmission system $\mbox{TS}_d$ will ensure
that $\widehat{U}^{'m}$ is jointly typical with $\widehat{S}^m$. 
When they are jointly typical, if for any $1 \leq i \leq m$, 
we find that $\widehat{S}_{i}=*$, then we also find that $\widehat{U}'_{i}=*$. 
In the complement event (where the encoder declares an encoding error
event), $\mbox{TS}_d$ does not care what the channel is.  

For the continuous-valued channel we obtain an $(m,\Theta)$  transmission system 
$\mbox{TS}_c$ by using $\mbox{TS}_d$. The encoder of  $\mbox{TS}_c$
has three stages. The first stage clips and quantizes $S^m$ to obtain
$\widehat{S}^m$.  The second stage uses the encoder of
$\mbox{TS}_d$ to generate $\widehat{U}^{'m}$. The third stage generates
$\widetilde{U}'^m$ from $S^m$ and $\widehat{U}^m$ using a random
transformation corresponding to the $m$-product of the following
synthetic channel: $\frac{P_{\widetilde{U}|S}(\cdot|s)}{Q(j,s)}$.
Note that $\widetilde{U}'$ has the same distribution as 
$\widetilde{U}$ created earlier.
In other words given $S_i=s$ and $\widehat{U}'_{i}=j$, and $j \neq *$, the probability of
the event $\widetilde{U}'_{i} \in \mathsf{A}$ is given by 
\[
\int_{\mathsf{A} \cap (\mathsf{A}(j) \backslash \{c\})} \frac{P_{\widetilde{U}|S}(du|s)}{Q(j,s)},
\]
for any borel set $\mathsf{A}$. Given $S_i=s$ and $\widehat{U}'_{i}=*$,  we have 
$\widetilde{U}'_{i}=c$ with probability $1$.
The channel input is given by ${X}^m=g^m(S^m,\widetilde{U}'^m)$.
When an encoding error is declared by
$\mbox{TS}_d$ we need to make sure we let $\widehat{U}^{'m}=[c,c,\ldots,c]$, i.e., 
all-$c$ vector. 
Let the channel output be given by $\widetilde{Y}^m$.

Next let us consider the decoder. The decoder of $TS_c$ has again two
stages. The first stage discretizes the channel output to produce
$\widehat{Y}^m$. The second stage is the decoder of
$\mbox{TS}_d$. We need to make sure that the decoder of $\mbox{TS}_d$
sees the channel that was promised. First of all the state
$\widehat{S}^m$ is memoryless and, of course,  each sample has the promised
distribution $P_{\widehat{S}}$. The channel seen by the
$\mbox{TS}_d$ is given by for all $i \neq *$, $j \neq *$, and all $k$
\begin{align}
&P(\widehat{Y}=k|\widehat{S}=i,\widehat{U}'=j) \nn \\
&= \int_{\mathsf{A}_{\ell_S,u_S,n_S}(i) \backslash \mathsf{E}_S }
\frac{P_S(ds)}{P_S(\mathsf{A}_{\ell_S,u_S,n_S}(i) \backslash \mathsf{E}_S)}
\int_{\mathsf{A}_{\ell_U,u_U,n_U}(j) \backslash \{c\}} \frac{P_{\widetilde{U}|S}(s,du)}{Q(j,s)}
\int_{\mathsf{A}_{\ell_Y,u_Y,n_Y}(k)} P_{Y|US}(u,s,dy),
\end{align}
\begin{align}
P(\widehat{Y}=k|\widehat{S}=i,\widehat{U}'=*)  &= \int_{\mathsf{A}_{\ell_S,u_S,n_S}(i) \backslash \mathsf{E}_S }
\frac{P_S(ds)}{P_S(\mathsf{A}_{\ell_S,u_S,n_S}(i) \backslash \mathsf{E}_S)}
\int_{\mathsf{A}_{\ell_Y,u_Y,n_Y}(k)}P_{Y|US}(dy|c,s),
\end{align}
\begin{align}
P(\widehat{Y}=k|\widehat{S}=*,\widehat{U}'=*) &= \int_{\mathsf{E}_S }
\frac{P_S(ds)}{P_S(\mathsf{E}_S)}
\int_{\mathsf{A}_{\ell_Y,u_Y,n_Y}(k)} P_{Y|US}(dy|c,s).
\end{align}
Hence the decoder of $\mbox{TS}_d$ sees the intended channel. The
message is reconstructed reliably at the decoder. 
From this we see that the rate given by 
$I(\widehat{U}';\widehat{Y})-I(\widehat{S};\widehat{U}')$ 
is achievable. 

\noindent \textbf{Step 5 (Checking the cost constraint):}
Fix $\e>0$. Assume that the ten parameters
$\ell_S,\ell_U,\ell_Y,u_S,u_U,u_Y,$ $n_S,n_U,n_Y$ and $m$  are such that
\begin{itemize}
\item  (\ref{eq:ccsi_cont_cond_1}-\ref{eq:ccsi_cont_cond_3}), and (\ref{eq:ccsi_cont_cond_4}-\ref{eq:ccsi_cont_cond_9}) 
  are satisfied.
\item $\int_{\mathsf{E}_S} \tilde{\kappa}(s,c) P_S(ds) <\e$.
\item 
$|\tilde{\kappa}(s_1,u_2)-\tilde{\kappa}(s_2,u_2)| \leq \e$, 
for all (a)  $s_1,s_2 \in [-\ell_S,u_S]$, (b) $u_1,u_2 \in [-\ell_U,u_U]$,
(c) $|s_1-s_2| \leq \frac{1}{2^{n_S}}$, and 
(d) $|u_1-u_2| \leq \frac{1}{2^{n_U}}$.
%  \footnote{We have taken a factor of $2$ here for a
%   reason. For $\tilde{X}$,  the first cell starts at $-l_1$, and the
%   last cell ends at $u_1$. So the widths of the first and the last
%   cell can be as large as $2/2^{n_1}$. }
\end{itemize}

We see that
\[
\frac{1}{m} \log \Theta \geq I(U;Y)-I(U;S) - 5 \e.
\]
Next we ensure that the cost constraint is satisfied. 
Consider for any $t \in
\{1,2,\ldots,m\}$, 
\begin{align}
\mathbb{E}\kappa(S_t,\widetilde{U}'_{t}) &=  
\sum_{i \neq *} \sum_{j \neq *} 
 P(\widehat{U}'_{t}=j, \widehat{S}_{t}=i) 
 \int_{\mathsf{A}(i)
   \backslash \mathsf{E}_S } \int_{\mathsf{A}(j) \backslash \{c\}} \tilde{\kappa}(s,u)
 \frac{P_{S}(ds)}{P(S \in \mathsf{A}(i) \backslash \mathsf{E}_S)} \frac{P_{\widetilde{U}|S}(du|s)}{Q(j,s)}  \nn \\
&  + \sum_{i \neq *}  
 P(\widehat{U}'_{t}=*, \widehat{S}_{t}=i) \int_{\mathsf{A}(i)
   \backslash \mathsf{E}_S } \tilde{\kappa}(s,c)
 \frac{P_{S}(ds)}{P(S \in \mathsf{A}(i) \backslash \mathsf{E}_S)} 
 +  P(\widehat{S}_{t}=*) \int_{\mathsf{E}_S } \tilde{\kappa}(s,c)
 \frac{P_{S}(ds)}{P(S \in \mathsf{E}_S)}   \nn \\
&\leq  \sum_{i \neq *} \sum_{j \neq *} 
 P(\widehat{U}'_{t}=j, \widehat{S}=i)
 (\tilde{\kappa}(i,j)+\e) +  \sum_{i \neq *}  P(\widehat{U}'_{t}=*,
 \widehat{S}_{t}=i) (\tilde{\kappa}(i,c)+\e) \nn \\
&  +\int_{\mathsf{E}_S}
 \tilde{\kappa}(s,c)P_s(ds) \nn \\
&= \mathbb{E} \kappa(\widehat{S}_{t},\widehat{U}'_{t})+3\e.
\end{align}
Hence, we have 
\[
\frac{1}{m} \sum_{i=1}^m \mathbb{E}\kappa(S_{t},\widetilde{U}'_{i}) \leq
\frac{1}{m} \sum_{i=1}^m
\mathbb{E}\kappa(\widehat{S}_{t},\widehat{U}'_{t})+3\e \leq \mathbb{E}\tilde{\kappa}(S,U)+7\e.
\]
The desired achievability result follows.

\qed

\section{Proof of Theorem \ref{th:7}}
\label{App:th:7}
\begin{proof}
\textbf{Step 1 (Clipping).}
In this step, we show that for all $\zeta>0$ there exists $\ell,\ell'$ for which we have:
\begin{align}
        \label{eq:2:3.1.1}&
       | I(X;  \widetilde{U}_{\ell})- I(X;U)|\leq \zeta\\
    \label{eq:2:3.1.2}&
   | I(Y; \widetilde{V}_{\ell'})-  I(Y;V)|\leq \zeta\\ 
    \label{eq:2:3.1.3}&
    |I(\widetilde{U}_{\ell}+\widetilde{V}_{\ell'}; \widetilde{U}_{\ell})- I(U+V;U)|\leq \zeta
    \\ 
    \label{eq:2:3.1.4}&
    |I( \widetilde{U}_{\ell}+ \widetilde{V}_{\ell'}; \widetilde{V}_{\ell'} )- I(U+V;V)|\leq \zeta
    \\
    \label{eq:2:3.1.5}&
|I( \widetilde{U}_{\ell}; \widetilde{V}_{\ell'})-  I(U;V)|\leq \zeta,
\end{align}

 Equations \eqref{eq:2:3.1.1}, \eqref{eq:2:3.1.2}, and \eqref{eq:2:3.1.5} follow from the data processing inequality and lower semi-continuity of mutual information. We will show the result for \eqref{eq:2:3.1.3}. The proof for \eqref{eq:2:3.1.4} follows a similar argument.

% For Equation \eqref{eq:2:3.1.1}, define $A_{U,\ell}$ be indicator that $U\in [-\ell,\ell]$ and note that:
% \begin{align*}
% &I(X;\widetilde{U}_\ell) 
% =I(X;\widetilde{U}_\ell,A_{U,\ell})- 
% I(X;A_{U,\ell}|\widetilde{U}_\ell)
% \\& \geq P(A_{U,\ell}=1)I(X;\widetilde{U}_\ell|A_{U,\ell}=1)-H(A_{U,\ell})
% \\& \geq P(A_{U,\ell}=1)I(X;U|A_{U,\ell}=1)-H(A_{U,\ell})
% \\&
% \geq I(X;U)-  P(A_{U,\ell}=0)I(X;U|A_{U,\ell}=0)  -2H(A_{U,\ell}).
% \end{align*}

% It can be observed that $P(A_{U,\ell}=1)\to 1$ as $\ell\to \infty$. So,  $  P(A_{U,\ell}=0)I(X;U|A_{U,\ell}=0)+2H(A_{U,\ell}) \to 0$ as $\ell\to  \infty$. Furthermore, $I(X;\tilde{U}_\ell)\leq I(X;U)$ by the data processing inequality. Hence, $I(X;\widetilde{U}_\ell) \to I(X;U)$ as $\ell\to \infty$ for all $\zeta>0$.      Similar arguments can be provided for Equations \eqref{eq:2:3.1.2} and \eqref{eq:2:3.1.5}.
 Define $A_{U,\ell}$ as the indicator of $U\in [-\ell,\ell]$ and 
 $B_{V,\ell'}$ as the indicator of $V\in [-\ell',\ell']$. 
%The proof follows by taking %$\ell,\ell'>k'M$ and noting that  $I(U;U+V|E_{U,k'}=1)= I(\widetilde{U}_{\ell};\widetilde{U}_{\ell}+V|E_{U,k'}=1)$ since $\widetilde{U}_\ell=U$ when $E_{U,k'}=1$, and
Consider the following arguments:
\begin{align*}
%\\&I(A_{U,\ell},\widetilde{U}_\ell;\widetilde{U}_\ell+V)
%+\epsilon\geq I(U_{\ell};U_{\ell}+V|A_{U,\ell}=1),
% \\&I(A_{U,\ell},\widetilde{U}_\ell;\widetilde{U}_\ell+V)\leq  (1-\zeta)I(U_{\ell};U_{\ell}+V|A_{U,\ell}=1)
&I(\widetilde{U}_\ell;\widetilde{U}_\ell+\widetilde{V}_{\ell'})\leq I(A_{U,\ell},\widetilde{U}_\ell;\widetilde{U}_\ell+\widetilde{V}_{\ell'})
\\&
\leq 
H(A_{U,\ell})+ P(A_{U,\ell}=0)I(\widetilde{U}_{\ell};\widetilde{U}_{\ell}+\widetilde{V}_{\ell'}|A_{U,\ell}=0)+ P(A_{U,\ell}=1)I(\widetilde{U}_{\ell};\widetilde{U}_{\ell}+\widetilde{V}_{\ell'}|A_{U,\ell}=1)
\\&
\leq 
H(A_{U,\ell})+ P(A_{U,\ell}=0)I(\widetilde{U}_{\ell};\widetilde{U}_{\ell}+\widetilde{V}_{\ell'}, B_{V,\ell'}|A_{U,\ell}=0)+ P(A_{U,\ell}=1)I(\widetilde{U}_{\ell};\widetilde{U}_{\ell}+\widetilde{V}_{\ell'}|A_{U,\ell}=1)
\\& \leq 
H(A_{U,\ell})+ H(B_{V,\ell'})+ P(B_{V,\ell'}=0|A_{U,\ell}=0)P(A_{U,\ell}=0)I(\widetilde{U}_{\ell};\widetilde{U}_{\ell}
+\widetilde{V}_{\ell'}|A_{U,\ell}=0,B_{V,\ell'}=0)
\\&\hspace{0.2in}+P(B_{V,\ell'}=1|A_{U,\ell}=0)P(A_{U,\ell}=0)I(\widetilde{U}_{\ell};\widetilde{U}_{\ell}+\widetilde{V}_{\ell'}|A_{U,\ell}=0,B_{V,\ell'}=1)
\\& \hspace{0.2in} +P(A_{U,\ell}=1)I(\widetilde{U}_\ell;\widetilde{U}_\ell+\widetilde{V}_{\ell'}|A_{U,\ell}=1)
\\& \leq 
H(A_{U,\ell})+ H(B_{V,\ell'})+ P(B_{V,\ell'}=0|A_{U,\ell}=0)P(A_{U,\ell}=0)I(U';U'
+V'|A_{U,\ell}=0,B_{V,\ell'}=0)
\\& \hspace{0.2in} +P(B_{V,\ell'}=1|A_{U,\ell}=0)P(A_{U,\ell}=0)I(U';U'+V|A_{U,\ell}=0,B_{V,\ell'}=1)
\\& \hspace{0.2in}+P(A_{U,\ell}=1)I(U;U+\widetilde{V}_{\ell'}|A_{U,\ell}=1)
\end{align*}
\begin{align*}
& \stackrel{(a)}{\leq} 
P(B_{V,\ell'}=0|A_{U,\ell}=0)P(A_{U,\ell}=0)I(U';U'
+V'|A_{U,\ell}=0,B_{V,\ell'}=0)
\\& \hspace{0.2in} +P(B_{V,\ell'}=1|A_{U,\ell}=0)P(A_{U,\ell}=0)I(U';U'+V|A_{U,\ell}=0,B_{V,\ell'}=1)
\\& \hspace{0.2in}+P(A_{U,\ell}=1)I(U;U+\widetilde{V}_{\ell'}|A_{U,\ell}=1)+2\eta_1
\\&
\stackrel{(b)}{\leq} I(U;U+\widetilde{V}_{\ell'}|A_{U,\ell}=1)+ \eta_{1} +\gamma_1
\\& \stackrel{(c)}{\leq} 
I(A_{U,\ell};U+\widetilde{V}_{\ell'})+P(A_{U,\ell}=1)I(U;U+\widetilde{V}_{\ell'}|A_{U,\ell}=1)+
\\&
+P(A_{U,\ell}=0)I(U;U+\widetilde{V}_{\ell'}|A_{U,\ell}=0)
 + 2\eta_{1} +\gamma_1
\\& 
=I(U;U+\widetilde{V}_{\ell'})+ 2\eta_{1} +\gamma_1,
\end{align*}
where in (a) we have taken $\ell$ and $\ell'$ large enough such that $H(A_{U,\ell})<\eta_1$ and $H(B_{V,\ell'})<\eta_1$, and in (b) we have defined
\begin{align*}
&\gamma_1\triangleq 
P(A_{U,\ell}=0,B_{V,\ell'}=0)I(U';U'
+V'|A_{U,\ell}=0,B_{V,\ell'}=0)
\\&\qquad  +P(A_{U,\ell}=0,B_{V,\ell'}=1)I(U';U'+V|A_{U,\ell}=0,B_{V,\ell'}=1).
\end{align*}
We consider the first mutual information term as follows. 
\begin{align*}
I(U';U'+V'|A_{U,\ell}=0, B_{U,\ell'}=0)= 
I(U';U'+V') = h(U'+V')-h(V')
\end{align*}
We have:
\begin{align*}
&    h(V') =h(V|B_{V,\ell'}=1)
= \frac{1}{P(B_{V,\ell'}=1)}( h(V)- P(B_{V,\ell'}=0)h(V|B_{V,\ell'}=0))
\end{align*}
and $h(V|B_{V,\ell'}=0)$ can be bounded from above as follows:
\begin{align*}
& h(V|B_{V,\ell'}=0) \leq \frac{1}{2}\log{2\pi e \mbox{Var}(V|B_{V,\ell'}=0)}\stackrel{(a)}\leq
\frac{1}{2}\log{2\pi e \frac{1}{P(B_{V,\ell'}=0)} \mbox{Var}(V)},
\end{align*}
where in (a) we have used the law of total variance.
This implies that $h(V') \geq h(V)$ as $\ell\to \infty$. So, $h(V')$ is bounded from below. Next, we bound $h(U'+V')$ from above:
\begin{align*}
& h(U'+V')\leq \frac{1}{2}\log{2\pi e \mbox{Var}(U'+V')},
\\& \mbox{Var}(U'+V')= \mbox{Var}(U')+\mbox{Var}(V')= 
\mbox{Var}(U|A_{U,\ell}=1)+\mbox{Var}(V|B_{V,\ell'}=1)
\\\qquad &\leq
\frac{1}{P(A_{U,\ell}=1)}\mbox{Var}(U)+ 
\frac{1}{P(B_{V,\ell'}=1)}\mbox{Var}(V)
\leq \frac{1}{1-\eta_1} (\mbox{Var}(U)+\mbox{Var}(V)).
\end{align*}
As a result, we have
\begin{align*}
& 
P(A_{U,\ell}=0,B_{V,\ell'}=0)I(U';U'
+V'|A_{U,\ell}=0,B_{V,\ell'}=0)
\\&\leq  P(A_{U,\ell}=0,B_{V,\ell'}=0)\left( 
 \frac{1}{2}\log{2\pi e  \frac{1}{1-\eta_1} \left(\mbox{Var}(U)+\mbox{Var}(V)\right)})
 - h(V)\right)\to 0
\end{align*}
as $\ell \to \infty$. 

Next, we consider the second mutual information term $I(U';U'+V|A_{U,\ell}=0,B_{V,\ell'}=1)$.
Note that:
\begin{align*}
    & h(U'+V|A_{U,\ell}=0,B_{V,\ell'}=1)
     \leq \frac{1}{2}\log{2\pi e \mbox{Var}(U'+V|A_{U,\ell}=0,B_{V,\ell'}=1)}
     \\& = \frac{1}{2}\log{2\pi e \mbox(\mbox{Var}(U')+\mbox{Var}(V|A_{U,\ell}=0,B_{V,\ell'}=1))}
     \\&= \frac{1}{2}\log{2\pi e( \mbox{Var}(U|A_{U,\ell}=1)}+\mbox{Var}(V|A_{U,\ell}=0,B_{V,\ell'}=1))
     \\& \leq 
     \frac{1}{2}\log{2\pi e\left(
     \frac{\mbox{Var}(U)}{P(A_{U,\ell}=1)}+\frac{\mbox{Var}(V)}{P(A_{U,\ell}=0,B_{V,\ell'}=1)}\right)}
\end{align*}
So, $P(A_{U,\ell}=0,B_{V,\ell'}=1)h(U'+V|A_{U,\ell}=0,B_{V,\ell'}=1)\to 0$ as $\ell,\ell'\to \infty$. Also, $P(A_{U,\ell}=0,B_{V,\ell'}=1)h(V|A_{U,\ell}=0,B_{V,\ell'}=1) \to 0$ as $\ell,\ell\to \infty$ as shown above. Hence, $P(A_{U,\ell}=0,B_{V,\ell'}=1)I(U';U'+V|A_{U,\ell}=0,B_{V,\ell'}=1)\to 0$ as $\ell,\ell' \to \infty$ and consequently $\gamma_1\to 0$ as $\ell,\ell' \to \infty$.  
% For a fixed $\ell$ and $\ell'$ as chosen above, note that conditioned on the event $A_{U,\ell}=0,B_{V,\ell'}=1$, the random variable $V$ is bounded with support $[-\ell,\ell]$. Similarly, given $A_{U,\ell}=1,B_{V,\ell'}=0$ the random variable $U$ is bounded with support $[-\ell,\ell]$. Using Lemma \ref{lem:smoothing}, there exist  $\zeta=\zeta'$ sufficiently small, such that $I(U';U'+V|A_{U,\ell}=0,B_{V,\ell'}=1)\leq \eta_1 $ and $I(V';V'+U|A_{U,\ell}=1,B_{V,\ell'}=0)\leq \eta_1 $. Furthermore, note that for $\zeta=\zeta'$, we have $I(U';U'+V')=I(V';U'+V')=0.86\ldots$ which can be computed explicitly. 
% As a result, $\gamma_1,\gamma_2 \leq 2\eta_1$. Consequently, we have:
% \begin{align*}
% &I(\widetilde{U}_{\ell};\widetilde{U}_\ell+\widetilde{V}_{\ell'})\leq I(U;U+\widetilde{V}_{\ell'})+4\eta_1,\\
% &I(\widetilde{V}_{\ell'};\widetilde{V}_{\ell'}+\widetilde{U}_\ell)\leq I(V;V+\widetilde{U}_\ell)+4\eta_1.
% \end{align*}
Next we focus on $I(U;U+\widetilde{V}_{\ell'})$:
\begin{align*}
 & I(U;U+\widetilde{V}_{\ell'})
  \leq I(U;U+\widetilde{V}_{\ell'},B_{V,\ell'}) \\&
  \leq H(B_{V,\ell'})+
  P(B_{V,\ell'}=0)I(U;U+V'|B_{V,\ell'}=0)
  +  P(B_{V,\ell'}=1)I(U;U+\widetilde{V}_{\ell'}|B_{V,\ell'}=1)
  \\&\leq P(B_{V,\ell'}=1)I(U;U+\widetilde{V}_{\ell'}|B_{V,\ell'}=1)+\gamma_2
\end{align*}
where
\begin{align*}
    \gamma_2= H(B_{V,\ell'})+
  P(B_{V,\ell'}=0)I(U;U+V'|B_{V,\ell'}=0).
\end{align*}
Note that:
\begin{align*}
& h(U+V'|B_{V,\ell'}=0)\leq 
  \frac{1}{2}\log 2\pi e \mbox{Var}(U+V'|B_{V,\ell'}=0)
  \\& \leq   \frac{1}{2}\log 2\pi e( \mbox{Var}(U|B_{V,\ell'}=0)+\mbox{Var}(V'|B_{V,\ell'}=0))
  \\& \leq 
  \frac{1}{2}\log 2\pi e\left( \frac{\mbox{Var}(U)}{P(B_{V,\ell'}=0)}+\frac{\mbox{Var(V)}}{P(B_{V,\ell'}=1)}\right)
\end{align*}
So, $P(B_{V,\ell'}=0)h(U+V'|B_{V,\ell'}=0)\to 0$ as $\ell,\ell' \to \infty$, and in turn using the 
arguments used above regarding $h(V')$ we infer that $\gamma_2\to 0$ as $\ell,\ell'\to \infty$.

\noindent \textbf{Step 2 (Smoothing).} In this step, we show that for all $\gamma>0$ there exists $\epsilon(\gamma)\in \mathbb{R}$ such that for all $\epsilon\leq \epsilon(\gamma)$, we have:
\begin{align}
   &
        |I(X;  \widetilde{U}_{\ell,\epsilon})-I(X;\widetilde{U}_\ell)|\leq \gamma\\
&
   | I(Y; \widetilde{V}_{\ell,\epsilon})- I(Y;\widetilde{V}_\ell)|\leq \gamma\\
&
    |I(\widetilde{U}_{\ell,\epsilon}+\widetilde{V}_{\ell,\epsilon}; \widetilde{U}_{\ell,\epsilon})- I(\widetilde{U}_\ell+\widetilde{V}_\ell;\widetilde{V}_\ell)|\leq \gamma
    \\ 
 &
   |I( \widetilde{U}_{\ell,\epsilon}+ \widetilde{V}_{\ell,\epsilon}; \widetilde{V}_{\ell,\epsilon} )- I(\widetilde{U}_\ell+\widetilde{V}_\ell;\widetilde{V}_\ell)|\leq \gamma
    \\
 &
|I( \widetilde{U}_{\ell,\epsilon}; \widetilde{V}_{\ell,\epsilon})- I(\widetilde{U}_\ell;\widetilde{V}_\ell)|\leq \gamma,
\end{align}
\noindent 
We argue that $F_{X,Y,\widetilde{U}_{\ell,\epsilon},\widetilde{V}_{\ell,\epsilon}} \to F_{X,Y,\widetilde{U}_{\ell},\widetilde{V}_{\ell}}$ as $\epsilon\to 0$. We show convergence for $F_{\widetilde{U}_{\ell,\epsilon}}$. The convergence for the joint distribution follows by similar arguments. To show this,
let $u\in [-\ell,\ell]$ be a point of continuity of $F_{\widetilde{U}_{\ell}}(\cdot)$. Then,
by construction, we have $F_{\widetilde{U}_{\ell,\epsilon}}(u)\to F_{\widetilde{U}_{\ell}}(u)$ as $\epsilon\to 0$ as shown below:
\begin{align*}
   & P(\widetilde{U}_\ell \leq  u-\epsilon)\leq P(\widetilde{U}_{
   \ell,\epsilon}\leq u) \leq     P(\widetilde{U}_\ell\leq u+\epsilon)
   \\& 
      \Rightarrow  P(\widetilde{U}_\ell \leq  u-\epsilon)-F_{\widetilde{U}_{\ell}}(u)\leq P(\widetilde{U}_{
   \ell,\epsilon}\leq u) -F_{\widetilde{U}_\ell}(u)\leq     P(\widetilde{U}_\ell\leq u+\epsilon)-F_{\widetilde{U}_\ell}(u)
      \\&\Rightarrow  
      P(\widetilde{U}_\ell \leq  u-\epsilon)-F_{\widetilde{U}_\ell}(u+\epsilon)\leq P(\widetilde{U}_{
   \ell,\epsilon}\leq u) -F_{\widetilde{U}_{\ell}}(u)\leq     P(\widetilde{U}_\ell\leq u+\epsilon)-F_{\widetilde{U}_{\ell}}(u-\epsilon)
    \\&
\Rightarrow    |F_{\widetilde{U}_{\ell,\epsilon}}(u)- F_{
\widetilde{U}_\ell}(u)| \leq P(u-\epsilon <  \widetilde{U}_{\ell} \leq u + \epsilon).
\end{align*}

This convergence result along with the lower semi-continuity of mutual information implies \eqref{eq:3:5.1.1}, \eqref{eq:3:5.1.2}, and \eqref{eq:3:5.1.5}. 
Towards showing \eqref{eq:3:5.1.3}, note that 
\begin{align*}
    &I(\widetilde{U}_\ell;\widetilde{U}_\ell+\widetilde{V}_\ell)= I(\widetilde{U}_\ell;\widetilde{U}_\ell+\widetilde{V}_\ell|\widetilde{N}_{\ell,\epsilon},\widetilde{N}'_{\ell,\epsilon})
   \\& = I(\widetilde{U}_{\ell,\epsilon};\widetilde{U}_{\ell,\epsilon}+\widetilde{V}_{\ell,\epsilon}|\widetilde{N}_{\ell,\epsilon},\widetilde{N}'_{\ell,\epsilon})
\\&=  I(\widetilde{U}_{\ell,\epsilon},\widetilde{N}_{\ell,\epsilon},\widetilde{N}'_{\ell,\epsilon};\widetilde{U}_{\ell,\epsilon}+\widetilde{V}_{\ell,\epsilon})-
I(\widetilde{N}_{\ell,\epsilon},\widetilde{N}'_{\ell,\epsilon};\widetilde{U}_{\ell,\epsilon}+\widetilde{V}_{\ell,\epsilon})
\\&\geq  I(\widetilde{U}_{\ell,\epsilon},\widetilde{N}_{\ell,\epsilon},\widetilde{N}'_{\ell,\epsilon};\widetilde{U}_{\ell,\epsilon}+\widetilde{V}_{\ell,\epsilon})-
I(\widetilde{N}_{\ell,\epsilon},\widetilde{N}'_{\ell,\epsilon};\widetilde{U}_{\ell,\epsilon},\widetilde{V}_{\ell,\epsilon})
\\&= I(\widetilde{U}_{\ell,\epsilon},\widetilde{N}_{\ell,\epsilon},\widetilde{N}'_{\ell,\epsilon};\widetilde{U}_{\ell,\epsilon}+\widetilde{V}_{\ell,\epsilon})-
I(\widetilde{N}_{\ell,\epsilon},\widetilde{N}'_{\ell,\epsilon};\widetilde{U}_{\ell,\epsilon})
- I(\widetilde{N}_{\ell,\epsilon},\widetilde{N}'_{\ell,\epsilon};\widetilde{V}_{\ell,\epsilon}|\widetilde{U}_{\ell,\epsilon})
\\& = I(\widetilde{U}_{\ell,\epsilon},\widetilde{N}_{\ell,\epsilon},\widetilde{N}'_{\ell,\epsilon};\widetilde{U}_{\ell,\epsilon}+\widetilde{V}_{\ell,\epsilon})-I(\widetilde{N}_{\ell,\epsilon};\widetilde{U}_{\ell,\epsilon})-
I(\widetilde{N}_{\ell,\epsilon},\widetilde{N}'_{\ell,\epsilon};\widetilde{V}_{\ell,\epsilon}|\widetilde{U}_{\ell,\epsilon})
\\& = I(\widetilde{U}_{\ell,\epsilon};\widetilde{U}_{\ell,\epsilon}+\widetilde{V}_{\ell,\epsilon})+ I(\widetilde{N}_{\ell,\epsilon},\widetilde{N}'_{\ell,\epsilon};\widetilde{U}_{\ell,\epsilon}+\widetilde{V}_{\ell,\epsilon}| \widetilde{U}_{\ell,\epsilon} ) -I(\widetilde{N}_{\ell,\epsilon};\widetilde{U}_{\ell,\epsilon})-
I(\widetilde{N}_{\ell,\epsilon},\widetilde{N}'_{\ell,\epsilon};\widetilde{V}_{\ell,\epsilon}|\widetilde{U}_{\ell,\epsilon})
  \\&= 
 I(\widetilde{U}_{\ell,\epsilon};\widetilde{U}_{\ell,\epsilon}+\widetilde{V}_{\ell,\epsilon})-I(\widetilde{N}_{\ell,\epsilon};\widetilde{U}_{\ell,\epsilon}),
\end{align*}
where $I(\widetilde{N}_{\ell,\epsilon};\widetilde{U}_{\ell,\epsilon})\to 0$ as $\epsilon \to 0$ by Lemma \ref{lem:smoothing}. The proof of \eqref{eq:3:5.1.3} follows by lower semi-continuity of mutual information. 
Equation \eqref{eq:3:5.1.4} can be proved using a similar argument. Note that the joint PDF of $(\widetilde{U}_{\ell,\epsilon},\widetilde{V}_{\ell,\epsilon})$ is jointly continuous on a compact support and hence uniformly continuous.

\textbf{Step 3 (Discretization).} 
In this step, we discretize $\widetilde{U}_{\ell,\epsilon}, \widetilde{V}_{\ell,\epsilon}$ to $\widehat{U}_{\ell,\epsilon,n}, \widehat{V}_{\ell,\epsilon,n}$ by applying $Q_n(\cdot)$. We show that:
\begin{align}
        \label{eq:3:5.1.1}&
        |I(X;  \widehat{U}_{\ell,\epsilon,n})-I(X;\widetilde{U}_{\ell,\epsilon})|\leq \gamma\\
    \label{eq:3:5.1.2}&
   | I(Y; \widehat{V}_{\ell,\epsilon,n})- I(Y;\widetilde{V}_{\ell,\epsilon})|\leq \gamma\\
    \label{eq:3:5.1.3}&
    |I(\widehat{U}_{\ell,\epsilon,n}+\widehat{V}_{\ell,\epsilon,n}; \widehat{U}_{\ell,\epsilon,n})- I(\widetilde{U}_{\ell,\epsilon}+\widetilde{V}_{\ell,\epsilon};\widetilde{V}_{\ell,\epsilon})|\leq \gamma
    \\ 
    \label{eq:3:5.1.4}&
   |I( \widehat{U}_{\ell,\epsilon,n}+ \widehat{V}_{\ell,\epsilon,n}; \widehat{V}_{\ell,\epsilon,n} )- I(\widetilde{U}_{\ell,\epsilon}+\widetilde{V}_{\ell,\epsilon};\widetilde{V}_{\ell,\epsilon})|\leq \gamma
    \\
    \label{eq:3:5.1.5}&
|I( \widehat{U}_{\ell,\epsilon,n}; \widehat{V}_{\ell,\epsilon,n})- I(\widetilde{U}_{\ell,\epsilon};\widetilde{V}_{\ell,\epsilon})|\leq \gamma,
\end{align}

Note that Equations \eqref{eq:3:5.1.1}, \eqref{eq:3:5.1.2}, and \eqref{eq:3:5.1.5} follow from data processing inequality and lower semi-continuity of mutual information. We will show Equation \eqref{eq:3:5.1.3}. The proof of \eqref{eq:3:5.1.4} follows by a similar argument.

The next step uses the approach taken in \cite{makkuva2018equivalence} to study entropy of linear combinations of continuous variables.
In the following, 
we drop the subscript on $Q_n(\cdot)$ when there is no ambiguity.
 Define $\mod_{\!\!Q}(\widetilde{U}_{\ell,\epsilon}) \triangleq \widetilde{U}_{\ell,\epsilon}- \widehat{U}_{\ell,\epsilon,n}$ and $\mod_{\!\!Q}(\widetilde{V}_{\ell,\epsilon}) \triangleq \widetilde{V}_{\ell,\epsilon}- \widehat{V}_{\ell,\epsilon,n}$, and the variables $C\triangleq Q(\widetilde{U}_{\ell,\epsilon}+\widetilde{V}_{\ell,\epsilon})$, $D\triangleq \widehat{U}_{\ell,\epsilon,n}+\widehat{V}_{\ell,\epsilon,n}$, $E=Q(\!\! \mod_{\!\!Q}(\widetilde{U}_{\ell,\epsilon})+\!\! \mod_{\!\!Q}(\widetilde{V}_{\ell,\epsilon}))$. Note that $E\in \{\frac{-1}{N},0,\frac{1}{N}\}$ by construction, where $N\triangleq 2^n$. We will show that i) $H(C)- H(D)\to 0$ as $N\to \infty$, and ii)   $H(\widehat{U}_{\ell,\epsilon,n},C)- H(\widehat{U}_{\ell,\epsilon,n},D)\to 0$ as $N\to \infty$.
  This implies that $I(\widehat{U}_{\ell,\epsilon,n};\widehat{U}_{\ell,\epsilon,n}+\widehat{V}_{\ell,\epsilon,n})-I(\widehat{U}_{\ell,\epsilon,n};Q(\widetilde{U}_{\ell,\epsilon}+\widetilde{V}_{\ell,\epsilon}))\to 0$ as $n \to \infty$. Consequently, by data processing inequality and lower semi-continuity of mutual information, we have
 $I(\widehat{U}_{\ell,\epsilon,n};\widehat{U}_{\ell,\epsilon,n}+\widehat{V}_{\ell,\epsilon,n})-I(\widetilde{U}_{\ell,\epsilon};\widetilde{U}_{\ell,\epsilon}+\widetilde{V}_{\ell,\epsilon})\to 0$, and similarly, $I(\widehat{V}_{\ell,\epsilon,n};\widehat{U}_{\ell,\epsilon,n}+\widehat{V}_{\ell,\epsilon,n})-I(\widetilde{V}_{\ell,\epsilon};\widetilde{U}_{\ell,\epsilon}+\widetilde{V}_{\ell,\epsilon})\to 0$.
 
 First, we will show that  $H(C)- H(D)\to 0$ as $N\to \infty$. Note that $C=D+E$ using the distributive property of lattices \cite{zamir2014lattice}. As a result,
 \begin{align*}
     H(C)-H(D)= I(C;E)-I(D;E).
 \end{align*}
 First, let us consider $I(D;E)$ as follows: 
\begin{align*}
    I(D;E)& = I(\widehat{U}_{\ell,\epsilon,n}+\widehat{V}_{\ell,\epsilon,n};  Q(\!\!\!\! \mod_{\!\!Q}(\widetilde{U}_{\ell,\epsilon})+\!\!\!\!
    \mod_{\!\!Q}(\widetilde{V}_{\ell,\epsilon})))
    \\& \leq 
    I(\widehat{U}_{\ell,\epsilon,n}+\widehat{V}_{\ell,\epsilon,n};  \!\! \mod_{\!\!Q}(\widetilde{U}_{\ell,\epsilon})+\!\!
    \mod_{\!\!Q}(\widetilde{V}_{\ell,\epsilon}))
    \\& \leq     I(\widehat{U}_{\ell,\epsilon,n},\widehat{V}_{\ell,\epsilon,n};  \!\! \mod_{\!\!Q}(\widetilde{U}_{\ell,\epsilon}),\!\!
    \mod_{\!\!Q}(\widetilde{V}_{\ell,\epsilon})) ,
\end{align*}
which goes to $0$ as $N\to \infty$ using Lemma 6 in \cite{makkuva2018equivalence}. Next, let us consider $I(C;E)$. Using Proposition 12 in \cite{polyanskiy2015dissipation}, we have:
\begin{align*}
I(C;E)\leq (\log{3}-1) T(E;C)+h_b(T(E;C)),
\end{align*}
where $h_b$ is the binary entropy function, and $T(E;C)\triangleq V(P_{E,C},P_{E}P_{C})$, where $P_{E,C}$ is the probability mass function of the pair $(E,C)$. So, it suffices to show that $T(E;C)\to 0$ as $N\to \infty$.
Note that 
\begin{align}
\nn  &  V(P_{E,C},P_{E}P_{C}) =
    \sum_{e\in\{\frac{-1}{N},0,\frac{1}{N}\}}
P(E=e) V(P_{C},P_{C|E}(\cdot|e))
\leq \sum_{e\in\{\frac{-1}{N},0,\frac{1}{N}\}}V(P_{C},P_{C|E}(\cdot |e))
\\&\nn\leq  \sum_{e,e'\in\{\frac{-1}{N},0,\frac{1}{N}\}}
V(P_{C|E}(\cdot|e'),P_{C|E}(\cdot|e))
% \\&
% =\sum_{e,e'\in\{\frac{-1}{N},0,\frac{1}{N}\}}
% V(P_{D+E|E=e'},P_{D+E|E=e})
\\&\nn
=\sum_{e,e'\in\{\frac{-1}{N},0,\frac{1}{N}\}}
\sum_{d}
|P_{D|E}(d-e'|e')-P_{D|E}(d-e|e)|
\\&\nn\leq
\sum_{e,e'\in\{\frac{-1}{N},0,\frac{1}{N}\}}
\sum_{d}
|P_{D|E}(d-e'|e')-P_{D|E}(d-e'|e)|
+
|P_{D|E}(d-e'|e)-P_{D|E}(d-e|e)|
\\\label{eq:th7:sum}&
=
\sum_{e,e'\in\{\frac{-1}{N},0,\frac{1}{N}\}}
\sum_{d}
|P_{D|E}(d|e')-P_{D|E}(d|e)|
+
|P_{D|E}(d-e'|e)-P_{D|E}(d-e|e)|
\end{align}
We investigate the first term in the summation in 
Equation \eqref{eq:th7:sum}.
\begin{align*}
 &   \sum_{e,e'\in\{\frac{-1}{N},0,\frac{1}{N}\}}
\sum_{d}
|P_{D|E}(d|e')-P_{D|E}(d|e)|
\\&\leq 
\sum_{e,e'\in\{\frac{-1}{N},0,\frac{1}{N}\}}
\sum_{d}
|P_{D|E}(d|e')-P_{D}(d)|
+|P_{D}(d)-P_{D|E}(d|e)|,
\end{align*} 
which goes to 0 as $N\to \infty$ due to Pinsker's inequality. To see this, fix $\eta>0$, and let $N$ be large enough so that $I(D;E)<\eta$. Note that such $N$ exists since $\lim_{N\to\infty}I(D;E)= 0$ as shown above. Due to Pinsker's inequality, we have:
\begin{align*}
    &\eta\geq I(D;E)= \sum_{e}P(E=e) D(P_{D|E}(\cdot|e)||P_{D} )
    \\&\geq 2(\ln{2})\sum_{e} P(E=e)V^2(P_{D|E}(\cdot|e), P_{D})
    \\& \geq {2(\ln{2})}P(E=e')V^2(P_{D|E}(\cdot|e'), P_{D}), \qquad \forall e'\in \{\frac{-1}{N},0,\frac{1}{N}\}. 
\end{align*}
Furthermore, we show that $|P(E=0)-\frac{3}{4}|\to 0$ and $|P(E=\frac{1}{N})-\frac{1}{8}|\to 0$, and $|P(E=\frac{-1}{N})-\frac{1}{8}|\to 0$ as $N \to \infty$:
\begin{align*}
  &  P(E=\frac{1}{N})=
    \int_{u,v: Q(\!\!\!\!\!\!\mod_{\!\!Q} {(u)}+\!\!\!\!\!\!\mod _{\!\!Q}{(v)}))=\frac{1}{N}}f_{\widetilde{U}_\ell,\widetilde{V}_\ell}(u,v)dudv
   \\& =\sum_{i=1}^N\sum_{j=1}^N
    \int_{u,v\in E_i\times E_j: Q(\!\!\!\!\!\!\mod_{\!\!Q}{(u)}+\!\!\!\!\!\!\mod_{\!\!Q}(v)))=\frac{1}{N}}f_{\widetilde{U}_{\ell,\epsilon},\widetilde{V}_{\ell,\epsilon}}(u,v)dudv
       \\& =\sum_{i=1}^N\sum_{j=1}^N
    \int_{u,v\in E_i\times E_j: u+v>e_i+e_j+\frac{1}{2N}}f_{\widetilde{U}_{\ell,\epsilon},\widetilde{V}_{\ell,\epsilon}}(u,v)dudv
    \\&
\leq     \sum_{i=1}^N\sum_{j=1}^N
    \int_{u,v\in E_i\times E_j: u+v<e_i+e_j-\frac{1}{2N}}f_{\widetilde{U}_{\ell,\epsilon},\widetilde{V}_{\ell,\epsilon}}(u,v)dudv+\delta_N
    =P(E=-\frac{1}{N})+\delta_N,
\end{align*}
where in the last inequality we have used the fact that $f_{\widetilde{U}_{\ell,\epsilon},\widetilde{V}_{\ell,\epsilon}}$ is continuous over a compact support, and hence uniformly continuous, to argue the existence of $\delta_N$ such that $\delta_N\to 0$ as $N\to 0$. 
Similarly $P(E=\frac{-1}{N})\leq P(E=\frac{1}{N})+\delta_N$.
Furthermore,
\begin{align*}
   & P(E=0)=
    \int_{u,v: Q(\!\!\!\!\!\!\mod_Q(u)+\!\!\!\!\!\!\mod_Q(v)))=0}f_{\widetilde{U}_{\ell,\epsilon},\widetilde{V}_{\ell,\epsilon}}(u,v)dudv
   \\& =\sum_{i=1}^N\sum_{j=1}^N
    \int_{u,v\in E_i\times E_j: Q(\!\!\!\!\!\!\mod_Q(u)+\!\!\!\!\!\!\mod_Q(v)))=0}f_{\widetilde{U}_{\ell,\epsilon},\widetilde{V}_{\ell,\epsilon}}(u,v)dudv
   \\& =\sum_{i=1}^N\sum_{j=1}^N
    \int_{u,v\in E_i\times E_j: e_i+e_j+\frac{-1}{2N}<u+v<e_i+e_j+\frac{1}{2N}}f_{\widetilde{U}_{\ell,\epsilon},\widetilde{V}_{\ell,\epsilon}}(u,v)dudv
    \\& 
    \leq    6 \sum_{i=1}^N\sum_{j=1}^N
    \int_{u,v\in E_i\times E_j: u+v<e_i+e_j-\frac{1}{2N}}f_{\widetilde{U}_{\ell,\epsilon},\widetilde{V}_{\ell,\epsilon}}(u,v)dudv+\delta_N
    = 6P(E=\frac{-1}{N})+\delta_N.
\end{align*}
Similarly, $6P(E=\frac{-1}{N})\leq P(E=0)+\delta_N$. So, $|P(E=0)-\frac{3}{4}|\leq \delta_N$ and $|P(E=\frac{1}{N})-\frac{1}{8}|\leq \delta_N$, and $|P(E=\frac{-1}{N})-\frac{1}{8}|\leq \delta_N$. Consequently, $V(P_{D|E}(\cdot|e'), P_{D})\to 0$ as $N \to \infty$ for all $e'\in \{\frac{-1}{N},0,\frac{1}{N}\}$. 

Next, we consider the second term in Equation   \eqref{eq:th7:sum}. Fix $ e,e' \in \{\frac{-1}{N},0 ,\frac{1}{N}\}$. Note that
\begin{align*}
    D+e-e' = Q(\widetilde{U}_{\ell,\epsilon})+Q(\widetilde{V}_{\ell,\epsilon})+e-e' = Q(\widetilde{U}_{\ell,\epsilon}+e)+Q(\widetilde{V}_{\ell,\epsilon}-e'),
\end{align*}
So, 
\begin{align}
 &\nn  \sum_{d}
|P_{D|E}(d-e'|e)-P_{D|E}(d-e|e)|
\\& \nn
=\sum_{d}
|P_{D|E}(d+e-e'|e)-P_{D|E}(d|e)|
\\&\nn= \sum_{d}
|P(Q(\widetilde{U}_{\ell,\epsilon}+e)+ Q(\widetilde{V}_{\ell,\epsilon}-e')=d|E=e)-P(Q(\widetilde{U}_{\ell,\epsilon})+Q(\widetilde{V}_{\ell,\epsilon})=d)|E=e)|
\\&\nn\stackrel{(a)}{\leq} 
2\sup_{\mathsf{A}} |P_{\widetilde{U}_{\ell,\epsilon},\widetilde{V}_{\ell,\epsilon}|E}(\mathsf{A}+\{(e,-e')\}|e), -P_{\widetilde{U}_{\ell,\epsilon},\widetilde{V}_{\ell,\epsilon}|E}(\mathsf{A}|e)|
\\& \nn=
2\sup_{\mathsf{A}} \frac{1}{P(E=e)}|P_{\widetilde{U}_{\ell,\epsilon},\widetilde{V}_{\ell,\epsilon}}((\mathsf{A}+ \{(e,-e')\})\cap \{E=e\}) -P_{\widetilde{U}_{\ell,\epsilon},\widetilde{V}_{\ell,\epsilon}}(\mathsf{A}\cap \{E=e\})|
\\& \nn\stackrel{(b)}{\leq} 2\sup_{\mathsf{A}} \frac{1}{P(E=e)}|P_{\widetilde{U}_{\ell,\epsilon},\widetilde{V}_{\ell,\epsilon}}(\mathsf{A}+ \{(e,-e')\})-P_{\widetilde{U}_{\ell,\epsilon},\widetilde{V}_{\ell,\epsilon}}(\mathsf{A})|
\\&=\label{eq:th7:sup} \frac{1}{P(E=e)}\int_{u,v}|f_{\widetilde{U}_{\ell,\epsilon},\widetilde{V}_{\ell,\epsilon}}(u+e,v-e')-f_{\widetilde{U}_{\ell,\epsilon},\widetilde{V}_{\ell,\epsilon}}(u,v)|dudv,
\end{align}
where in (a) we have used the data processing inequality for variational distance, and in (b) we have used the fact that the supremum over $\mathsf{A}$ is larger than that over $\mathsf{A}\cap \{E=e\}$. Note that the last term goes to $0$ as $N\to \infty$ due to uniform continuity of $f_{\widetilde{U}_{\ell,\epsilon},\widetilde{V}_{\ell,\epsilon}}$. 

We have thus shown that $T(C;E)\to 0$ as $N\to \infty$, and hence
$I(C;E) \to 0$, and consequently, 
$H(C)\to H(D)$ as $N\to \infty$. Next, we will show that $H(\widehat{U}_{\ell,\epsilon,n},C)- H(\widehat{U}_{\ell,\epsilon,n},D)\to 0$. Similar to the previous part, we have: 
 \begin{align*}
     H(\widehat{U}_{\ell,\epsilon,n},C)-H(\widehat{U}_{\ell,\epsilon,n},D)= I(\widehat{U}_{\ell,\epsilon,n},C;E)-I(\widehat{U}_{\ell,\epsilon,n},D;E).
 \end{align*}
The second term $I(\widehat{U}_{\ell,\epsilon,n},D;E)$ goes to $0$ as $N\to \infty$ by a similar argument as in the previous case. For the first term, similarly it suffices to show that $V(P_{E,C, \widehat{U}_{\ell,\epsilon,n}}, P_EP_{C, \widehat{U}_{\ell,\epsilon,n}}) \to 0$ as $N\to \infty$. We have: 
 
\begin{align}
\nn  &  V(P_{E,C, \widehat{U}_{\ell,\epsilon,n}}, P_EP_{C, \widehat{U}_{\ell,\epsilon,n}}) =
    \sum_{e\in\{\frac{-1}{N},0,\frac{1}{N}\}}
P(E=e) V(P_{C,\widehat{U}_{\ell,\epsilon,n}},P_{C,\widehat{U}_{\ell,\epsilon,n}|E}(\cdot|e))
\\&
\nn\leq  \sum_{e,e'\in\{\frac{-1}{N},0,\frac{1}{N}\}}
V(P_{C,\widehat{U}_{\ell,\epsilon,n}|E}(\cdot|e'),P_{C,\widehat{U}_{\ell,\epsilon,n}|E}(\cdot|e))
% \\&
% =\sum_{e,e'\in\{\frac{-1}{N},0,\frac{1}{N}\}}
% V(P_{D+E|E=e'},P_{D+E|E=e})
\\&\nn
=\sum_{e,e'\in\{\frac{-1}{N},0,\frac{1}{N}\}}
\sum_{d}\sum_{u}
|P_{D,\widehat{U}_{\ell,\epsilon,n}|E}(d-e',u|e')-P_{D,\widehat{U}_{\ell,\epsilon,n}|E}(d-e,u|e)|
\\&\nn\leq
\sum_{e,e'\in\{\frac{-1}{N},0,\frac{1}{N}\}}
\sum_{d} \sum_{u}
|P_{\widehat{U}_{\ell,\epsilon,n},\widehat{V}_{\ell,\epsilon,n}|E}(u,d-e'-u|e')-P_{\widehat{U}_{\ell,\epsilon,n},\widehat{V}_{\ell,\epsilon,n}|E}(u,d-e'-u|e)|
\\&\nn
\hspace{0.5in} +
|P_{\widehat{U}_{\ell,\epsilon,n},\widehat{V}_{\ell,\epsilon,n},|E}(u,d-e'-u|e)-P_{\widehat{U}_{\ell,\epsilon,n},\widehat{V}_{\ell,\epsilon,n}|E}(u,d-e-u|e)|
\\&
=\nn
\sum_{e,e'\in\{\frac{-1}{N},0,\frac{1}{N}\}}
\sum_{v}\sum_{u}
|P_{\widehat{U}_{\ell,\epsilon,n},\widehat{V}_{\ell,\epsilon,n}|E}(u,v|e')-P_{\widehat{U}_{\ell,\epsilon,n},\widehat{V}_{\ell,\epsilon,n}|E}(u,v|e)|
\\& \label{eq:th7:sum2} \hspace{0.5in} +
|P_{\widehat{U}_{\ell,\epsilon,n},\widehat{V}_{\ell,\epsilon,n}|E}(u,v-e'|e)-P_{\widehat{U}_{\ell,\epsilon,n},\widehat{V}_{\ell,\epsilon,n}|E}(u,v-e|e)|.
\end{align} 
 We will focus on the first term in equation \eqref{eq:th7:sum2}:
 \begin{align*}
  &  \sum_{e,e'\in\{\frac{-1}{N},0,\frac{1}{N}\}}
\sum_{v}\sum_{u}
|P_{\widehat{U}_{\ell,\epsilon,n},\widehat{V}_{\ell,\epsilon,n}|E}(u,v|e')-P_{\widehat{U}_{\ell,\epsilon,n},\widehat{V}_{\ell,\epsilon,n}|E}(u,v|e)|
\\& \leq 
\sum_{e,e'\in\{\frac{-1}{N},0,\frac{1}{N}\}} 
|P_{\widehat{U}_{\ell,\epsilon,n},\widehat{V}_{\ell,\epsilon,n}|E}(u,v|e')-P_{\widehat{U}_{\ell,\epsilon,n},\widehat{V}_{\ell,\epsilon,n}}(u,v)|
\\& \hspace{0.5in} +|P_{\widehat{U}_{\ell,\epsilon,n},\widehat{V}_{\ell,\epsilon,n}}(u,v)-P_{\widehat{U}_{\ell,\epsilon,n},\widehat{V}_{\ell,\epsilon,n}|E}(u,v|e)|,
 \end{align*}
 where the last two terms go to 0 as $N\to \infty$ due to Pinsker's inequality and the fact that $I(\widehat{U}_{\ell,\epsilon,n},D;E)\to 0$ as $N\to \infty$. Next we note the second term in \eqref{eq:th7:sum2} goes to 0 by a similar argument as in Equation \eqref{eq:th7:sup} and uniform continuity of $f_{\widetilde{U}_{\ell,\epsilon},\widetilde{V}_{\ell,\epsilon}}$. As a result, 
$H(\widehat{U}_{\ell,\epsilon,n},C)- H(\widehat{U}_{\ell,\epsilon,n},D)\to 0$. 

Lastly, Equations \eqref{eq:5.1.6} and \eqref{eq:5.1.7} follow by convergence in distribution of $(X,\widehat{U}_{n,\ell,\epsilon},  \widehat{V}_{n,\ell',\epsilon})$ to $(X,U,V)$ and $(Y,\widehat{U}_{n,\ell,\epsilon},  \widehat{V}_{n,\ell',\epsilon})$ to $(Y,U,V)$ along with the Portmanteau theorem \cite{achim2014probability,billingsley2013convergence} and Lemma \ref{thm:quantize_dist_cost}.
This completes the proof. 
\end{proof}
\section{Proof of Theorem \ref{th:9}}
\label{App:th:9}
%We are  given the source $(X,Y)$  and an auxiliary variable $U$ and $V$ with joint
%distribution $P_{X,Y,U,V}$ such that $U - X - Y - V$. We consider a series of transformations of these random variables  as follows.

\noindent \textbf{Step 1 (Clipping $X$ and $Y$, and Generating $\overline{U}_{\ell}$ and $\overline{V}_{\ell}$) :} Let $Z$, $W$, $\widetilde{X}_{\ell}$, and $\widetilde{Y}_{\ell'}$ be as defined in Section \ref{sec:th:9}. 
Let $(\overline{U}_{\ell},\overline{V}_{\ell'})$ be random variables that are correlated with
$(X,Y,Z,W)$ such that the distribution of $\overline{U}_{\ell}$ given $\widetilde{X}_{\ell}$ 
is given by 
\[ P_{\overline{U}_{\ell}|\widetilde{X}_\ell}(\cdot|x) = P_{U|X}(\cdot|x), \quad  x\in [-\ell,\ell]
\]
and  the distribution of $\overline{V}_{\ell'}$ given $\widetilde{Y}_{\ell'}$ 
is given by 
\[
P_{\overline{V}_{\ell'}|\widetilde{Y}_{\ell'}}(\cdot|y) = P_{V|Y}(\cdot|y),\quad y\in [-\ell',\ell'],
\]
One can check that the following Markov chain holds:
$\overline{U}_{\ell}- \widetilde{X}_{\ell} -  (X,Y,Z,W) - \widetilde{Y}_{\ell'} - \overline{V}_{\ell'}$. 
Furthermore,  for any quadruple of events $\mathsf{A},\mathsf{B},\mathsf{C}$, and $\mathsf{D}$, the random variables $(\widetilde{X}_{\ell},\widetilde{Y}_{\ell'},\overline{U}_{\ell},\overline{V}_{\ell'})$
have the following distribution: 
\begin{align*}
& P_{\widetilde{X}_{\ell},\widetilde{Y}_{\ell'},\overline{U}_{\ell},\overline{V}_{\ell'}}(\mathsf{A},\mathsf{B},\mathsf{C},\mathsf{D})= P_{\widetilde{X}_{\ell},\widetilde{Y}_{\ell'},\overline{U}_{\ell},\overline{V}_{\ell'},X,Y}(\mathsf{A},\mathsf{B},\mathsf{C},\mathsf{D},\mathbb{R},\mathbb{R})
\\&=  P_{\widetilde{X}_{\ell},\widetilde{Y}_{\ell'},\overline{U}_{\ell},\overline{V}_{\ell'},X,Y}(\mathsf{A},\mathsf{B},\mathsf{C},\mathsf{D},\mathbb{R}\cap [-\ell,\ell], \mathbb{R}\cap [-\ell',\ell'])
\\&+ P_{\widetilde{X}_{\ell},\widetilde{Y}_{\ell'},\overline{U}_{\ell},\overline{V}_{\ell'},X,Y}(\mathsf{A},\mathsf{B},\mathsf{C},\mathsf{D}, \mathbb{R}\cap [-\ell,\ell]^c, \mathbb{R}\cap [-\ell',\ell'])
\\&+
P_{\widetilde{X}_{\ell},\widetilde{Y}_{\ell'},\overline{U}_{\ell},\overline{V}_{\ell'},X,Y}(\mathsf{A},\mathsf{B},\mathsf{C},\mathsf{D}, \mathbb{R}\cap [-\ell,\ell], \mathbb{R}\cap [-\ell',\ell']^c)
\\&+ P_{\widetilde{X}_{\ell},\widetilde{Y}_{\ell'},\overline{U}_{\ell},\overline{V}_{\ell'},X,Y}(\mathsf{A},\mathsf{B},\mathsf{C},\mathsf{D}, \mathbb{R}\cap [-\ell,\ell]^c, \mathbb{R}\cap [-\ell',\ell']^c)
\\&
=P_{{X},{Y},U,V}(\mathsf{A} \cap [-\ell,\ell],\mathsf{B} \cap [-\ell',\ell'],\mathsf{C},\mathsf{D})
\\&+ P_{\widetilde{X}_{\ell},\widetilde{Y}_{\ell'},\overline{U}_{\ell},\overline{V}_{\ell'},X,Y}(\mathsf{A},\mathsf{B},\mathsf{C},\mathsf{D}, \mathbb{R}\cap [-\ell,\ell]^c, \mathbb{R}\cap [-\ell',\ell'])
\\&+
P_{\widetilde{X}_{\ell},\widetilde{Y}_{\ell'},\overline{U}_{\ell},\overline{V}_{\ell'},X,Y}(\mathsf{A},\mathsf{B},\mathsf{C},\mathsf{D}, \mathbb{R}\cap [-\ell,\ell], \mathbb{R}\cap [-\ell',\ell']^c)
\\&+ P_{\widetilde{X}_{\ell},\widetilde{Y}_{\ell'},\overline{U}_{\ell},\overline{V}_{\ell'},X,Y}(\mathsf{A},\mathsf{B},\mathsf{C},\mathsf{D}, \mathbb{R}\cap [-\ell,\ell]^c, \mathbb{R}\cap [-\ell',\ell']^c)
\end{align*}
Note that 
\begin{align*}
 0 \leq &   P_{\widetilde{X}_{\ell},\widetilde{Y}_{\ell'},\overline{U}_{\ell},\overline{V}_{\ell'},X,Y}(\mathsf{A},\mathsf{B},\mathsf{C},\mathsf{D}, \mathbb{R}\cap [-\ell,\ell]^c, \mathbb{R}\cap [-\ell',\ell'])
\\&+
P_{\widetilde{X}_{\ell},\widetilde{Y}_{\ell'},\overline{U}_{\ell},\overline{V}_{\ell'},X,Y}(\mathsf{A},\mathsf{B},\mathsf{C},\mathsf{D}, \mathbb{R}\cap [-\ell,\ell], \mathbb{R}\cap [-\ell',\ell']^c)
\\&+ P_{\widetilde{X}_{\ell},\widetilde{Y}_{\ell'},\overline{U}_{\ell},\overline{V}_{\ell'},X,Y}(\mathsf{A},\mathsf{B},\mathsf{C},\mathsf{D}, \mathbb{R}\cap [-\ell,\ell]^c, \mathbb{R}\cap [-\ell',\ell']^c)
\\& \leq 
1- P_{X,Y}( \mathbb{R}\cap [-\ell,\ell], \mathbb{R}\cap [-\ell',\ell']),
\end{align*}
which approaches $0$ as $\ell,\ell' \to \infty$.
As a result, 
\begin{align}
\label{eq:port}
&\lim_{\ell,\ell'\to \infty} P_{\widetilde{X}_{\ell},\widetilde{Y}_{\ell'},\overline{U}_{\ell},\overline{V}_{\ell'}}(\mathsf{A},\mathsf{B},\mathsf{C},\mathsf{D})= P_{X,Y,U,V}(\mathsf{A},\mathsf{B},\mathsf{C},\mathsf{D})
\end{align}
% \[
% P_{\tilde{X},\tilde{Y},\tilde{U},\tilde{V}}(A \times B \times C \times
% D)=
% \frac{P_{XYUV}(A \cap [-l_1,u_1] \times B \cap [-l_3,u_3] \times C
%   \times D)}{P((X,Y) \in
%   [-l_1,u_1] \times [-l_3,u_3])}.
% \]
for all $\mathsf{A},\mathsf{B},\mathsf{C},\mathsf{D}$.
% Moreover, we have 
% \[
% P_{\tilde{X},\tilde{Y}}(A \times B)=\frac{P_{XY}(A \cap[-l_1,u_1]
%   \times B \cap [-l_3,u_3])}{P((X,Y) \in
%   [-l_1,u_1] \times [-l_3,u_3])}, 
% \]
% \[
%  P_{\tilde{U}}(C) = \frac{P_{XU}([-l_1,u_1]
%   \times  C)}{P(X \in [-l_1,u_1])} = P(U \in C|X \in [-l_1,u_1]),
% \]
% \[
%  P_{\tilde{V}}(D) = \frac{P_{YV}([-l_3,u_3]
%   \times  C)}{P((X,Y) \in [-l_1,u_1] \times [-l_3,u_3])} = P(V \in C|Y \in [-l_3,u_3]),
% \]
% for all $A,B,C$.
Hence as $\ell,\ell'\to \infty$,  $P_{\widetilde{X}_{\ell},\widetilde{Y}_{\ell'},
 \overline{U}_\ell,\overline{V}_{\ell'}}$  converges strongly to $P_{X,Y,U,V}$. Note that using Equation \eqref{eq:port} along with the Portmanteau Theorem we have:
\begin{align}
&\lim_{\ell  \rightarrow \infty}
|\mathbb{E}d_1(\widetilde{X}_{\ell},g_1(\overline{U}_{\ell},\overline{V}_{\ell'}))
-\mathbb{E}\tilde{d}_1(X,g_1(U,V))|=0,
\\&
|\mathbb{E}d_2(\widetilde{Y}_{\ell},g_2(\overline{U}_{\ell},\overline{V}_{\ell'}))
-\mathbb{E}\tilde{d}_1(Y,g_2(U,V))|=0.
\end{align}
Next using the arguments similar to that used in source coding with
side information, we  show that $I(\widetilde{X}_{\ell};\overline{U}_{\ell}) \to I(X;U)$
 as $\ell\to \infty$. By the data processing inequality:  
 \begin{align*}
     I(\widetilde{X}_{\ell};\overline{U}_{\ell}) =    I(\widetilde{X}_{\ell};{U}) \leq 
     I({X};{U}),
 \end{align*}
  and convergence follows by lower semi-continuity of mutual information.
Similarly, $I(\widetilde{Y}_{\ell'};\overline{U}_{\ell}) \to I(Y;U)$, and for sufficiently large $\ell,\ell'$. 
% Next observe that 
% \begin{align}
% P_{\tilde{U}\tilde{V}}(A \times B) &=
% P_{\tilde{X}\tilde{Y}\tilde{U}\tilde{V}}([-l_1,u_1] \times [-l_3,u_3]
% \times A \times B)  \nn \\
% &= \frac{P_{XYUV}([-l_1,u_1] \times [-l_3,u_3] \times A \times
%   B)}{P_{XY}([-l_1,u_1] \times [-l_3,u_3])} \nn \\
% &\rightarrow P_{UV}(A \times B) \nn
% \end{align}
% as $l_1,u_1,l_3,u_3  \rightarrow \infty$, and hence we have 
Also,  note that $U,V,\overline{U}_{\ell}$ and $\overline{V}_{\ell'}$ are discrete variables defined on fixed finite alphabets. As a result,  \[
\lim_{\ell,\ell', \rightarrow \infty} I(\overline{U}_{\ell};\overline{V}_{\ell'}) = I(U;V).
\]
Morever, 
\begin{align*}
&\lim_{\ell,\ell' \rightarrow \infty} I(\overline{U}_{\ell};\overline{U}_{\ell}+ \overline{V}_{\ell'}) = I(U;U+V),
\\& \lim_{\ell,\ell' \rightarrow \infty} I(\overline{V}_{\ell'};\overline{U}_{\ell}+ \overline{V}_{\ell'}) = I(V;U+V).
\end{align*}
\noindent \textbf{Step 2 (Discretizing $X$ and $Y$):} 
Next we quantize $\widetilde{X}_{\ell}$ and $\widetilde{Y}_{\ell'}$ into
$\widehat{X}_{n,\ell}$ and $\widehat{Y}_{n,\ell'}$ and enforce the Markov chain. 
Now using 
\[
I(\widetilde{X}_{\ell}\widetilde{Y}_{\ell'}
\overline{V}_{\ell};\overline{U}_{\ell}|\widehat{X}_{n,\ell})=I(\widetilde{X}_{\ell},\widetilde{Y}_{\ell'}
\overline{V}_{\ell}; \overline{U}_{\ell})
-I(\widehat{X}_{n,\ell};\overline{U}_{\ell}), 
\]
and  Theorem \ref{thm:quantize0} we have
\beq
\label{eq:step3a_inter}
   \lim_{n \rightarrow \infty}  I(\widetilde{X}_{\ell}\widetilde{Y}_{\ell'}\overline{V}_{\ell'};\overline{U}_{\ell}|\widehat{X}_{n,\ell})
=I(\widetilde{Y}_{\ell'} \overline{V}_{\ell'};\overline{U}_{\ell}|\widetilde{X}_{\ell})=0,
\eeq
and similarly, 
\beq
\label{eq:step3a_inter1}
   \lim_{n \rightarrow \infty}  I(\widetilde{X}_{\ell}\widetilde{Y}_{\ell'}\overline{U}_{\ell};\overline{V}_{\ell'}|\widehat{Y}_{n,\ell'})
=I(\widetilde{X}_{\ell}, \overline{U}_{\ell};\overline{V}_{\ell'}|\widetilde{Y}_{\ell'})=0.
\eeq

Define $\overline{U}_{n,\ell}$ and $\overline{V}_{n,\ell'}$ as random variables having the same alphabet as
$\overline{U}_{\ell}$ and $\overline{V}_{\ell'}$, and  that are jointly correlated with
$(\widehat{X}_{n,\ell},\widetilde{X}_{\ell},\widetilde{Y}_{\ell'},\widehat{Y}_{n,\ell'})$ according to the probability
distribution that satisfies (i) the Markov chain 
$\overline{V}_{n,\ell'}  - \widehat{Y}_{n,\ell'} - \widetilde{Y}_{\ell} - \widetilde{X}_{\ell} - \widehat{X}_{n,\ell} - \overline{U}_{n,\ell}$,  
(ii) the pair $(\widehat{X}_{n,\ell},\overline{U}_{n,\ell})$ has the same distribution as 
the pair $(\widehat{X}_{n,{\ell}},\overline{U}_{\ell})$, 
and (iii) the pair $(\widehat{Y}_{n,
\ell'},\overline{V}_{n,\ell'})$ has the same distribution as 
the pair $(\widehat{Y}_{n,\ell'},\overline{V}_{\ell'})$. We use Lemma \ref{lem:mc_forced1} as follows.
From Equations \eqref{eq:step3a_inter} and \eqref{eq:step3a_inter1}, by taking the quintuple
$A=\overline{U}_{\ell}$, $B=\widehat{X}_{n,\ell}$, $C=(\widetilde{X}_{\ell},\widetilde{Y}_{\ell})$, $D=\widehat{Y}_{n,\ell'}$, and $E=\overline{V}_{\ell'}$, we have 
\[
   \lim_{n \rightarrow \infty}   V \left( P_{\widehat{X}_{n,\ell}\widehat{Y}_{n,\ell'}\overline{U}_{n,\ell}\overline{V}_{n,\ell'}}
  ,  P_{\widetilde{X}_{\ell}\widetilde{Y}_{\ell'}\overline{U}_{\ell},\overline{V}_{\ell'}} \right)=0,
\]
and hence using Theorem \ref{thm:quantize0}, we have 
\begin{align*}
&\lim_{n  \rightarrow \infty}
I(\widehat{X}_{n,\ell};\overline{U}_{n,\ell})=\lim_{n \to \infty}I(\widehat{X}_{n,\ell};\overline{U}_{\ell})= I(\widetilde{X}_{\ell};\overline{U}_{\ell}),
\\&\lim_{n_Y \rightarrow \infty}
I(\widehat{Y}_{n,\ell'};\overline{V}_{n,\ell'})=
\lim_{n \rightarrow \infty}I(\widehat{Y}_{n,\ell'};\overline{V}_{\ell'})
=I(\widetilde{Y}_{\ell'};\overline{V}_{\ell'}),
\end{align*}
and using the continuity of mutual information for finite alphabets, we have 
\[ 
\lim_{n  \rightarrow \infty}
I(\overline{U}_{n,\ell};\overline{V}_{n,\ell'})=
I(\overline{U}_{\ell};\overline{V}_{\ell'}),
\]
\[ 
\lim_{n \rightarrow \infty}
I(\overline{U}_{n,\ell};\overline{U}_{n,\ell}+\overline{V}_{n,\ell'})=
I(\overline{U}_{\ell};\overline{U}_{\ell}+\overline{V}_{\ell'}),
\qquad 
\lim_{n  \rightarrow \infty}
I(\overline{V}_{n,\ell'};\overline{U}_{n,\ell}+\overline{V}_{n,\ell'})=
I(\overline{V}_{\ell'};\overline{U}_{\ell}+\overline{V}_{\ell'}).
\]

Since convergence in variational distance implies convergence in
distribution, and using  Portmanteau theorem \cite{achim2014probability,billingsley2013convergence} and Lemma \ref{thm:quantize_dist_cost} Equations \eqref{eq:9.1.6} and \eqref{eq:9.1.7} follow. This completes the proof.

\section{Proof of Theorem \ref{thm:DBC}}
\label{App:thm:DBC}
Consider the probability measure $P_{XU}P_{Y|X}P_{Z|Y}$ as given in the theorem statement. 
\\\textbf{Step 1 (Clipping X and U).}  
Let $\ell_i,i\in \mathbb{N}$ be a sequence of positive numbers approaching infinity such that $\mathbb{E}(\kappa(C_{\ell_i}(X)))\to  \mathbb{E}(\kappa(X))$ as $i\to \infty$. Note that such sequence always exists as shown in Lemma \ref{thm:quantize_dist_cost}. 
Fix $i$, let $\ell=\ell_i$, and define 
\begin{align*}
    (\widetilde{X},\widetilde{U})= 
    \begin{cases}
(X,U)\qquad & \text{ if }  U,X \in [-\ell,\ell]\\
(C_{\ell}(X),\ell)& \text{otherwise.}
\end{cases}
\end{align*}
Assume that we use $\widetilde{X}$ as the input to the channel $P_{Y,Z|X}$, and let $Y'$ and $Z'$ denote the output of the corresponding channel. We note that that for any $\epsilon> 0$ we have $I(\widetilde{U};Z')\geq  I(U;Z)-\epsilon$, $I(\widetilde{U};Y')\geq I(U;Y)-\epsilon$ and $I(\widetilde{X},\widetilde{U};Y')\geq I(X,U;Y)-\epsilon$  for large enough $\ell$ due to convergence in distribution and lower-semi continuity of mutual information. Furthermore, 
\begin{align*}
    &I(\widetilde{U};Y')\leq I(\widetilde{U}, \mathbbm{1}(U,X\in [-\ell,\ell]);Y')
    \\& \leq H(P(\mathbbm{1}(U,X\in [-\ell,\ell]))
    + P(U,X\in [-\ell,\ell]) I(U;Y|U,X\in [-\ell,\ell])
   \\& + P(U\notin [-\ell,\ell] \text{ or } X\notin [-\ell,\ell]) I(\widetilde{U};Y'|U\notin [-\ell,\ell]\text{ or } X\notin [-\ell,\ell])
   \\& \stackrel{(a)}{\leq}  P(U,X\in [-\ell,\ell]) I(U;Y|U,X\in [-\ell,\ell])+\epsilon
   \\& \leq I(U, \mathbbm{1}(U,X\in [-\ell,\ell]);Y)+\epsilon,
   \\& \leq I(U;Y)+ H(\mathbbm{1}(U,X\in [-\ell,\ell])|U)+\epsilon
   \\& \stackrel{(b)}{\leq} I(U;Y)+2\epsilon,
\end{align*}
 where (a) follows by the fact that 
 \begin{align*}
& P(U\notin [-\ell,\ell] \text{ or } X\notin [-\ell,\ell]) I(\widetilde{U};Y|U\notin [-\ell,\ell]\text{ or } X\notin [-\ell,\ell])=0
  \end{align*}
  by construction, and (b) follows by the fact that $ H(\mathbbm{1}(U,X\in [-\ell,\ell])|U)\leq P(U\notin [-\ell,\ell]) \leq \epsilon$ for large enough $\ell$. Consequently, $I(\widetilde{U};Y')\to I(U;Y)$ as $\ell\to \infty$. So, there exists $\ell$ large enough such that:
  \begin{align*}
&      I(\widetilde{U};Z)\geq I(U;Z)-\epsilon,\\
& I(\widetilde{X};Y|\widetilde{U})\geq I(X;Y|U)-\epsilon. 
  \end{align*}
  
\noindent\textbf{Step 2 (Discretizing $U,X$).} In this step, we discretize $\widetilde{U}$ and $\widetilde{X}$ with discretization step $2^{-n}$. Let $\widehat{X}$, $\widehat{U}$ denote the corresponding discretized variables. Then, by lower semi-continuity of mutual information and the data processing inequality, we have $I(\widehat{U};Z')\to I(\widetilde{U};Z')$,  $I(\widehat{X},\widehat{U};Y')\to I(\widetilde{X},\widetilde{U};Y')$, and $I(\widehat{U};Y)\to I(\widetilde{U};Y')$ as $n \to \infty$, and hence $I(\widehat{X};Y'|\widehat{U})\to I(\widetilde{X};Y'|\widetilde{U})$ as $n\to \infty$. Furthermore $\mathbb{E}(\kappa(\widehat{X}))\to \mathbb{E}(\kappa(\widetilde{X}))$ as $n\to \infty$ due to the Portmanteau Theorem  and since $\widetilde{X}$ is a bounded variable.

\noindent \textbf{Step 3 (Discretizing $Y,Z$).} Next, we clip and discretize $Y'$ and $Z'$ with clipping interval $[-\ell,\ell]$ and discretization step $n$. Let $\widehat{Y}$ and $\widehat{Z}$ be the resulting discrete variables. By the lower semi-continuity of mutual information and data processing inequality, $I(\widehat{U};\widehat{Z})\to I(\widehat{U};Z')$ and $I(\widehat{X};\widehat{Y}|\widehat{U})\to I(\widehat{U};Y|\widehat{U})$ as $\ell\to \infty$ and $n\to 0$. Note that the channel between $\widehat{X}$ and $\widehat{Y},\widehat{Z}$ is:
 \begin{align*}
     P(\widehat{Y}=k_1,\widehat{Z}=k_2|\widehat{X}=j)= \int_{A_{\ell,n}(j)}\frac{P_{\widetilde{X}}(dx)}{Q(j)}\int_{A_{\ell,n}(k_1)\times A_{\ell,n}(k_2)} P_{Y,Z|X}(dy,dz|x),
 \end{align*}
 Note that the induced discrete broadcast channel $P_{\widehat{Y},\widehat{Z}|\widehat{X}}$ may not be degraded. 
 
\noindent \textbf{Step 4 (Transmission Scheme). }
 Given the discrete broadcast channel $P_{\widehat{Z},\widehat{Y}|\widehat{X}}$, which is not necessarily degraded, the following inner-bound to the Marton's rate region \cite{Marton} is achievable:
 \begin{align}
\label{eq:DBC:1}     &R_1\leq I(\widehat{U};\widehat{Z})\\
\label{eq:DBC:2}      &R_2\leq I(\widehat{X},\widehat{U};\widehat{Y})\\
\label{eq:DBC:3}      &R_1+R_2\leq \min\{I(\widehat{U};\widehat{Z})+I(\widehat{X};\widehat{Y}|\widehat{U}), I(\widehat{U},\widehat{X};\widehat{Y})\}
 \end{align}
Let $\epsilon$ be small enough such that $\epsilon\leq \frac{1}{6}(I(U;Y)-I(U;Z))$, and take $n$ and $\ell$ large enough so that
\begin{align}
\label{eq:DBC:4}     &|I(U;Z)-I(\widehat{U};\widehat{Z})|\leq \epsilon\\
 \label{eq:DBC:5}    &|I(X;Y|U)-I(\widehat{X};\widehat{Y}|\widehat{U})|\leq \epsilon \\
      &|I(X,U;Y)-I(\widehat{X},\widehat{U};\widehat{Y})|\leq \epsilon
      \\
      \label{eq:DBC:6} & \mathbb{E}(\kappa(\widehat{X}))\leq \tau+\epsilon.
\end{align}
Note that such $n$ and $\ell$ exists as shown in the prior steps. Then, we have $$\min\{I(\widehat{U};\widehat{Z})+I(\widehat{X};\widehat{Y}|\widehat{U}), I(\widehat{U},\widehat{X};\widehat{Y})\}= I(\widehat{U};\widehat{Z})+I(\widehat{X};\widehat{Y}|\widehat{U})$$ since $I(U;Z)\leq I(U;Y)$ by degradedness of the original channel. So that the bounds given in \eqref{eq:DBC:1}-\eqref{eq:DBC:3} become:
 \begin{align*}
    &R_1\leq I(\widehat{U};\widehat{Z}),\quad
      &R_1+R_2\leq  I(\widehat{U};\widehat{Z})+I(\widehat{X};\widehat{Y}|\widehat{U}).
 \end{align*}
 So, the following are achievable:
 \begin{align}
 \label{eq:DBC:7}     &R_1= I(\widehat{U};\widehat{Z}),\quad 
     &R_2= I(\widehat{X};\widehat{Y}|\widehat{U}).
 \end{align}
 Let TS$_d$ be a transmission system with parameters $(m,\Theta)$ and rates satisfying the above and the cost constraint $ \frac{1}{m}\sum_{i=1}^m \mathbb{E}(\kappa(X_i))\leq \tau+\epsilon$, where $X^m=e_d(M_1,M_2)$, and $M_1,M_2$ are messages chosen randomly, independently, and uniformly from $[\Theta_1]\times [\Theta_2]$. The continuous transmission system TS$_c$ is constructed as follows. Given a message pair $M_1,M_2$, the codeword $\widehat{X}^m=e_d(M_1,M_2)$ is produced. For each letter $\widehat{X}_i, i\in [m]$, a continuous variable is produced using the reverse test-channel $P_{\widetilde{X}|\widehat{X}}=\frac{P_{\widetilde{X}}(\cdot)}{Q(j)}$, where $\widehat{X}=j$ and $Q(j)\triangleq P(\widehat{X}=j)$. The resulting sequence $\widetilde{X}^m$ is transmitted and the pair $Z^m$ and $Y^m$ are received at decoders 1 and 2, respectively. The decoders clip and discretize the received sequences to $\widehat{Z}^m$ and $\widehat{Y}^m$, respectively, and decode the message using the discrete decoding function. The channel seen by TS$_d$ for $\widehat{X}=j$ is given by:
 \begin{align*}
     P(\widehat{Y}=k_1,\widehat{Z}=k_2|\widehat{X}=j)= \int_{A_{\ell,n}(j)}\frac{P_{\widetilde{X}}(dx)}{Q(j)}\int_{A_{\ell,n}(k_1)\times A_{\ell,n}(k_2)} P_{Y,Z|x}(dy,dz|x),
 \end{align*}
 which is the intended channel. 
 
 \noindent \textbf{Step 5 (Checking the Constraints). } Fix $\epsilon>0$. Assume that $\ell,n,m,\Theta_1,\Theta_2$ are such that:
 \begin{itemize}
     \item \eqref{eq:DBC:4}-\eqref{eq:DBC:7} are satisfied. 
     \item $|\kappa(x_1)-\kappa(x_2)|\leq \epsilon$, for all (a) $x_1,x_2\in [-\ell,\ell]$, (b) $|x_1-x_2|\leq \frac{1}{2^{n_q}}$.
     \end{itemize}
 We see that:
 \begin{align*}
     &\frac{1}{m}\log{\Theta_1}\geq I(U;Z)-\epsilon\\
     &\frac{1}{m}\log{\Theta_2}\geq I(X;Y|U)-\epsilon.
 \end{align*}
 Next, we ensure that the cost constraint is satisfied. Consider for any $t\in \{1,2,\cdots,m\}$, 
 \begin{align*}
     \mathbb{E}(\kappa(\widetilde{X}_t))= \sum_{j}P(\widehat{X}=j)\int_{A_{\ell,n_q}(j)}\kappa(x) \frac{P_{\widetilde{X}}(dx)}{Q(j)}
\leq \sum_{j} P(\widehat{X}=j) (\kappa(j)+\epsilon)= \mathbb{E}(\widehat{X})+\epsilon.
 \end{align*}
 Hence,
 \begin{align*}
     \frac{1}{m}\sum_{i=1}^m \mathbb{E}(\kappa(\widetilde{X}))\leq 
      \frac{1}{m}\sum_{i=1}^m \mathbb{E}(\kappa(\widehat{X}))+\epsilon
      \leq 
            \frac{1}{m}\sum_{i=1}^m \mathbb{E}(\kappa({X}))+2\epsilon.
 \end{align*}
 This completes the proof.
 \section{Proof of Theorem \ref{thm:bergertung_cont}}
\label{App:bergertung_cont}
\noindent \textbf{Proof:} 
Fix $\xi>0$, and define $d'_1(x,u,v)\triangleq d_1(x,g(u,v))$ and $d'_2(y,u,v)\triangleq d_2(y,g(u,v))$ for all $x,y,u,v, \in \mathbb{R}^4$.
Using the procedure described Section \ref{sec:th:7}, 
we perform clipping, smoothing, and discretization of random variables $(U,V)$ into $\widehat{U}_{n',\ell',\epsilon'}$ and  
$\widehat{V}_{n',\ell',\epsilon}$ by choosing the parameters $(n',\ell',\epsilon)$ appropriately. Furthermore, using the procedure described in Section \ref{sec:th:9}, we perform clipping, and discretization of the source variables $(X,Y)$ to produce the quadruple $(\widehat{X}_{n,\ell}, \widehat{Y}_{n,\ell}, \overline{U}'_{n',\ell',\epsilon'}, \overline{V}'_{n',\ell',\epsilon'})$ by choosing the parameters $(n,\ell)$ appropriately. For ease of notation, we will drop the subscripts from the random variables when there is no ambiguity. Using Theorems \ref{th:7} and \ref{th:9}, we have:
\begin{align}
& \label{eq:App7:1}       |I(\widehat{X};  \overline{U})
        - I(X;U)|\leq 2\xi,\\
    & \label{eq:App7:2}  
    |I(\widehat{Y}; \overline{V})-I(Y;V)|\leq 2\xi,
\\& \label{eq:App7:3}   |I( \overline{U};  \overline{V})-  I(U;V)|\leq 2\xi. 
\\& 
\mathbb{E}d'_1({X},{U},{V})+\xi \geq  \mathbb{E}d'_1(\widehat{X},\overline{U},\overline{V})  \\
&\mathbb{E}d'_2({Y},{U},{V})+\xi \geq  \mathbb{E}d'_2(\widehat{Y},\overline{U},\overline{V}). 
\end{align}

We can use  \cite{berger1978multiterminal} to show that the rate-distortion tuple
given  by 
\begin{align}
R_1 &\geq I(\widehat{X};\overline{U})-I(\overline{U};\overline{V}), \nn \\
R_2 &\geq I(\widehat{Y};\overline{V})-I(\overline{U};\overline{V}), \nn \\
R_1+R_2 &\geq I(\widehat{X};\overline{U}) + I(\widehat{Y};\overline{V})
-I(\overline{U};\overline{V}) \nn \\
D_1 &\geq  \mathbb{E}d'_1(\widehat{X},\overline{U},\overline{V}) \nn \\
D_2 &\geq  \mathbb{E}d'_2(\widehat{Y},\overline{U},\overline{V}) \nn
\end{align}
 is achievable for the finite-alphabet source $(\widehat{X},\widehat{Y},d_1,d_2)$. Based on Equations \eqref{eq:App7:1}, \eqref{eq:App7:2}, and \eqref{eq:App7:3}, the rates are within $2\xi$ of the ones in the theorem statement. 
To show that the claimed distortions for the reconstruction of continuous source are achievable, consider a transmission system
with parameter $(m,\Theta_1,\Theta_2)$ for compressing the finite-alphabet source
such that 
\begin{equation}
\frac{1}{m}\sum_{i=1}^m \mathbb{E}d'_1(\widehat{X}_{i},\overline{U}'_i,\overline{V}'_i)
\leq
\mathbb{E}d'_1(\widehat{X},\overline{U},\overline{V})
+\e,
\label{eq:DSC_4}
\end{equation}
where $(\overline{U}'^{m},\overline{V}'^{m})\triangleq f(e_1(\widehat{X}^m),e_2(\widehat{Y}^m))$, and $e_1$ and 
$e_2$ denote the encoders and $f$ the decoder.

For the source $(X,Y,d_1,d_2)$ we obtain an $(m,\Theta_1',\Theta'_2)$  transmission system 
$\mbox{TS}_c$ as follows. We assume that the encoder and decoder share
common randomness.  From $(X,Y)$ we create
$(\widehat{X},\widehat{Y})$ and use $\mbox{TS}_d$.
Let $T_1=\mathbbm{1}_{\{X\in [-\ell,\ell] \}}$. The encoder of $X$ in 
$\mbox{TS}_c$ sends information to the decoder in two parts.
The first part is $e(\widehat{X}^m)$ and the second part is a  compressed (almost
lossless) version of $T_1^m$.  If the sequence $T_1^m$ is typical, the
encoder sends the index of the sequence in the typical set, otherwise,
it sends the index $0$. This extra piece of information requires
$h_b(P_X([-\ell,\ell])+\xi$ bits per sample. Similar encoding strategy is used at the other encoder. 
The decoder is constructed as follows. Let $(\check{U}^m,\check{V}^m)$ denote the reconstruction vector.
If the decoder receives index $0$ in the second part from either of the encoders, then
it uses an arbitrary constant $c$ as a reconstruction, i.e.,
$\check{U}_i=c, \check{V}_i=c'$ for all $i$. Otherwise, it can
reconstruct $(T_1^m,T_2^m)$ reliably. If $T_{1i}=T_{2i}=1$, then the reconstruction is
$\check{U}_i=\overline{U}'_i$ and $\check{V}_i=\overline{V}'_i$, otherwise it is arbitrary constants $(c,c')$,
i.e., $\check{U}_i=c, \check{V}_i=c'$. 

 Assume that the parameters of the transmission system that of the discrete source are such that
\begin{itemize}
\item $h_b(P_X([-\ell,\ell])) \leq \xi$, and $h_b(P_Y([-\ell,\ell])) \leq \xi$
\item $P(\mathsf{A}^c) \mathbb{E}d'_1(X_i,c,c'|\mathsf{A}^c) \leq \xi$, and $P(\mathsf{A}^c) \mathbb{E}d'_2(Y_i,c, c'|\mathsf{A}^c) \leq \xi$ for all $1 \leq i
  \leq m$, where $\mathsf{A}$ denotes the event
  that $(T_1^m,T_2^m)$ is jointly typical.
\item $P(\mathsf{B}_i^c) \mathbb{E}d'_1(X_i,c,c'|\mathsf{B}_i^c) \leq \xi$, and $P(\mathsf{B}_i^c) \mathbb{E}d'_2(Y_i,c, c'|\mathsf{B}_i^c) \leq \xi$ for all $1 \leq i
  \leq m$, where 
 $\mathsf{B}_i$ denote the event $(X_i,Y_i)
\in [-\ell,\ell] \times [-\ell,\ell]$.
\item 
$|d'_1(x_1,b_1,b_2)-d'_1(x_2,b_1,b_2)| \leq \xi$, 
for all (a)  $x_1,x_2 \in [-\ell,\ell]$, 
(b)  $b_1 \in [-\ell',\ell']$, (c) $b_2 \in [-\ell',\ell']$, and (d) $|x_1-x_2| \leq \frac{1}{2^{n}}$.
\item 
$|d'_2(y_1,b_1,b_2)-d'_2(y_2,b_1,b_2)| \leq \xi$, 
for all (a)  $x_1,x_2 \in [-\ell,\ell]$, 
(b)  $b_1 \in [-\ell',\ell']$, (c) $b_2 \in [-\ell',\ell']$, and (d) $|x_1-x_2| \leq \frac{1}{2^{n}}$.
\end{itemize}
Note that the constants $c,c'$ satisfying the conditions in the second and third bullets exist by arguments similar to those in Appendix \ref{app:wynerziv_c}.
 Let $P_{X_i,Y_i|\mathsf{A},\mathsf{B}_i}$ denote the
probability distribution of $(X_i,Y_i)$ given the event $\mathsf{A}$ and $\mathsf{B}_i$. Consider for any $i \in
\{1,2,\ldots,m\}$, 
\begin{align}
\mathbb{E}d'_1(X_i,\check{U}_i,\check{V}_i) &\leq  P(\mathsf{A}^c) \mathbb{E}d'_1(X_i,c,c'|\mathsf{A}^c)+
P(B_i^c) \mathbb{E}d'_1(X_i,c,c'|\mathsf{B}_i^c)  \nn \\
&\hspace{1in} + P(\mathsf{A} \cap \mathsf{B}_i)
\mathbb{E}d'_1(X_i,\check{U}_i,\check{V}_i|\mathsf{A},\mathsf{B}_i) \nn \\
&\overset{a}{\leq} 2\xi+
P(\mathsf{A} \cap \mathsf{B}_i) \mathbb{E}d'_1(X_i,\check{U}_i,\check{V}_i|\mathsf{A} \cap \mathsf{B}_i) \nn \\
&= 2\xi+ P(\mathsf{A} \cap \mathsf{B}_i) \mathbb{E}d'_1(X_i,\overline{U}'_i,\overline{V}'_i|\mathsf{A} \cap \mathsf{B}_i) \nn \\
&\leq  2\xi+ P(\mathsf{B}_i) \mathbb{E}d'_1(X_i,\overline{U}
'_i,\overline{V}
'_i|\mathsf{B}_i) \nn \\ 
&\overset{(b)}{=} 2\xi+ P(\mathsf{B}_i) \sum_{i,b,b'} P_i(b,b'|\zeta(i)) \int_{A(i)  } d'_1(x,b,b') \frac{dP_{X_i}(x)}{P(\mathsf{B}_i)}  \nn \\
&\overset{(c)}{\leq} 2\xi+P(\mathsf{B}_i) \left[
  \mathbb{E}d'_1(\widehat{X}_{i},\overline{U}_i,\overline{V}_i) +
  \xi \right], \nn 
\end{align}
where we have following arguments:
(a) follows from third and fourth bullets from the previous
page. In  (b) we have denoted the conditional probability of
$\overline{U}_i,\overline{V}_i$ given $\widehat{X}_{i}$ as
$P_i$. 
(c) follows from the fifth bullet from the previous page.
Finally, we have 
\begin{align}
\frac{1}{m} \sum_{i=1}^m \mathbb{E}d'_1(X_i,\check{U}_i,\check{V}_i)
&\leq 2\xi+\left[
  \mathbb{E}d'_1(\widehat{X},\overline{U},\overline{V})) +
  2\xi \right]  \nn \\
&\leq 5\xi+ \mathbb{E}d'_1(X,U,V). \nn
\end{align}
The proof for the distortion for reconstructing $Y$ follows by similar arguments.
 This completes the desired proof.
\qed

\section{Proof of Proposition \ref{th:MD:Ex1}}
\label{App:th:MD:Ex1}

\begin{figure}[!h]
\centering
\includegraphics[width=3in,height=1.8in]{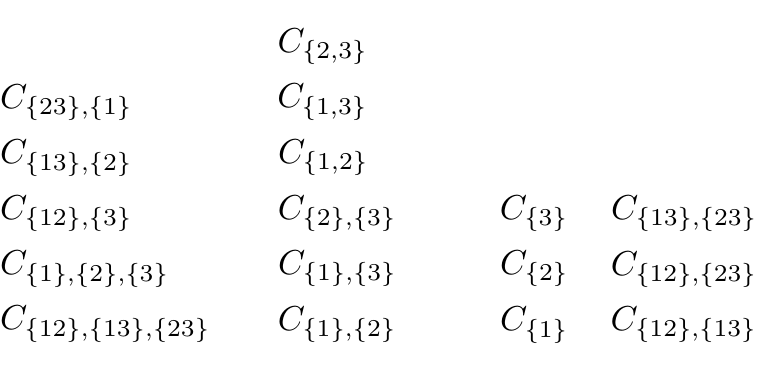}
% where an .eps filename suffix will be assumed under latex, 
% and a .pdf suffix will be assumed for pdflatex; or what has been declared
% via \DeclareGraphicsExtensions.
\caption{Three-Descriptions Codebook Structure}
\label{code}
\end{figure}

 The proof follows similar similar steps as the one given in Example 3 in \cite{shirani2018achievable}. We provide an outline in the following. The codebook structure for in for SSC strategy with unstructured random codes is shown in Figure \ref{code}. Note that Decoders $\{1\}$,$\{2\}$ and $\{1,2\}$ operate at optimal PtP rate-distortion. So by the same arguments as in steps one through five in Example 3 in \cite{shirani2018achievable}, all codebooks except $C_{\{1\},\{3\}}, C_{\{2\},\{3\}}, C_{\{1\}}, C_{\{2\}}, C_{\{3\}}$ can be eliminated. For ease of notation, we denote the corresponding random variables for these codebooks as $U_{1,3},U_{2,3},U_1,U_2,U_3$, respectively.
 Note that due to optimality at Decoders $\{1\}$,$\{2\}$ and $\{1,2\}$ we must have:
 \begin{align*}
     U_{1,3},U_1 - X - Z -   U_{2,3},U_2,
 \end{align*}
In order to evaluate the achievable rates using Gaussian test-channels, let $U_{1,3}= X+Q_{1,3}$ and $U_{2,3}=Z+Q_2$, where $Q_{1,3},Q_{2,3}$ are two Gaussian variables which are independent of each other and of $X,Z$ with variances $\theta_1,\theta_2>0$, respectively. Note that $E(X|U_{1,3})= \frac{1}{1+\theta_1}U_{1,3}$ and $E(Z|U_{2,3})= \frac{1}{1+\theta_2}U_{2,3}$. In order to achieve the desired distortion at Decoders $\{1\}$ and $\{2\}$, we must have $U_1=X-\frac{1}{1+\theta_1}U_{1,3}+Q_{1}$ and  
$U_2=Z-\frac{1}{1+\theta_1}U_{2,3}+Q_{2}$, where $Q_1$ and $Q_2$ are Gaussian variables with zero mean and variance $P$, independent of each other and all other variables. Then, the reconstruction $\hat{X}= U_{1,3}+U_1$ and $\hat{Z}= U_{2,3}+U_2$ satisfy the distortion constraints at Decoder $\{1\}$ and Decoder $\{2\}$, respectively. The Gaussian variable $U_3$ can be decomposed in terms of $X,Z, Q_{1,3},Q_{2,3},Q_{1},Q_2,Q_3$, where $Q_3$ is an independent Gaussian variable with zero mean and unit variance, so that $U_3= \alpha_1 X+\alpha_2 Z + \alpha_3 Q_{1,3} + \alpha_4 Q_{2,3}+\alpha_5 Q_1 +\alpha_6 Q_2 +\alpha_7 Q_3$ for some $\alpha_i\in \mathbb{R}, i\in [7]$. Then, the reconstruction at Decoder $\{3\}$ of $X+Z$ which minimizes the distortion is given by:
\begin{align*}
    \widehat{(X+Z)}_{3} \triangleq\mathbb{E}(X+Z|U_{1,3},U_{2,3},U_3)&= \Sigma_{X+Z,U_{1,3}U_{2,3}U_3}\Sigma^{-1}_{U_{1,3},U_{2,3},U_3}
    \begin{bmatrix}
    U_{1,3}&U_{2,3}&U_3
    \end{bmatrix}^t
    \\&= \frac{1}{1+\theta_1}U_{1,3}+\frac{1}{1+\theta_2}W_{2,3}+\frac{\alpha_1+\alpha_2}{Var(U_3)}U_3,
\end{align*}
where  $\Sigma_{X+Z,U_{1,3}U_{2,3}U_3}\triangleq  \mathbb{E}((X+Z)[U_{1,3},U_{2,3},U_3])$ and  $\Sigma_{U_{1,3},U_{2,3},U_3}$ is the covariance matrix of ${U_{1,3},U_{2,3},U_3}$. Similarly, the reconstructions at Decoder $\{1,3\}$ are:
\begin{align*}
&  \widehat{X}_{1,3}  \triangleq \mathbb{E}(X|U_{1,3},U_{2,3},U_1,U_3)= \Sigma_{X,U_{1,3}U_{2,3}U_1U_3}\Sigma^{-1}_{U_{1,3},U_{2,3},U_1,U_3}
    \begin{bmatrix}
    U_{1,3}&U_{2,3}&U_1&U_3
    \end{bmatrix}^t,
    \\& 
 \widehat{Z}_{1,3}  \triangleq \mathbb{E}(Z|U_{1,3},U_{2,3},U_1,U_3)= \Sigma_{Z,U_{1,3}U_{2,3}U_1U_3}\Sigma^{-1}_{U_{1,3},U_{2,3},U_1,U_3}
    \begin{bmatrix}
    U_{1,3}&U_{2,3}&U_1&U_3
    \end{bmatrix}^t  \end{align*}
  The reconstructions $\widehat{X}_{2,3}, \widehat{Z}_{2,3}$ at Decoder $\{2,3\}$ can be written in a similar fashion. Furthermore, using the covering and packing bounds in Theorem \ref{thm:RDMD}, we have:
  \begin{align*}
      R_3&\geq I(X,Z;U_3,U_{1,3},U_{2,3})+I(U_1,U_2;U_3|U_{1,3},U_{2,3},X,Z)
      \\&= \frac{1}{2}\log{\frac{|\Sigma_{X,Z}||\Sigma_{U_{1,3}U_{2,3}U_3}|}{|\Sigma_{X,Z,U_{1,3},U_{2,3}U_3}|}}+I(Q_1,Q_2;\alpha_5Q_1+\alpha_6Q_2+\alpha_7Q_3)
      \\&= \frac{1}{2}\log{\frac{|\Sigma_{U_{1,3}U_{2,3}U_3}|}{|\Sigma_{X,Z,U_{1,3},U_{2,3}U_3}|}}+\frac{1}{2}\log{(\alpha_5^2P+\alpha_6^2P+\alpha_7^2)}.
  \end{align*}
  Optimizing over $\alpha_i, i\in [7], \theta_1,\theta_2$ yields the achievable rates plotted in Figure \ref{Fig:MD:Ex1}.
 \end{appendices}
\bibliographystyle{unsrt}
  \bibliography{references}

\end{document}